\newcounter{todocounter}  \newcommandx{\todocount}[2][1=]{\stepcounter{todocounter}\todo[linecolor=YellowGreen,backgroundcolor=YellowGreen!25,bordercolor=YellowGreen,#1]{\thetodocounter: #2}}
\newcommandx{\unsure}[2][1=]{\todo[linecolor=red,backgroundcolor=red!25,bordercolor=red,#1]{\small #2}}
\newcommandx{\change}[2][1=]{\todo[linecolor=blue,backgroundcolor=blue!25,bordercolor=blue,#1]{{\scriptsize #2}}}
\newcommandx{\info}[2][1=]{\todo[linecolor=yellow,backgroundcolor=yellow!25,bordercolor=yellow,#1]{\small #2}}
\newcommandx{\dothis}[2][1=]{\todo[linecolor=purple,backgroundcolor=purple!25,bordercolor=purple,#1]{#2}}
\renewcommand{\vec}[1]{\underline{#1}}
\newcommand{\defeq}{\vcentcolon=}
\def\fcmp{\mathbin{\raise 0.6ex\hbox{\oalign{\hfil$\scriptscriptstyle      \mathrm{o}$\hfil\cr\hfil$\scriptscriptstyle\mathrm{9}$\hfil}}}}
\newcommand{\rvdots}{\rotatebox{90}{...}}
\newcommand{\naturals}{\mathbb{N}} 
\newcommand{\integers}{\mathbb{Z}} 
\newcommand{\reals}{\mathbb{R}} 
\newcommand{\complexs}{\mathbb{C}} 
\newcommand{\integersMod}[1]{\mathbb{Z}_{#1}} 
\newcommand{\nonstd}[1]{\!\,^\star #1}
\newcommand{\starNaturals}{\nonstd{\naturals}} 
\newcommand{\starIntegers}{\nonstd{\integers}} 
\newcommand{\starComplexs}{\nonstd{\complexs}} 
\newcommand{\starReals}{\nonstd{\reals}} 
\newcommand{\suchthat}[2]{\left\{#1\,\middle|\,#2\right\}}
\newcommand{\stdpartSym}{\operatorname{st}}
\newcommand{\stdpart}[1]{\stdpartSym\left(#1\right)}
\newcommand{\starIntegersMod}[1]{{\nonstd{\integersMod{#1}}}}
\newcommand{\LtwoRn}{L^2\left[\reals^n\right]}
\newcommand{\LtwoR}{L^2\left[\reals\right]}
\newcommand{\LatticeR}{\mathfrak{R}}
\newcommand{\LtwoRNonstd}{\starComplexs\left[\LatticeR\right]}
\newcommand{\HilbCategory}{\operatorname{Hilb}} 
\newcommand{\fHilbCategory}{\operatorname{fHilb}} 
\newcommand{\starHilbCategory}{{}^\star\!\fHilbCategory} 
\newcommand{\AffLagRel}{\operatorname{AffLagRel}} 
\newcommand{\GSACategory}{\operatorname{GSA}} 
\newcommand{\ZXGaussCategory}{\operatorname{ZX_G}} 
\newtheorem{Th}{Theorem}[section]
\newtheorem{theorem}[Th]{Theorem}
\newtheorem{proposition}[Th]{Proposition}
\newtheorem{corollary}[Th]{Corollary}
\newtheorem{lemma}[Th]{Lemma}
\newtheorem{example}[Th]{Example}
\newcommand{\tikzfigscale}[2]{\ensuremath{\vcenter{\hbox{\scalebox{#1}{$\tikzfig{#2}$}}}}}
\definecolor{zx_grey}{RGB}{211,211,211}
\definecolor{zx_red}{RGB}{232,165,165}
\definecolor{zx_green}{RGB}{216,248,216}
\newcommand{\interp}[1]{\left\llbracket#1\right\rrbracket}
\newcommand{\tikzrefsize}[1]{\scriptsize{#1}}
\newcommand{\lemref}[1]{{\tikzrefsize{\eqref{#1}}}}
\newcommand{\minu}{\texttt{-}}
\newcommand{\plus}{\texttt{+}}
\newcommand{\namedLabel}[1]{\tag{\textsc{\footnotesize #1}}\label{rule:#1}\refstepcounter{equation}}
\newcommand{\ruleref}[1]{\textsc{\tiny (\hyperref[rule:#1]{#1})}}
\newcommand{\textruleref}[1]{\textsc{(\hyperref[rule:#1]{#1})}}
\newcommand{\vdotsn}{{\ \tikz[baseline, every node/.style={inner sep=0}]{\node at (0,.23){$\vdots$}; \node at (.35,0){$\scriptstyle n$}}\!}}
\newcommand{\vdotsm}{{\ \tikz[baseline, every node/.style={inner sep=0}]{\node at (0,.23){$\vdots$}; \node at (.35,0){$\scriptstyle m$}}\!}}
\renewcommand{\sin}{\textsf{sin}}
\renewcommand{\tan}{\textsf{tan}}
\renewcommand{\csc}{\textsf{csc}}
\renewcommand{\cot}{\textsf{cot}}
\tikzstyle{gn}=[draw=black, shape=circle, fill={zx_green}, draw=black, inner sep=0.7mm, minimum width=0pt, minimum height=0pt, tikzit fill={rgb,255: red,181; green,215; blue,181}]
\tikzstyle{rn}=[gn, fill={zx_red}, draw=black, tikzit fill={rgb,255: red,215; green,96; blue,96}]
\tikzstyle{fn}=[gn, fill={rgb,255: red,255; green,220; blue,185}, tikzit fill={rgb,255: red,255; green,220; blue,185}, tikzit draw=black]
\tikzstyle{gn_phase}=[shape=rectangle, fill={zx_green}, draw=black, minimum size=1.2em, rounded corners=0.5em, inner sep=0.2em, outer sep=-0.2em, scale=0.8, font={\footnotesize}, tikzit shape=circle, tikzit fill={rgb,255: red,181; green,215; blue,181}]
\tikzstyle{rn_phase}=[{gn_phase}, fill={zx_red}, draw=black, tikzit fill={rgb,255: red,215; green,96; blue,96}]
\tikzstyle{fn_phase}=[{gn_phase}, fill={rgb,255: red,255; green,220; blue,185}, tikzit fill={rgb,255: red,240; green,207; blue,174}, tikzit draw=black]
\tikzstyle{lsplit}=[shape=isosceles triangle, isosceles triangle stretches=true, fill=white, draw=black, minimum width=3.33mm, minimum height=2.5mm, inner sep=1pt, outer sep=0mm, shape border rotate=180]
\tikzstyle{rmerge}=[shape=isosceles triangle, isosceles triangle stretches=true, fill=white, draw=black, minimum width=3.33mm, minimum height=2.5mm, inner sep=1pt, outer sep=0mm]
\tikzstyle{vsplit}=[shape=isosceles triangle, isosceles triangle stretches=true, fill=white, draw=black, minimum width=3.33mm, minimum height=2.5mm, inner sep=1pt, shape border rotate=90]
\tikzstyle{lmat}=[shape=signal, signal to=west, signal from=east, fill={zx_grey}, draw=black, minimum height=6pt, inner sep=0.5pt, font={\scriptsize}, tikzit fill=gray, tikzit category=GLA]
\tikzstyle{rmat}=[lmat, shape=signal, signal to=east, signal from=west, tikzit fill=gray, tikzit category=GLA]
\tikzstyle{dmat}=[lmat, shape=signal, signal to=west, signal from=east, tikzit fill=gray, tikzit category=GLA, rotate=90]
\tikzstyle{umat}=[lmat, shape=signal, signal to=east, signal from=west, tikzit fill=gray, tikzit category=GLA, rotate=90]
\tikzstyle{hbox}=[box, draw=black, shape=rectangle, fill=yellow, minimum size=.67em, font={\tiny}]
\tikzstyle{th gn}=[draw=black, shape=circle, fill={zx_green}, draw=black, inner sep=0.8mm, minimum width=0pt, minimum height=0pt, tikzit fill={rgb,255: red,181; green,215; blue,181}, line width=1pt]
\tikzstyle{th gn_phase}=[shape=rectangle, fill={zx_green}, draw=black, minimum size=1.2em, rounded corners=0.5em, inner sep=0.2em, outer sep=-0.2em, scale=0.8, font={\footnotesize}, tikzit shape=circle, tikzit fill={rgb,255: red,181; green,215; blue,181}, line width=1pt]
\tikzstyle{th rn}=[th gn, fill={zx_red}, draw=black, tikzit fill={rgb,255: red,215; green,96; blue,96}]
\tikzstyle{th rn_phase}=[{th gn_phase}, fill={zx_red}, draw=black, tikzit fill={rgb,255: red,215; green,96; blue,96}]
\tikzstyle{box}=[draw=black, shape=rectangle, fill=white, minimum size=.70em, inner sep=0.10em, scale=0.85, font={\small}]
\tikzstyle{th rmerge}=[shape=isosceles triangle, isosceles triangle stretches=true, fill=white, draw=black, minimum width=3mm, minimum height=2.4mm, inner sep=1pt, outer sep=0mm, line width=1pt]
\tikzstyle{th lsplit}=[th rmerge, shape border rotate=180]
\tikzstyle{th dsplit}=[th rmerge, shape border rotate=90]
\tikzstyle{small box}=[fill=white, draw=black, shape=rectangle, minimum width=0.75cm, minimum height=0.75cm]
\tikzstyle{long box}=[fill=white, draw=black, shape=rectangle, minimum width=0.75cm, minimum height=1.25cm]
\tikzstyle{very long box}=[fill=white, draw=black, shape=rectangle, minimum width=0.75cm, minimum height=1.75cm]
\tikzstyle{rembed}=[draw, rounded rectangle, rounded rectangle east arc=0pt, font={\footnotesize}, inner sep=2pt, minimum height=0.225cm, minimum width=0.27cm]
\tikzstyle{lembed}=[draw, rounded rectangle, rounded rectangle west arc=0pt, font={\footnotesize}, inner sep=2pt, minimum height=0.225cm, minimum width=0.27cm]
\tikzstyle{white square}=[fill=white, draw=black, shape=rectangle, font={\LARGE}]
\tikzstyle{left_text}=[anchor=west, tikzit shape=circle]
\tikzstyle{vacuum}=[th gn, fill=white, scale=0.8]
\tikzstyle{graphv}=[fill=white, draw=none, shape=circle, inner sep=0mm]
\tikzstyle{braceedge}=[-, decorate, decoration={brace, amplitude=2mm, raise=-1mm}]
\tikzstyle{blue-edge}=[-, very thick, draw=blue]
\tikzstyle{hadamard edge}=[-, dashed, draw=blue]
\tikzstyle{dashed}=[-, dash pattern=on 0.8mm off 0.8mm]
\tikzstyle{blue}=[-, draw=blue]
\tikzstyle{thick}=[-, line width=1pt]
\tikzstyle{dotsedge}=[-, dotted, decoration={brace, amplitude=2mm, raise=-1mm}]
\begin{document}

\title{The Focked-up ZX Calculus:\newline Picturing Continuous-Variable Quantum Computation}
\author{Razin A. Shaikh}
\affiliation{Department of Computer Science, University of Oxford}
\author{Lia Yeh}
\affiliation{Department of Computer Science, University of Oxford}
\author{Stefano Gogioso}
\affiliation{Department of Computer Science, University of Oxford}
\affiliation{Hashberg Ltd.}

\maketitle

\begin{abstract}
    While the ZX and ZW calculi have been effective as graphical reasoning tools for finite-dimensional quantum computation, the possibilities for continuous-variable quantum computation (CVQC) in infinite-dimensional Hilbert space are only beginning to be explored.
In this work, we formulate a graphical language for CVQC.
Each diagram is an undirected graph made of two types of spiders: the Z spider from the ZX calculus defined on the reals, and the newly introduced Fock spider defined on the natural numbers.
The Z and X spiders represent functions in position and momentum space respectively, while the Fock spider represents functions in the discrete Fock basis.
In addition to the Fourier transform between Z and X, and the Hermite transform between Z and Fock, we present exciting new graphical rules capturing heftier CVQC interactions.

We ensure this calculus is complete for all of Gaussian CVQC interpreted in infinite-dimensional Hilbert space, by translating the completeness in affine Lagrangian relations by Booth, Carette, and Comfort.
Applying our calculus for quantum error correction, we derive graphical representations of the Gottesman-Kitaev-Preskill (GKP) code encoder, syndrome measurement, and magic state distillation of Hadamard eigenstates.
Finally, we elucidate Gaussian boson sampling by providing a fully graphical proof that its circuit samples submatrix hafnians.
\end{abstract}

\section{Introduction}
\label{sec:introduction}

The ZX calculus is a graphical language originally developed for researching qubit quantum computation~\cite{ coeckeInteractingQuantumObservables2008}.
It has found practical applications in many areas, including quantum circuit compilation and optimization~\cite{duncanGraphtheoreticSimplificationQuantum2020, vandeweteringOptimalCompilationParametrised2024}, quantum programming languages~\cite{borgnaEncodingHighlevelQuantum2023}, quantum error correction~\cite{huangGraphicalCSSCode2023, townsend-teagueFloquetifyingColourCode2023},  classical simulation~\cite{codsiClassicallySimulatingQuantum2023, sutcliffeProcedurallyOptimisedZXDiagram2024}, measurement-based quantum computation~\cite{duncanRewritingMeasurementBasedQuantum2010,  mcelvanneyFlowpreservingZXcalculusRewrite2023}, and quantum education~\cite{dundar-coeckeQuantumPicturalismLearning2023}.
Variants of the ZX calculus have also been used to study Hamiltonians in quantum chemistry~\cite{shaikhHowSumExponentiate2022}, analyze ansatz for quantum machine learning~\cite{wangDifferentiatingIntegratingZX2022}, and describe symmetries in condensed matter physics~\cite{eastAKLTStatesZXDiagramsDiagrammatic2022}.

These graphical languages have been formulated for qudits (higher-dimensional systems)~\cite{poorCompletenessArbitraryFinite2023,royQuditZHCalculusGeneralised2023, debeaudrapSimpleZXZH2023} and advanced qudit quantum compilation~\cite{poorQupitStabiliserZXtravaganza2023}.
The past year marks the milestone that quantum graphical language completeness was achieved for all finite dimensional Hilbert spaces and all linear maps between them~\cite{poorCompletenessArbitraryFinite2023, wangCompletenessQufiniteZXW2024, devismeMinimalityFiniteDimensionalZWCalculi2024, poorZXcalculusCompleteFiniteDimensional2024}.
Despite the success of the ZX calculus for qubit and qudit quantum computing, it could have been the case that these techniques were incompatible with infinite-dimensional settings --- if it were to be a different beast entirely.
Prior to this work, quantum graphical calculi cast in Fock space have restricted to either discrete finite-photon-number systems~\cite{clementLovCalculus2022, defeliceQuantumLinearOptics2022, defeliceLightMatterInteractionZXW2023, heurtel2024complete}, or the efficiently classically simulable Gaussian fragment~\cite{boothCompleteGaussian2024}.

Continuous-variable quantum computing (CVQC)~\cite{loydContinuousVariables1999, braunsteinContinuousVariablesReview2005} is a promising paradigm of quantum computation, where the continuous degrees of freedom are used to encode quantum information.
Reconciling CVQC with techniques developed for finite-dimensional quantum computation brings its own set of challenges.
Qubit programming languages can build upon conventional classical computing data structures, with the simplest example being that qubit circuits can be na\"ively and inefficiently simulated through matrix multiplications.
In contrast, even representing continuous-variable quantum computations is challenging, as evaluating them often yields integrals without closed-form solutions.
This presents issues for verifying quantum programs or experiment outputs.
Current CVQC simulation software libraries~\cite{killoran_strawberry_2019, kolarovszki_piquasso_2024} are limited to three special cases: the Gaussian fragment, which is efficiently classical simulable~\cite{bartlett_efficient_2002}; Fock space truncation, which enforces a cutoff of the maximum number of photons in the system; and boson sampling~\cite{aaronson2011bosonsampling} which is non-universal but believed classically hard to simulate.
In the absence of a formal language of more general mathematical descriptors, unifying and having expressibility beyond these three special cases, software design for CVQC is fragmented.

An important open question is how to develop a theoretical software-compatible framework partitioning computations into inhomogeneous parts, to which efficient optimization routines, approximations, or numerical methods may be applied.
Reconstructing CVQC in a more expressive and rigorous formalism facilitates understanding, software implementation, and the development of new methods.
To this end, we formulate a powerful graphical language, which we hope will serve as a starting point for future programming language specifications for CVQC.
Combined with complete ZX calculus rules for Gaussians building upon Ref.~\cite{boothCompleteGaussian2024}, we add rules for non-Gaussian computations making it possible to compile, simulate, and verify interesting CVQC computations without having to evaluate or symbolically manipulate multi-dimensional integrals with infinite domain.
We further believe that the ability to transfer techniques well understood in the qubit or qudit ZX calculus --- such as any from error correction, circuit optimization, or stabilizer decompositions --- will be a game-changer for CVQC.

The generators of our graphical language are Z spiders, corresponding to position space; and Fock spiders, corresponding to the Fock (or number) basis.  We can in turn use the Z and Fock spiders to define X spiders, corresponding to momentum space.
The interactions between these continuous and discrete bases contribute to a highly expressive and powerful language for CVQC.
Furthermore, we define other useful gadgets such as the W node, which with the Fock spider enhance this calculus with rules generalizing the qubit and qudit ZW calculus~\cite{coeckeThreeQubitEntanglement2011, hadzihasanovicAlgebraEntanglementGeometry2017, devismeMinimalityFiniteDimensionalZWCalculi2024} to the Fock basis.
We provide rigorous semantics for our calculus via non-standard analysis in Appendix~\ref{sec:non-standard}, enabling us to reason diagrammatically about infinite-dimensional systems in a dagger-compact context.

An independent work developing ZX calculus for continuous-variable quantum computation was recently put forth by Nagayoshi et.~al.~\cite{nagayoshiZXGraphicalCalculus2024}.
This differs significantly from the present work.
The foremost difference is that the present work is the only known graphical formulation of CVQC which includes both the continuous picture and the Fock space picture.
Ref.~\cite{nagayoshiZXGraphicalCalculus2024} introduces a restriction of the continuous picture, where Z and X spiders are labelled by phase functions fixed by real polynomials. In the present work, Z and X spiders are labelled by arbitrary functions from $\reals\to\complexs$.
We introduce the Fock spider and W nodes for CVQC for the first time, and moreover uncover novel interaction rules between the continuous and Fock space pictures.

\paragraph*{Structure of the paper}
We start by introducing the ZX calculus for continuous-variable quantum processes in Section~\ref{sec:zx-calculus}.
We next examine common CVQC operations and gates in this calculus in Section~\ref{sec:gates}.
In Section~\ref{sec:gaussian-completeness}, we prove this calculus is complete for Gaussian states and operations.

The second half of the main text is for diving into two applications of this calculus.
To study fault-tolerance, we deduce key graphical representations in the Gottesman-Kitaev-Preskill (GKP) quantum error correcting code (Section~\ref{sec:gkp}).
Finally, we show this is a natural language to study Gaussian boson sampling by providing a simple graphical derivation of submatrix hafnians in Section~\ref{sec:gbs}.

\section{ZX-calculus}
\label{sec:zx-calculus}
In this section, we introduce the ZX-calculus for continuous variable quantum computation.
All the wires in our diagrams will correspond to the infinite-dimensional separable Hilbert space $L^2[\reals]$.


\subsection{Generators}
Our generators consist of two types of spiders: the Z spider and the Fock spider.
The Z spider is associated with the continuous position basis states $\ket{x}_{x \in \reals}$.
The Fock spider is associated with the discrete Fock (or particle number) basis states $\ket{n}_{n \in \naturals}$.
The Fock basis is related to the continuous basis by the Hermite functions:
\begin{equation}
    \ket{n} = \int \psi_n(x) \ket{x} \, dx = \int {(2^n n! \sqrt{\pi})}^{-\frac{1}{2}} H_n(x) e^{-x^2/2} \ket{x} \, dx 
\end{equation}
where $H_n(x)$ is the $n$-th Hermite polynomial and $\psi_n(x)$ is the $n$-th Hermite function.
In accordance, we adopt the following Fourier transform convention:
\begin{equation}
    \mathcal{F}(f) = \int_{\infty}^{\infty} f(x) e^{-i 2 \pi x p} dp
\end{equation}
The generators of the calculus are the Z spider, the Fock spider, and the global scalar, defined as follows:
\begin{align}
    \namedLabel{Z-Spider}
    \tikzfigscale{1}{z-spider}
    \qquad & \overset{\interp{\cdot}}{\longmapsto} \qquad
    \int f(x) \, \ket{x}^{\otimes b} \bra{x}^{\otimes a} \, dx       \\
    \namedLabel{Fock-Spider}
    \tikzfigscale{1}{fock-spider}
    \qquad & \overset{\interp{\cdot}}{\longmapsto} \qquad
    \sum_{n=0}^{\infty} g(n) \ket{n}^{\otimes b} \bra{n}^{\otimes a} \\
    \namedLabel{Global-Scalar}
    k\qquad
    \qquad & \overset{\interp{\cdot}}{\longmapsto} \qquad
    k
\end{align}
where $f: \reals \to \complexs$, $\,g: \naturals \to \complexs$ and $k \in \complexs$.
Using the Z-spider, we can define a cup and a cap for the space:
\begin{equation}
    \tikzfigscale{1}{cup} \qquad \qquad \tikzfigscale{1}{cap}
\end{equation}
The cup and cap satisfy the snake equations:
\begin{equation}
    \tikzfigscale{1}{yanking}
\end{equation}
This property allows us to freely bend wires in our diagrams.
Moreover, the Z and the Fock spiders have a structural symmetry which allows us to interchange any of the legs:
\begin{align}
    \tikzfigscale{1}{z-symmetry} \\
    \tikzfigscale{1}{f-symmetry}
\end{align}
This is known as ``flexsymmetry'', or by the slogan ``only connectivity matters''.

\subsection{Notations}
\label{sec:notations}
The Fock basis diagonalizes the Fourier transform between the position and the momentum basis.
Hence, we can use it to define the X spider, associated with the momentum basis states $\ket{p}_{p \in \reals}$.
\begin{equation}
    \namedLabel{X-Spider}
    \tikzfigscale{1}{x-spider}
    \qquad \overset{\interp{\cdot}}{\longmapsto} \qquad
    \int f(p) \, \ket{p}^{\otimes b} \bra{p}^{\otimes a} \, dp
\end{equation}
where $f: \reals \to \complexs$, and the Fock spiders labelled by $(-i)^{\hat{n}}$ and $i^{\hat{n}}$ correspond to the Fourier and inverse Fourier transform, respectively.
We will omit the label on the spiders when the corresponding function is the constant function $1$.
\begin{equation}
    \tikzfigscale{1}{phase-free-spiders}
\end{equation}
Using the Z, X and Fock spiders, we can represent the basis states:
\begin{equation}
    \tikzfigscale{1}{ket-x-character}
    \quad \overset{\interp{\cdot}}{\longmapsto} \quad
    \ket{x}
    \qquad \qquad
    \tikzfigscale{1}{ket-p-character}
    \quad \overset{\interp{\cdot}}{\longmapsto} \quad
    \ket{p}
    \qquad \qquad
    \tikzfigscale{1}{fock-basis}
    \quad \overset{\interp{\cdot}}{\longmapsto} \quad
    \ket{n}
\end{equation}
where $\chi_x (p) = e^{-i 2\pi p x}$, $\bar{\chi}_p (x) = e^{i 2\pi p x}$, and $\delta_n$ is the Kronecker delta at $n$.
Next, we define multipliers, labelled by some non-zero $m \in \reals$:
\begin{equation}
    \namedLabel{Multiplier}
    \tikzfigscale{1}{multiplier-definition}
    \qquad \overset{\interp{\cdot}}{\longmapsto} \qquad
    \int \ket{m x}_X \bra{x}_X \, dx
\end{equation}
We also define a multiplier labelled by $m = 0$, as follows:
\begin{equation}
    \tikzfigscale{1}{multiplier-zero-definition}
    \qquad \overset{\interp{\cdot}}{\longmapsto} \qquad
    \int \ket{0}_X \bra{x}_X \, dx
\end{equation}
Finally, we define the W node: this was originally a generator for the ZW calculus, but it can be defined in terms of our existing generators.
We first define the W node with 0, 1, and 2 legs:
\begin{equation}
    \namedLabel{W-Node}
    \tikzfigscale{1}{w-definition}
\end{equation}
Using associativity, we can define the W node with more than 2 legs:
\begin{equation}
    \tikzfigscale{1}{w-node}
\end{equation}
Hence, we have a W-node with $k$ legs, for all $k \in \naturals$.
Its interpretation is given as follows:
\begin{equation}
    \tikzfigscale{1}{w-node-any-legs}
    \qquad \overset{\interp{\cdot}}{\longmapsto} \qquad
    \sum_{n_1, \dots, n_k \geqslant 0}
    \sqrt{\frac{(\sum_i n_i)!}{\prod_i n_i!}}\, \Big|{\sum_i n_i}\Big\rangle \Big\langle {n_1, \dots , n_k}\Big|
\end{equation}
It is also convenient to define the W-node in the opposite direction, by taking the transpose:
\begin{equation}
    \tikzfigscale{1}{w-node-transpose}
    \qquad \overset{\interp{\cdot}}{\longmapsto} \qquad
    \sum_{n_1, \dots, n_k \geqslant 0}
    \sqrt{\frac{(\sum_i n_i)!}{\prod_i n_i!}}\, \Big|{n_1, \dots , n_k}\Big\rangle \Big\langle {\sum_i n_i}\Big|
\end{equation}
Similar to the Z and Fock spiders, the W node also has a structural symmetry which allows legs to be freely interchanged:
\begin{equation}
    \tikzfigscale{1}{w-symmetry}
\end{equation}

\subsection{Rules}

\begin{figure}[!ht]
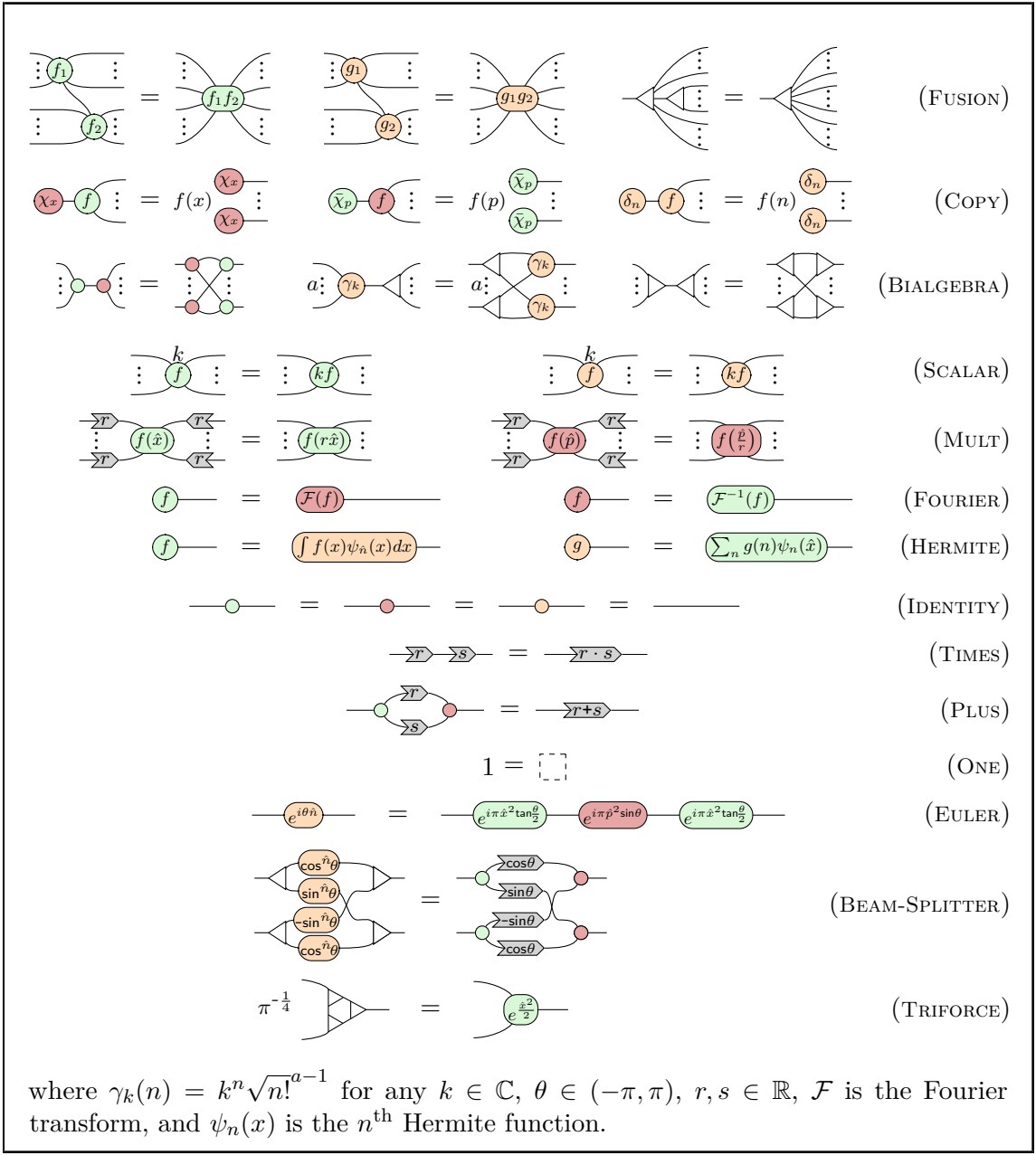

    \begin{mdframed}
        \begin{align}
            \hspace{-1em}\namedLabel{Fusion}
            \tikzfigscale{1}{z-fusion}
             &  & \tikzfigscale{1}{f-fusion}
             &  & \tikzfigscale{1}{w-fusion}         \\[0.3em]
            \hspace{-1em}\namedLabel{Copy}
            \tikzfigscale{1}{z-character-copy}
             &  & \tikzfigscale{1}{x-character-copy}
             &  & \tikzfigscale{1}{f-basis-copy}     \\[0.3em]
            \hspace{-1em}\namedLabel{Bialgebra}
            \tikzfigscale{1}{bialgebra}
             &  & \tikzfigscale{1}{zw-bialgebra}
             &  & \tikzfigscale{1}{w-bialgebra}
        \end{align}
        \vspace{-1\baselineskip}
        \begin{align}
            \namedLabel{Scalar}
            \tikzfigscale{1}{z-scalar}
             &  & \tikzfigscale{1}{f-scalar}              \\[0.3em]
            \namedLabel{Mult}
            \tikzfigscale{1}{z-mult}
             &  & \tikzfigscale{1}{x-mult}                \\[0.3em]
            \namedLabel{Fourier}
            \tikzfigscale{1}{fourier-state}
             &  & \tikzfigscale{1}{inverse-fourier-state} \\[0.3em]
            \namedLabel{Hermite}
            \tikzfigscale{1}{hermite-z-to-f}
             &  & \tikzfigscale{1}{hermite-f-to-z}
        \end{align}
        \vspace{-1\baselineskip}
        \begin{equation}
            \namedLabel{Identity}
            \tikzfigscale{1}{id-spider}
        \end{equation}
        \begin{equation}
            \namedLabel{Times}
            \tikzfigscale{1}{times}
        \end{equation}
        \begin{equation}
            \namedLabel{Plus}
            \tikzfigscale{1}{multiplier-plus}
        \end{equation}
        \begin{equation}
            \namedLabel{One}
            \tikzfigscale{1}{one}
        \end{equation}
        \begin{equation}
            \namedLabel{Euler}
            \tikzfigscale{1}{euler}
        \end{equation}
        \vspace{-1.5\baselineskip}
        \begin{align}
            \namedLabel{Beam-Splitter}
            \tikzfigscale{1}{beam-splitters} \\[0.3em]
            \namedLabel{Triforce}
            \tikzfigscale{1}{triforce}
        \end{align}
        where $\gamma_k(n) = k^n \sqrt{n!}^{a-1}$ for any $k \in \complexs$, $\theta \in (-\pi, \pi)$, $r,s \in \reals$, $\mathcal{F}$ is the Fourier transform, and $\psi_n(x)$ is the $n^\text{th}$ Hermite function.
    \end{mdframed}
    \caption{Rules of the calculus}\label{fig:ZX-Rules}
\end{figure}
The rules of the calculus are given Figure~\ref{fig:ZX-Rules}.
Several rules are analogous to finite-dimensional ZX calculi: the Z and X versions of \textruleref{Fusion}, \textruleref{Copy}, \textruleref{Bialgebra}, \textruleref{Identity}, \textruleref{Times}, \textruleref{Plus}, and \textruleref{One}. These rules were adapted to the infinite-dimensional setting by defining mutually unbiased orthonormal bases for position and momentum using non-standard analysis, and applying a model-theoretic technique known as the Transfer Theorem.
Our model construction is detailed in Appendix~\ref{sec:non-standard}, and soundness of rules is discussed in Appendix~\ref{sec:soundness}.

Other rules have been introduced in the context of quantum graphical calculi for the continuous-variable setting.
The Fock (F) and W rules \textruleref{Fusion}, \textruleref{Bialgebra}, and \textruleref{Identity} will look familiar to those that have encountered the ZW calculus.
However, unlike in previous calculi combining the ZX and ZW calculi~\cite{poorCompletenessArbitraryFinite2023,defeliceLightMatterInteractionZXW2023,shaikhHowSumExponentiate2022}, our Z and F generators do not coincide.
Our Z and X generators are defined over the reals, whereas our F and W generators are defined over the natural numbers.

The most interesting of our new rules is, by far, the \textruleref{Triforce} rule.
This simple rule encapsulates the following product formula of Hermite polynomials~\cite{nielsen1918hermite}:
\begin{equation}
    H_m(x) H_n(x) =
    \sum_{r=0}^{\textsf{min}(m,n)} r!\, 2^r \binom{m}{r} \binom{n}{r} H_{m\texttt{+}n\texttt{-}2r}(x)
\end{equation}
It can further be generalized to arbitrary number of legs, as shown in Proposition~\ref{prop:triforceKn}.

\subsection{Derived rules}
Here, we provide a table of useful rules that can be graphically derived from those in Figure~\ref{fig:ZX-Rules}.
We give the most interesting proofs in this section, and relegate the remaining proofs to the Appendix~\ref{sec:rule-proofs-appendix}.
The derived rules are presented in Figure~\ref{fig:Derived-Rules}.
\begin{figure}[!b]
    \begin{mdframed}
        \begin{multicols}{2}
            \begin{restatable}{lemma}{xFusion}\label{lem:xFusion}
                \begin{equation}
                    \tikzfigscale{1}{x-fusion}
                \end{equation}
            \end{restatable}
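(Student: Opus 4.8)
The plan is to reduce X-fusion to the already-established Z-fusion by unfolding the definition of the X spider. Recall from \textruleref{X-Spider} that an X spider is nothing but a Z spider whose legs have been dressed with the Fourier transform, where the Fourier and inverse Fourier transforms are realised as the Fock spiders labelled $(-i)^{\hat n}$ and $i^{\hat n}$. So the first step is to rewrite each of the two X spiders on the left-hand side in this decorated form, turning the diagram into a pair of Z spiders joined by a single wire that now carries two Fock spiders, one contributed by each X spider.

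Next I would collapse the decorations on the connecting wire. By construction the two Fock spiders on that wire are a Fourier and an inverse Fourier, which by the Fock-spider case of \textruleref{Fusion} merge into a single Fock spider whose label is the pointwise product $(-i)^{\hat n}\cdot i^{\hat n} = ((-i)\,i)^{\hat n} = 1^{\hat n}$, i.e.\ the constant function $1$. By the phase-free convention this spider is just a bare wire, so the two Z spiders become directly connected. I would then apply the Z-spider case of \textruleref{Fusion} to merge the two now-adjacent Z spiders into one, multiplying their labelling functions. Reading the Fourier and inverse-Fourier boxes that remain on the outer legs back through \textruleref{X-Spider} in reverse exhibits the result as a single X spider carrying the product function, with the union of all outer legs; flexsymmetry (``only connectivity matters'') lets me reorder those legs freely so that the leg counts agree on both sides.

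The main obstacle I anticipate is bookkeeping of conventions rather than any genuine difficulty. I must check that the decoration scheme of \textruleref{X-Spider} really places a Fourier on one end of the shared wire and an inverse Fourier on the other, so that the two labels multiply to the constant $1$ rather than composing into a double Fourier; and I must confirm that the transforms left on the external legs reassemble into exactly the X-spider pattern, with any leftover scalar accounted for. Once the orientation of each Fock spider on the shared wire is pinned down, the cancellation $(-i)^{\hat n}\cdot i^{\hat n}=1^{\hat n}$ is forced and the remaining steps are the routine \textruleref{Fusion} manipulations.
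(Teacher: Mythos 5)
Your proposal is correct and matches the paper's own derivation: the paper's diagrammatic proof of Lemma~\ref{lem:xFusion} likewise unfolds each X spider via \textruleref{X-Spider} into a Z spider dressed with the Fock spiders $(-i)^{\hat{n}}$ and $i^{\hat{n}}$, cancels the pair on the shared wire by Fock \textruleref{Fusion} and \textruleref{Identity}, applies Z \textruleref{Fusion}, and refolds. The orientation worry you flag is resolved exactly as you anticipate, since the definition places the Fourier on inputs and the inverse Fourier on outputs so that any internal wire carries a cancelling pair.
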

            \begin{restatable}{lemma}{nul}\label{lem:nul}
                \begin{equation}
                    \tikzfigscale{1}{null}
                \end{equation}
            \end{restatable}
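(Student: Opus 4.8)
The displayed equation is a derived \emph{null} identity: I read the left-hand diagram as being claimed equal to the zero map (equivalently, a spider carrying the everywhere-zero label, or the global scalar $0$). The plan is to prove it by a short diagrammatic rewrite using only the rules of Figure~\ref{fig:ZX-Rules}, treating the interpretation $\interp{\cdot}$ as a sanity check on the final scalar rather than as the proof itself. Since this lemma is grouped with the easy derived rules (alongside the X-version of \textruleref{Fusion}), I expect the derivation to be only a few rewrite steps once the diagram is unfolded into primitives.

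First I would unfold the left-hand side into the generators of Figure~\ref{fig:ZX-Rules}: rewrite any X-spider via its definition as a Z-spider conjugated by the Fourier Fock spiders labelled $(-i)^{\hat n}$ and $i^{\hat n}$, and expand any W node down to its $0$/$1$/$2$-leg base cases using \textruleref{W-Node}. This puts everything in terms of Z, Fock, and W generators with explicit labels. The key annihilation step should then come from one of two sources. Either a \textruleref{Copy} interaction pushes a character through an incompatible spider and, after \textruleref{Bialgebra}, collapses it onto a spider whose effective label becomes the zero function; or, more likely given that the W generators are present, a \textruleref{Bialgebra}/\textruleref{W-Node} rewrite creates a closed loop whose Fock label carries an alternating phase, so that the resulting closed scalar is a vanishing binomial sum. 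In either case I would finish by applying the relevant \textruleref{Scalar} rule to extract the vanishing scalar and thereby collapse the whole diagram to $0$.

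The main obstacle will be the interface between the continuous (Z/X) and discrete (Fock/W) halves of the calculus: the Fourier Fock spiders carry the phases $(\pm i)^{\hat n}$, which must be tracked exactly through each fusion, and the normalisation scalars produced when spiders merge (the $\gamma_k$ factors and Hermite-transform constants) must be book-kept so that the endpoint is genuinely the scalar $0$ rather than an undefined or formally divergent expression. To keep this rigorous I would, in parallel, compute $\interp{\cdot}$ of the left-hand diagram directly using the soundness interpretation of Appendix~\ref{sec:soundness}; pinning down the exact value it takes tells me which intermediate rewrite is supposed to produce the zero, and guards against a spurious cancellation that holds only up to an ill-defined constant.
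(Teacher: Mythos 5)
There is a genuine gap here: what you have written is a strategy memo, not a proof. The two load-bearing steps are left unresolved --- you say the ``key annihilation step \emph{should} come from one of two sources'' and then defer the choice between them to a semantic computation of $\interp{\cdot}$ that you also do not carry out. A blind proof attempt has to commit to and execute a derivation; yours executes neither branch. Moreover, the branch you flag as ``more likely'' is the wrong one. You justify the W-node mechanism (closed loop, alternating Fock phases, vanishing binomial sum) ``given that the W generators are present'', but that premise is unsupported: Lemma~\ref{lem:nul} is invoked exactly once in the paper, in the table of Proposition~\ref{prop:gsa-zx-rules}, paired with Lemma~\ref{lem:gsa-lem-zero} to derive the GSA axiom \textsf{(Zero)} for the Gaussian-fragment completeness argument. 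The diagrams relevant to that derivation are built from the $\ZXGaussCategory$ generators --- Z and X spiders with real affine labels, multipliers, and vacua --- so the natural reading of the lemma involves no W nodes at all, and the paper's own proof is a single short chain of diagram rewrites with no W-node expansion and no binomial combinatorics.

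The salvageable content is precisely the branch you de-prioritised: in this calculus a ``null''-type identity falls out by making incompatible point-mass/character labels meet on a single spider --- \textruleref{Fourier} and \textruleref{Copy} to put the labels into one colour, \textruleref{Fusion} to multiply them pointwise into a label that is identically zero (a product of deltas at distinct points, or a vanishing character sum), and \textruleref{Scalar} to pull the vanishing constant out front. Had you committed to that and written out the three or four rewrite steps, you would have had a proof in the paper's style. Two further cautions. First, your reading of the right-hand side as ``the zero map'' is itself a guess that may be wrong: ``null'' here plausibly refers to the $m=0$ multiplier, whose interpretation $\int\ket{0}_X\bra{x}_X\,dx$ is \emph{not} the zero map, and whose characteristic law (the graphical-linear-algebra form of \textsf{(Zero)}: the null multiplier equals a disconnected delete-then-prepare pair) would make your entire annihilation strategy aim at the wrong target. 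Second, computing $\interp{\cdot}$ ``in parallel'' can only check soundness; it cannot stand in for the missing diagrammatic derivation, which is the whole content of the lemma.
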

            \begin{restatable}{lemma}{WUnit}\label{lem:WUnit}
                \begin{equation}
                    \tikzfigscale{1}{w-unit}
                \end{equation}
            \end{restatable}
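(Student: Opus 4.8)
The lemma asserts that the Fock vacuum is the unit for the multiplication encoded by the W-node: attaching a vacuum state (equivalently the zero-legged W-node, or a Fock spider carrying the Kronecker delta $\delta_0$) to one input of a two-legged W-node collapses it to a bare wire. This is already visible semantically from the interpretation stated after \textruleref{W-Node}: pinning one input index to $0$ in the sum $\sum \sqrt{(\sum_i n_i)!/\prod_i n_i!}\,\ket{\sum_i n_i}\bra{n_1,\dots}$ forces the binomial weight $\binom{n}{0}=1$, leaving $\sum_{n\geqslant 0}\ket{n}\bra{n}=\mathrm{id}$. Since this is a derived rule, however, the plan is not to verify this by computing interpretations but to reproduce it as a sequence of rewrites from Figure~\ref{fig:ZX-Rules}, organised around the explicit construction of the W-node from Z and Fock generators given in \textruleref{W-Node} together with \textruleref{Fusion}.

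First I would use the flexsymmetry of the W-node to place the vacuum on whichever input leg is most convenient, and identify it with the appropriate $\delta_0$-labelled Fock generator. Next I would unfold the two-legged W-node into its definition in terms of the underlying generators and push the vacuum through using the Fock \textruleref{Fusion} and \textruleref{Copy} rules, which allow a $\delta_0$-labelled Fock spider to be absorbed into an adjacent spider while fixing its summation index to $0$. Finally I would discharge any residual Fock spider via \textruleref{Identity} and collect whatever scalar is generated along the way using the \textruleref{Scalar} rule, arriving at a single wire.

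The hard part will be the bookkeeping of the combinatorial normalisation. The W-node carries weights $\sqrt{(\sum_i n_i)!/\prod_i n_i!}$, and the derivation must make these collapse to exactly $1$ — not merely to a constant scalar, and not leaving some surviving function such as a spurious $\sqrt{n!}$ factor on the output leg — precisely when the inserted index is pinned to $0$. Ensuring the rewrite sequence reproduces this cancellation cleanly is where most of the care is required. If a direct diagrammatic cancellation proves awkward, a natural fallback is to appeal to the \textruleref{W-Node} definition to verify the base case $k=2$ explicitly, since the associativity supplied by \textruleref{Fusion} then propagates the unit law to W-nodes of all arities.
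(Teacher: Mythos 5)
Your plan matches the paper's approach: the paper proves this lemma by exactly the kind of short diagrammatic rewrite chain you describe, unfolding the composite that defines the two\-/legged W node per \textruleref{W-Node} and discharging it with the rules of Figure~\ref{fig:ZX-Rules} (the proof appears purely as a rewrite figure in Appendix~\ref{sec:rule-proofs-appendix}, with no accompanying text). Your semantic check that pinning one input index to $0$ collapses the weight $\sqrt{(\sum_i n_i)!/\prod_i n_i!}$ to $1$ and leaves $\sum_{n\geqslant 0}\ket{n}\bra{n}=\mathrm{id}$ is the correct target of that rewrite, and your worry about the $\sqrt{n!}$ bookkeeping is precisely the content the paper's figure resolves.
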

            \begin{restatable}{lemma}{vacuumCopy}\label{lem:vacuumCopy}
                \begin{equation}
                    \tikzfigscale{1}{w-vacuum-copy}
                \end{equation}
            \end{restatable}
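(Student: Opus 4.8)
The plan is to show that the splitting W node copies the vacuum, so that a single vacuum state $\ket{0}$ fed into a $k$-legged W node equals $k$ independent copies of the vacuum. The quickest route to conviction is semantic, using the interpretation of the (transpose) W node recorded above: feeding in $\ket{0}$ gives
\[
    \sum_{n_1,\dots,n_k \geqslant 0} \sqrt{\frac{(\sum_i n_i)!}{\prod_i n_i!}}\, \ket{n_1,\dots,n_k}\,\braket{\sum_i n_i | 0},
\]
and because $\braket{\sum_i n_i|0}$ vanishes unless every $n_i = 0$, the sum collapses to the single surviving term $n_1 = \dots = n_k = 0$, whose coefficient is $\sqrt{0!/\prod_i 0!} = 1$, leaving $\ket{0}^{\otimes k}$. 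This computation also makes transparent \emph{why} the vacuum copies cleanly where a generic Fock state would not: only the all-zero multi-index contributes, and its multinomial factor is trivially $1$.

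For a derivation internal to the calculus of Figure~\ref{fig:ZX-Rules}, I would argue by induction on the number of legs $k$. The cases $k = 0, 1$ are immediate from the W-node definition \textruleref{W-Node}, and the crux is $k = 2$: I would show that the two-legged splitting W node applied to the vacuum equals vacuum $\otimes$ vacuum, identifying the vacuum with the Fock spider labelled by $\delta_0$. Here I expect to combine the unit property from Lemma~\ref{lem:WUnit} with the bialgebra interaction between W and the Fock spider \textruleref{Bialgebra}, together with the copy behaviour of the Fock spider \textruleref{Copy}. The inductive step then applies the fusion rule \textruleref{Fusion} for W nodes to split a $(k+1)$-legged W node into a $k$-legged one followed by a two-legged one, after which the induction hypothesis and the base case close the argument.

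The hard part will be the base case. I must pin down the vacuum correctly within the generator set and verify that the two-legged W node genuinely copies it, which forces a careful interplay between the W-node definition and its bialgebraic compatibility with the Fock spider. The subtle point throughout is scalar bookkeeping: I will need the \textruleref{Scalar} rules to confirm that all the $\sqrt{n!}$-type normalisation factors cancel and that no spurious global scalar survives, so that the identity holds on the nose rather than up to a scalar. This is precisely where the special status of the vacuum --- the unique zero-excitation Fock state, for which the multinomial factor reduces to $0! = 1$ --- does the work.
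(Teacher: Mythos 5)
Your semantic computation is correct: feeding $\ket{0}$ into the stated interpretation of the splitting W node kills every term with $\sum_i n_i > 0$, and the surviving all-zero term has coefficient $\sqrt{0!/\prod_i 0!} = 1$, so the equation holds in the model. But this is not what the lemma asks for. Lemma~\ref{lem:vacuumCopy} sits in Figure~\ref{fig:Derived-Rules}, whose stated content is that these equalities are \emph{derivable} from the rules of Figure~\ref{fig:ZX-Rules}; accordingly, the paper's own proof is a short chain of diagrammatic rewrites inside the calculus, not a computation of $\interp{\cdot}$. A model-level verification establishes soundness of the equation but not its derivability, and derivability is precisely what lets the lemma be used as a rewrite step in the later fully graphical arguments (for instance in the Gaussian boson sampling section, where it is invoked to remove self-loops when a mode is measured in $\ket{0}$).

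Your fallback plan --- induction on the number of legs --- has the right scaffolding: the inductive step is unproblematic, since W nodes with more than two legs are \emph{defined} by associativity, so splitting a $(k+1)$-legged W node into a $k$-legged one composed with a two-legged one costs nothing. But you explicitly leave the $k=2$ base case open, and that base case \emph{is} the lemma: exhibiting the actual chain of rewrites showing that the two-legged splitting W node copies the Fock state $\delta_0$ is exactly where the Fock--W \textruleref{Bialgebra} rule, Lemma~\ref{lem:WUnit}, and \textruleref{Copy} must be combined, and you never produce that chain --- you only name the ingredients and flag the scalar bookkeeping as a worry. The ingredients you name are plausible and overlap with what the paper's rewrite sequence uses, so the outline is not wrong in spirit; it is simply not a proof. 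As it stands, the proposal proves the right equation by the wrong means, and sketches the right means without completing them.
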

            \begin{restatable}{lemma}{push}\label{lem:push}
                \begin{equation}
                    \tikzfigscale{1}{f-push}
                \end{equation}
            \end{restatable}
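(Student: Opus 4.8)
The statement packages the photon-number conservation of the W-node: an exponential Fock phase $\lambda^{\hat n}$ (a Fock spider with label $g(n)=\lambda^n$) sitting on the single merged leg of a W-node is equal to one copy of $\lambda^{\hat n}$ on each of its legs. The plan is to establish this purely graphically, reducing the $k$-legged case to the binary one and then resolving the binary case from the generators.

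\emph{Reduction to the binary W-node.} Since the multi-legged W is built by iterated composition of two-legged W's via associativity, I would write $W_k = W_2\circ(W_{k-1}\otimes\mathrm{id})$ and induct on the number of legs. Pushing the phase through the root $W_2$ deposits a copy of $\lambda^{\hat n}$ on the outermost leaf and a copy on the internal edge feeding $W_{k-1}$; the induction hypothesis then distributes the latter across the remaining $k-1$ leaves. The degenerate cases of zero and one leg are immediate from Lemma~\ref{lem:WUnit} and the identity W-node, so the whole rule rests on the $k=2$ instance.

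\emph{The binary base case.} Here I would unfold $W_2$ into its defining Z and Fock generators and move the phase across, using \textruleref{Fusion} to absorb $\lambda^{\hat n}$ into the W-node's internal Fock spider and then \textruleref{Copy}/\textruleref{Bialgebra} to redistribute it. The crucial algebraic input is complete multiplicativity, $\lambda^{n_1+n_2}=\lambda^{n_1}\lambda^{n_2}$: because the node outputs $\ket{n_1+n_2}$, the output phase factors into $\lambda^{n_1}$ and $\lambda^{n_2}$ on the two inputs while the multinomial weight $\sqrt{\binom{n_1+n_2}{n_1}}$ is left untouched. This is essentially the same bookkeeping as in Lemma~\ref{lem:vacuumCopy}, now specialised from the vacuum $\delta_0$ to the exponential label $\lambda^{\hat n}$.

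I expect the base case to be the main obstacle: the factorisation $\lambda^{n_1+n_2}=\lambda^{n_1}\lambda^{n_2}$ must emerge from the rewrite rules rather than be assumed, which amounts to certifying that only completely multiplicative labels push through and that the multinomial normalisation is genuinely preserved by the rewrite. As a soundness check that also pins down the exact normalisation the graphical proof must reproduce, I would first verify the identity on interpretations, where both sides send $\ket{n_1,\dots,n_k}$ to $\lambda^{\sum_i n_i}\sqrt{(\sum_i n_i)!/\prod_i n_i!}\,\ket{\sum_i n_i}$.
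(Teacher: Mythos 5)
The paper's proof is a single observation: the equation is \emph{itself} an instance of the \textruleref{Bialgebra} axiom of Figure~\ref{fig:ZX-Rules} --- specifically the middle one, the bialgebra between the Fock spider and the W node --- so nothing needs to be derived at all. Your proposal never makes this identification. You treat \textruleref{Bialgebra} as merely one of several tools (``Copy/Bialgebra to redistribute it'') inside a rewrite sequence that starts by unfolding $W_2$ into its defining generators, and the whole architecture of your argument (induction on the number of legs, then a binary base case) is built around that plan.

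The genuine gap is the base case, which is the entire content of the lemma, and which you yourself flag as ``the main obstacle'' without closing it. Your treatment of it is semantic rather than derivational: ``because the node outputs $\ket{n_1+n_2}$, the output phase factors into $\lambda^{n_1}$ and $\lambda^{n_2}$'' is a statement about interpretations, and the accompanying rewrite plan --- absorb $\lambda^{\hat n}$ into the W node's internal Fock spider by \textruleref{Fusion}, then redistribute by \textruleref{Copy}/\textruleref{Bialgebra} --- is not a derivation. \textruleref{Fusion} only multiplies Fock labels pointwise and cannot by itself produce the factorisation $\lambda^{n_1+n_2}=\lambda^{n_1}\lambda^{n_2}$, and \textruleref{Copy} concerns copying of basis states through spiders of the other colour, not phases through W nodes. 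The only rule in Figure~\ref{fig:ZX-Rules} that encodes the required photon-number additivity of the W node is exactly the Fock--W \textruleref{Bialgebra} axiom, and once its correct instance is invoked the statement is finished, with no unfolding of the W definition and no induction. For what it is worth, your reduction of the $k$-legged case to the binary one via associativity and Lemma~\ref{lem:WUnit} is sound, and would be the right complement if the axiom only covered the binary case; likewise the verification on interpretations is a legitimate soundness check --- but neither substitutes for the missing graphical proof of the base case.
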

            \begin{restatable}{lemma}{wLoopGreen}\label{lem:wLoopGreen}
                \begin{equation}
                    \tikzfigscale{1}{w-loop-green}
                \end{equation}
            \end{restatable}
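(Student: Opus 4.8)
The plan is to prove Lemma~\ref{lem:wLoopGreen} diagrammatically, rewriting the left-hand side into the right-hand side using only the rules of Figure~\ref{fig:ZX-Rules} together with the derived rules already established in this section. As a first step I would fix the intended reading of both sides by unfolding the interpretations: the green wire closing the loop is the Z-spider cup/cap with $f \equiv 1$, and by orthonormality of the Hermite functions, $\int \psi_m(x)\psi_n(x)\,dx = \delta_{mn}$, this cap acts on the Fock basis exactly as the Fock cap. This observation lets me treat the loop wire as identifying two Fock indices, so that the W node can be analysed entirely in the number basis and the scalar (or residual spider) carried by the right-hand side can be read off in advance. Having the target coefficient in hand then guides the choice of rewrites.

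The rewrite itself I would organise around the W-node fragment. First I would use the W version of \textruleref{Fusion} to bring the looped legs onto a single W vertex, then apply the W--Z \textruleref{Bialgebra} rule to push the green structure through the W node, converting the loop into a configuration in which one branch terminates in a unit or vacuum. At that point the earlier derived rules do the clean-up: Lemma~\ref{lem:WUnit} removes the spurious unit leg, Lemma~\ref{lem:vacuumCopy} governs how the vacuum is duplicated across the remaining W legs, and the snake (yanking) equations straighten whatever wire the cap left behind. Should a residual Fock phase survive, Lemma~\ref{lem:push} lets me commute it onto the correct leg so that the two sides match on the nose, after which a final \textruleref{Fusion} consolidates the remaining spiders into the single generator on the right-hand side.

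The step I expect to be the main obstacle is the collapse of the loop's infinite sum: closing the green cap on the W node produces a sum over the looped Fock index weighted by the multinomial square-roots $\sqrt{(\sum_i n_i)!/\prod_i n_i!}$, and I must show that this sum reproduces exactly the coefficient on the right-hand side rather than diverging. Diagrammatically this is precisely the kind of combinatorial identity packaged by \textruleref{Bialgebra} (and, in the worst case, by a \textruleref{Triforce}-style Hermite product identity), so the crux is to route the derivation through the bialgebra rule so that the combinatorics is discharged by a single rule application instead of an explicit resummation. The non-standard-analysis semantics of Appendix~\ref{sec:non-standard} guarantees that the formal sums manipulated along the way are well defined throughout, which is what licenses performing the loop contraction before the coefficient has been resolved.
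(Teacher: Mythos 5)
There is a genuine gap, and it sits exactly where you flagged it. Your central mechanism --- ``apply the W--Z \textruleref{Bialgebra} rule to push the green structure through the W node'' --- is not available in this calculus. The only bialgebra involving the W node among the rules of Figure~\ref{fig:ZX-Rules} is the one with the \emph{Fock} spider (this is what the paper itself says in the proof of Lemma~\ref{lem:push}: a special case of \textruleref{Bialgebra} ``between the Fock spider and the W node''); no bialgebra between the position-basis Z spider and W is postulated, and none could be sound: semantically $W\left(\ket{x}_X\otimes\ket{y}_X\right) = \pi^{-1/4}\,e^{-(x-y)^2/4}\,\sqrt{2}^{\hat{n}}\,\ket{(x+y)/\sqrt{2}}_X$, which is not a position eigenstate, so the Z copy map cannot commute past the W merge in bialgebra fashion. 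With that step removed, the remaining moves you list (W \textruleref{Fusion}, Lemma~\ref{lem:WUnit}, Lemma~\ref{lem:vacuumCopy}, the snake equations, Lemma~\ref{lem:push}) are purely structural, coefficient-free rewrites; they cannot produce the content of the right-hand side of Lemma~\ref{lem:wLoopGreen}, which --- as the displayed formula for $W$ on position states already shows --- encodes a genuine Hermite-function identity generating new Fock-phase, multiplier and scalar data.

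This is also why your third step is not a side issue but the whole lemma. Closing the loop in the Fock basis produces coefficients of the type $\sum_m \sqrt{\tbinom{2m}{m}}\,c^m\,\ket{2m}$, and no bialgebra application resums such an expression into spider form; the identity needed is of Hermite-product/generating-function type, which in this calculus is carried by \textruleref{Triforce}, \textruleref{Hermite}, and lemmas such as Lemma~\ref{lem:KtoNFockZ} (whose proof explicitly invokes the exponential generating function of the Hermite polynomials). You mention \textruleref{Triforce} only as a worst-case fallback, with no concrete route for deploying it, so the crux of the proof remains undischarged. Two further points: replacing the green cap by the Fock cap via orthonormality of the Hermite functions is a semantic argument, not a derivation inside the calculus (Figure~\ref{fig:Derived-Rules} requires equalities \emph{derivable from the rules}), and even if granted it strands you in the Fock basis where the unsolved resummation lives; and the appeal to non-standard semantics only guarantees well-definedness of the sums, not their evaluation. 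The paper's own proof is a direct chain of diagrammatic rewrites, placed in the appendix immediately after Lemma~\ref{lem:xTriforce}, i.e.\ with the Triforce-type machinery at hand --- not a loop contraction followed by structural clean-up.
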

            \begin{restatable}{lemma}{squeezedVacuum}\label{lem:squeezedVacuum}
                \begin{equation}
                    \tikzfigscale{1}{squeezed-vacuum}
                \end{equation}
            \end{restatable}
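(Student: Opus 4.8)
The plan is to build the squeezed vacuum from the ordinary vacuum by a squeezing gadget and then to collapse that gadget using the W-loop identity proved just above. The squeezed vacuum is an even-photon superposition $\sum_{n\geqslant 0}\tfrac{\sqrt{(2n)!}}{2^n n!}\lambda^n\ket{2n}$, and the natural way to generate the pairing of photons is a W node whose two upper legs are tied together into a loop: by the \textruleref{W-Node} interpretation this produces the coefficients $\sqrt{(\sum_i n_i)!/\prod_i n_i!}$, which on the diagonal $n_1=n_2=n$ give $\sqrt{(2n)!}/n!$; combined with the geometric loop weight supplied by the green (Z) spider this should reproduce the squeezed-vacuum coefficients $\sqrt{(2n)!}/(2^n n!)\,\lambda^n$ up to overall normalisation. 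So first I would apply Lemma~\ref{lem:wLoopGreen} to rewrite the looped-W part of the diagram, reading $\lambda$ off the loop as the squeezing weight.

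Having produced the correct Fock-space superposition, I would then cross into the position picture to identify it with a single Gaussian Z spider. The vacuum is the Fock state $\ket{0}$, i.e. the single-legged Fock spider labelled by $\delta_0$, which the \textruleref{Hermite} rule turns into a Z spider carrying $\psi_0(x)=\pi^{-1/4}e^{-x^2/2}$; a squeezing is just a multiplier, so the Z half of the \textruleref{Mult} rule rescales the Gaussian label to $e^{-m^2x^2/2}$ of the expected squeezed width. Matching this position-space Gaussian against the Hermite transform of the even-photon sum closes the equality, and \textruleref{Scalar} together with \eqref{rule:Global-Scalar} absorbs the leftover constants into the global prefactor so the two sides agree exactly rather than merely up to proportionality.

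The main obstacle I anticipate is the scalar and combinatorial bookkeeping rather than the diagrammatic moves: the Hermite normalisation $(2^n n!\sqrt\pi)^{-1/2}$, the Jacobian $|m|^{\pm 1/2}$ from the \textruleref{Mult} change of variables, and the W-node prefactors $\sqrt{(\sum_i n_i)!/\prod_i n_i!}$ must all be tracked consistently so that the generating-function sum over pair counts resums to the single Gaussian. In particular, getting the loop weight $\lambda$ to correspond to the squeezing parameter $m$ with the correct branch of the square-root normalisation, and verifying that the stray factor of $2^n$ is exactly the one furnished by the green self-loop, is where I expect the real work to concentrate.
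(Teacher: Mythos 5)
Your semantic reading of both sides is essentially right: the squeezed vacuum is the even-photon superposition $\sum_{n\geqslant 0}\frac{\sqrt{(2n)!}}{2^n n!}\lambda^n\ket{2n}$, and a W node whose two remaining legs are traced together with a geometric weight per photon pair produces coefficients $\frac{\sqrt{(2n)!}}{n!}$ times that weight to the $n$-th power, so the two match once the stray $2^n$ is absorbed into the loop weight. But two things in your plan are genuinely broken. First, the weight on the self-loop must be carried by a \emph{Fock} spider $\lambda^{\hat n}$, since only a Fock spider acts diagonally in the number basis; this is also how the lemma is consumed in the Gaussian boson sampling normal form (Theorem~\ref{thm:gbs-rewrite-normal-form}), where every edge and loop carries a Fock spider $a_{ij}^{\hat n}$. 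You instead attribute the geometric weight to ``the green (Z) spider'' and plan to collapse the gadget with Lemma~\ref{lem:wLoopGreen}. A Z spider labelled by $f$ has Fock-basis matrix elements $\int f(x)\psi_{n_1}(x)\psi_{n_2}(x)\,dx$, which are not of the form $\lambda^{n}\delta_{n_1,n_2}$ for any choice of label, so it cannot supply a per-pair factor, and Lemma~\ref{lem:wLoopGreen} cannot play the role you assign to it.

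Second, the step you set aside as ``scalar and combinatorial bookkeeping'' is the entire mathematical content of the lemma. After \textruleref{Hermite} and \textruleref{Mult} turn the left-hand side into a Gaussian-labelled Z spider, closing the equality requires resumming the even-index Hermite series, i.e.\ the identity $\sum_{n\geqslant 0}\frac{t^n}{n!}H_{2n}(x) = (1+4t)^{-1/2}\exp\bigl(\frac{4tx^2}{1+4t}\bigr)$ (a special case of Mehler's formula); this is what converts the Fock-side sum into a squeezed Gaussian and simultaneously fixes the relation between the loop weight and the squeezing parameter as well as the $1/\sqrt{\cosh r}$-type prefactor. Your proposal never produces this identity --- it only asserts that the constants ``must be tracked consistently'', which is an announcement of the proof rather than the proof. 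Note also that even with the resummation done, matching interpretations coefficient-by-coefficient establishes soundness of the equation, not its derivability: the statement sits in Figure~\ref{fig:Derived-Rules}, whose claim is that these equalities are \emph{derivable} from the rules of Figure~\ref{fig:ZX-Rules}, and the paper accordingly proves it by a chain of diagrammatic rewrites, with analytic generating-function input entering the development packaged once inside derived lemmas (as in Lemma~\ref{lem:KtoNFockZ}) rather than invoked ad hoc at the level of interpretations.
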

            \begin{restatable}{lemma}{KtoNFockZ}\label{lem:KtoNFockZ}
                \begin{equation}
                    \tikzfigscale{1}{k-to-n-fock-z}
                \end{equation}
            \end{restatable}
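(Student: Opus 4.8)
The plan is to prove the identity by rewriting the left-hand side into the right-hand side using the rules of Figure~\ref{fig:ZX-Rules} together with the derived rules already established earlier in this section. First I would read off the intended semantics of both diagrams via \textruleref{Fock-Spider} and \textruleref{W-Node}: the label $\gamma_k(n)=k^n\sqrt{n!}^{\,a-1}$ entangles an exponential factor $k^n$ with the combinatorial weights $\sqrt{n!}$ that are native to the Fock and W generators, so the statement is really a bookkeeping identity recording how a single parameter $k$ is distributed across copies when a Fock spider is expanded through a W node, with the exponent $a-1$ chosen so that $k^n=\prod_i k^{n_i}$ under the additivity $n=\sum_i n_i$. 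Unfolding both interpretations as linear maps confirms they agree, which both pins down the correct normalisation and guides the diagrammatic route.

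The derivation itself would proceed in three stages. First, I would apply the generalised \textruleref{Triforce} rule, namely Proposition~\ref{prop:triforceKn}, to split the $\gamma_k$-labelled Fock spider into a W node feeding several simpler Fock spiders, thereby converting the single exponent $k^n$ into the product $\prod_i k^{n_i}$ enforced by the additive action of the W node. Second, I would absorb the stray $\sqrt{n!}$ factors: each leg of the W node contributes exactly the $\sqrt{n_i!}$ normalisation appearing in its interpretation, and the $\sqrt{n!}^{\,a-1}$ exponent in $\gamma_k$ is engineered precisely so that these cancel against the W-node weights $\sqrt{(\sum_i n_i)!/\prod_i n_i!}$ after \textruleref{Fusion}. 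Third, I would bring in the Z spider using the \textruleref{Hermite} rule, and resolve any residual loop using Lemma~\ref{lem:wLoopGreen}, replacing the remaining Fock data by its position-space counterpart to land on the right-hand side.

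The hard part will be the second stage: verifying that the combinatorial factors genuinely cancel. Since each application of \textruleref{Triforce} and each W-node leg produces its own power of $\sqrt{n!}$, I must keep precise track of the net exponent and confirm it is exactly $a-1$ rather than $a$ or $a-2$, where an off-by-one in the normalisation would otherwise creep in. I would discharge this by induction on the number of legs $a$: the base case $a=1$ reduces to \textruleref{Identity} for the Fock and W generators, and the inductive step peels off one leg using W-node \textruleref{Fusion} together with the previously-derived push rule (Lemma~\ref{lem:push}), reducing the $a$-leg claim to the $(a-1)$-leg claim with the $\gamma_k$ exponent decremented accordingly.
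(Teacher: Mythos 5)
Your proposal misses the analytic core of this lemma, and the structural route you outline cannot close the gap. The paper's proof is a short chain of diagram equalities whose decisive final step is \emph{analytic}: after expanding the $\gamma_k$-labelled Fock spider into the position basis via the \textruleref{Hermite} correspondence, the resulting series of labels is resummed using the exponential generating function of the Hermite polynomials,
\begin{equation*}
\sum_{n=0}^{\infty} H_n(x)\,\frac{t^n}{n!} \;=\; e^{2xt - t^2},
\end{equation*}
which is what converts the Fock-basis data $k^n\sqrt{n!}^{\,a-1}$ into a closed-form exponential label on the Z spider of the right-hand side. That resummation is an equality of label functions $\reals\to\complexs$, justified by classical analysis, not a consequence of any rewrite rule. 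None of the rules you invoke --- \textruleref{Triforce} and its generalisation Proposition~\ref{prop:triforceKn}, \textruleref{Fusion}, Lemma~\ref{lem:push}, Lemma~\ref{lem:wLoopGreen} --- performs an infinite resummation over the Fock index: the \textruleref{Triforce} encodes the product formula $H_m(x)H_n(x)=\sum_r r!\,2^r\binom{m}{r}\binom{n}{r}H_{m+n-2r}(x)$, which is a different analytic fact and does not yield the generating-function identity. Since the calculus also has no primitive notion of infinite sums of diagrams, the conversion from a $k^{\hat n}$-type Fock label to a closed-form Z label can only happen at the level of the label functions, exactly as the paper does it.

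The concrete failure point is the base case of your induction. You claim the minimal case ``reduces to \textruleref{Identity} for the Fock and W generators,'' but that case already asserts that a Fock spider labelled by powers of $k$ (times factorial factors) equals a Z spider with a closed-form exponential label --- this is precisely the coherent-state/generating-function identity, i.e.\ the entire content of the lemma, and it cannot follow from \textruleref{Identity}, which only removes trivially-labelled spiders. Your second and third stages address only the combinatorial bookkeeping of the exponent $a-1$ against the W-node weights $\sqrt{(\sum_i n_i)!/\prod_i n_i!}$; that part is plausible but it is the easy part, and with the base case unproved the induction establishes nothing. Note also that your preliminary ``semantic check,'' if carried out honestly, would force you through the same Hermite resummation --- at which point you would have reproduced the paper's proof, and the diagrammatic scaffolding of your stages one and two would be superfluous.
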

            \begin{restatable}{lemma}{FWPlusState}\label{lem:FWPlusState}
                \begin{equation}
                    \tikzfigscale{1}{f-plus-state}
                \end{equation}
            \end{restatable}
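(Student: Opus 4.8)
The plan is to prove the identity by a short diagrammatic rewrite whose target I would fix \emph{in advance} by computing the interpretation of both sides. On the left-hand side I would read off the composite action using the interpretations already given for \textruleref{Fock-Spider} and \textruleref{W-Node}: the Fock spider contributes its coefficients on each number-basis component, while the W node contributes the square-root multinomial weights $\sqrt{(\sum_i n_i)!/\prod_i n_i!}$. Expanding the composite in the Fock basis yields an explicit vector whose coefficients I would simplify until they match those of the claimed right-hand state; this semantic computation also pins down any global prefactor, which can then be carried diagrammatically by the \textruleref{Global-Scalar} generator so that every subsequent rule application is manifestly scalar-exact.

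For the graphical derivation itself I would start from the left-hand diagram and proceed in the following order. First, use the \textruleref{Fusion} rule for the Fock (and, where relevant, W) spiders to merge adjacent same-coloured nodes and collapse any redundant legs. Next, invoke the appropriate \textruleref{Bialgebra} rule---in particular the Fock/W bialgebra and the W bialgebra---to commute the W node past the Fock spider and reorganise the connectivity into a form where the individual legs are isolated. At this point the earlier derived lemmas do most of the remaining work: Lemma~\ref{lem:WUnit} discards a trivial single-legged W node, Lemma~\ref{lem:vacuumCopy} propagates a vacuum input cleanly through the W node, and Lemma~\ref{lem:push} moves the Fock-spider data through to where it is needed. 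A final application of \textruleref{Fusion}, together with the phase-free spider conventions, should then collapse the diagram onto the target state.

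The main obstacle I anticipate is the bookkeeping of combinatorial coefficients. The W-node interpretation carries square-root multinomial weights while the Fock spider carries its own coefficient data, so the delicate step is verifying that these factors combine into \emph{exactly} the coefficients of the target plus state rather than a rescaled variant---in particular, that the $\sqrt{n!}$ factors from the W node cancel against those implicit in the Fock-basis normalisation used by the \textruleref{Hermite} rule. I would neutralise this risk by first checking the coefficient identity purely at the level of interpretations, and only then transcribing the verified cancellation into the rewrite, so that each named rule is applied in its exact (scalar-tracking) form and no factor of $\sqrt{n!}$ or global constant is silently dropped.
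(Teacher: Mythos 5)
There is a genuine gap here, and it is precisely at the point your proposal defers: the mechanism by which the \emph{addition} of the two spider parameters is produced inside the calculus. The combinatorial heart of this lemma is the binomial convolution
\begin{equation*}
    \sum_{n_1+n_2=N}\binom{N}{n_1}\,k_1^{n_1}k_2^{n_2} \;=\; (k_1+k_2)^N ,
\end{equation*}
and none of the rules you list implements it. \textruleref{Fusion} and the W/Fock \textruleref{Bialgebra} (and its special case, Lemma~\ref{lem:push}) only ever move labels around \emph{multiplicatively}, since $k^{n_1+n_2}=k^{n_1}k^{n_2}$; Lemmas~\ref{lem:WUnit} and~\ref{lem:vacuumCopy} concern the trivial one-legged W node and vacuum inputs, neither of which occurs here. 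So the rewrite sequence you sketch can reorganise connectivity but can never turn two parameters $k_1,k_2$ into the single parameter $k_1+k_2$. The paper's derivation gets this addition from a different place entirely: it passes through the position-basis picture via Lemma~\ref{lem:KtoNFockZ} (whose own proof invokes the exponential generating function of the Hermite polynomials), where each coherent-parameter Fock state becomes a Z spider labelled by an exponential function of $x$; Z-spider \textruleref{Fusion} then \emph{multiplies} these exponential labels, and multiplication of exponentials is exactly addition in the parameter. Your plan never mentions Lemma~\ref{lem:KtoNFockZ} or any substitute for it, so the key idea is missing.

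A second, structural issue: this lemma lives in the table of equalities \emph{derivable} from the rules of Figure~\ref{fig:ZX-Rules}, so its content is derivability, not merely truth of the equation. Your fallback of ``check the coefficient identity at the level of interpretations, then transcribe the verified cancellation into the rewrite'' does not close this gap --- a semantic computation with the \textruleref{Fock-Spider} and \textruleref{W-Node} interpretations establishes soundness of the equation, but there is no general way to ``transcribe'' a semantic cancellation into rule applications unless the rules actually support each step, which is exactly what remains unproven. The semantic check is a fine sanity test (and it is how one discovers the global scalar), but the proof itself has to be the diagrammatic chain, and for that you need the position-basis detour (or an equivalent graphical encoding of the binomial theorem) as its centrepiece.
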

            \begin{restatable}{proposition}{FWPlus}\label{prop:FWPlus}
                \begin{equation}
                    \tikzfigscale{1}{f-plus}
                \end{equation}
            \end{restatable}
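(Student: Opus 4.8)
The plan is to establish Proposition~\ref{prop:FWPlus} as the general map-level version of the state identity proved in Lemma~\ref{lem:FWPlusState}, exploiting the dagger-compact structure of the calculus. Every wire carries the same object $\LtwoR$, and the Z-spider supplies cups and caps obeying the snake equations; consequently any equation between states can be bent into an equation between operators while preserving its interpretation. This process-state duality is the backbone of the argument, reducing the proposition to a result we may already assume.

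First I would take both sides of Lemma~\ref{lem:FWPlusState} and bend the appropriate output legs down into inputs by composing with caps, producing exactly the operators that appear in Proposition~\ref{prop:FWPlus}. Since the Z, Fock, and W generators are all flexsymmetric --- \emph{only connectivity matters} --- the particular choice of legs to bend is immaterial, so the transformation is unambiguous and is applied identically on each side. Any zigzags created by the bending are then removed using the snake equations, and adjacent spiders of the same type are merged via the \textruleref{Fusion} rule to recover the compact form shown in the statement; if a W-node transpose is produced, its symmetry lets me reorient it to match. Where needed, the auxiliary identities of Lemmas~\ref{lem:WUnit} and~\ref{lem:vacuumCopy} can be invoked to discharge residual units or vacuum legs.

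The main obstacle I anticipate is bookkeeping rather than conceptual. The W-node carries multinomial normalisation factors $\sqrt{(\sum_i n_i)!/\prod_i n_i!}$, and the multipliers and global scalars must be tracked carefully through the bending so that the diagram obtained from Lemma~\ref{lem:FWPlusState} coincides with the right-hand side of Proposition~\ref{prop:FWPlus} on the nose, including every normalisation constant. Verifying that these scalar contributions match --- rather than merely agreeing up to a nonzero factor --- is where the care of the proof will lie.
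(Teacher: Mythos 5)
Your proposal rests on the claim that Proposition~\ref{prop:FWPlus} is just the process--state dual of Lemma~\ref{lem:FWPlusState}, obtainable by bending legs with cups and caps. That claim is false, and the paper's proof is not of this form. Wire-bending (composing with cups/caps and cancelling zigzags via the snake equations) is invertible and preserves both the set of generators occurring in a diagram and their labels; it only re-orients legs. But the two sides of Proposition~\ref{prop:FWPlus} are $1\to 1$ maps whose left-hand side contains \emph{two} W nodes --- a split and a merge --- with Fock spiders $a^{\hat n}$, $b^{\hat n}$ on the parallel wires between them, while Lemma~\ref{lem:FWPlusState} is an identity between \emph{states}: a single W node with labelled Fock-spider states plugged into it equals one labelled Fock-spider state. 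No amount of bending, flexsymmetry, or \textruleref{Fusion} can conjure the second W node, and the leg counts do not even typecheck: bending the output of a one-legged state produces a one-legged effect, never a $1\to 1$ operator. (Process--state duality would apply only if the lemma were the Choi form of the proposition, i.e.\ the proposition pre-composed with a cup; were that the case, the proposition would be a trivial restatement and would not need its own derivation in the paper.)

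The mismatch is not the ``bookkeeping'' you flag at the end, but structural: the labels themselves differ between the two statements. The states in the lemma carry factorially normalised ($\gamma$-type) labels, whereas the spiders in the proposition carry plain multiplicative labels $k^{\hat n}$, and it is precisely the product of the two multinomial factors $\sqrt{(n_1+n_2)!/(n_1!\,n_2!)}$ --- one contributed by the split and one by the merge --- that assembles the full binomial coefficients and hence makes $(a+b)^{\hat n}$ appear. Producing that cancellation is genuine rewriting content, which bending cannot supply. The paper accordingly proves the proposition by its own chain of diagram rewrites exploiting the Fock/W structural rules, and the step your plan assumes away --- lifting the state-level identity to the operator level --- is exactly where that work happens. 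A viable repair would be, for instance, to decompose each $1\to 1$ spider $a^{\hat n}$ by \textruleref{Fusion} into a phase-free Fock spider with a labelled state plugged into an extra leg, then use the W--Fock interaction (e.g.\ \textruleref{Bialgebra} and its special case Lemma~\ref{lem:push}) to transport those states onto the W nodes until Lemma~\ref{lem:FWPlusState} applies; none of this is achieved by cups, caps and snake equations alone.
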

            \begin{corollary}\label{cor:f-plus-gen}
                \begin{equation}
                    \tikzfigscale{1}{f-plus-gen}
                \end{equation}
            \end{corollary}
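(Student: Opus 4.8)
The plan is to obtain Corollary~\ref{cor:f-plus-gen} from Proposition~\ref{prop:FWPlus} by induction on the number of legs of the $W$-node, exactly as the arbitrary-arity $W$-node was itself built from the two-legged case by associativity (cf.\ the definition following \textruleref{W-Node}, and the analogous passage from \textruleref{Triforce} to Proposition~\ref{prop:triforceKn}). The base case is immediate: for the smallest arity the claimed identity is either vacuous or coincides verbatim with Proposition~\ref{prop:FWPlus}, so nothing further must be checked there.

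For the inductive step I would use $W$-node associativity to split the $k$-legged $W$-node as a two-legged $W$-node one of whose inputs is fed by a $(k-1)$-legged $W$-node. Applying Proposition~\ref{prop:FWPlus} to the outer two-legged node performs the $F$-spider decomposition across its two inputs; on the branch still carrying the $(k-1)$-legged $W$-node I would then invoke the induction hypothesis to distribute over the remaining legs. Finally I would fuse the now-adjacent $F$-spiders using \textruleref{Fusion}, reassembling the diagram into the symmetric $k$-legged shape asserted by the corollary.

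The delicate part is not the rewriting topology but the accounting of labels and scalars. The $W$-node carries the multinomial weights $\sqrt{(\sum_i n_i)!/\prod_i n_i!}$, and the $F$-spider labels are factorial-type functions in the spirit of $\gamma_k(n) = k^n \sqrt{n!}^{a-1}$; I must verify that the coefficients generated at each peeling step recombine into precisely the fully symmetric $k$-fold expression, with no residual normalisation. Concretely this amounts to checking that the two-legged splitting is associative at the level of these labels, so that the order in which legs are peeled does not affect the result. Once this coefficient-compatibility is confirmed, the induction closes without further subtlety.
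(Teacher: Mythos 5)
Your proposal is correct and follows the route the paper intends: the paper states this as an immediate corollary of Proposition~\ref{prop:FWPlus} (no separate proof is given), with the generalisation to arbitrary arity obtained exactly by the peel-one-leg-off induction you describe --- W-node associativity, apply the proposition to the outer two-legged node, invoke the induction hypothesis, then recombine with \textruleref{Fusion} --- mirroring how Proposition~\ref{prop:triforceKn} is derived from \textruleref{Triforce}. One remark: your final paragraph's concern about verifying multinomial weights and scalar normalisation is unnecessary, since every step is an application of an already-established sound rule of the calculus, so the coefficients take care of themselves; that is precisely the point of working diagrammatically rather than with the explicit Fock-basis expansions.
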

            \begin{corollary}\label{cor:f-plus-loop}
                \begin{equation}
                    \tikzfigscale{1}{f-plus-loop}
                \end{equation}
            \end{corollary}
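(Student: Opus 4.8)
The plan is to obtain Corollary~\ref{cor:f-plus-loop} directly from the \textsc{FWPlus} rule (Proposition~\ref{prop:FWPlus}) by closing a loop, rather than re-deriving it from the primitive rules of Figure~\ref{fig:ZX-Rules}. Since every wire is self-dual through the Z-spider cup and cap, and the snake equations let me bend wires freely, I would first read the loop on the left-hand side as the trace obtained by joining one leg of the \textsc{FWPlus} configuration back onto another. Using flexsymmetry of the Fock and W generators (``only connectivity matters'') I can bring exactly the two legs that participate in the loop into the positions demanded by Proposition~\ref{prop:FWPlus}, so that the rewrite applies verbatim underneath the cup and cap.

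Concretely, I would first apply Proposition~\ref{prop:FWPlus} (or, if more than two summands are involved, its $n$-ary form Corollary~\ref{cor:f-plus-gen}) to the \emph{open} diagram, rewriting the Fock-and-W core into its ``plus'' normal form; because this is an equality of interpretations it commutes with wire-bending, so it remains valid after the loop is formed. Next I would fuse the spiders produced by the rewrite using the \textruleref{Fusion} rules for the Fock and W nodes together with W-node symmetry, collapsing the now-adjacent generators into a single looped W-node sitting on a Z (green) spider.

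That remaining configuration is precisely the shape evaluated by Lemma~\ref{lem:wLoopGreen}, so I would invoke it to eliminate the W-loop and read off the claimed right-hand side; should a residual Fock-spider loop appear instead of a W-loop, I would absorb it via Lemma~\ref{lem:vacuumCopy} or the \textruleref{Scalar} rule into the global scalar. The same skeleton also gives a sanity check against the state-level statement Lemma~\ref{lem:FWPlusState}, which should be recoverable by capping off one leg.

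The main obstacle I anticipate is the bookkeeping of combinatorial weights rather than the diagrammatic manipulation itself: the W-node carries the multinomial factors $\sqrt{(\sum_i n_i)!/\prod_i n_i!}$, and closing a loop forces a summation over a repeated Fock index, so the coefficient emerging on the looped leg is an infinite series in that index. I would need to verify that this series matches the label predicted on the right-hand side --- most likely by recognising it as exactly the generating-function identity already packaged inside Lemma~\ref{lem:wLoopGreen} --- and, crucially, that the sum is well-defined in the non-standard semantics of Appendix~\ref{sec:non-standard}, so that I may appeal to the looped-W lemma rather than attempting to resum the series by hand.
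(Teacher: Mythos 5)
Your proposal takes essentially the same route as the paper: Corollary~\ref{cor:f-plus-loop} is given no standalone proof there precisely because it follows immediately from Proposition~\ref{prop:FWPlus} (and its $n$-ary form, Corollary~\ref{cor:f-plus-gen}) by closing a loop with the Z-spider cup and cap and fusing — exactly your headline move. One small correction: the step is legitimate because \emph{derivable} equality is a congruence under diagram composition in a compact-closed setting, not because "equality of interpretations commutes with wire-bending," so the final verification you flag as crucial (resumming the multinomial series in the non-standard semantics) is unnecessary — once the loop is closed on both sides of the already-proven open equality, at most a \textruleref{Fusion} step remains.
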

        \end{multicols}
    \end{mdframed}
    \caption{Equalities derivable using the rules in Figure~\ref{fig:ZX-Rules}.}
    \label{fig:Derived-Rules}
\end{figure}
\begin{figure}[ht]
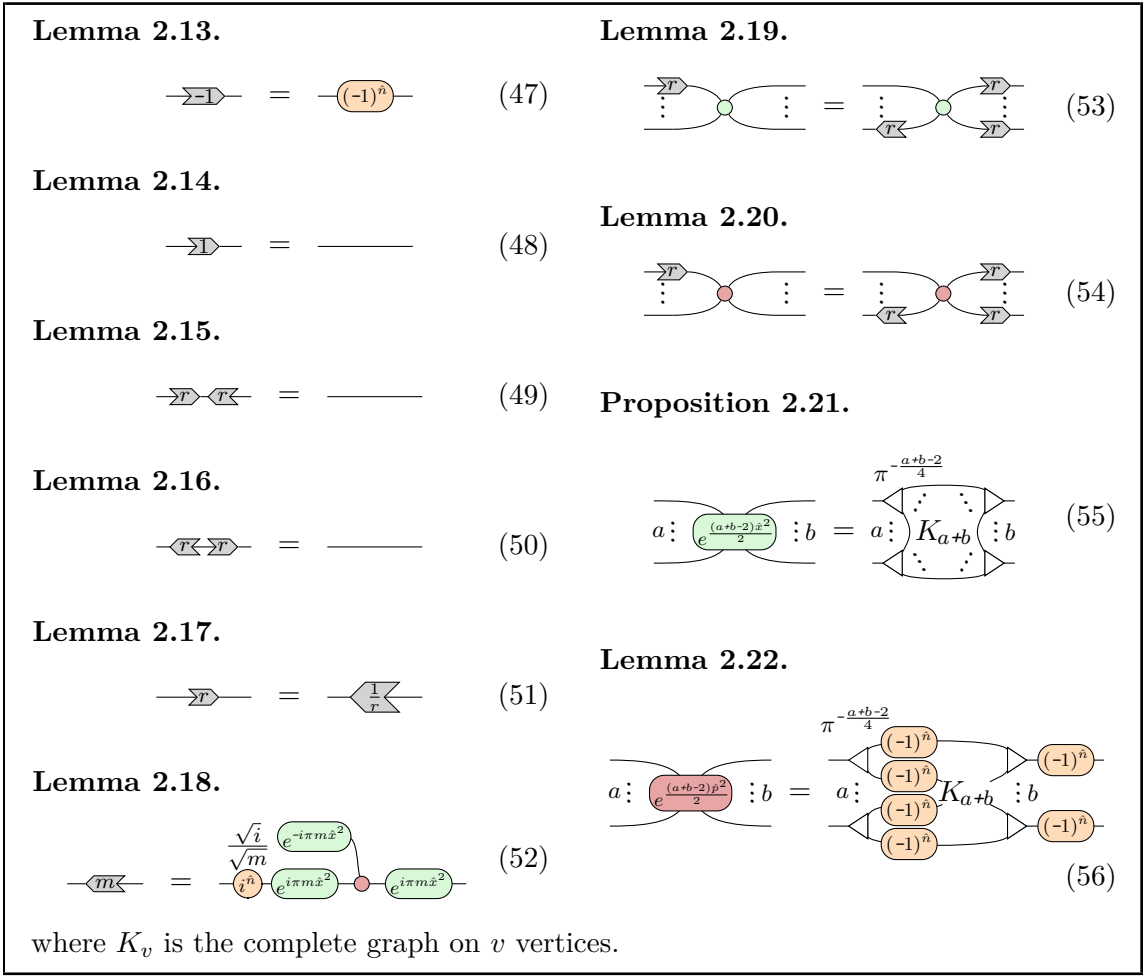
 \ContinuedFloat
    \begin{mdframed}
        \begin{multicols}{2}
            \begin{restatable}{lemma}{minusOne}\label{lem:minusOne}
                \begin{equation}
                    \tikzfigscale{1}{minusOne}
                \end{equation}
            \end{restatable}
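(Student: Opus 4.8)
The plan is to give a purely diagrammatic derivation inside the calculus of Figure~\ref{fig:ZX-Rules}, reading \ref{lem:minusOne} as an identity governing the multiplier labelled $-1$. Semantically the $-1$ multiplier implements the reflection $x \mapsto -x$ (equivalently $p \mapsto -p$ on momentum eigenstates), which is exactly the parity operator $(-1)^{\hat n}$ and hence the square of the Fourier transform. This observation fixes the target: I want to recognise the $-1$ label either as arising from squaring the Fourier Fock spider $(-i)^{\hat n}$, or dually as collapsing to the identity once composed with itself, so the whole argument reduces to multiplier arithmetic plus one Fock-phase identification.

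First I would rewrite the Fourier transform using \textruleref{Fourier}, so the relevant reflection surfaces as a Fock spider carrying the phase $(-i)^{\hat n}$. Composing two such spiders along a shared wire, \textruleref{Fusion} for the Fock spider multiplies their labels pointwise, giving $(-i)^{n}\cdot(-i)^{n} = (-1)^{n}$; this is the step that manufactures the $(-1)^{\hat n}$ phase, i.e.\ parity, out of Fourier data. I would then identify this $(-1)^{\hat n}$ Fock spider with the multiplier labelled $-1$ by invoking \textruleref{Mult} to pass between the Fock-basis phase and the X-basis rescaling, together with the defining semantics of the \textruleref{Multiplier}. The residual multiplier bookkeeping --- that composing $-1$ with itself returns the identity --- is discharged by \textruleref{Times}, since $(-1)\cdot(-1)=1$, followed by \textruleref{One} to erase the trivial multiplier.

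The step I expect to be the main obstacle is precisely the bridge between the Fock-basis phase $(-1)^{\hat n}$ and the $-1$ multiplier defined over the X basis: the two live in different generators and coincide only because parity is simultaneously diagonal in the Fock basis and acts as $x \mapsto -x$ in position. Making this crossing purely graphically will likely route through \textruleref{Hermite} and \textruleref{Euler}, and I would budget particular care for the accompanying global scalars, which in this normalisation accumulate from the Fourier and Hermite conventions and must be reconciled via \textruleref{Scalar} so that the two sides agree \emph{on the nose} rather than merely up to a nonzero constant.
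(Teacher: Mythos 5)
Your proposal does not yet constitute a proof: the step you yourself flag as ``the main obstacle'' --- passing between the Fock-basis phase $(-1)^{\hat n}$ and the multiplier labelled $-1$ --- is precisely the mathematical content of Lemma~\ref{lem:minusOne}, and you never carry it out. Manufacturing $(-1)^{\hat n}$ from two copies of $(-i)^{\hat n}$ via Fock \textruleref{Fusion} is fine, and the \textruleref{Times}/\textruleref{One} bookkeeping showing that $(-1)\cdot(-1)=1$ is immediate, but neither of these touches the actual bridge. For that bridge you offer two things: an appeal to ``the defining semantics of the \textruleref{Multiplier}'', and a guess that the crossing ``will likely route through \textruleref{Hermite} and \textruleref{Euler}''. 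The semantic appeal is inadmissible here: Figure~\ref{fig:Derived-Rules} consists of equalities \emph{derivable graphically} from the rules of Figure~\ref{fig:ZX-Rules}, so arguing that both sides denote the parity operator establishes soundness of the equation, not its derivability inside the calculus, which is what the lemma asserts. Assuming that identification semantically makes the argument circular, since it is the identity to be derived.

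The Euler route also has a concrete obstruction you have not addressed: parity is $R(\pi)=e^{-i\pi\hat n}$, but \textruleref{Euler} is stated only for $\theta\in(-\pi,\pi)$, so it cannot be instantiated at $\theta=\pi$. You would instead have to compose two $\theta=\frac{\pi}{2}$ instances, which leaves you with a sandwich of Z and X spiders carrying Gaussian phase labels, and you would then still need a graphical argument contracting that sandwich to the $-1$ multiplier --- again, exactly the missing step. The paper's own proof is a short, entirely diagrammatic rewrite chain using the rules of Figure~\ref{fig:ZX-Rules}, with no semantic detour; that is what your plan would need to become before it counts as a proof of Lemma~\ref{lem:minusOne}. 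As a smaller point, your closing observation that the $-1$ multiplier composed with itself is the identity is a special case of the multiplier inverse law (Lemma~\ref{lem:multiplierIso}) and is not the statement being proved, so it adds nothing to the argument.
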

            \begin{restatable}{lemma}{plusOne}\label{lem:plusOne}
                \begin{equation}
                    \tikzfigscale{1}{plusOne}
                \end{equation}
            \end{restatable}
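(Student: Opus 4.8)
The plan is to derive the identity by rewriting the left-hand gadget into the right-hand normal form using the W-node and Fock rules, with the combinatorial content handled by the previously established Proposition~\ref{prop:FWPlus}. I read \ref{lem:plusOne} as the creation (``raising'') companion of the annihilation lemma \ref{lem:minusOne}: where \ref{lem:minusOne} should realize the lowering action $\ket{n}\mapsto\sqrt{n}\,\ket{n-1}$ as a W-node split carrying a single-photon effect $\bra{1}$ on one leg, \ref{lem:plusOne} should realize the raising action $\ket{n}\mapsto\sqrt{n+1}\,\ket{n+1}$ as a W-node merge with the single-photon state $\ket{1}$ plugged into one input leg.

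First I would expose the fixed single-photon leg using the \textruleref{W-Node} definition together with the unit and vacuum-copy identities \ref{lem:WUnit} and \ref{lem:vacuumCopy}, so that the merge acts on one free Fock wire and one pinned wire. I would then apply \textruleref{Fusion} for the W-node and for the Fock spider to absorb the pinned leg into the output, and \textruleref{Bialgebra} to re-expose the result as a single Fock wire carrying the shifted label. The \ref{lem:minusOne} direction follows from this by taking the transpose (the W-node transpose is already defined in the Notations) and using W- and Fock-flexsymmetry, turning the merge-with-$\ket{1}$ picture into the split-with-$\bra{1}$ picture.

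The crux is matching the square-root coefficient, and here I would invoke Proposition~\ref{prop:FWPlus} with Corollary~\ref{cor:f-plus-gen}, which encode how a Fock spider distributes over the W-addition and how the multinomial weights $\sqrt{(\sum_i n_i)!/\prod_i n_i!}$ simplify once every leg but one is fixed to a basis state. I expect the main obstacle to be exactly this normalization bookkeeping: showing graphically that pinning one input to $\ket{1}$ collapses the W-weight to precisely $\sqrt{n+1}$, with no residual $\sqrt{n!}$-type factor. I would isolate this as a small scalar step, justified by \textruleref{Scalar} and the $\gamma_k$ labels appearing in the \textruleref{Triforce} rule, so that the whole verification reduces to the single arithmetic identity $(n+1)!/(n!\,1!)=n+1$ rather than a manipulation of the full multinomial coefficient.
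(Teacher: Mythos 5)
There is a genuine gap in your proposal, and it begins with the target of your rewrite. The entries of Figure~\ref{fig:Derived-Rules} are \emph{equations between diagrams}, derivable from the rules of Figure~\ref{fig:ZX-Rules}; but the right-hand side you propose to rewrite towards --- ``a single Fock wire carrying the shifted label'' --- is not a diagram that exists in this calculus. A one-input, one-output Fock spider denotes $\sum_{n} g(n)\ket{n}\bra{n}$ for some $g:\naturals\to\complexs$, which is always diagonal in the number basis, whereas the raising map $\ket{n}\mapsto\sqrt{n+1}\,\ket{n+1}$ is strictly off-diagonal. So no choice of label can play the role of your right-hand side: any diagram equal to the W-merge-with-$\ket{1}$ gadget must itself contain W nodes or Z/X spiders, and your plan never identifies what that diagram is. Without a concrete target, the sequence of moves you list (\textruleref{Fusion}, \textruleref{Bialgebra}, Lemmas~\ref{lem:WUnit} and~\ref{lem:vacuumCopy}, Proposition~\ref{prop:FWPlus}) has nothing to terminate at, and the rewrite cannot be carried out as described.

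Relatedly, the crux of your argument --- collapsing the W weight $\sqrt{(\sum_i n_i)!/\prod_i n_i!}$ to $\sqrt{n+1}$ via $(n+1)!/(n!\,1!)=n+1$ --- is a computation of the \emph{interpretation} on basis states. That correctly confirms that plugging $\ket{1}$ into a W merge implements $a^\dagger$ (indeed this is how the creation operator is drawn in Section~\ref{sec:gates}), but it is a soundness check of the semantics, not a derivation inside the calculus: in a derived rule the coefficient bookkeeping has to emerge from the rules themselves. The paper's proof of Lemma~\ref{lem:plusOne} is exactly that --- a single, self-contained chain of diagram rewrites --- and, judging from the placement of its proof in the appendix (before the proof of Proposition~\ref{prop:FWPlus}), it does not rely on that proposition as your plan does. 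Note also that the paper proves Lemmas~\ref{lem:minusOne} and~\ref{lem:plusOne} by separate, independent rewrite chains, so obtaining one from the other by transposition, while legitimate in principle, is a further departure from the paper's route; the more serious issues remain the unidentified (and, as stated, unrealizable) right-hand side and the substitution of a semantic verification for a graphical derivation.
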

            \begin{restatable}{lemma}{multiplierIso}\label{lem:multiplierIso}
                \begin{equation}
                    \tikzfigscale{1}{multiplier-iso}
                \end{equation}
            \end{restatable}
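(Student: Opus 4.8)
The plan is to show that the \textruleref{Multiplier} labelled by a nonzero $m$ is an isomorphism whose two-sided inverse is the multiplier labelled $m^{-1}$. Semantically this is transparent: the interpretation $\int \ket{mx}_X \bra{x}_X \, dx$ is the pushforward along the bijection $x \mapsto mx$ of momentum space, and composing the pushforwards for $m$ and $n$ gives $\int \ket{nmx}_X \bra{x}_X \, dx$ with no stray Jacobian scalar (since $\braket{y|mx} = \delta(y-mx)$ integrates cleanly). The only work, then, is to reconstruct this invertibility purely diagrammatically from the rules of Figure~\ref{fig:ZX-Rules}.

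First I would stack the multiplier labelled $m$ on the multiplier labelled $m^{-1}$. Applying the \textruleref{Times} rule, which fuses two successive multipliers labelled $m$ and $n$ into a single multiplier labelled $mn$, collapses the composite into one multiplier labelled $m \cdot m^{-1} = 1$. The \textruleref{One} rule then identifies the multiplier labelled $1$ with the bare identity wire, finishing one direction. Running the same argument with the two multipliers in the opposite order, via $m^{-1} \cdot m = 1$, gives the other direction, so the multipliers by $m$ and by $m^{-1}$ are mutually inverse and each is an isomorphism.

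The main obstacle is bookkeeping rather than anything conceptual. I must check that the \textruleref{Times} rule applies in the correct orientation, tracking the direction of the multiplier node so the composite is read off in the intended order, and I must invoke $m \neq 0$ at exactly the point where $m^{-1}$ is formed: the separately defined $m=0$ multiplier, interpreted as $\int \ket{0}_X \bra{x}_X \, dx$, is genuinely non-invertible, so the hypothesis cannot be relaxed. Should the intended reading of ``iso'' be the stronger claim of an isometry (unitary up to a global scalar), I would additionally take the dagger of the multiplier by bending its wires through the Z-defined cup and cap, identify the resulting transpose with the multiplier labelled by the appropriate reciprocal, and combine this with the inverse computed above to read off the unitarity statement.
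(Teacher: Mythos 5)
Your proposal is correct and takes essentially the same route as the paper: compose the multiplier labelled $m$ with the one labelled $m^{-1}$, fuse them into the multiplier labelled $1$ by \textruleref{Times}, and eliminate it by \textruleref{One}, with $m\neq 0$ being exactly what licenses forming $m^{-1}$. The transpose/reversed-multiplier considerations in your final paragraph are not needed here --- the paper handles those separately in Lemmas~\ref{lem:multiplierRev} and~\ref{lem:multiplierIsoRev}.
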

            \begin{restatable}{lemma}{multiplierIsoRev}\label{lem:multiplierIsoRev}
                \begin{equation}
                    \tikzfigscale{1}{multiplier-iso-rev}
                \end{equation}
            \end{restatable}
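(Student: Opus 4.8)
The plan is to read the left-hand side as the sequential composition of two multiplier boxes carrying reciprocal labels, and to collapse this composition to a bare wire using just two of the multiplier rules. Recall from the \textruleref{Multiplier} definition that the box labelled by a nonzero $m$ acts on momentum eigenstates by $\ket{x}_X \mapsto \ket{mx}_X$; hence composing the box labelled $m^{-1}$ with the box labelled $m$ sends $\ket{x}_X \mapsto \ket{m \cdot m^{-1} x}_X = \ket{x}_X$, so semantically the two boxes already cancel to the identity. The task is to reproduce this cancellation purely diagrammatically, which is exactly the content of Lemma~\ref{lem:multiplierIso} read in the opposite composition order.

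First I would apply the \textruleref{Times} rule, which fuses two adjacent multiplier boxes into a single box whose label is the product of the two original labels. Applied to the labels $m^{-1}$ and $m$, this yields one box labelled $m^{-1} \cdot m = 1$. Second, I would invoke the \textruleref{One} rule, which rewrites the multiplier box labelled $1$ as the identity wire. These two rewrites together carry the left-hand side to the right-hand side, so the derivation is a two-step rewrite identical in substance to that of Lemma~\ref{lem:multiplierIso}.

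The only point needing care is orientation. Because the multiplier generators come in opposite-pointing variants, the reverse statement presents the two boxes in the mirror-image spatial order relative to Lemma~\ref{lem:multiplierIso}, and I would first check that \textruleref{Times} applies to this configuration as stated; if it does not, I would bring the diagram into the already-handled orientation using the flexsymmetry of the spiders (\emph{only connectivity matters}) together with the \textruleref{Multiplier} definition before fusing. Beyond this bookkeeping, the statement is an immediate consequence of \textruleref{Times} and \textruleref{One}, so I anticipate no real obstacle; the main value of stating it separately is to have both orientations of the isomorphism available for later derivations.
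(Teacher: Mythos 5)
The crux of the difficulty is the one you flagged but then dismissed as bookkeeping. Your primary route (\textruleref{Times} followed by \textruleref{One}) is precisely the proof of Lemma~\ref{lem:multiplierIso}, and it only applies under your first reading of the statement: two boxes pointing the \emph{same} way, carrying reciprocal labels in the opposite order. But that reading cannot be what this lemma asserts --- since label multiplication is commutative, the opposite-order composition of two forward-pointing multipliers is already covered verbatim by \textruleref{Times} and \textruleref{One} and would not merit a separate lemma. The ``Rev'' refers to the \emph{reversed} multiplier, i.e.\ the opposite-pointing (transposed) box, so the left-hand side is a composition in which \textruleref{Times} does not fire as stated.

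Your fallback for this configuration is where the genuine gap lies: you propose to reorient the offending box ``using the flexsymmetry of the spiders''. Flexsymmetry is a property the paper ascribes only to the Z, X and Fock spiders and to the W node; the multiplier box is directional and is emphatically not flexsymmetric. A flexsymmetric multiplier would equal its own transpose, and since the transpose of $\int \ket{m x}_X \bra{x}_X \, dx$ acts (up to normalisation) as multiplication by $1/m$, this would force the false equation ``multiplier $m$ equals multiplier $1/m$'' for all $m \neq \pm 1$. The missing ingredient is Lemma~\ref{lem:multiplierRev}, which records exactly how a reversed multiplier trades for a forward one with reciprocal label; this dependency is also why the appendix proves Lemma~\ref{lem:multiplierRev} \emph{before} the present lemma, even though Figure~\ref{fig:Derived-Rules} lists them in the opposite order. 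The paper's derivation first eliminates the reversed box via Lemma~\ref{lem:multiplierRev}, and only then concludes with Lemma~\ref{lem:multiplierIso} (equivalently, \textruleref{Times} and \textruleref{One}). If you replace your flexsymmetry step with an appeal to Lemma~\ref{lem:multiplierRev}, the rest of your argument goes through.
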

            \begin{restatable}{lemma}{multiplierRev}\label{lem:multiplierRev}
                \begin{equation}
                    \tikzfigscale{1}{multiplier-rev}
                \end{equation}
            \end{restatable}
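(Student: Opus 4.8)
The plan is to read the reversed multiplier as a transpose and to reduce the claim to the elementary computation of that transpose, afterwards repackaging the output with the scalar and multiplier rules already in hand. Recall from the \textruleref{Multiplier} rule that the multiplier labelled by a nonzero $m \in \reals$ is interpreted as $\int \ket{mx}_X \bra{x}_X \, dx$, so that it acts on momentum eigenstates by $\ket{p}_X \mapsto \ket{mp}_X$. The reversed multiplier in the statement is obtained by bending both of its legs with the cup and cap, so by the snake equations (yanking) it is literally the transpose of the $m$-multiplier. My first move is therefore to invoke yanking to rewrite the left-hand side of Lemma~\ref{lem:multiplierRev} as this transpose, reducing everything to identifying what that transpose is.

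The second step is to compute the transpose explicitly. The subtlety is that the cup and cap are built from the Z (position) spider, so the pairing is the position cup, which in the momentum basis reads $\int \ket{p}_X \ket{-p}_X \, dp$, pairing opposite momenta. Transposing $\int \ket{mx}_X \bra{x}_X \, dx$ against this pairing gives a kernel $\langle p \,|\, M_m^{T}\,|\, q\rangle_X = \delta(q - mp)$, where the two sign flips coming from the cup and cap cancel, and the change of variables $y = mx$ produces $\tfrac{1}{|m|}\int \ket{y}_X \bra{y/m}_X \, dy$. This is exactly $\tfrac{1}{|m|}$ times the $1/m$-multiplier, which I expect to be the right-hand side of the lemma. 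The final cleanup is to absorb the Jacobian scalar $\tfrac{1}{|m|}$ using the \textruleref{Scalar} rule and to confirm the remaining diagram is the claimed $1/m$-multiplier, adjusting the label via \textruleref{Mult} and \textruleref{Times} if the statement is phrased through composition rather than as a bare transpose.

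Where possible I would prefer to run the argument diagrammatically rather than through the interpretation: the reversal should follow by combining flexsymmetry of the spiders with Lemma~\ref{lem:multiplierIso} and Lemma~\ref{lem:multiplierIsoRev}, which already record that the $m$- and $1/m$-multipliers are mutually inverse, so that sliding an $m$-multiplier along a bent wire and cancelling it against its inverse on the opposite strand leaves behind the reversed $1/m$-multiplier directly. The main obstacle is the bookkeeping of the scalar and the orientation: because the cup and cap live in the position basis while the multiplier is defined in the momentum basis, bending a leg simultaneously introduces the factor $\tfrac{1}{|m|}$ and flips the sign of the momentum on that strand. I expect most of the care to go into showing that these two effects combine to give precisely the scalar $\tfrac{1}{|m|}$ and the clean label $1/m$ (with no residual sign), and into verifying that the negative-$m$ case is absorbed consistently by the $|m|$ in the normalisation.
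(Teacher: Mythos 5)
The central gap is that neither of your two routes establishes what this lemma actually claims. Figure~2 is explicitly a table of equalities \emph{derivable using the rules in Figure~1}, and the paper's proof is a purely diagrammatic rewrite; any such derivation must go through the multiplier's concrete presentation in terms of Z and X spiders (its defining diagram under \textruleref{Multiplier}, together with rules like \textruleref{Mult}), since that is the only way the calculus knows how a multiplier interacts with bent wires. Your primary argument instead computes the transpose under $\interp{\cdot}$; that verifies soundness of the equation in the model, not its derivability in the calculus, so it proves a different (weaker, for the paper's purposes) statement. Your fallback diagrammatic route does not close this gap either: Lemmas~\ref{lem:multiplierIso} and~\ref{lem:multiplierIsoRev} only record that the $m$- and $\tfrac{1}{m}$-multipliers are mutually inverse, and inverse data cannot pin down a transpose. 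Sliding the $m$-box around a cup produces, by definition of transposition, its transpose on the other strand; transposing the iso equations then only tells you that the reversed $m$-box is inverse to the reversed $\tfrac{1}{m}$-box --- the same kind of unknown you started with. Cancelling it ``against its inverse on the opposite strand'' silently identifies the reversed $\tfrac{1}{m}$-box with the forward $\tfrac{1}{m}$-box, which is precisely the statement being proven. In a generic self-dual compact category, $g = f^{-1}$ imposes no relation of the form $f^T = c\,g$ (take $f = \left(\begin{smallmatrix}1&1\\0&1\end{smallmatrix}\right)$ on $\complexs^2$ with the canonical cup: $f^T$ is not proportional to $f^{-1}$), so the missing idea is exactly the unfolding-and-refolding of the multiplier's definition that the paper's rewrite performs. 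Note also that the paper proves this lemma \emph{before} Lemma~\ref{lem:multiplierIsoRev}, so leaning on the latter risks circularity within the paper's own development.

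There are also concrete slips in the semantic computation itself. In this paper the subscript $X$ denotes the \emph{position} basis (momentum carries $P$): the multiplier $\int \ket{mx}_X\bra{x}_X\,dx$ rescales position eigenstates, and the Z-spider cup and cap live in that very same basis. The sign-flip bookkeeping you anticipate --- the cup ``pairing opposite momenta'' and the two flips cancelling --- is a phantom complication created by this misreading; transposing a position-basis kernel against position-basis cups is immediate and involves no signs. Separately, your displayed answer $\tfrac{1}{|m|}\int \ket{y}\bra{y/m}\,dy$ has the ket and bra swapped: a change of variables shows that expression equals the $m$-multiplier again, not $\tfrac{1}{|m|}$ times the $\tfrac{1}{m}$-multiplier; the transpose is $\int \ket{y}_X\bra{my}_X\,dy = \tfrac{1}{|m|}\int \ket{y/m}_X\bra{y}_X\,dy$. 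Your final conclusion (reversed $m$-multiplier equals the $\tfrac{1}{m}$-multiplier weighted by the Jacobian factor) does agree with the semantics, but reaching it requires fixing these errors, and even then it remains a model-level verification rather than the in-calculus derivation the lemma calls for.
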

            \begin{restatable}{lemma}{multiplierAltDecomposition}\label{lem:multiplierAltDecomposition}
                \begin{equation}
                    \tikzfigscale{1}{multiplier-alt-decomposition}
                \end{equation}
            \end{restatable}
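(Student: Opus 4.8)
The plan is to establish the identity purely diagrammatically, rewriting one side of the depicted equation into the other using the multiplier rules of Figure~\ref{fig:ZX-Rules} together with the multiplier lemmas already proved above it. Because the statement offers an \emph{alternative} way of decomposing a single multiplier into more elementary pieces, the cleanest direction is to start from the composite side and collapse it back onto the lone multiplier; this keeps every intermediate diagram a product of multipliers and spiders to which the algebraic rules apply directly.

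First I would expand each multiplier using its defining rule \textruleref{Multiplier} and commute neighbouring multipliers past any adjacent Z or X spider with the \textruleref{Mult} rule, which absorbs a scaling into a spider at the cost of precomposing its label with $x \mapsto mx$. Wherever two multipliers then meet head-to-tail I would fuse them with \textruleref{Times} (scaling by $m$ then $n$ becomes scaling by $mn$) and discard any resulting multiplier by $1$ via \textruleref{One}; additive structure, if it appears, is handled analogously by \textruleref{Plus}. If the decomposition switches between the position and momentum pictures, I would insert a Fourier/inverse-Fourier pair using \textruleref{Fourier} and invoke Lemma~\ref{lem:multiplierRev} to track the inversion $m \mapsto 1/m$ that Fourier conjugation induces; the involutive identities of Lemmas~\ref{lem:multiplierIso} and \ref{lem:multiplierIsoRev} then cancel any spurious conjugating pair.

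The step I expect to be the main obstacle is the bookkeeping of scalar (Jacobian) factors: since the multiplier $\int \ket{mx}_X\bra{x}_X\,dx$ is not unitary, each fusion, inversion, or Fourier-conjugation step can emit a constant that must be tracked. I would collect all such constants into a single \textruleref{Global-Scalar}, folding in any spider-absorbed factors with the \textruleref{Scalar} rule, and verify at the end that the accumulated scalar matches the one in the statement. A minor secondary point is orientation: the left- and right-pointing multiplier triangles must be aligned before \textruleref{Times} applies, and Lemma~\ref{lem:multiplierRev} supplies exactly the reorientation needed. Once orientations and scalars agree, the two diagrams coincide.
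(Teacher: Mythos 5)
There is a genuine gap here: your proposal is a strategy outline, not a proof. In a graphical calculus, proving an equation like Lemma~\ref{lem:multiplierAltDecomposition} means exhibiting a specific finite chain of diagrams, each obtained from the previous one by a named rule application, starting at one side of the stated equation and ending at the other. Your text never commits to what the two sides of the equation actually are --- which generators appear in the alternative decomposition, how they are wired, in which direction the multiplier triangles point --- and consequently never performs a single concrete rewrite. Every step is conditional (``wherever two multipliers meet,'' ``if the decomposition switches between the position and momentum pictures,'' ``additive structure, if it appears''), so none of it can be checked, and the same paragraph would serve equally well as a ``proof'' of a false multiplier identity. The paper's own proof is exactly the object you are missing: a short, explicit sequence of rewrites on the concrete diagrams, using the multiplier rules and the preceding lemmas. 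Your toolkit is the right one --- \textruleref{Mult}, \textruleref{Times}, \textruleref{One}, \textruleref{Fourier}, together with Lemmas~\ref{lem:multiplierIso}, \ref{lem:multiplierIsoRev} and \ref{lem:multiplierRev}, none of which is circular to invoke since they are established before this lemma --- but naming a plausible toolkit is not a derivation.

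A secondary issue is your treatment of scalars. The concern that ``each fusion, inversion, or Fourier-conjugation step can emit a constant'' conflates semantic manipulation with diagrammatic rewriting: rules such as \textruleref{Times} and \textruleref{One} are exact equalities of diagrams with no scalar correction (composing multipliers labelled $m$ and $n$ gives the multiplier labelled $mn$, on the nose). Scalars do arise when a multiplier is pushed through a Fourier transform, but in this calculus that bookkeeping is already packaged inside the relevant derived lemmas rather than something to be re-derived by a change-of-variables (``Jacobian'') argument, which is a semantic computation and not a rewrite available inside the calculus.
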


            \begin{restatable}{lemma}{multiplierCopy}\label{lem:multiplierCopy}
                \begin{equation}
                    \tikzfigscale{1}{multiplier-copy}
                \end{equation}
            \end{restatable}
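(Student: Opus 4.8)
The plan is to derive the multiplier-copy rule entirely within the graphical calculus, so that it follows from the rules of Figure~\ref{fig:ZX-Rules} rather than from the underlying Hilbert-space semantics. The guiding principle is that a multiplier is defined by its action on a single colour of basis (the $X$ basis, by the \textruleref{Multiplier} definition), and is therefore duplicated by a spider of the complementary colour, in direct analogy with the character-copy rules of \textruleref{Copy}. Concretely, I would first rewrite the multiplier on the single leg using the alternate decomposition of Lemma~\ref{lem:multiplierAltDecomposition}, which re-expresses it in terms of elementary spiders (and, if needed, a $W$ node), so that every constituent piece is a generator to which the structural rules apply directly.

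First I would fuse the spider carrying the decomposed multiplier into the copying spider using \textruleref{Fusion}, collapsing the local configuration into a single node of the correct arity. Next I would invoke the relevant branch of \textruleref{Copy} (or, where a bialgebra interaction is needed, \textruleref{Bialgebra}) to push the scaling across the copying spider: since the object being copied is a character of the complementary colour, it is duplicated onto each leg, placing one copy of the scaling on every outgoing wire. I would then re-apply Lemma~\ref{lem:multiplierAltDecomposition} in reverse on each leg to recognise each duplicated fragment as a multiplier with the same label, and finally absorb any residual spider into the target via \textruleref{Mult}, recovering the stated right-hand side.

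The step I expect to be the main obstacle is the scalar bookkeeping. Copying a scaling through a spider introduces a Jacobian-type factor, because the interpretation in the \textruleref{Multiplier} definition rescales the integration measure under $x \mapsto m x$, so the naive copy contributes one extra power of the scale for each additional leg. I would reconcile this using the \textruleref{Scalar} rules together with \textruleref{One}, carrying the \textruleref{Global-Scalar} explicitly and verifying that the powers of the scale match on both sides for general arity. Once the scalar is pinned down in the two-leg base case, the general statement follows by associativity of \textruleref{Fusion} and a short induction on the number of legs, so the only genuinely delicate point is fixing this scalar normalisation exactly.
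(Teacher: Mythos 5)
There is a genuine gap, and it sits in the step you treat as routine rather than in the one you flag as delicate. Your mechanism for moving the multiplier across the spider is to ``invoke the relevant branch of \textruleref{Copy}'', on the grounds that the multiplier is a character of the complementary colour. It is not: the \textruleref{Copy} rules duplicate \emph{states} --- the characters $\chi_x$, $\bar{\chi}_p$ and the Fock basis states $\delta_n$ --- through a spider of the complementary colour, whereas the multiplier is a one-input one-output operator, $\int \ket{mx}_X\bra{x}_X\,dx$, which is not a character of either colour and cannot be absorbed as the label of a single spider. Neither \textruleref{Copy} nor \textruleref{Bialgebra} says anything about transporting such an operator across a spider, so the central step of your argument has no rule backing it. Any valid graphical proof has to go through the machinery that actually mentions multipliers --- the axioms \textruleref{Mult}, \textruleref{Times}, \textruleref{Plus}, \textruleref{One}, the defining decomposition of the multiplier, and the preceding multiplier lemmas (Lemmas~\ref{lem:multiplierIso}--\ref{lem:multiplierAltDecomposition}) --- which is how the paper's own short diagrammatic derivation proceeds; your appeal to Lemma~\ref{lem:multiplierAltDecomposition} ``in reverse on each leg'' is only a hope unless you say how the decomposed pieces interact with the copying spider, and \textruleref{Copy} is not the answer.

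Second, your anticipated main obstacle --- a Jacobian factor of the scale per extra leg --- does not exist for this lemma, so a derivation organised around cancelling it would itself be in error. Writing $M_m$ for the multiplier, composing it into the position-basis copy gives $\left(\int \ket{y}\ket{y}\bra{y}\,dy\right)\circ\left(\int \ket{mx}\bra{x}\,dx\right) = \int \ket{mx}\ket{mx}\bra{x}\,dx$, because the delta function $\delta(y-mx)$ is integrated over the \emph{output} variable $y$, not over $x$; this equals $(M_m\otimes M_m)$ composed with the copy exactly, with no scalar discrepancy. The Jacobian phenomenon you describe is real but belongs to the momentum-basis side of the story, where $M_m\ket{p}_P = \tfrac{1}{\abs{m}}\ket{p/m}_P$; it is what drives Lemma~\ref{lem:multiplierRev} and the X-spider analogue, Lemma~\ref{lem:xmultiplierCopy}, in which inverted or reversed multipliers (and associated scalars) appear. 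So the bookkeeping you plan to do with \textruleref{Scalar} and \textruleref{One} addresses a problem that is not there, while the genuinely delicate point --- exhibiting a rule that validly duplicates the multiplier onto all legs --- is left unaddressed.
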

            \begin{restatable}{lemma}{xmultiplierCopy}\label{lem:xmultiplierCopy}
                \begin{equation}
                    \tikzfigscale{1}{x-multiplier-copy}
                \end{equation}
            \end{restatable}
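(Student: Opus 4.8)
The plan is to derive Lemma~\ref{lem:xmultiplierCopy} from its position-space counterpart, Lemma~\ref{lem:multiplierCopy}, by conjugating with the Fourier transform and exploiting the fact that the X-spider is by construction the Fourier transform of the Z-spider. Concretely, the X-spider copy map is obtained from the Z-spider copy map by pre- and post-composing with the $(-i)^{\hat{n}}$ and $i^{\hat{n}}$ Fock spiders that implement $\mathcal{F}$ and $\mathcal{F}^{-1}$, as recorded in the \textruleref{Fourier} rule. So I would begin by replacing every X-spider in the statement with a Z-spider dressed by these Fourier boxes, reducing the X-claim to a statement about a Z-spider copy map carrying extra Fourier transforms on its legs.

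The second step is to commute the multiplier past the Fourier boxes. Because the multiplier is defined in the momentum basis as the unnormalised scaling $\int \ket{mx}_X\bra{x}_X\,dx$, Fourier conjugation turns it into the corresponding position scaling with inverted label $m\mapsto 1/m$, up to a scalar Jacobian; this relation is essentially the content of the \textruleref{Mult} rule together with \textruleref{Fourier}, and can also be read off from Lemma~\ref{lem:multiplierRev}. After this commutation the interior of the diagram is exactly the hypothesis of Lemma~\ref{lem:multiplierCopy}: a single multiplier on the leg of a Z-spider copy map. Applying that lemma copies the multiplier onto each leg. The third step then reverses the first two: I commute the copied multipliers back through the Fourier boxes, re-inverting the label to recover the original $m$, and re-fold each dressed Z-leg into an X-spider, leaving a multiplier on every leg of the X-spider, which is the right-hand side of the lemma.

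The main obstacle I expect is the scalar bookkeeping rather than the topological rewriting. The multiplier is not unitary — it is defined without the $1/\sqrt{|m|}$ normalisation — so each commutation past a Fourier transform and each application of Lemma~\ref{lem:multiplierCopy} emits an $|m|$-dependent scalar. These must be collected and shown, via the \textruleref{Scalar} and \textruleref{One} rules, to reduce to exactly the scalar displayed on the right-hand side of the statement; in particular the Jacobian factors picked up on the way in and out should precisely cancel whatever scalar Lemma~\ref{lem:multiplierCopy} contributes, since scaling copies cleanly through the momentum comonoid.

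If a direct proof were preferred instead, one can bypass the position-space version entirely: since the multiplier acts as a scaling of the momentum eigenstates and the X-spider is the momentum comonoid, the copy identity follows from the momentum half of the \textruleref{Mult} rule and the X-character copy case of \textruleref{Copy} by an eigenstate-scaling argument, with the scalar reconciliation again being the only genuine subtlety.
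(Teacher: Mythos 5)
Your proof is correct in substance, but it takes a genuinely different route from the paper's. The paper dispatches this lemma in one line --- its proof is literally ``Same as for Lemma~\ref{lem:multiplierCopy}'', i.e.\ the graphical derivation given for the position-space version is to be repeated with the colours exchanged, using the derived X-analogues of the axioms (such as Lemma~\ref{lem:xFusion}) where the rule set of Figure~\ref{fig:ZX-Rules} is not itself colour-symmetric. You instead treat Lemma~\ref{lem:multiplierCopy} as a black box and \emph{transport} it: unfold the X spider into a Fourier-conjugated Z spider (which is just its definition via \textruleref{X-Spider}), commute the multiplier past the Fourier boxes with the label inversion $m \mapsto 1/m$ and a Jacobian scalar, apply the Z-version, and fold back. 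This is the continuous-variable analogue of the standard colour-change argument in qubit ZX, and your scalar audit is the right one: the Z-version is scalar-free, so the only scalars are the single Jacobian collected on the way in against the $n$ inverse Jacobians collected on the way out, whose product is exactly the power of $|m|$ that the right-hand side of the statement displays. What your route buys is systematicity --- any multiplier/Z-spider identity transports to an X-spider identity by the same recipe, with no need to re-run a derivation; what it costs is an extra dependency, namely that the Fourier--multiplier commutation must itself be derivable inside the calculus (it is a Gaussian, one-wire identity available from \textruleref{Mult} and \textruleref{Fourier} together with the multiplier lemmas, independently of the statement being proved, so there is no circularity), plus the scalar bookkeeping that the paper's colour-swapped derivation never generates.

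One caveat: your fallback ``direct proof'' by checking both sides on momentum eigenstates is a semantic argument, not a derivation. It establishes soundness of the equation, but this lemma lives in Figure~\ref{fig:Derived-Rules}, whose claim is derivability from the axioms of Figure~\ref{fig:ZX-Rules}; agreement on a basis of states is not a rewrite step available inside the calculus. That alternative should therefore be dropped or recast --- the Fourier-conjugation argument you give first is the one that constitutes a proof.
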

            \begin{restatable}{proposition}{triforceKn}\label{prop:triforceKn}
                \begin{equation}
                    \tikzfigscale{1}{triforce-Kn}
                \end{equation}
            \end{restatable}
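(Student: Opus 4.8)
The plan is to prove the $K$-legged Triforce by induction on the number $K$ of legs, using the two-legged \textruleref{Triforce} rule of Figure~\ref{fig:ZX-Rules} as the base case. For $K=1$ the statement is an identity (a single Hermite-labelled spider, carried along unchanged up to the scalar $\gamma_k$), and for $K=2$ it is precisely the stated \textruleref{Triforce} rule, which encapsulates the product formula $H_m(x)H_n(x)=\sum_r r!\,2^r\binom{m}{r}\binom{n}{r}H_{m+n-2r}(x)$. The inductive hypothesis is that the $K$-leg gadget rewrites to the claimed normal form, and the goal is to promote this to $K+1$ legs.

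For the inductive step, I would first apply the \textruleref{Fusion} rule to separate one leg from the remaining $K$ legs, rewriting the $(K+1)$-leg diagram as a composite in which a two-legged Triforce junction is plugged into the $K$-legged Triforce gadget. Applying the induction hypothesis to the $K$-leg part collapses it to a single Fock-labelled output carrying the appropriate sum over intermediate indices; the residual two-leg junction is then resolved by the base \textruleref{Triforce} rule, producing a further convolution sum. Re-fusing the resulting spiders with \textruleref{Fusion}, using associativity (``only connectivity matters'') to reorganise the legs, and collecting the scalar prefactors $\gamma_k(n)=k^n\sqrt{n!}^{\,a-1}$ then yields a diagram of the claimed form for $K+1$ legs.

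The main obstacle is the bookkeeping of the coefficients. Each application of the base rule introduces a binomially weighted sum, and I must verify that the nested sums produced by the induction collapse to the correct closed-form linearization coefficient for the $K$-fold Hermite product $\prod_i H_{n_i}(x)$, together with the correct power of $\sqrt{n!}$ packaged into $\gamma_k$. The delicate point is that this numeric identity must be independent of the order in which legs are peeled off, i.e.\ the iterated pairwise linearization has to agree with the symmetric multi-index formula. I would discharge this either by an auxiliary combinatorial lemma on the Hermite linearization coefficients, or — if the graphical coefficient-matching proves too unwieldy — sidestep it by passing through the interpretation functor $\interp{\cdot}$ and verifying the underlying operator identity on Fock states directly, invoking soundness of the rules to transport the semantic equality back to a diagrammatic one.
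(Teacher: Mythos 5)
Your primary route is essentially the paper's proof: induction with the \textruleref{Triforce} rule as the base case, peeling off one leg at a time, applying the induction hypothesis to the remaining legs, and resolving the new junction with the base rule. The paper phrases the step graph-theoretically --- pick any vertex of $K_n$ and compose the triforce onto it to obtain $K_{n+1}$ --- and appeals to flexsymmetry of the Z and W generators to make the choice of vertex (your ``order in which legs are peeled off'') irrelevant. So the skeleton is fine.

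However, your third paragraph reveals a misconception, and the fallback you propose there would genuinely fail. The proposition is a statement about \emph{diagrams}: its right-hand side is the complete graph $K_v$ with labelled edges, not the symmetric multi-index Hermite linearization formula. Consequently the induction never leaves the graphical calculus --- each application of \textruleref{Fusion} and \textruleref{Triforce} carries its scalar labels along, there are no nested binomial sums to collapse, and no auxiliary combinatorial lemma is needed; order-independence is exactly what flexsymmetry gives you for free. More seriously, the escape hatch of ``passing through the interpretation functor $\interp{\cdot}$ and invoking soundness to transport the semantic equality back to a diagrammatic one'' is not available. Soundness transports derived diagrammatic equalities to semantic ones, not the reverse; the reverse direction is completeness, which the paper establishes only for the Gaussian fragment, and the Fock spiders and W nodes appearing here lie outside that fragment. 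Since the proposition appears in Figure~\ref{fig:Derived-Rules} among equalities \emph{derivable} from the rules of Figure~\ref{fig:ZX-Rules}, a semantic verification would prove a strictly weaker statement. Stick with the purely graphical induction; it closes without the machinery you were bracing for.
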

            \begin{restatable}{lemma}{xTriforce}\label{lem:xTriforce}
                \begin{equation}
                    \tikzfigscale{1}{x-triforce}
                \end{equation}
            \end{restatable}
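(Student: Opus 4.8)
The plan is to derive the momentum-basis Triforce from the position-basis \textruleref{Triforce} rule by conjugating every external wire with a Fourier transform. Recall from Section~\ref{sec:notations} that the X spider is, by definition, the Z spider with its legs dressed by Fourier transforms, and that the Fourier transform is precisely the Fock spider carrying the phase $(-i)^{\hat{n}}$, with the inverse Fourier transform given by $i^{\hat{n}}$. Consequently the whole X-Triforce diagram can be rewritten, leg by leg, as the corresponding position-basis diagram with Fock phase spiders $(-i)^{\hat{n}}$ and $i^{\hat{n}}$ inserted on the relevant wires, at which point the already-established \textruleref{Triforce} rule becomes applicable.

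First I would unfold each X spider in the statement into a Z spider flanked by the appropriate Fourier and inverse-Fourier Fock spiders, using the notation introduced for the \textruleref{Fourier} and \textruleref{Hermite} rules. Next I would absorb these phase spiders into the Fock spiders already present in the Triforce: by the Fock case of \textruleref{Fusion}, two Fock spiders sharing a wire merge with their labels multiplied, so each $(-i)^{\hat{n}}$ simply decorates the adjacent Fock label by the factor $(-i)^{n}$. With the Fourier dressing pushed inward, the central gadget is exactly the left-hand side of the position-basis \textruleref{Triforce} rule, which I apply directly. Finally I would re-expand the surviving phase spiders on the output leg back into a Fourier transform and refold the result into an X spider, recovering the claimed right-hand side.

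The hard part will be the phase bookkeeping. On the two incoming legs the Fourier phases contribute $(-i)^{m}(-i)^{n} = (-i)^{m+n}$, whereas the $r$-th summand produced on the output carries Hermite index $m+n-2r$ and hence the phase $(-i)^{m+n-2r}$. Since $(-i)^{-2r} = \big((-i)^{2}\big)^{-r} = (-1)^{r}$, these phases differ by a factor $(-1)^{r}$ on each term of the sum. I therefore expect the momentum-space coefficients either to reproduce the position-space coefficients $r!\,2^{r}\binom{m}{r}\binom{n}{r}$ verbatim, with the $(-1)^{r}$ already absorbed into the stated form of the X-Triforce, or to differ from them by exactly this sign; determining which, and verifying that the leftover global scalars arising from the Fourier convention cancel, is the principal technical obstacle. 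Once these signs are reconciled term by term against the Hermite product formula, the derivation closes, and the same Fourier-conjugation argument applied to Proposition~\ref{prop:triforceKn} would yield the many-legged momentum analogue.
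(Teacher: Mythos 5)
Your overall strategy---unfold each X spider as a Fourier-conjugated Z spider, reduce to the position-basis \textruleref{Triforce}, then refold---is the same route the paper's rewrite takes. However, as written your proof does not close, and the point where it stops is exactly where the content of the lemma lies. You end with an unresolved dichotomy: either the momentum-space coefficients ``reproduce the position-space coefficients verbatim'' or they ``differ by exactly this sign,'' and you defer determining which as ``the principal technical obstacle.'' Since the lemma asserts one specific diagram equation, a proof must land on that equation and no other; the sign is not after-the-fact bookkeeping but the very difference between deriving the stated rule and deriving a false one. The resolution of your dichotomy is the second branch: your own (correct) phase computation, $(-i)^{m+n}$ on the inputs versus $(-i)^{m+n-2r}$ on the output, forces a relative factor $(-1)^r$ in each summand, so by soundness the X-Triforce cannot be a verbatim colour swap of \textruleref{Triforce}---its internal Fock spider label must carry the extra sign (e.g.\ a label $k^{\hat{n}}$ on that edge becomes $(-k)^{\hat{n}}$). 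A proof that cannot tell these two candidate statements apart has not proven either.

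There is also a mechanical gap in how you propose to push the phases around. Fock \textruleref{Fusion} only merges Fock spiders sharing a wire; it cannot move the residual $i^{\hat{n}}$, $(-i)^{\hat{n}}$ spiders past the W nodes that make up the right-hand side of \textruleref{Triforce}. The tool you need is Lemma~\ref{lem:push} (the Fock--W push, i.e.\ the W case of \textruleref{Bialgebra}): because the W node adds occupation numbers, $k^{\hat{n}}$ on its sum leg equals $k^{\hat{n}}$ on each of its remaining legs. Pushing the two input phases and the output phase into the W triangle, the phases meeting on the two edges that route photons straight from an input node to the output node cancel, $i^{\hat{n}}\cdot(-i)^{\hat{n}} = 1^{\hat{n}}$, while on the single edge joining the two input W nodes---whose occupation number is precisely the summation index $r$---the two like phases fuse to $(\pm i)^{\hat{n}}\cdot(\pm i)^{\hat{n}} = (-1)^{\hat{n}}$. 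Fusing this into that edge's existing label negates it, producing the $(-1)^r$ per term and hence exactly the stated right-hand side; no stray global scalars arise because the Fourier transform is implemented exactly by the unitary Fock spider $(-i)^{\hat{n}}$. With Lemma~\ref{lem:push} inserted at this step your outline becomes the paper's derivation; without it, your closing ``term by term'' reconciliation against the Hermite product formula is a semantic verification rather than a derivation in the calculus, and it is in any case left unfinished.
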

        \end{multicols}
        where $K_v$ is the complete graph on $v$ vertices.
    \end{mdframed}
    \caption{Equalities derivable using the rules in Figure~\ref{fig:ZX-Rules} (continued).}
\end{figure}

\triforceKn*
\begin{proof}
    The proof is by induction on $n$, the number of vertices of the complete graph $K_n$.
    The base case given by \textruleref{Triforce} is $n=3$.
    For any $n$, pick any vertex of $K_n$, and compose with it the triforce to get $K_{n+1}$.
    To derive $K_4$ for example:
    \begin{equation*}
        \tikzfigscale{1}{triforce-Kn-proof}
    \end{equation*}
    For all higher $n$, the proof proceeds analogously, due to flexsymmetry of Z and W.
\end{proof}

\section{CVQC gates and operators}
\label{sec:gates}
In this section, we show the translation of a universal set of gates in CVQC to the ZX-calculus.
In analogy with the Clifford+T gateset for qubits, a common universal gateset for CVQC consists of Gaussian operations and a non-Gaussian operation.
We first describe the common operators of CVQC.
Then, we introduce the single qumode Gaussian operations and the entangling Gaussian operations.
Finally, we present the cubic phase gate as a non-Gaussian operation.

\subsection{Operators}
\paragraph*{Creation and annihilation operators}
They can be represented as follows:
\begin{equation*}
    \tikzfigscale{1}{creation}
    \quad \overset{\interp{\cdot}}{\longmapsto} \quad
    a^{\dagger}
    \qquad \text{and} \qquad
    \tikzfigscale{1}{annihilation}
    \quad \overset{\interp{\cdot}}{\longmapsto} \quad
    a
\end{equation*}
\paragraph*{Quadrature operators}
The position and momentum quadrature operators are basic CVQC operators that satisfy $\hat{x} \ket{x} = x \ket{x}$ and $\hat{p} \ket{p} = p \ket{p}$.
They can be represented as the following simple diagrams:
\begin{equation*}
    \tikzfigscale{1}{position-quadrature}
    \quad \overset{\interp{\cdot}}{\longmapsto} \quad
    \hat{x}
    \qquad \text{and} \qquad
    \tikzfigscale{1}{momentum-quadrature}
    \quad \overset{\interp{\cdot}}{\longmapsto} \quad
    \hat{p}
\end{equation*}

\begin{proposition}\label{prop:xhat}
    The (controlled) $\hat{x}$ operator is
    \begin{equation*}
        \tikzfigscale{1}{x-hat}
    \end{equation*}
\end{proposition}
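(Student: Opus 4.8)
The plan is to verify the identity at the level of the interpretation functor $\interp{\cdot}$, since the target has the closed form $\hat{x} = \int x\,\ket{x}\bra{x}\,dx$ (this is precisely the \textruleref{Z-Spider} labelled by $f(x)=x$, i.e.\ the position-quadrature diagram introduced just above). Concretely, I would show that the claimed diagram unfolds to the creation--annihilation decomposition $\hat{x} = \tfrac{1}{\sqrt{2}}\,(a + a^{\dagger})$, and that the control leg, when present, merely carries an inert index along which the very same computation holds fibrewise --- so that a single diagram serves for both the bare and the controlled operator, justifying the parenthetical ``(controlled)''.

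First I would read off $a$ and $a^{\dagger}$ from the \textruleref{W-Node} interpretation. Plugging the Fock state $\ket{1}$ into one input leg of the $2\to 1$ W node gives $\sum_{n\geqslant 0}\sqrt{\tfrac{(n+1)!}{n!}}\,\ket{n+1}\bra{n} = \sum_{n\geqslant 0}\sqrt{n+1}\,\ket{n+1}\bra{n} = a^{\dagger}$, and dually the transposed W node post-composed with $\bra{1}$ yields $a = \sum_{n\geqslant 1}\sqrt{n}\,\ket{n-1}\bra{n}$. Next I would combine these two pieces exactly as the diagram prescribes --- an additive W-node gadget together with the \textruleref{Global-Scalar} $1/\sqrt{2}$ --- and collect terms, obtaining $\tfrac{1}{\sqrt{2}}\sum_{n}\big(\sqrt{n+1}\,\ket{n+1}\bra{n} + \sqrt{n}\,\ket{n-1}\bra{n}\big)$.

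The decisive step is then to match this Fock-basis operator with $\int x\,\ket{x}\bra{x}\,dx$. This is governed by the \textruleref{Hermite} rule: passing the $f(x)=x$ Z spider through the Hermite transform reproduces the three-term recurrence of the Hermite functions, $x\,\psi_{n}(x) = \sqrt{\tfrac{n+1}{2}}\,\psi_{n+1}(x) + \sqrt{\tfrac{n}{2}}\,\psi_{n-1}(x)$, which is exactly the position-space statement of $\hat{x} = \tfrac{1}{\sqrt{2}}(a+a^{\dagger})$. Substituting the recurrence termwise identifies the two expressions and closes the argument; the controlled case follows by transporting the control wire unchanged through each of these equalities.

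I expect the main obstacle to be precisely this bridge between the discrete and continuous pictures: reconciling the infinite Fock sum with the position-space integral rigorously relies on the non-standard semantics underpinning the \textruleref{Hermite} rule, and one must check that the convention constants --- the $1/\sqrt{2}$, the $\sqrt{n}$ and $\sqrt{n+1}$ coefficients, and the Fourier normalization fixing the $\psi_n$ --- all line up. An alternative, fully diagrammatic route would instead start from the position-quadrature Z spider labelled $f(x)=x$, apply \textruleref{Hermite} to move into the Fock basis, and then recognise the resulting gadget as the claimed W-node/creation--annihilation diagram; this avoids manipulating the integral directly but relocates the same coefficient bookkeeping into the rewrite steps.
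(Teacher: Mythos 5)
Your computation of the bare operator is mathematically sound and matches the paper's starting point $\hat{x} = \frac{a^\dagger + a}{\sqrt{2}}$: plugging $\ket{1}$ into a $2\to 1$ W node does give $a^\dagger = \sum_n \sqrt{n+1}\,\ket{n+1}\bra{n}$, and the recurrence $x\,\psi_n(x) = \sqrt{\tfrac{n+1}{2}}\,\psi_{n+1}(x) + \sqrt{\tfrac{n}{2}}\,\psi_{n-1}(x)$ is the correct bridge to $\int x \ket{x}\bra{x}\,dx$. Be aware, though, that the paper deliberately does not argue at the level of interpretations: it rewrites the claimed diagram, with $\delta_1$ on the control wire, into the Z spider labelled by $x$ entirely inside the calculus, using the \textruleref{Triforce} rule --- precisely to avoid both sums of diagrams and the Fock-to-position analytic bookkeeping that you yourself flag as the main obstacle of your route. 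Your ``alternative fully diagrammatic route'' is closer in spirit to the paper, but the paper's actual engine is \textruleref{Triforce}, not \textruleref{Hermite}.

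The genuine gap is your treatment of the parenthetical ``(controlled)''. You claim the control leg ``merely carries an inert index along which the very same computation holds fibrewise'', so that one diagram serves for both readings. That is not what controlled means in this setting, and it is false of this diagram: if the same computation held fibrewise over every control value, the diagram would act as $\hat{x}$ for \emph{every} control input, in particular for $\ket{0}$. The defining property of a controlled diagram --- and the reason this one earns the name --- is that plugging $\ket{0}$ into the control yields the \emph{identity}, while plugging $\ket{1}$ yields $\hat{x}$; these are different operators, so the control wire cannot be inert. The paper spends the second half of its proof verifying exactly this, reducing the $\delta_0$-input diagram to a bare wire. Your proposal contains no counterpart of that check, and your fibrewise heuristic would actually contradict its outcome; as it stands, the controlled half of the proposition is unproven.
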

\begin{proof}
    We use the fact that $\hat{x} = \frac{a^\dag \plus a}{\sqrt{2}}$.
    From that expression, we could build a diagram for the operator using the following observation:
    \begin{equation}\label{eq:wsplit}
        \tikzfigscale{1}{w-split}
    \end{equation}
    However, we can also prove the desired equation directly, without using sums of diagrams, thanks to the power of the \textruleref{Triforce}:
    \begin{equation*}
        \tikzfigscale{1}{x-hat-proof}
    \end{equation*}
    Furthermore, we can omit the 1 number state and leave the wire open, obtaining a \emph{controlled} $\hat{x}$ diagram.
    To show that the diagram truly behaves as a controlled $\hat{x}$, we check that inputting the 0 number state would yield the identity instead:
    \begin{equation*}
        \tikzfigscale{1}{x-hat-proof-0}
    \end{equation*}
    Controlled diagrams~\cite{jeandel2018zxrationalangle,shaikhHowSumExponentiate2022} are useful in quantum graphical calculi. This controlled $\hat{x}$ diagram enables drawing Hamiltonians containing this operator.
\end{proof}

\begin{restatable}{proposition}{phat}\label{prop:phat}
    The (controlled) $\hat{p}$ operator is
    \begin{equation*}
        \tikzfigscale{1}{p-hat}
    \end{equation*}
\end{restatable}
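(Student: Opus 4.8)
The plan is to obtain the $\hat{p}$ diagram directly from the $\hat{x}$ diagram of Proposition~\ref{prop:xhat} by conjugating with the Fourier transform, exploiting the position--momentum symmetry that the calculus is built around. Recall that $\hat{x} = \tfrac{a^\dagger + a}{\sqrt{2}}$ while $\hat{p} = \tfrac{i(a^\dagger - a)}{\sqrt{2}}$, and that the two quadratures are related by $\hat{p} = \mathcal{F}^{-1}\hat{x}\,\mathcal{F}$, where $\mathcal{F}$ is the Fourier transform represented by the Fock spider labelled $(-i)^{\hat{n}}$ and $\mathcal{F}^{-1}$ by the spider labelled $i^{\hat{n}}$. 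Since the interpretation of the X spider is precisely the Fourier conjugate of the Z spider, inserting these Fourier Fock spiders on the input and output wires of the controlled $\hat{x}$ diagram should transport the entire construction from the position picture into the momentum picture.

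Concretely, I would place a Fock spider labelled $(-i)^{\hat{n}}$ (Fourier) on the input leg and one labelled $i^{\hat{n}}$ (inverse Fourier) on the output leg of the $\hat{x}$ diagram, so that the resulting diagram interprets as $\mathcal{F}^{-1}\hat{x}\,\mathcal{F} = \hat{p}$. Next I would push these Fourier spiders inward. Where a Fourier spider meets the two Z spiders carrying the quadrature structure, I would apply the \textruleref{Fourier} rule to convert each Z spider into the corresponding X spider, turning the position-basis skeleton of the $\hat{x}$ diagram into the momentum-basis skeleton required for $\hat{p}$. The remaining Fourier spiders must then be commuted past the W node and the central Fock spider; here the operator identities $\mathcal{F}a\mathcal{F}^{-1} = ia$ and $\mathcal{F}a^\dagger\mathcal{F}^{-1} = -ia^\dagger$ manifest graphically as phase factors that I would absorb using the Fock-basis \textruleref{Copy} rule and the push Lemma~\ref{lem:push}, reconciling the accumulated powers of $i$ via \textruleref{Euler}.

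The main obstacle will be bookkeeping these phase factors: the creation and annihilation halves pick up conjugate phases under Fourier conjugation, and it is exactly their relative sign that promotes the symmetric combination $a^\dagger + a$ of $\hat{x}$ into the antisymmetric combination $a^\dagger - a$ (up to the overall $i$) defining $\hat{p}$. Ensuring that this relative minus sign emerges purely diagrammatically --- rather than being inserted by hand --- is the delicate part, and I would check it by confirming that the two branches feeding the W node acquire phases $-i$ and $+i$ respectively before recombining. As an alternative route one can bypass Fourier conjugation altogether and rerun the direct Triforce computation of Proposition~\ref{prop:xhat} under the substitution $a \mapsto ia$, $a^\dagger \mapsto -ia^\dagger$ dictated by $\hat{p} = \tfrac{i(a^\dagger - a)}{\sqrt{2}}$, trading the phase-pushing for tracking the same signs through the \textruleref{Triforce}.

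Finally, exactly as in the $\hat{x}$ case, I would establish the \emph{controlled} version by leaving the number-state wire open and verifying that feeding in the $0$ number state collapses the diagram to the identity --- which is immediate since $\mathcal{F}^{-1}\,I\,\mathcal{F} = I$ leaves the control structure untouched --- so that the $1$ number state yields $\hat{p}$ itself. This last step is routine and mirrors the verification already carried out for $\hat{x}$.
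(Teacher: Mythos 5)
Your proposal is correct, but it takes a genuinely different route from the paper. The paper proves this proposition exactly as it proves Proposition~\ref{prop:xhat}: it starts from the algebraic decomposition $\hat{p} = i\frac{a^\dagger - a}{\sqrt{2}}$ and carries out the direct diagrammatic computation (the analogue of the \textruleref{Triforce} derivation for $\hat{x}$), then separately checks that feeding the $0$ number state into the open control wire collapses the diagram to the identity; this is precisely the ``alternative route'' you mention in passing. Your primary route instead treats Proposition~\ref{prop:xhat} as a black box and conjugates it by the Fourier transform, using $\hat{p} = \mathcal{F}^{-1}\hat{x}\,\mathcal{F}$ with $\mathcal{F} = (-i)^{\hat{n}}$ (you have the direction right: conjugating the other way yields $-\hat{p}$, so this sign check matters). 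This buys conceptual economy: the $\hat{p}$ diagram becomes literally the momentum-picture image of the $\hat{x}$ diagram, and both the $\ket{1}$- and $\ket{0}$-input behaviours are inherited from Proposition~\ref{prop:xhat} once the diagram equality is established, which makes your final verification step a sanity check rather than a necessary part of the argument. What it costs is exactly the rewriting you flag as delicate: converting the Z-spider skeleton into X spiders and pushing $(-i)^{\hat{n}}$ and $i^{\hat{n}}$ through the W node and the central Fock spider via Lemma~\ref{lem:push}, i.e.\ the F--W instance of \textruleref{Bialgebra}. Your phase bookkeeping is right in outline --- the input conjugation $(-i)^{\hat{n}}$ reaches the annihilation branch while the output conjugation $i^{\hat{n}}$ reaches the creation branch, and it is these conjugate phases $\mp i$ whose relative sign turns the symmetric combination $a^\dagger + a$ into $i(a^\dagger - a)$ --- but carrying out that propagation is the entire content of the proof in your route, and it is left as a plan rather than executed. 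The paper's direct route avoids this bookkeeping altogether by baking the phases into the algebraic identity from the start, at the price of repeating the \textruleref{Triforce}-style computation rather than reusing it.
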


\paragraph*{Number operator}
It counts the number of photons in a qumode: $\hat{n} = a^{\dagger} a$ with $\hat{n}\ket{n} = n\ket{n}$.
It corresponds to the following Fock spider:
\begin{equation*}
    \tikzfigscale{1}{number-operator}
    \quad \overset{\interp{\cdot}}{\longmapsto} \quad
    \hat{n}
\end{equation*}
\begin{restatable}{lemma}{fonefact}\label{lem:fonefact}
    \begin{equation*}
        \tikzfigscale{1}{f-1-fact}
    \end{equation*}
\end{restatable}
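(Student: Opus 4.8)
The plan is to establish the identity by the two-step method standard to this section: first read off the target coefficient identity from the interpretation $\interp{\cdot}$, then reconstruct the equality purely from the rules of Figure~\ref{fig:ZX-Rules}, so that the lemma holds in the calculus and not merely under the semantics. Since the statement is a factorization of an $f$-labelled Fock spider, the semantic side is governed entirely by the diagonal Fock coefficients $f(n)$ (from the interpretation of the Fock spider in the basis $\{\ket{n}\}_{n\in\naturals}$), together with the multinomial normalization weights $\sqrt{(\sum_i n_i)!/\prod_i n_i!}$ carried by any W-node and the $\delta_1$ weight of the number state $\ket{1}$. Contracting these against one another collapses the W-node sums and leaves a single scalar identity in $n$ --- typically that an argument-shifted evaluation of $f$ times a $\sqrt{n\plus 1}$-type factor reproduces the right-hand coefficient. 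Confirming this identity term-by-term pins down exactly what the diagrammatic rewrite must produce.

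Next I would build the rewrite itself. The natural toolkit is the Fock case of \textruleref{Fusion} to merge or unfuse the Fock spider, the Fock \textruleref{Copy} rule to duplicate or absorb the $\ket{1}$ number state, and W-Fusion together with the \textruleref{Bialgebra} relations to move the $\ket{1}$ leg into the position where the W-node emits the matching $\sqrt{\cdot}$ factor. Where a factorial weight must be introduced, I would lean on the creation-operator presentation used just above: a binary W-node carrying a $\ket{1}$ leg implements $a^{\dagger}$, sending $\ket{n}\mapsto\sqrt{n\plus 1}\,\ket{n\plus 1}$, which is precisely the gadget that converts an integer increment into the correct $\sqrt{n!}$ bookkeeping. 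Throughout I would exploit flexsymmetry of the Fock and W nodes to reposition legs freely, exactly as in the proof of Proposition~\ref{prop:triforceKn}, so that each rule application visibly contributes one factor of the target coefficient.

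The step I expect to be the main obstacle is the combinatorial bookkeeping of these normalization factors: matching the multinomial weights produced by the W-node against the $\sqrt{n!}$ factors implicit in the Fock-basis normalization, while simultaneously tracking the shift in the argument of $f$ across the factorization. The danger is accumulating a spurious global scalar or an off-by-one in the shift, and the \textruleref{Bialgebra} and W-fusion manipulations are where such slips hide. To guard against this I would commit to the coefficient identity derived in the first step before fixing any rewrite order, and verify that the final diagram's interpretation agrees with the left-hand side coefficient-for-coefficient, absorbing any residual constant into the \textruleref{Global-Scalar} generator if the conventions of $\psi_n$ demand it.
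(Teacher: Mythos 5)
There is a genuine gap: your proposal never identifies the one fact that actually carries this lemma, and the strategy you commit to would stall before reaching it. The paper's proof is a short rewrite whose pivotal step takes place on the \emph{continuous} side, not the Fock side. It converts the Fock-basis data into a Z spider labelled by the coherent-type function $e^{-x^2/2 + x - 1/4}$ (the $k=\tfrac{1}{\sqrt{2}}$ instance of the generating-function expansion behind Lemma~\ref{lem:KtoNFockZ}, with Hermite coefficients $\pi^{1/4}2^{-n/2}(n!)^{-1/2}$), multiplies position-space labels pointwise via Z \textruleref{Fusion}, and then re-expands the product using the Hermite raising relation $\sqrt{2(n+1)}\,\psi_{n+1}(x) = \left(x - \frac{d}{dx}\right)\psi_n(x)$ to extract the factor of $n$, namely $(2x-1)\,e^{-x^2/2+x-1/4} = 2\pi^{1/4}\sum_{m\geq 0} 2^{-m/2}(m!)^{-1/2}\, m\, \psi_m(x)$. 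Your plan stays entirely on the Fock side --- you describe the lemma as ``governed entirely by the diagonal Fock coefficients'' and propose Fock \textruleref{Fusion}, \textruleref{Copy}, W-fusion and \textruleref{Bialgebra} as the toolkit --- but none of those rules can manufacture this label identity: the only bridge between the continuous and discrete pictures in Figure~\ref{fig:ZX-Rules} is \textruleref{Hermite}, and applying it is exactly the analytic computation above. So step (2) of your plan, a reconstruction ``purely from the rules of Figure~\ref{fig:ZX-Rules}'', is not available; notably, even the paper does not proceed that way, instead justifying the decisive step semantically at the level of spider labels (which is legitimate, since equal labels give equal diagrams).

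To your credit, your instinct about the creation operator is right in spirit: the Hermite relation is precisely the position-space avatar of $\sqrt{2}\,a^\dagger$, so your gadget sending $\ket{n}\mapsto\sqrt{n+1}\,\ket{n+1}$ is the same operator the paper exploits. And your fallback --- verifying the interpretation coefficient-for-coefficient --- would, if actually executed, essentially reproduce the paper's argument. But as written, the proposal defers exactly the computation that constitutes the proof: the coefficient identity you promise to ``confirm term-by-term'' is asserted rather than derived, and closing it requires committing to the coherent-state expansion of the Gaussian label together with the raising relation (or an equivalent generating-function identity for the Hermite functions). Without naming that ingredient, the plan does not survive contact with the lemma.
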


\begin{proposition}\label{prop:nhat}
    The (controlled) $\hat{n}$ operator is
    \begin{equation*}
        \tikzfigscale{1}{n-hat}
    \end{equation*}
\end{proposition}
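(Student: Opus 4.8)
The plan is to follow the template of Proposition~\ref{prop:xhat}, but to factor the number operator as $\hat{n}=a^\dagger a$ rather than decompose a sum. The first step is to recognize the creation and annihilation diagrams of this subsection as W nodes carrying a single photon: feeding $\ket{1}$ into the spare leg of a W node realizes $a^\dagger:\ket{m}\mapsto\sqrt{m+1}\ket{m+1}$, while capping the spare leg of a transposed W node with $\bra{1}$ realizes $a:\ket{k}\mapsto\sqrt{k}\ket{k-1}$. Composing the transposed W node below the W node along a common through-wire therefore sends $\ket{k}\mapsto\sqrt{k}\,\sqrt{k}\,\ket{k}=k\ket{k}$, which is exactly the Fock spider labelled by the identity function $n\mapsto n$, namely $\hat{n}$. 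I expect this uncontrolled identity to be delivered essentially by Lemma~\ref{lem:fonefact}, which packages the two square-root coefficients contributed by the W nodes into a single number-operator Fock spider.

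To promote this to the controlled diagram, I would join the two spare legs --- the one carrying $\ket{1}$ for creation and the one carrying $\bra{1}$ for annihilation --- to a single control wire through a Fock spider, so that any Fock basis state placed on the control is distributed onto both spare legs by the \textruleref{Copy} rule. Verifying the control-on branch then amounts to substituting $\ket{1}$: the Fock spider copies it to $\ket{1}$ on the creation leg and $\ket{1}$ on the annihilation leg, and the diagram collapses to the number-operator Fock spider of the previous paragraph through Lemma~\ref{lem:fonefact}.

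For the control-off branch I would substitute $\ket{0}$ into the control. The \textruleref{Copy} rule again copies it onto both spare legs, so each W node now carries a vacuum on its spare leg; Lemma~\ref{lem:WUnit} collapses each W node to a bare wire, and the two through-wires compose to the identity. Together the two branches certify that the open-control diagram acts as a genuine controlled $\hat{n}$, mirroring the $\ket{0}$ and $\ket{1}$ check used for the controlled $\hat{x}$.

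The step I expect to be the main obstacle is the coefficient bookkeeping behind Lemma~\ref{lem:fonefact}: one must confirm that the two square-root factors produced by the split and the merge multiply to the eigenvalue $n$, rather than to a shifted quantity such as $\sqrt{n(n+1)}$. This forces the annihilating transposed W node to sit strictly below the creating W node, so that the photon removed by the split is precisely the one restored by the merge on the same mode. Once this ordering and the single shared control are fixed, the remaining manipulations are routine applications of the W \textruleref{Fusion}, \textruleref{Copy}, and unit rules.
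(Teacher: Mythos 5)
Your proposal is correct and follows essentially the same route as the paper's proof: factor $\hat{n} = a^{\dagger}a$, realize the creation and annihilation operators as W nodes with a shared control distributed by a Fock spider, verify the $\ket{1}$-input branch yields the number operator (via Lemma~\ref{lem:fonefact}), and check the $\ket{0}$-input branch reduces to the identity. Your observations about the ordering of the W nodes and the coefficient bookkeeping ($\sqrt{k}\cdot\sqrt{k}=k$ rather than $\sqrt{k(k+1)}$) are exactly the points the paper's diagrammatic rewrite handles implicitly.
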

\begin{proof}
    We use the fact that $\hat{n} = a^{\dagger}a$ for the case where the 1 state is input:
    \begin{equation*}
        \tikzfigscale{1}{n-hat-proof}
    \end{equation*}
    We check that the diagram reduces to the identity in the case where the 0 state is input:
    \begin{equation*}
        \tikzfigscale{1}{n-hat-proof-0}
    \end{equation*}
\end{proof}

\paragraph*{Heisenberg-Weyl operators}
They are the generalization of the Pauli operators to CVQC.
They displace the qumode in phase space, and they are defined as follows:
\begin{equation}
    X(x) = e^{-i 2\pi x \hat{p}} \qquad \text{and} \qquad Z(p) = e^{i 2\pi p \hat{x}}
\end{equation}
Their action on the position basis states is given by
\begin{equation}
    X(x) \ket{y} = \ket{x+y}\quad \text{ and } \quad Z(p) \ket{y} = e^{i 2\pi p y} \ket{y}.
\end{equation}
The ZX diagrams for the Heisenberg-Weyl operators are spiders labelled by characters $\chi_x(p) = e^{-i 2\pi px}$ and $\bar{\chi}_p(x) = e^{i 2\pi px}$, as follows:
\begin{equation*}
    \tikzfigscale{1}{x-displacement}
    \quad \overset{\interp{\cdot}}{\longmapsto} \quad
    X(x)
    \qquad \text{and} \qquad
    \tikzfigscale{1}{z-displacement}
    \quad \overset{\interp{\cdot}}{\longmapsto} \quad
    Z(p)
\end{equation*}

\subsection{Gaussian operations}
\paragraph*{Displacement}
The displacement operator $D(\alpha)$ shifts the qumode in phase space by $\alpha = x + ip$.
This is equivalent to applying the Heisenberg-Weyl operators $X(x)Z(p)$.
Hence, we can write down the diagram for the displacement operator:
\begin{equation}
    \tikzfigscale{1}{displacement-operator}
    \quad \overset{\interp{\cdot}}{\longmapsto} \quad
    D(x + ip)
\end{equation}
\paragraph*{Squeezing}
The squeeze gate $S(r)$ simply corresponds to a multiplier:
\begin{equation}
    \tikzfigscale{1}{squeezing}
    \quad \overset{\interp{\cdot}}{\longmapsto} \quad
    S(r).
\end{equation}
\paragraph*{Rotation}
The rotation gate that rotates the phase space is $R(\theta) = e^{\minu i \theta \hat{n}}$.
It is representable as a Fock spider and as a decomposition of Z-X-Z spiders, by the \textruleref{Euler} rule:
\begin{equation*}
    \tikzfigscale{1}{minu-euler}
    \quad \overset{\interp{\cdot}}{\longmapsto} \quad
    R(\theta)
\end{equation*}
When $\theta = \frac{\pi}{2}$ and $\minu \frac{\pi}{2}$, this realises the forward and inverse quantum Fourier transforms respectively:
\begin{equation}
    \tikzfigscale{1}{rotation-fourier}
\end{equation}

\paragraph*{Controlled-X gate}
The continuous-variable controlled-X gate applies a controlled displacement on the target qumode.
On the position basis states, it acts as follows:
\begin{equation}
    CX(s) \ket{x} \ket{y} = \ket{x} \ket{y + sx}
\end{equation}
It is represented by the following diagram:
\begin{equation}
    \tikzfigscale{1}{controlled-x}
\end{equation}
\paragraph*{Controlled-phase gate}
The controlled-Z gate can be generalized to CVQC, where it applies a phase:
\begin{equation}
    CZ(s) \ket{x}_X \ket{y}_X = e^{i2\pi sxy} \ket{x}_X \ket{y}_X
\end{equation}
We can obtain this by simply applying fourier and inverse fourier transforms to the target of the controlled-X gate:
\begin{equation}
    \tikzfigscale{1}{controlled-phase}
\end{equation}
When $s=1$, this gate is the usual controlled-Z gate:
\begin{equation}
    \tikzfigscale{1}{controlled-z}
\end{equation}
\paragraph*{Beam splitter}
The beam splitter is a two-mode gate that is easily implemented using linear optics.
The beam splitter and rotation gates allow us to construct arbitrary linear interferometers, which are used in algorithms based on boson sampling.
\begin{equation}
    B(\theta, \phi)\ =\ \exp({\theta (e^{i \phi} a_1 a_2^{\dagger} - e^{-i \phi} a_1^{\dagger} a_2)})
    \quad = \quad
    \tikzfigscale{1}{beam-splitter-fock}
\end{equation}
In the special case where $\phi = 0$, we can represent the beam splitter in the Z-X form by applying the \textruleref{Beam-Splitter} rule:
\begin{equation}
    \tikzfigscale{1}{beam-splitters}
\end{equation}

\subsection{Non-Gaussian operations}
All the gates we have seen so far are Gaussian operations, i.e.\@ their Hamiltonian is quadratic in the creation and annihilation operators.
Non-Gaussian operations involve higher order terms in the Hamiltonian, and they are necessary to enable universal quantum computation.
\paragraph*{Cubic phase gate}
The cubic phase gate is a non-Gaussian operation with a cubic term in the Hamiltonian, represented as follows:
\begin{equation}
    V(\gamma) \quad  = \quad e^{i \gamma \hat{x}^3} \quad = \quad \tikzfigscale{1}{cubic-phase}
\end{equation}
\paragraph*{Kerr gate}
The Kerr gate is another non-Gaussian operation that introduces a quadratic term in the Hamiltonian:
\begin{equation}
    H_{\text{Kerr}} = a^{\dagger} a a^{\dagger} a = \hat{n}^2
\end{equation}
We can represent the Kerr gate using a Fock spider:
\begin{equation}
    K(\kappa) \quad  = \quad e^{i \kappa \hat{n}^2} \quad  = \quad \tikzfigscale{1}{kerr}
\end{equation}
\paragraph*{Cross-Kerr gate}
The cross-Kerr gate is a two-mode gate that performs a non-Gaussian entangling operation.
It has the following graphical representation in our calculus:
\begin{equation}
    CK(\kappa) \quad  = \quad e^{i \kappa \hat{n}_1 \hat{n}_2} \quad  = \quad \tikzfigscale{1}{cross-kerr}
\end{equation}

\section{Completeness for the Gaussian fragment of CVQC}
\label{sec:gaussian-completeness}
In this section, we define the Gaussian fragment of our calculus and prove its completeness.
We prove the completeness by translating to and from another complete graphical calculus, the graphical symplectic algebra~\cite{boothGraphicalSymplecticAlgebra2024,boothCompleteGaussian2024}.
This calculus provides semantics in terms of affine Lagrangian relations~\cite{comfortGraphicalCalculusLagrangian2022}, and was shown to be complete for the Gaussian fragment of CVQC.
Completeness via translation is a common technique in the ZX-calculus literature, and was originally used to prove the completeness of the qubit ZX-calculus~\cite{ngUniversalCompletionZXcalculus2017,jeandelCompleteAxiomatisationZXCalculus2018}.

We first recap the graphical symplectic algebra, then describe the Gaussian fragment of the ZX-calculus, and finally prove the completeness of the Gaussian fragment of the ZX-calculus.

\subsection{Graphical symplectic algebra}
The generators of the graphical symplectic algebra ($\GSACategory$) are given by the following diagrams, where $a,b\in\reals$.
The interpretation of these diagrams are given as affine Lagrangian relations:
\begin{align*}
    \raisebox{-0.75em}{\includegraphics{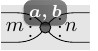}}
    \qquad & \overset{\interp{\cdot}}{\longmapsto} \qquad
    \left\{ \left(
    \begin{bmatrix}
            \vec z \\ x \vspace*{-.2cm}\\ \vdotsm \\ x
        \end{bmatrix} ,
    \begin{bmatrix}
            \vec {z'} \\ x \vspace*{-.2cm} \\ \vdotsn \\ x
        \end{bmatrix}
    \right) \,\middle|\, \begin{aligned} &\vec z\in\reals^m, \vec{z'} ; \in \reals^n, x \in \reals \qq{such that} \\ &\sum_{j=0}^{m-1} z_j - \sum_{k=0}^{n-1} z_k' + bx = a \end{aligned} \right\} \\
    \raisebox{-0.75em}{\includegraphics{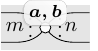}}
    \qquad & \overset{\interp{\cdot}}{\longmapsto} \qquad
    \left\{ \left(
    \begin{bmatrix}
            z \vspace*{-.2cm} \\ \vdotsm \\ z \\  \vec x
        \end{bmatrix} ,
    \begin{bmatrix}
            -z \vspace*{-.2cm} \\ \vdotsn \\ -z \\  \vec {x'}
        \end{bmatrix}
    \right) \,\middle|\, \begin{aligned} & \vec x \in \reals^m, \vec{x'}\in \reals^n, z\in \reals \qq{such that} \\ &\sum_{j=0}^{m-1} x_j + \sum_{k=0}^{n-1} x_k' - bz = a \end{aligned} \right\} \\
    \raisebox{-0.75em}{\includegraphics{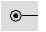}}
    \qquad & \overset{\interp{\cdot}}{\longmapsto} \qquad
    \left\{ \left(\bullet,
    \begin{bmatrix}
            i x \\  x
        \end{bmatrix}
    \right) \,\middle|\, x \in \reals \right\}
\end{align*}
The rules of GSA are given in Figure~\ref{fig:GSA-Rules}.
\begin{figure}[ht]
    \begin{mdframed}
        \includegraphics{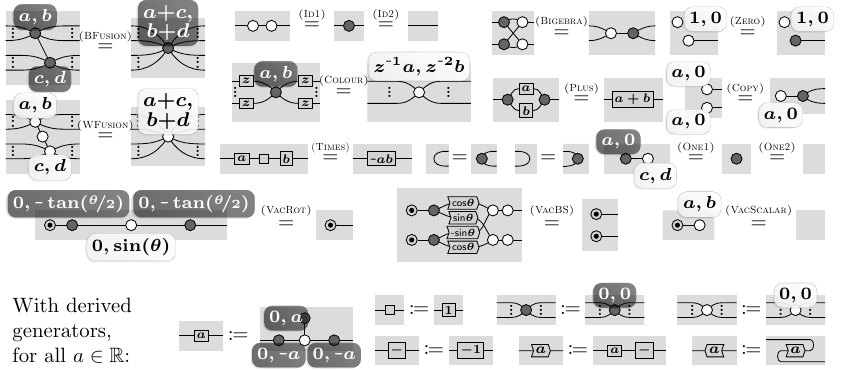}
    \end{mdframed}
    \caption{Rules of the graphical symplectic algebra (GSA).}\label{fig:GSA-Rules}
\end{figure}

\subsection{Gaussian fragment of the ZX-calculus}
We define the Gaussian fragment of the ZX-calculus, denoted by $\ZXGaussCategory$, to consist of diagrams which can be obtained from the following generators:
\begin{equation*}
    \tikzfigscale{1}{gaussian-fragment}
\end{equation*}
where $a,b\in\reals$, modulo the equations in Figure~\ref{fig:ZX-Rules}.
Additionally, since the Gaussian completeness of~\cite{boothCompleteGaussian2024} is only up to non-zero global scalars, we will ignore the global scalars for the purposes of this section, by (temporarily) adding the following axiom:
\begin{equation}\label{eq:ignore-scalars}
    \tikzfigscale{1}{ignore-scalars} \qquad \text{for all non-zero } k.
\end{equation}
We proceed to define a translation functor $T: \ZXGaussCategory \to \GSACategory$.
Since both $\ZXGaussCategory$ and $\GSACategory$ are props freely generated by a set of generators, it suffices to define the functor on the generators:
\begin{align}
    \tikzfigscale{1}{figures/T-Z} \quad      & \overset{T}{\longmapsto} \quad \raisebox{-0.75em}{\includegraphics{figures/GSA-Z.pdf}}       \\
    \tikzfigscale{1}{figures/T-X} \quad      & \overset{T}{\longmapsto} \quad \raisebox{-0.75em}{\includegraphics{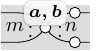}} \\
    \tikzfigscale{1}{figures/T-vacuum} \quad & \overset{T}{\longmapsto} \quad \raisebox{-0.4em}{\includegraphics{figures/GSA-vacuum.pdf}}
\end{align}
where $a,b\in\reals$.
It is straightforward to see that the translation functor $T$ is invertible.
\begin{lemma} \label{lem:invertible}
    For any diagram $D$ in $\ZXGaussCategory$, $T^{-1}(T(D)) = D$.
\end{lemma}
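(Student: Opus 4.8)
The plan is to exploit that $\ZXGaussCategory$ is a prop generated (under sequential composition $\circ$ and parallel composition $\otimes$, together with identities and wire swaps) by exactly three generators: the $Z$ generator, the $X$ generator, and the vacuum. Consequently, every diagram $D$ arises from these three by repeated sequential and parallel composition. Since $T$ and $T^{-1}$ are prop functors, their composite $T^{-1}\circ T$ is an endofunctor of $\ZXGaussCategory$ that preserves $\circ$, $\otimes$, identities, and symmetries, and the same is true of $\mathrm{id}_{\ZXGaussCategory}$. Because a functor out of a prop is completely determined by its values on the generators, it suffices to verify $T^{-1}(T(g)) = g$ on each of the three generators $g$; the general identity $T^{-1}(T(D)) = D$ then follows by structural induction on the decomposition of $D$ into generators, using only that both $T^{-1}\circ T$ and $\mathrm{id}$ commute with $\circ$ and $\otimes$. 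This reduction to a finite generator check is precisely what makes the statement ``straightforward''.

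Before running the three checks, I would first confirm that $T^{-1}$ is genuinely well defined as a functor, namely that the reverse assignment on the generators of $\GSACategory$ respects the GSA rules of Figure~\ref{fig:GSA-Rules}; this is the content underwriting the claim that $T$ is invertible, and it guarantees that $T^{-1}\circ T$ descends to the quotient by the equations of Figure~\ref{fig:ZX-Rules}. With this in hand, the $Z$ generator and the vacuum are immediate: each is sent by $T$ to a single GSA generator (the GSA-$Z$ box and the GSA-vacuum respectively), and $T^{-1}$ maps these straight back to the originals, so $T^{-1}(T(g)) = g$ holds on the nose.

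The only case carrying any content is the $X$ generator, whose image under $T$ is not a bare GSA generator but the composite GSA diagram used to define $T$ on $X$. Here I would unfold $T^{-1}$ across that composite by functoriality, replacing each GSA constituent with its ZX preimage, and then simplify the resulting ZX diagram --- invoking rules from Figure~\ref{fig:ZX-Rules} if any rewriting is required --- to recover the original $X$ generator. I expect this $X$ case, together with the verification that $T^{-1}$ is well defined on the GSA relations, to be the only (and still mild) obstacle; everything else is the routine bookkeeping of propagating the generator-wise equality through composition and tensor.
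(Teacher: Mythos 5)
Your proposal is correct and takes essentially the route the paper has in mind: the paper in fact gives no proof of this lemma beyond the preceding remark that invertibility is ``straightforward to see,'' and your reduction to a generator-wise check (justified by the paper's own observation that $\ZXGaussCategory$ and $\GSACategory$ are props freely generated by their generators), followed by structural induction through $\circ$ and $\otimes$, with the image of the X generator as the only case needing any rewriting by the rules of Figure~\ref{fig:ZX-Rules}, is exactly the argument being elided. The one thing to trim is your preliminary worry about $T^{-1}$ respecting the GSA relations: that is not needed for this lemma, since here $T$ and $T^{-1}$ act on syntactic diagrams and the asserted equality is provable equality in $\ZXGaussCategory$; the compatibility of $T^{-1}$ with the GSA axioms is handled separately in the paper as Proposition~\ref{prop:gsa-zx-rules}.
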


\noindent The semantics of $\GSACategory$ is given in terms of affine Lagrangian relations, whereas the semantics of $\ZXGaussCategory$ is given in terms of linear maps in $\starHilbCategory$, the category of hyperfinite-dimensional non-standard Hilbert spaces (cf. Appendix \ref{sec:non-standard}).
To lift the completeness of $\GSACategory$ to $\ZXGaussCategory$, we need to show the relationship between the two semantic categories.
In the next proposition, we show that $\AffLagRel_{\complexs}^+$ is equivalent to a subcategory of $\starHilbCategory$.
We start by restricting to the Gaussian fragment of $\starHilbCategory$ and quotienting by non-zero scalars.
We call the resulting subcategory $\starHilbCategory_G$: it contains Gaussian states and operators, as well as the eigenstates of the position and momentum operators (aka infinitely squeezed states),
but it is not yet equivalent to $\AffLagRel_{\complexs}^+$, as it contains Gaussian operators that are not near-standard.
We then move to a subcategory $\starHilbCategory_G^{\text{fin}}$, restricted to Gaussian operators with finite parameters:
\begin{restatable}{proposition}{propSemanticsEquivalence}\label{prop:semantics-equivalence}
    There is a symmetric-monoidal equivalence $S:~\starHilbCategory_G^{\text{fin}} \to \AffLagRel_{\complexs}^+$.
\end{restatable}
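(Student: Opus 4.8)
The plan is to exhibit $S$ explicitly on objects and morphisms, then verify the three properties of a symmetric-monoidal equivalence: that $S$ is (i) a symmetric-monoidal functor, (ii) essentially surjective, and (iii) fully faithful. The key conceptual ingredient is that an affine Lagrangian relation over $\complexs$ encodes exactly the data of a Gaussian operator with finite parameters, modulo non-zero global scalar. Concretely, a Gaussian state or operator in $\starHilbCategory_G^{\text{fin}}$ is determined (up to scalar) by its ``stabilizer'' data: the affine subspace of phase-space displacements under which it is invariant, together with the characteristic complex structure encoded by the vacuum-type generator $\left(\bullet, \bigl[\begin{smallmatrix} ix \\ x\end{smallmatrix}\bigr]\right)$. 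I would define $S$ to send each object (a tensor power of the standard qumode space) to the corresponding object of $\AffLagRel_{\complexs}^+$, and send a morphism to the affine Lagrangian relation recording precisely which input-output pairs of phase-space vectors are related by the Gaussian channel it represents.

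First I would set up the correspondence on generators, mirroring the translation functor $T$: the near-standard Gaussian Z and X generators map to the GSA relations already displayed in the excerpt, and the (squeezed-)vacuum generator maps to the vacuum relation. Because $\starHilbCategory_G^{\text{fin}}$ is, like $\ZXGaussCategory$, generated by these pieces under composition and tensor, specifying $S$ on generators and checking it respects the rules of Figure~\ref{fig:ZX-Rules} suffices to define it as a functor; symmetric-monoidality follows because both categories are props and the swap/cup/cap structure is carried across by $T$. Essential surjectivity is then immediate, since every object of $\AffLagRel_{\complexs}^+$ is a power of the generating object and lies in the image.

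The substantive work is full faithfulness, and this is where I expect the main obstacle. I would prove it by showing that the semantic map ``Gaussian operator $\mapsto$ its affine Lagrangian relation'' is a bijection on hom-sets once we quotient by non-zero scalars and restrict to finite parameters. The crux is a normal-form argument: using the Gaussian completeness of $\GSACategory$ from~\cite{boothCompleteGaussian2024} together with the invertibility of $T$ (Lemma~\ref{lem:invertible}), I can reduce any morphism of $\starHilbCategory_G^{\text{fin}}$ to a canonical form whose affine Lagrangian relation is read off directly, and conversely lift any affine Lagrangian relation to such a Gaussian operator. Faithfulness---injectivity---reduces to the claim that two finite-parameter Gaussian operators inducing the same affine Lagrangian relation differ only by a non-zero scalar; this is essentially the statement that a Gaussian operator is determined up to phase-and-magnitude by its stabilizer subspace and complex structure. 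The delicate point is the interface with non-standard analysis: I must verify that the ``finite parameter'' restriction is exactly what guarantees the resulting Lagrangian relation has standard (finite) real entries, and that the standard-part map on parameters is well-defined and compatible with composition. The hard part will be controlling this standard-part/near-standardness bookkeeping so that composition of Gaussian operators in the hyperfinite model corresponds, after taking standard parts, to relational composition in $\AffLagRel_{\complexs}^+$ without spurious divergences or scalar ambiguities; once that compatibility is nailed down, functoriality and faithfulness fall out together.
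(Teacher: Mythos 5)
Your proposal identifies the right statement to prove --- that a finite-parameter Gaussian operator is determined, up to non-zero scalar, by its affine Lagrangian relation, and that every such relation arises this way --- but it never supplies a mechanism for proving it, and that mechanism is the entire content of the proposition. The paper proceeds by a route your proposal does not anticipate: it does not build $S$ directly on the Hilbert-space side at all. Instead, it starts from the known \emph{finite-dimensional} symmetric-monoidal equivalence between $\AffLagRel_{\mathbb{F}_p}$ and stabilizer circuits for odd-prime-dimensional qudits~\cite{grossHudsonTheoremFinitedimensional2006,comfortGraphicalCalculusLagrangian2022}, lifts that equivalence to hyperfinite dimension by the Transfer Theorem, obtaining an equivalence between $\AffLagRel_{\LatticeR}$ and Gaussian operators with position and momentum eigenstates, and then restricts along the pointwise injection $\reals \hookrightarrow \LatticeR$ to get a faithful functor $\AffLagRel_{\reals} \to \starHilbCategory_G$; the vacuum generator of $\AffLagRel_{\complexs}^+$ is handled separately (sent to $\frac{1}{\omega}\sum_{x\in\LatticeR} e^{-x^2/2}\ket{x}$ and checked against the GSA vacuum equations), and fullness is achieved precisely by cutting down to $\starHilbCategory_G^{\text{fin}}$. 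The Transfer Theorem is what converts the discrete ``a stabilizer state is determined by its stabilizer group'' fact into the continuum statement you need; your proposal has no counterpart to this step.

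Concretely, two steps of your argument are unsupported. First, your claim that $\starHilbCategory_G^{\text{fin}}$ is ``generated by these pieces under composition and tensor'' is a universality claim about the Gaussian ZX fragment that you do not prove, and it is not how the paper defines $\starHilbCategory_G$, which is carved out semantically rather than as an image of diagrams. Second, and more seriously, your fullness/faithfulness argument is circular: completeness of $\GSACategory$ lets you pass between GSA diagrams with \emph{equal relational semantics}, but to apply it to a morphism of $\starHilbCategory_G^{\text{fin}}$ you must already know that the relational semantics of $T(D)$ agrees with the relation ``induced by'' $\interp{D}$ --- and that agreement is exactly the correspondence $S$ you are trying to construct (in the paper it only becomes available as Proposition~\ref{prop:preserves-semantics}, which is stated after, and depends on, Proposition~\ref{prop:semantics-equivalence}). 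The residual claim you fall back on --- that a Gaussian operator is determined up to scalar by its stabilizer subspace and complex structure --- is a restatement of faithfulness, not a proof of it; in the non-standard model it must be established either by transfer from the finite case, as the paper does, or by a genuine normal-form argument carried out inside $\starHilbCategory$, which you do not give.
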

\noindent Moreover, as the generators of $\ZXGaussCategory$ are parameterized by real numbers:
\begin{lemma}\label{lem:zx-gauss-semantics}
    The semantics of $\ZXGaussCategory$ lies within the subcategory $\starHilbCategory_G^{\text{fin}}$.
\end{lemma}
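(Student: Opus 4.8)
The plan is to argue purely by functoriality. Since $\ZXGaussCategory$ is a prop freely generated by the Z spider, X spider, and vacuum generators, and the interpretation $\interp{\cdot}$ is a symmetric monoidal functor (landing, after the scalar quotient of Eq.~\eqref{eq:ignore-scalars}, in $\starHilbCategory_G$), it suffices to establish two things: that the image of each generator lies in $\starHilbCategory_G^{\text{fin}}$, and that $\starHilbCategory_G^{\text{fin}}$ is closed under composition, monoidal product, and the swap maps. Given those, every composite diagram is sent into $\starHilbCategory_G^{\text{fin}}$ automatically, which is the assertion.

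For the generators, I would recall their explicit interpretations and simply read off their parameters. Each of the three generators of $\ZXGaussCategory$ is labelled only by $a,b\in\reals$ (the vacuum carrying no free parameter). In the nonstandard model, a hyperreal counts as finite exactly when it is bounded in absolute value by some standard natural number, and every standard real trivially satisfies this. Consequently the displacement and symplectic data defining the interpretation of each generator consist entirely of finite hyperreal and hypercomplex entries, so by the defining condition of $\starHilbCategory_G^{\text{fin}}$ as the Gaussian operators with finite parameters, each generator already lies in $\starHilbCategory_G^{\text{fin}}$. This is precisely the force of the remark that the generators of $\ZXGaussCategory$ are parameterized by real numbers.

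For closure I would invoke Proposition~\ref{prop:semantics-equivalence}: the symmetric monoidal equivalence $S:\starHilbCategory_G^{\text{fin}}\to\AffLagRel_{\complexs}^+$ exhibits $\starHilbCategory_G^{\text{fin}}$ as a genuine symmetric monoidal subcategory of $\starHilbCategory_G$, hence closed under composition, tensor, and symmetries by construction. The step I expect to be the \emph{main obstacle} is seeing why this closure holds at the level of the parameters themselves, namely that composing or tensoring two finite-parameter Gaussians never yields infinite parameters. Composition corresponds to a hyperfinite Gaussian integral whose output coefficients arise from Schur-complement manipulations that could in principle divide by an infinitesimal and escape finiteness; the reason this cannot occur is exactly that such degenerate compositions are handled relationally in $\AffLagRel_{\complexs}^+$, and $S$ transports this well-definedness back to $\starHilbCategory_G^{\text{fin}}$. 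Thus the delicate finiteness bookkeeping is inherited from Proposition~\ref{prop:semantics-equivalence} rather than needing to be re-derived here, and the lemma follows.
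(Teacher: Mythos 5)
Your proposal is correct and takes essentially the same route as the paper: the paper gives no explicit proof beyond the remark that the generators of $\ZXGaussCategory$ are parameterized by real numbers, which is precisely your observation that each generator's interpretation has finite parameters, with closure under composition and tensor taken as implicit in $\starHilbCategory_G^{\text{fin}}$ being a symmetric monoidal subcategory (as presupposed by Proposition~\ref{prop:semantics-equivalence}). Your additional care about degenerate compositions is a sensible elaboration of that implicit step, not a departure from the paper's argument.
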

\noindent Next, we check that the translation functor $T$ preserves the semantics:
\begin{proposition}\label{prop:preserves-semantics}
    For any diagram $D$ in $\ZXGaussCategory$, $S(\interp{T(D)}) = \interp{D}$.
\end{proposition}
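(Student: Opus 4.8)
The plan is to reduce to a finite check on generators by functoriality. Every morphism of $\ZXGaussCategory$ is built from its generators by composition and tensor product, and each of the maps in play preserves both operations: $\interp{\cdot}$ restricts to a symmetric monoidal functor $\ZXGaussCategory \to \starHilbCategory_G^{\text{fin}}$ by Lemma~\ref{lem:zx-gauss-semantics}, the GSA semantics is a symmetric monoidal functor $\interp{\cdot}\colon \GSACategory \to \AffLagRel_{\complexs}^+$, the translation $T$ is symmetric monoidal by construction, and $S$ is a symmetric monoidal equivalence by Proposition~\ref{prop:semantics-equivalence}. Consequently both $D \mapsto S(\interp{T(D)})$ and $D \mapsto \interp{D}$ are determined by their values on the generators, so it suffices to prove $S(\interp{T(g)}) = \interp{g}$ for each generating family $g$.

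First I would list the generators of $\ZXGaussCategory$: the Z spider and the X spider, each labelled by $a,b \in \reals$, together with the vacuum state. For each I compute both sides explicitly. On the right, $\interp{g}$ is the corresponding finite-parameter Gaussian operator (or infinitely-squeezed state) in $\starHilbCategory_G^{\text{fin}}$, read off from the interpretation rules of Section~\ref{sec:zx-calculus}. On the left, I apply $T$ to obtain the associated GSA diagram ($T(\text{Z})$ is the GSA-Z generator, $T(\text{X})$ the GSA-X generator, and $T(\text{vacuum})$ the GSA-vacuum), take its interpretation as an affine Lagrangian relation using the GSA semantics recalled above, and then apply the equivalence $S$ of Proposition~\ref{prop:semantics-equivalence} (which, being an equivalence, may be used in the required direction).

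Concretely, the three comparisons are: for the Z spider, that $S$ carries the affine Lagrangian relation $\{\,\cdots \mid \sum_j z_j - \sum_k z'_k + bx = a\,\}$ back to the position-space Gaussian $\interp{g}$; for the X spider, the analogous momentum-space relation $\{\,\cdots \mid \sum_j x_j + \sum_k x'_k - bz = a\,\}$ associated to $T(g)$; and for the vacuum, that the relation $\{(\bullet,(ix,x)) \mid x \in \reals\}$ is sent to the vacuum state. Flexsymmetry of the Z, X, and W generators, together with the monoidal structure, lets these checks be carried out for arbitrary leg arities uniformly, so no separate argument per arity is needed.

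The main obstacle is the generator-level matching itself, which hinges on the explicit description of $S$ extracted from the proof of Proposition~\ref{prop:semantics-equivalence}: one must track precisely how $S$ reads the phase-space support of a finite-parameter Gaussian off as an affine Lagrangian relation, identifying which quadrature each GSA wire carries and verifying that the affine constant $a$ and the coefficient $b$ of the defining linear equation reproduce exactly the phase and multiplier data of the spider. Once $S$ is pinned down, each comparison reduces to checking that two descriptions of the same affine subspace of phase space coincide, a direct if bookkeeping-heavy calculation; the quotient by non-zero scalars in~\eqref{eq:ignore-scalars} ensures the normalising prefactors of the Gaussians play no role.
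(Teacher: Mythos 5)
The paper states this proposition without any proof, treating it as a routine verification, and your proposal supplies exactly the argument it implicitly relies on: reduction to the generators via symmetric monoidal functoriality of $\interp{\cdot}$, $T$, and $S$, followed by a generator-by-generator comparison of the GSA relational semantics with the Hilbert-space semantics, with the scalar quotient of Equation~\eqref{eq:ignore-scalars} absorbing the Gaussian normalisation prefactors. Your handling of the direction of $S$ is also the right reading --- the paper writes $S\colon \starHilbCategory_G^{\text{fin}} \to \AffLagRel_{\complexs}^+$ but then applies $S$ to affine Lagrangian relations, i.e.\ in the direction actually constructed in its appendix proof of Proposition~\ref{prop:semantics-equivalence}, so invoking the equivalence ``in the required direction'' is precisely what is needed.
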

\noindent We are now ready to show that whenever two diagrams are equal in $\GSACategory$, we can therefore prove their equality using the rules of $\ZXGaussCategory$.
\begin{restatable}{proposition}{propGSARules}\label{prop:gsa-zx-rules}
    For diagrams $D_1$ and $D_2$ in $\GSACategory$, if $\GSACategory \vdash D_1 = D_2$, then $\ZXGaussCategory \vdash T^{-1}(D_1) = T^{-1}(D_2)$.
\end{restatable}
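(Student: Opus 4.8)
The plan is to establish that the translation functor $T$ (together with its inverse $T^{-1}$) transports each equational rule of $\GSACategory$ back into a provable equation of $\ZXGaussCategory$. Since both $\ZXGaussCategory$ and $\GSACategory$ are props freely generated by their respective generators modulo equations, a derivation $\GSACategory \vdash D_1 = D_2$ is a finite sequence of rewrites, each of which replaces a subdiagram matching the left-hand side of some rule in Figure~\ref{fig:GSA-Rules} by the corresponding right-hand side (composed arbitrarily with identities and other generators in series and in parallel). Because $T^{-1}$ is a well-defined functor—it respects both sequential composition $\circ$ and the monoidal product $\otimes$—it suffices to check that for \emph{each} individual rule $L = R$ of $\GSACategory$, we have $\ZXGaussCategory \vdash T^{-1}(L) = T^{-1}(R)$. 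Functoriality then lets us lift the entire derivation rule-by-rule: applying $T^{-1}$ to every intermediate diagram in the $\GSACategory$ proof yields a corresponding sequence of provable $\ZXGaussCategory$ equalities whose composite is $T^{-1}(D_1) = T^{-1}(D_2)$.

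First I would enumerate the generating rules of $\GSACategory$ from Figure~\ref{fig:GSA-Rules}, and for each one compute $T^{-1}$ applied to both sides. Concretely, for a rule $L = R$, I translate the $\GSACategory$-diagrams $L$ and $R$ into $\ZXGaussCategory$-diagrams via $T^{-1}$ (using the explicit action of $T$ on the $Z$, $X$, and vacuum generators, read backwards), and then exhibit a graphical derivation in $\ZXGaussCategory$ using the rules of Figure~\ref{fig:ZX-Rules}—primarily \textruleref{Fusion}, \textruleref{Bialgebra}, \textruleref{Copy}, \textruleref{Mult}, the multiplier-algebra rules (\textruleref{Times}, \textruleref{Plus}, \textruleref{One}), and \textruleref{Fourier}, supplemented by the derived lemmas of Figure~\ref{fig:Derived-Rules}. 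Here I may freely ignore non-zero global scalars, since the axiom~\eqref{eq:ignore-scalars} has been temporarily adjoined to $\ZXGaussCategory$ for this section; this matches the fact that the completeness of $\GSACategory$ in Ref.~\cite{boothCompleteGaussian2024} is itself only up to non-zero scalars. Wherever the semantic content is in doubt, I can cross-check using Proposition~\ref{prop:preserves-semantics}, which guarantees $S(\interp{T(D)}) = \interp{D}$, together with the equivalence $S$ of Proposition~\ref{prop:semantics-equivalence}: this certifies that each translated equation is at least \emph{true}, so the only remaining task is to exhibit its syntactic derivation.

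The main obstacle will be the sheer bookkeeping of the rule-by-rule verification: the $\GSACategory$ rules include spider fusion, bialgebra-type interactions between $Z$ and $X$, the scalar/Euler conventions, and the multiplier rules encoding the additive and multiplicative structure of $\reals$, and each must be matched against its $\ZXGaussCategory$ counterpart. Most translate almost verbatim—the $Z$ and $X$ fusion and bialgebra rules of $\GSACategory$ map to \textruleref{Fusion} and \textruleref{Bialgebra} of Figure~\ref{fig:ZX-Rules}—but the rules governing the $bx$ affine-shift parameter and the vacuum generator (which encodes the $\begin{bmatrix} i x \\ x \end{bmatrix}$ Lagrangian constraint) are the delicate ones, since they mix the phase-character labelling of $X$-spiders with the multiplier arithmetic. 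For these I expect to need the derived multiplier lemmas (\textruleref{multiplierIso}, \textruleref{multiplierRev}, \textruleref{multiplierCopy}, and their $X$-analogues) alongside \textruleref{Euler} to reconcile the two calculi's handling of the imaginary-unit phase. Once each generating rule is discharged, the functoriality argument closes the proof with no further work.
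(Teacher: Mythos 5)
Your proposal matches the paper's proof: the paper likewise invokes functoriality of $T^{-1}$ to reduce the claim to showing each axiom of $\GSACategory$ in Figure~\ref{fig:GSA-Rules} is derivable in $\ZXGaussCategory$, and then discharges the rules one by one (\textsf{(BFusion)}, \textsf{(WFusion)} via \textruleref{Fusion}; \textsf{(Bialgebra)} via \textruleref{Bialgebra}; \textsf{(Colour)} via \textruleref{X-Spider} and \textruleref{Mult}; the \textsf{(Zero)}, \textsf{(Times)}, \textsf{(One)} and vacuum rules via dedicated lemmas in the appendix), exactly as you outline. Your identification of the vacuum and affine-shift rules as the delicate cases requiring the multiplier and Euler machinery also agrees with where the paper expends its effort.
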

\begin{proof}
    By the functoriality of $T^{-1}$, it is sufficient to show that all the axioms of $\GSACategory$ (Figure~\ref{fig:GSA-Rules}) are derivable in $\ZXGaussCategory$.
    The table below summarizes the proofs for each rule.
    \begin{center}
        \begin{tabular}{ll}
            \toprule
            GSA rule \hspace{6em} & Follows from                                     \\
            \midrule
            \textsf{(BFusion)}    & \textruleref{Fusion}                             \\[0.2em]
            \textsf{(WFusion)}    & \textruleref{Fusion}                             \\[0.2em]
            \textsf{(Id)}         & \textruleref{Identity}                           \\[0.2em]
            \textsf{(Bialgebra)}  & \textruleref{Bialgebra}                          \\[0.2em]
            \textsf{(Zero)}       & Lemma~\ref{lem:gsa-lem-zero} \& \ref{lem:nul}    \\[0.2em]
            \textsf{(Colour)}     & \textruleref{X-Spider} \& \textruleref{Mult}     \\[0.2em]
            \textsf{(Plus)}       & \textruleref{Plus}                               \\[0.2em]
            \textsf{(Copy)}       & \textruleref{Copy}                               \\[0.2em]
            \textsf{(Times)}      & Lemma~\ref{lem:hboxFusion}                       \\[0.2em]
            \textsf{(One)}        & Lemma \ref{lem:gsa-one-1} \& \ref{lem:gsa-one-2} \\[0.2em]
            \textsf{(VacRot)}     & Lemma \ref{lem:gsa-vacuum-rot}                   \\[0.2em]
            \textsf{(VacBS)}      & Lemma \ref{lem:gsa-vacuum-bs}                    \\[0.2em]
            \textsf{(VacScalar)}  & Lemma \ref{lem:gsa-vacuum-scalar}                \\ \bottomrule
        \end{tabular}
    \end{center}
\end{proof}
\noindent We put everything together to obtain our completeness result for the Gaussian fragment:
\begin{theorem}
    $\ZXGaussCategory$ is complete for the Gaussian fragment of CVQC: For any two diagrams $D_1$ and $D_2$ in $\ZXGaussCategory$, if $\interp{D_1} = \interp{D_2}$, then $\ZXGaussCategory$ $\vdash D_1 = D_2$.
\end{theorem}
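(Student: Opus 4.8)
The plan is to run the standard completeness-by-translation argument, treating the completeness of $\GSACategory$ from~\cite{boothCompleteGaussian2024} as a black box and using the translation functor $T$ as the bridge. All the real work has already been packaged into four supporting results: invertibility of $T$ (Lemma~\ref{lem:invertible}), the symmetric-monoidal equivalence $S$ between the Gaussian non-standard Hilbert category and affine Lagrangian relations (Proposition~\ref{prop:semantics-equivalence}), the fact that $\ZXGaussCategory$ lands in the relevant subcategory and that $T$ preserves semantics (Lemma~\ref{lem:zx-gauss-semantics} and Proposition~\ref{prop:preserves-semantics}), and derivability of every GSA axiom inside $\ZXGaussCategory$ (Proposition~\ref{prop:gsa-zx-rules}). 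Given these, the theorem itself is a short diagram chase, so I would simply assemble them in order.

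Concretely, suppose $D_1,D_2$ are diagrams in $\ZXGaussCategory$ with $\interp{D_1}=\interp{D_2}$. By Lemma~\ref{lem:zx-gauss-semantics} both interpretations live in $\starHilbCategory_G^{\text{fin}}$, the domain of $S$. First I would transport this equality to the affine-Lagrangian side: by Proposition~\ref{prop:preserves-semantics} we have $S(\interp{T(D_1)})=\interp{D_1}=\interp{D_2}=S(\interp{T(D_2)})$, and since $S$ is an equivalence it is fully faithful, hence injective on morphisms, giving $\interp{T(D_1)}=\interp{T(D_2)}$ as affine Lagrangian relations. Now $T(D_1)$ and $T(D_2)$ are genuine $\GSACategory$ diagrams with equal semantics, so completeness of $\GSACategory$ yields $\GSACategory\vdash T(D_1)=T(D_2)$. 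Applying Proposition~\ref{prop:gsa-zx-rules} transports this derivation back along $T^{-1}$ to obtain $\ZXGaussCategory\vdash T^{-1}(T(D_1))=T^{-1}(T(D_2))$, and Lemma~\ref{lem:invertible} collapses each side to $D_i$, delivering $\ZXGaussCategory\vdash D_1=D_2$.

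The one point that genuinely needs care is the treatment of global scalars: the cited GSA completeness holds only up to a non-zero scalar, which is exactly why the fragment was equipped with the temporary axiom~\eqref{eq:ignore-scalars} quotienting by such scalars. I would make explicit that the theorem is read within this scalar-quotiented setting, so that the two appeals to ``equal semantics'' — one on the Hilbert-space side through $S$, one on the affine-Lagrangian side through GSA completeness — are consistent with the quotient. Apart from this bookkeeping, the statement presents no new obstacle; the substantive difficulties are confined to Proposition~\ref{prop:semantics-equivalence}, which sets up the non-standard/Lagrangian dictionary and whose faithfulness is the lever that moves equality across the two semantics, and to Proposition~\ref{prop:gsa-zx-rules}, which discharges each GSA rule as a $\ZXGaussCategory$ derivation.
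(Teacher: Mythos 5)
Your proposal is correct and follows essentially the same route as the paper's own proof: the identical chain of Lemma~\ref{lem:zx-gauss-semantics}, Proposition~\ref{prop:preserves-semantics}, Proposition~\ref{prop:semantics-equivalence} (faithfulness of $S$), completeness of $\GSACategory$, Proposition~\ref{prop:gsa-zx-rules}, and Lemma~\ref{lem:invertible}, in the same order. Your two added remarks --- that faithfulness of the equivalence $S$ is what licenses cancelling it, and that the argument lives in the scalar-quotiented setting of the temporary axiom~\eqref{eq:ignore-scalars} --- are correct clarifications of points the paper leaves implicit, not a different argument.
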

\begin{proof}
    Since $D_1$ and $D_2$ are diagrams in $\ZXGaussCategory$, by Lemma~\ref{lem:zx-gauss-semantics}, their semantics lie in $\starHilbCategory_G^{\text{fin}}$.
    Then we can apply Proposition~\ref{prop:preserves-semantics} to $\interp{D_1} = \interp{D_2}$ and obtain $S(\interp{T(D_1)}) = S(\interp{T(D_2)})$.
    By Proposition~\ref{prop:semantics-equivalence}, we have $\interp{T(D_1)} = \interp{T(D_2)}$.
    Then by the completeness of $\GSACategory$, we have $\GSACategory$ $\vdash T(D_1) = T(D_2)$.
    Next, by Proposition~\ref{prop:gsa-zx-rules}, we have $\ZXGaussCategory$ $\vdash T^{-1}(T(D_1)) = T^{-1}(T(D_2))$.
    Finally, using the invertibility of $T$ in Lemma~\ref{lem:invertible}, we have $\ZXGaussCategory$ $\vdash D_1 = D_2$.
\end{proof}

\section{Quantum Error Correction with the GKP code}
\label{sec:gkp}
Quantum systems are delicate, because quantum information is prone to errors. If quantum computers fail to keep up with correction as errors occur, then the accumulating backlog causes exponential slowdown of the computation~\cite{terhal2015qecqmems}.
By the quantum threshold theorem, if the physical error rate of a quantum computation is below a particular threshold, then the logical error rate can be made arbitrarily small by recursively concatenating quantum error correcting codes.

In addition to physical noise, errors in the continuous-variable regime can also be due to the practical limitations of squeezing: infinitely squeezed Gaussian CV states are unphysical, because they would require infinite energy to prepare.
Physical implementations approximate them by finitely squeezed Gaussian states instead, their wavefunctions being peaks with finite non-vanishing width.

A common approach to CV quantum error correction is to concatenate an inner bosonic code --- such as GKP~\cite{GottesmanKitaevPreskill2001}, cat~\cite{CochraneCat1999, RalphCat2003}, or binomial~\cite{MichaelBinomial2016} codes --- with an outer qubit code.
In bosonic codes, qubits are encoded into squeezed Gaussian CV states, such that fault-tolerant quantum computation is achievable for some finite squeezing threshold.
Squeezing above this threshold makes it theoretically possible to do arbitrarily long fault-tolerant qubit computations, encoding the discrete logical qubits into the continuous degrees of freedom of the physical bosonic hardware~\cite{Menicucci2014ftmbqccv}.
Experimental demonstrations of GKP codes have surpassed the break-even point where computations which use quantum error correction outperform computations which don't~\cite{sivak2023gkpbreakeven}, but are not yet robust enough for full-fledged fault-tolerance.

\subsection{Representing the GKP code}
\subsubsection{States}
The GKP code encodes one logical qubit per CV mode.
The position and momentum bases have two computational basis states each --- $\{\ket{0_L}, \ket{1_L}\}$ and $\{\ket{+_L}, \ket{-_L}\}$.
In the simplest and ideal case of the GKP code, these four states are encoded as equal superpositions of infinitely squeezed Gaussian states, equally spaced at $2\sqrt{\pi}$ intervals.
The `comb' of $\ket{0_L}$ is offset $\sqrt{\pi}$ from the `comb' of $\ket{1_L}$ in the position basis, as is $\ket{+_L}$ from $\ket{-_L}$ in the momentum basis:
\begin{align*}
    \ket{0_L} & \defeq \sum_{k \in \integers} \ket{2 k \sqrt{\pi}}_X       & \ket{+_L} & \defeq \sum_{k \in \integers} \ket{2 k \sqrt{\pi}}_P       \\
    \ket{1_L} & \defeq \sum_{k \in \integers} \ket{(2 k + 1) \sqrt{\pi}}_X & \ket{-_L} & \defeq \sum_{k \in \integers} \ket{(2 k + 1) \sqrt{\pi}}_P
\end{align*}
Since the ZX calculus allows us to label the spider by any function $\reals\to\complexs$, we can represent these states directly as:
\begin{gather*}
    \tikzfigscale{1}{zeroL} \qquad\qquad\qquad\qquad \tikzfigscale{1}{plusL}\\
    \tikzfigscale{1}{oneL} \qquad\qquad\qquad\qquad \tikzfigscale{1}{minusL}
\end{gather*}
where $0_L(x) = \sum_{k \in \integers} \delta(x - 2 k \sqrt{\pi})$ and $1_L(x) = \sum_{k \in \integers} \delta(x - (2 k + 1) \sqrt{\pi})$.
The above ideal GKP states are unphysical and must be approximated by finitely squeezed Gaussian states.
These physical GKP states approximate each Dirac delta with a Gaussian of width $\Delta$, the sum of these Gaussian itself weighted by an overall Gaussian envelope of width $\frac{1}{\Delta}$:
\begin{align}
    \ket{\tilde{0}_L} & = \int \sum_{k \in \integers} e^{-2 \pi \Delta^2 k^2} e^{-\frac{x^2}{2\Delta^2}} \ket{x + 2k\sqrt{\pi}}\, dx       \\
    \ket{\tilde{1}_L} & = \int \sum_{k \in \integers} e^{-2 \pi \Delta^2 k^2} e^{-\frac{x^2}{2\Delta^2}} \ket{x + (2k + 1)\sqrt{\pi}}\, dx
\end{align}
We can use the same trick as earlier and write these states directly as:
\begin{equation}
    \tikzfigscale{1}{gkp-approx-states-direct}
\end{equation}
However, this form hides more complexity than needed inside the function labels.
Instead, we can derive a more compelling description that allows us to view the approximate GKP states as ideal GKP states with added Gaussian noise:
\begin{proposition}
    \begin{equation}
        \tikzfigscale{1}{gkp-approx-states-noise}
    \end{equation}
\end{proposition}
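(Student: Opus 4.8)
The plan is to read the right-hand side of the claimed equation as the \emph{ideal} GKP comb acted on by two Gaussian operations — a position-space Gaussian \emph{envelope} and a momentum-space Gaussian \emph{blur} — and then verify by a direct computation of the interpretation that this composite is exactly the smeared comb defining $\ket{\tilde{0}_L}$ (and, after a $\sqrt{\pi}$ displacement, $\ket{\tilde{1}_L}$). Concretely, I would present the noise as a $Z$ spider carrying the Gaussian label $e^{-\Delta^2 x^2/2}$ together with an $X$ spider carrying the corresponding momentum-space Gaussian, both attached to the ideal-state $Z$ spider labelled by $0_L(x)=\sum_{k\in\integers}\delta(x-2k\sqrt{\pi})$.

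First I would fuse the Gaussian-labelled $Z$ spider onto the ideal comb using \textruleref{Fusion}: since composing $Z$ spiders multiplies their position-space functions pointwise, this reweights each delta at $x=2k\sqrt{\pi}$ by $e^{-\Delta^2(2k\sqrt{\pi})^2/2}=e^{-2\pi\Delta^2 k^2}$, reproducing exactly the envelope coefficients in the definition of $\ket{\tilde{0}_L}$. Next I would show that the momentum-space Gaussian $X$ spider implements convolution by a position-space Gaussian of width $\Delta$: by \textruleref{X-Spider} it multiplies the momentum wavefunction by a Gaussian, and under the Fourier convention fixed in Section~\ref{sec:zx-calculus} this is precisely convolution of the position wavefunction by $e^{-x^2/2\Delta^2}$ (up to a non-zero Gaussian normalisation scalar). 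Applying this to the reweighted comb smears each peak $\ket{2k\sqrt{\pi}}$ into $\int e^{-(y-2k\sqrt{\pi})^2/2\Delta^2}\ket{y}\,dy$, and collecting terms under a single integral gives exactly $\int\sum_{k}e^{-2\pi\Delta^2 k^2}e^{-(y-2k\sqrt{\pi})^2/2\Delta^2}\ket{y}\,dy$, which matches the definition of $\ket{\tilde{0}_L}$ after the substitution $y=x+2k\sqrt{\pi}$; the $\ket{\tilde{1}_L}$ case is identical with the comb shifted by $\sqrt{\pi}$.

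The main obstacle is the bookkeeping in momentum space: one must verify that the Fourier transform of the smearing Gaussian $e^{-x^2/2\Delta^2}$ is again a Gaussian of the correct reciprocal width, so that the single $X$-spider label faithfully reproduces the intended position-space convolution, and one must track the non-zero scalar prefactors generated by the Gaussian integrals. A second subtlety is that the envelope and the blur do not commute, so the derivation must apply the $Z$ envelope first — weighting the \emph{discrete} comb — and only afterwards the blurring $X$ spider; performing these in the opposite order would attach the envelope to the already-smeared continuous wavefunction and yield a different state. Once the Fourier width and the ordering are pinned down, everything else is routine Gaussian integration and the two interpretations agree.
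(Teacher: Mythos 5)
Your proposal is correct and follows essentially the same route as the paper's proof: fuse the Gaussian envelope onto the ideal comb so that the delta peaks acquire exactly the weights $e^{-2\pi\Delta^2 k^2}$, then recognise the momentum-space Gaussian $X$ spider as a position-space convolution smearing each peak into a Gaussian of width $\Delta$, recovering the defining integral of $\ket{\tilde{0}_L}$ (with $\ket{\tilde{1}_L}$ analogous). The only difference is presentational: the paper executes these steps as a chain of diagram rewrites (via \textruleref{Fusion} and the Fourier rules, i.e.\@ Fourier--multiply--inverse-Fourier in the spider labels), whereas you verify the same identity, including the ordering and scalar bookkeeping, directly at the level of interpretations.
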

\begin{proof}
    We only prove for $\ket{\tilde{0}_L}$; the proof for $\ket{\tilde{1}_L}$ is analogous.
    \begin{equation*}
        \tikzfigscale{1}{gkp-approx-states-noise-proof}
    \end{equation*}
\end{proof}
\begin{proposition}
    \begin{equation}
        \tikzfigscale{1}{gkp-approx-momentum-states-noise}
    \end{equation}
\end{proposition}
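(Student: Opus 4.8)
The plan is to treat this momentum-basis statement as the Fourier conjugate of the preceding proposition, which we have already established for the position-basis states $\ket{\tilde 0_L}$ and $\ket{\tilde 1_L}$. The approximate momentum GKP states are obtained from the position ones by the quantum Fourier transform, which in our calculus is the rotation $R(\pi/2)$, realised either as the Fock spider labelled $(-i)^{\hat n}$ or, through \textruleref{Euler}, as a Z--X--Z decomposition that turns each Z spider into its X counterpart. First I would postcompose the right-hand side of the preceding proposition with this Fourier box and push it through the diagram.

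Pushing the Fourier transform through has three effects, which I would handle in turn. First, the ideal GKP comb is carried to its momentum twin: since the GKP lattice is self-dual up to the logical Hadamard, the Fourier image of the ideal $\ket{0_L}$ is the ideal $\ket{+_L}$ (and $\ket{1_L}\mapsto\ket{-_L}$), so the comb labels $0_L$ and $1_L$ reappear on X spiders. Second, the per-peak Gaussian smearing of width $\Delta$, which acts as a convolution in position, becomes a \emph{multiplication} by a Gaussian in momentum, via the exchange of convolution and multiplication under the \textruleref{Fourier} and \textruleref{Mult} rules. Third, the overall Gaussian envelope of width $1/\Delta$, a multiplication in position, likewise becomes a convolution in momentum. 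The two Gaussians thus swap their roles, and I would check that their widths recombine so that the momentum diagram carries exactly the same ``comb, smeared and enveloped'' shape as its position counterpart. As in the preceding proposition I would prove only the $\ket{\tilde +_L}$ case, with $\ket{\tilde -_L}$ following after the half-period $\sqrt\pi$ shift.

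The main obstacle I anticipate is the parameter bookkeeping through the transform: ensuring that the normalisation, the $2\pi$ factors in the Gaussian exponents, and the inversion $\Delta\leftrightarrow 1/\Delta$ all combine to reproduce the stated widths rather than rescaled versions of them. A cleaner alternative, avoiding the transform altogether, is to mirror the position-basis derivation line for line with the two colours exchanged, invoking the X-coloured analogue of each rule used there --- for instance Lemma~\ref{lem:xFusion} in place of \textruleref{Fusion}. This works because every Z rule employed in the preceding proof has an X counterpart among the derived rules, so I would adopt whichever route produces the shorter diagrammatic chain.
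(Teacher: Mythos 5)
Your proposal is correct, and your ``cleaner alternative'' is essentially what the paper does: its proof is a single direct diagrammatic derivation for $\ket{\tilde{+}_L}$ mirroring, colour for colour, the derivation already given for $\ket{\tilde{0}_L}$, with the remark that $\ket{\tilde{-}_L}$ follows similarly. Your primary route --- postcomposing the position-basis proposition with the Fourier box and pushing it through --- is a genuinely different framing and is also viable, but it is cleaner than you fear. Since the X spider is \emph{defined} as the Z spider conjugated by the Fock spiders labelled $(-i)^{\hat n}$ and $i^{\hat n}$, pushing the Fourier box through a Z spider swaps its colour while \emph{preserving its label}; hence the comb label, the width-$\Delta$ peak Gaussian, and the width-$1/\Delta$ envelope all reappear verbatim on spiders of the opposite colour, and the $\Delta\leftrightarrow 1/\Delta$ inversion and $2\pi$ bookkeeping you anticipate never materialise as actual Gaussian-integral computations --- they would only arise if you instead re-expressed each label through the function-level Fourier transform via the \textruleref{Fourier} rule. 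The one point your conjugation route must still verify is that the left-hand side of the momentum proposition (the diagram defining $\ket{\tilde{+}_L}$) is precisely the colour-swap of the diagram defining $\ket{\tilde{0}_L}$, i.e.\ that the Fourier box applied to $\ket{\tilde{0}_L}$ lands exactly on the state the proposition is about; this holds here only because of the paper's symmetric choice of peak width $\Delta$ and envelope width $1/\Delta$, and an asymmetric definition would break it. In exchange, your route makes the logical-Hadamard covariance of the approximate code states explicit and reuses the previous proposition, whereas the paper's colour-swapped derivation is self-contained.
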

\begin{proof}
    We prove for $\ket{\tilde{+}_L}$; the proof for $\ket{\tilde{-}_L}$ follows similarly.
    \begin{equation*}
        \tikzfigscale{1}{gkp-approx-momentum-states-noise-proof}
    \end{equation*}
\end{proof}


\subsubsection{GKP Encoder}
The map which transforms states in logical space to states in physical space is called the encoder.
Now we will build the encoder for the GKP code.
First note that the $\ket{0_L}$ and $\ket{1_L}$ are separated by $\sqrt{\pi}$ displacement in the position basis.
\begin{equation}
    \tikzfigscale{1}{gkp-state-shift}
\end{equation}
This allows us to define an encoder map which sends $\ket{0}_X$ to $\ket{0_L}$ and $\ket{1}_X$ to $\ket{1_L}$.
\begin{equation}
    \tikzfigscale{1}{gkp-encoder}
\end{equation}
If we use the notation from the mixed-dimensional ZX-calculus introduced in~\cite{wangCompletenessQufiniteZXW2024,poorZXcalculusCompleteFiniteDimensional2024}, we can write the encoder map for the GKP code exactly as:
\begin{equation}
    \tikzfigscale{1}{gkp-encoder-mixed}
\end{equation}
Here, the mixed-dimensional Z spider is simply an embedding of the standard basis; i.e.\@ it sends qubit $\ket{0}$ to $\ket{0}_X$ and $\ket{1}$ to $\ket{1}_X$.
Similarly, for the approximate GKP states, we have the following mixed-dimensional encoder:
\begin{equation}
    \tikzfigscale{1}{gkp-encoder-approx-mixed}
\end{equation}

\subsection{Syndrome measurements and correction}
Errors for continuous states are themselves continuous, posing an additional complication for CVQC error correction when compared to discrete error correction.
Fortunately, in the GKP encoding sufficiently small continuous errors can be corrected, by projecting displacement errors in position and momentum space back to the discrete setting where GKP states live.
A round of quantum error correction in the GKP code consists of two syndrome measurements, followed by a correction conditioned on these measurement outcomes. The circuit whose measurement outcome is a squeezed state in the position basis $\bra{s}_X$ is shown in action below:
\begin{equation}
    \tikzfigscale{1}{syndrome-meas}
\end{equation}
In other words, the action of this syndrome measurement is to apply the position basis projection $P = \mathcal{F}(0_L) = \sum_{k \in \mathbb{Z}} \ket{k \sqrt{\pi}}_X\bra{k \sqrt{\pi}}_X$, conjugated by a position displacement operator $D(s)$. This projection is precisely that which annihilates in the position basis all single-mode CVQC states outside the the code space, while preserving all information in the code space.
To end the round, we simply have to apply a correction of $-(s \mod \sqrt{\pi})$, conditioned on the classical measurement outcome $s$.

Consider inputting an arbitrary error-free single-mode GKP-encoded state $\tikzfigscale{1}{gkp-state}$, for some $\alpha,\beta \in \mathbb{C}$ such that $\abs{\alpha}^2 + \abs{\beta}^2 = 1$.
By the above graphical reasoning, the only physically observable measurement outcomes must be of the form $s \in \sqrt{\pi}\mathbb{Z}$: otherwise, it would have been annihilated by $P$ and hence had zero probability of being observed.

Then, consider an arbitrary GKP state with an arbitrary position displacement error of $\epsilon$.
Inputting this erroneous state $\tikzfigscale{1}{gkp-state-x-disp}$ to our above syndrome measurement, we find that
\begin{equation}
    \tikzfigscale{1}{gkp-x-disp-syndrome}
\end{equation}
As both the error-free GKP state and the projector $P$ are zeroed at all positions which are not integer multiples of $\sqrt{\pi}$, the displacement $D(\epsilon-s)$ between them must be an integer multiple of $\sqrt{\pi}$ for the measurement outcome $s$ to have a nonzero probability of being observed.
Therefore, for all $\epsilon$ in $\left(-\frac{\sqrt{\pi}}{2}, \frac{\sqrt{\pi}}{2}\right)$, this syndrome measurement circuit detects the error by measuring $s = \epsilon$.
The error must be small enough for this work: if $\epsilon$ falls outside the correctable range, then instead of shifting to correct the error, the correctional shift of $-(s \mod \sqrt{\pi})$ induces a logical bit flip error, by collapsing the shifted version of the $\ket{0_L}$ comb onto the $\ket{1_L}$ comb, and vice versa.

Finally, consider a GKP state with arbitrary displacement error, which can always be described as $\tikzfigscale{1}{gkp-state-p-disp}$.
To correct this, two syndrome measurements are needed: one each in position and momentum bases.
These along with their corrections comprise a round of quantum error correction in the GKP code.
Contracting the circuit, we arrive at a fully graphical specification of the error correction procedure:
\begin{equation*}
    \tikzfigscale{1}{gkp-round}
\end{equation*}

\subsection{Magic State Distillation of the Hadamard eigenstates}
In this section, we look at a protocol for distilling the Hadamard eigenstates, which can be then used to implement the qubit T gate in the GKP code~\cite{GottesmanKitaevPreskill2001}.
The Hadamard eigenstates can be distilled by the following procedure, where one of two entangled qumodes is destructively measured in the Fock basis, observing a measurement outcome $n \in \mathbb{N}$:
\begin{equation} \label{eq:msdh}
    \tikzfigscale{1}{msdh}
\end{equation}
The qubit Hadamard gate in the GKP code is physically implementable by the Fourier gate.
By noting that $\tikzfigscale{1}{gkp-projector}$ is the projector annihilating a state supported only in the GKP codespace, we can show that the state obtained by measuring the above circuit is indeed the Hadamard eigenstate whenever $n$ is even.
\begin{proposition}
    The state obtained in \eqref{eq:msdh} is a GKP eigenstate of the Hadamard gate when the measurement outcome $n$ is even.
\end{proposition}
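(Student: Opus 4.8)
The plan is to contract the circuit of \eqref{eq:msdh} into an explicit state on the surviving output mode, to recognise that state as the GKP-code-space projection of a Hermite function, and then to use that the logical Hadamard is the Fourier gate, which is diagonalised by the Fock basis.

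First I would contract the entangling part of the diagram together with the destructive Fock measurement $\bra{n}$ on the discarded mode. Reading off the \textruleref{Hermite} rule, the Fock basis state $\ket{n}$ is the $n$-th Hermite function $\psi_n$ in the position picture; since $\psi_n$ is real-valued, the cup connecting the two modes introduces no conjugation, and measuring $\bra{n}$ therefore projects the output mode onto $\psi_n$. The remaining GKP spider supplies the code-space projector $P = \mathcal{F}(0_L)$, so up to a global scalar the heralded output state is $P\ket{\psi_n}$.

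Next I would invoke two facts. By the notation of Section~\ref{sec:notations}, the logical Hadamard is the Fourier gate $\mathcal{F}$, i.e. the Fock spider labelled $(-i)^{\hat{n}}$; since the Hermite functions are exactly the eigenfunctions of the Fourier transform, $\mathcal{F}\ket{\psi_n} = (-i)^n\ket{\psi_n}$. Moreover the GKP code space is Fourier-invariant, as $\mathcal{F}$ exchanges $\ket{0_L}\leftrightarrow\ket{+_L}$ and $\ket{1_L}\leftrightarrow\ket{-_L}$, so $\mathcal{F}$ commutes with $P$. Combining these gives $\mathcal{F}\,P\ket{\psi_n} = P\,\mathcal{F}\ket{\psi_n} = (-i)^n\,P\ket{\psi_n}$, which exhibits the output as an eigenstate of the logical Hadamard with eigenvalue $(-i)^n$. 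Diagrammatically this is just the step in which the Fourier Fock spider is absorbed into the measurement, in the spirit of the \textruleref{Fourier} rule.

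Finally, for even $n$ we have $(-i)^n = (-1)^{n/2}\in\{+1,-1\}$, which are precisely the two eigenvalues of the Hadamard gate, so $P\ket{\psi_n}$ is a genuine GKP Hadamard eigenstate. This also explains the parity hypothesis: for odd $n$ the putative eigenvalue would be $\pm i$, but the logical Hadamard on the two-dimensional code space has only $\pm 1$ eigenvalues, forcing $P\ket{\psi_n}=0$ and hence zero heralding probability. I expect the main obstacle to be the first step: contracting the entangling diagram and verifying through the \textruleref{Hermite} rule that the Fock measurement yields exactly $P\ket{\psi_n}$ rather than a Fourier-conjugated or otherwise twisted state, together with checking that this projection is nonzero so that the resulting eigenstate is genuinely a magic state.
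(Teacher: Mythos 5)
Your proof is correct and follows essentially the same route as the paper's: both apply the Fourier gate (which implements the logical Hadamard) to the heralded state, commute it past the GKP code-space projector --- justified, as you do, by Fourier-invariance of the code space, which the paper phrases as ``the projectors commute necessarily for the code to be well-defined'' --- and extract the phase $(\pm i)^n$ attached to the Fock outcome $n$, which is $\pm 1$ exactly when $n$ is even. The only difference is presentational: the paper carries this out by diagrammatic rewriting (pushing the $(-i)^{\hat{n}}$ Fock spider through the diagram and fusing it with the $\delta_n$ effect), whereas you contract to the explicit state $P\ket{\psi_n}$ and invoke the Hermite-eigenfunction property of $\mathcal{F}$, which is the same fact in semantic form; your extra observation that odd $n$ forces $P\ket{\psi_n}=0$, hence zero heralding probability, is a correct bonus not spelled out in the paper.
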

\begin{proof}
    Since the GKP Hadamard gate is implemented by the Fourier gate, we can plug \eqref{eq:msdh} into the Fourier gate and simplify to show that it is an eigenstate:
    \begin{equation*}
        \tikzfigscale{1}{hadamard-eigenstate-proof}
    \end{equation*}
    where in the last step we have used the fact that the projectors commute necessarily for the code to be well-defined.
    When $n=0 \mod 4$, we obtain $i^n = 1$ and it corresponds to the $+1$ Hadamard eigenstate.
    Whereas $n=2 \mod 4$ gives $i^n = -1$, the $-1$ Hadamard eigenstate.
\end{proof}

\section{A simple proof of Gaussian boson sampling}
\label{sec:gbs}
\emph{Boson sampling} is a non-universal model of quantum computation formulated by Aaronson and Arkhipov~\cite{aaronson2011bosonsampling}, but one strongly believed to not be efficiently simulateable by classical computers.
In boson sampling, identical bosons are input to a linear interferometer, which performs a unitary transformation from the input to the output modes.  For any $N$-by-$N$ unitary matrix, this is physically implementable as a linear optical circuit, consisting of beam splitters and rotation gates~\cite{reck1994LOunitary,clements2017interferometer}. The number of photons is then measured at the outputs, thereby sampling from the probability distribution associated to that interferometer.

In \emph{Gaussian boson sampling}~\cite{Hamilton2017GBS,Kruse2019gbs2}, the inputs to the interferometer are instead squeezed vacuum states; in practice, these are preferred over number states because they are easier to prepare experimentally.
Like boson sampling, Gaussian boson sampling is a classically hard-to-simulate process which can be performed efficiently on continuous-variable quantum computers.
Even though Gaussian states and processes are themselves efficiently classically simulatable, the number basis measurements used to sample the final distribution are non-Gaussian, and make Gaussian boson sampling a \#P-hard problem.
In addition to practical applications---such as to finding dense subgraphs~\cite{arrazola2018gbsdensesubgraphs}, molecular spectroscopy~\cite{huh2017gbsvibmolspec} and pharmaceuticals~\cite{banchi2020gbsmoleculardocking}---experimental realizations of Gaussian boson sampling are of direct interest as candidates for demonstrating quantum advantage~\cite{madsen2022qadvgbs,zhong2020gbsxpmt}.


\subsection{Matchings of Graphs}
The \emph{perfect matching} problem in graph theory asks, given a graph, to find a set of edges which covers each vertex exactly once.
For any graph $G(V,E)$, having a perfect matching necessarily implies that $\abs{V}$ is even, that the number of edges in the perfect matching is $\frac{\abs{V}}{2}$, and that the perfect matching is the smallest edge cover.
The decision problem (whether a perfect matching exists) is solvable in polynomial time, but the corresponding counting problem (the number of perfect matchings) is \#P-complete.

For bipartite graphs, the number of perfect matchings equals the permanent of the graph's biadjacency matrix.
The permanent of an $n$ by $n$ matrix $B = (b_{i,j})$ is defined as:
\begin{equation}
    \textsf{perm}(B) = \sum_{\sigma \in S_n} \prod_{i=1}^n b_{i, \sigma(i)}
\end{equation}
where $S_n$ is the symmetric group over all permutations of $1,2,...,n$.
For arbitrary graphs, the number of perfect matchings instead equals the hafnian of the graph's adjacency matrix, a generalisation of the permanent.
The hafnian of any odd sized symmetric matrix is defined to be zero, while the hafnian of a $2n$ by $2n$ symmetric matrix $A = (a_{i,j})$ is defined as follows:
\begin{equation}
    \textsf{haf}(A) = \sum_{\rho \in P_{2n}^2} \prod_{\{i,j\}\in \rho} a_{i, j}
\end{equation}
where $P_{2n}^2$ is the set of all partitions of $\{1,2,...,2n\}$ into $n$ subsets of size 2.
A few example graphs and all of their perfect matchings is shown in Figure~\ref{fig:perfect-matching-exs}.
\begin{figure}[H]
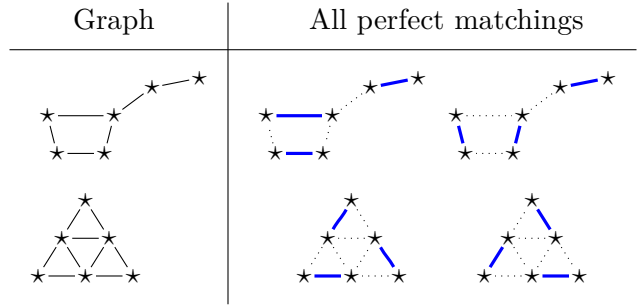

    \ctikzfig{perfect-matching-exs}
    \caption{Examples of graph perfect matchings. These are planar graphs, a special class of graphs for which counting the number of perfect matchings is solvable in polynomial time. This dipper graph (upper) is bipartite, unlike this triforce graph (lower).}\label{fig:perfect-matching-exs}
\end{figure}
Perfect matchings of a graph arise in diagrams where vertices of the graph correspond to W nodes, with number states as inputs and with outputs connected according to the graph's edges.
From the following equation:
\begin{equation}\label{eq:w-split}
    \tikzfigscale{1}{w-split}
\end{equation}
it follows that inputting the $\ket{1}$ number state to a W node, equals the sum over number states $\ket{10...0}$, $\ket{010..0}$, ... $\ket{0...01}$, as shown in the following example.
\begin{example}\label{ex:perfect-matching}
    As a toy example, we can split the below graph into a sum of 8 diagrams.
    \begin{equation}
        \tikzfigscale{1}{perfect-matching-split}
    \end{equation}
    Only 2 of the 8 terms are nonzero, so the above equals
    \begin{equation}
        \tikzfigscale{1}{perfect-matching-split2} \quad =\quad 1 + 1 \quad = \quad 2
    \end{equation}
    Thus the diagram exactly computes the number of perfect matchings of the graph.
    \begin{equation}
        \tikzfigscale{1}{perfect-matching-bialg}
    \end{equation}
    The above technique to compute the number of perfect matchings is applicable to any simple unweighted graphs.
    This is generalizable to weighted graphs by the same reasoning:
    \begin{equation}
        \tikzfigscale{1}{perfect-matching-weighted} \quad = \quad u_{\scriptscriptstyle \!1,1\!} u_{\scriptscriptstyle \!2,2\!} + u_{\scriptscriptstyle \!1,2\!} u_{\scriptscriptstyle \!2,1\!}
    \end{equation}
\end{example}
Generalizing this to arbitrary unweighted graphs, we can compute the number of perfect matchings as a special case of:
\begin{proposition} \label{prop:perfect-matching}
    For a weighted adjacency matrix $A$ of a graph with $s$ vertices,
    \begin{equation}
        \tikzfigscale{1}{perfect-matching-general} \quad = \quad \textsf{haf}(A)
    \end{equation}
\end{proposition}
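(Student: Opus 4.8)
The plan is to prove the statement by directly expanding the diagram's interpretation and matching the resulting sum term-by-term with the defining sum of the hafnian, generalizing the computation already carried out in Example~\ref{ex:perfect-matching}. The diagram is the complete graph $K_s$ on the $s$ vertices, each vertex being a W node carrying one input leg fed the single-photon state $\ket{1}$ and $s-1$ output legs weighted by the entries $a_{i,j}$ and connected pairwise by edges. First I would apply the W-node splitting equation~\eqref{eq:w-split} to each of the $s$ nodes in turn. Since every node is fed $\ket{1}$, the interpretation of the $1\to(s-1)$ W node collapses to $\sum_{e}\ket{0\cdots 1_e \cdots 0}$, a sum over incident edges $e$ with coefficient $1$ each: the factorial weight $\sqrt{(\sum_i n_i)!/\prod_i n_i!}$ equals $1$ whenever exactly one leg carries a photon. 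Expanding all $s$ nodes thus writes the diagram as a sum indexed by choice functions $c$ assigning to each vertex $i$ the single incident edge $c(i)$ along which its photon is emitted.

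Next I would contract along the edges. Each edge is a Z-cup, whose interpretation in the Fock basis is $\sum_n \ket{n}\ket{n}$ (the Hermite functions being real), so contracting the two endpoint states of edge $\{i,j\}$ yields the inner product $\langle m \mid n\rangle = \delta_{m,n}$ of the Fock numbers arriving from its two ends. Because each vertex emits exactly one photon, every edge carries $\ket{0}$ or $\ket{1}$ at each end, and the Kronecker delta annihilates every term except those in which both endpoints of an edge agree. Combined with photon conservation at each node, this forces the set of edges carrying $\ket{1}$ to be a $1$-regular spanning subgraph, i.e.\ a perfect matching of the graph; conversely each perfect matching arises from exactly one choice function $c$. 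The surviving terms are therefore in bijection with the perfect matchings.

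It then remains to track the scalar contributed by each surviving term. Each matched edge $\{i,j\}$ carries the weight $a_{i,j}$ exactly once, while every unmatched edge carries $\ket{0}$ and contributes a factor of $1$, and all local W-node coefficients are $1$ as noted above, so a perfect matching $M$ contributes $\prod_{\{i,j\}\in M} a_{i,j}$. Summing over all matchings gives $\sum_{M}\prod_{\{i,j\}\in M} a_{i,j}$, which is precisely $\textsf{haf}(A)$; when $s$ is odd there are no perfect matchings and both sides vanish, in agreement with the convention that the hafnian of an odd-sized matrix is $0$.

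The main obstacle is not conceptual but bookkeeping: one must set up the bijection between nonzero terms and perfect matchings cleanly, and verify that no stray combinatorial factors survive — in particular that the factorial normalizations of the W nodes all trivialize in the single-photon-per-vertex regime, and that each matched edge's weight is counted once rather than once per endpoint. A fully diagrammatic alternative, generalizing the bialgebra manipulation of Example~\ref{ex:perfect-matching}, is also available, but the semantic expansion above is the most transparent route to the hafnian.
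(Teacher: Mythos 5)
Your proof is correct and is essentially the argument the paper intends: the paper states Proposition~\ref{prop:perfect-matching} without a separate proof, presenting it as the direct generalization of Example~\ref{ex:perfect-matching}, whose mechanism --- splitting each W node fed $\ket{1}$ via Equation~\eqref{eq:w-split}, observing that the Kronecker deltas on the edges kill every term except those whose emitted photons form a perfect matching, and reading off the weight $\prod_{\{i,j\}\in M} a_{i,j}$ per matching --- is exactly what you have carried out in general. Your additional bookkeeping (the trivialization of the W-node factorials in the one-photon-per-vertex regime, the Fock-basis form of the cup, the single counting of each matched edge's weight, and the vanishing of both sides for odd $s$) just fills in the details the paper leaves implicit.
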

In the remainder of this section, we further generalize the above to graphically prove that hafnians are computed by Gaussian boson sampling.

\subsection{Normal form for interferometers}
Interferometers admit an efficient algorithm to a graphical normal form~\cite{bonchi2014cattheosignalflowgraphs,defeliceQuantumLinearOptics2022}:
\begin{proof}[Algorithm for interferometer normal form]
    \phantom{\,}\\[-1em]
    \begin{enumerate}
        \item For any internal (connected to no inputs or outputs) W node, rewrite per Figure~\ref{fig:boson-sampling-steps}.
        \item Repeat Step 1\@ until all left-pointing W nodes are connected to inputs and all right-pointing W nodes are connected to outputs.
        \item Repeat Steps 1 and 2 until no internal W nodes remain.
        \item Fuse all sequential Fock spiders which can be fused.
        \item Fuse all parallel Fock spiders which can be fused, by Corollary~\ref{cor:f-plus-gen}.
        \item For inputs or outputs not connected to a W node, a 1-input 1-output W node (which equals the identity) can be inserted. Likewise, for internal wires not connected to any Fock spider, a 1-input 1-output $1^{\hat{n}}$ identity Fock spider can be inserted.
    \end{enumerate}
    \begin{figure}[H]
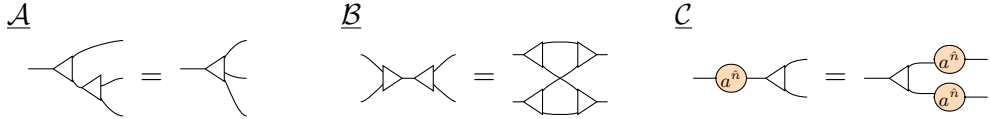

        \ctikzfig{boson-sampling-steps}
        \caption{Consider the W node left of the equals sign. No matter which node it points to, the corresponding rewrite always reduces its depth (number of nodes between it and the closest input) by 1. Transpose these rules to rewrite right-pointing W nodes closer to outputs. Note that, although rewrite $\mathcal{B}$ creates new W nodes, these are always of strictly lower depth than the two internal W nodes being rewritten.}\label{fig:boson-sampling-steps}
    \end{figure}
\end{proof}

\noindent For an interferometer on $n$ modes, the resulting normal form has three layers of nodes. The left layer is $n$ left-pointing W nodes, and the right layer is $n$ right-pointing W nodes. Between them is a bipartite graph, with a 1-input 1-output Fock spider on each edge.
This algorithm terminates in time polynomial in the number of nodes of the initial diagram, which is linear in the number of beamsplitters and rotation gates. This can be deduced by considering all internal W nodes at a given time, and applying the appropriate rewrite in Figure~\ref{fig:boson-sampling-steps}. This always reduces the maximum depth of all internal W nodes by 1.

\subsection{Normal form for Gaussian boson sampling}
In the previous section, we reduced the interferometer to a normal form: a bipartite graph of W nodes, where the edge between $i$-th input and $j$-th output is labelled by a Fock spider $(u_{ij})^{\hat{n}}$, where $u_{ij}$ are elements of the unitary matrix $U$ identified with the interferometer.
The input state is the squeezed vacuum state, which we can represent by starting from a vacuum state and applying a multiplier labelled by the squeezing parameter $r$.
Putting this together, we get the following diagram for the Gaussian boson sampling circuit, where squeezed vacuum states are input into an interferometer in normal form.
\begin{equation}\label{eq:gbs-circuit}
    \tikzfigscale{1}{gbs-circuit}
\end{equation}
To rewrite this further, we first derive few lemmas below. Lemmas in this section are proved in Appendix~\ref{sec:gbs-appendix}.
Throughout this section, for some weighted adjacency matrix $A$ of size $s$, we write the following to mean all-to-all connectivity of the W-nodes, where an edge between $i$-th and $j$-th W node has a Fock spider labelled by $a_{ij}^{\hat{n}}$:
\begin{equation}
    \tikzfigscale{1}{complete-graph-notation}
\end{equation}
In the case where the matrix element $u_{ij}$ is zero, this has the same effect as that edge being absent:
\begin{equation}
    \tikzfigscale{1}{no-edge}
\end{equation}
This works out for a curious reason: $0^{n} = 0$ for all $n \in \mathbb{N}$, with the exception of $0^{n} = 1$ when $n = 0$, so that  $0^{\hat{n}} = \delta_0$.

\begin{restatable}{lemma}{Knloops}\label{lem:complete-graph-with-loops}
    \begin{equation}
        \tikzfigscale{1}{complete-graph-with-loops}
    \end{equation}
    where $J_s$ is an $s\times s$ matrix of all ones.
\end{restatable}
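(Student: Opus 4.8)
The plan is to prove the identity purely diagrammatically, resolving each self-loop of the complete graph into the underlying W-node and Fock-spider data and showing that the cumulative effect of opening all $s$ loops is exactly the entrywise shift $A \mapsto A + J_s$, so that the looped complete graph collapses to the standard complete-graph notation for $A + J_s$.

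First I would treat a single self-loop in isolation. A self-loop is an edge of the all-to-all diagram that begins and ends on the same W-node through a Fock spider; I would rewrite it using the loop lemmas already derived, namely Lemma~\ref{lem:wLoopGreen} together with Corollary~\ref{cor:f-plus-loop}. The behaviour I expect to extract is that closing a leg of a W-node back onto itself does not simply delete it: because the W-node interpretation carries the multinomial factor $\sqrt{(\sum_i n_i)!/\prod_i n_i!}$, the loop redistributes a weight-one contribution across the node's remaining legs rather than collapsing to a scalar.

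I would then assemble the global argument by induction on $s$, in the same spirit as the proof of Proposition~\ref{prop:triforceKn}. The base case (small $s$) is a direct diagrammatic computation. For the inductive step I would peel off one vertex together with its loop, commute it past the rest of the all-to-all structure using \textruleref{Bialgebra} and \textruleref{Fusion}, apply the inductive hypothesis to the $(s-1)$-vertex looped subgraph, and reabsorb the peeled vertex; throughout, flexsymmetry of the W and Fock nodes (``only connectivity matters'') keeps the intermediate diagrams canonical. The weight-one contributions liberated by the loops are then merged into the surviving edges using Corollary~\ref{cor:f-plus-gen}, which combines two parallel Fock-spider edges of weights $a$ and $b$ into one of weight $a+b$; adding a weight-one edge between every pair of vertices is exactly $a_{ij}\mapsto a_{ij}+1$, i.e.\ the correction $J_s$.

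The main obstacle will be the combinatorial bookkeeping in this loop-resolution step: verifying that the contributions from the $s$ self-loops assemble into precisely the rank-one all-ones matrix $J_s$, with no leftover diagonal terms and no spurious global scalar. Getting the W-node self-contraction right is the delicate point, since wiring a W-node to itself mixes its creation/annihilation structure nontrivially; I expect to need the explicit multinomial interpretation of the multi-legged W-node, and not merely its abstract fusion rule, in order to pin down the unit weight and confirm that the off-diagonal and diagonal pieces combine into the full $J_s$.
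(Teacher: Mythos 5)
There is a genuine gap, and it starts with the reading of the statement itself. You take the left-hand side to be the complete graph on $s$ W-nodes labelled by $A$ with an additional self-loop on each vertex, and the content of the lemma to be the entrywise shift $A \mapsto A + J_s$. That claim is false under the paper's semantics, and no amount of combinatorial bookkeeping will rescue it: a self-loop attached to vertex $j$ is wired only into vertex $j$'s own W node, so it factorizes from the rest of the diagram and merely injects photon pairs into mode $j$, contributing a factor of the form $\exp\bigl(c\,(\hat a_j^\dagger)^2\bigr)$. Per-vertex loops therefore shift only the \emph{diagonal} of the adjacency matrix; they can never create the cross-mode pair terms $\hat a_j^\dagger \hat a_k^\dagger$ with $j \neq k$ that the off-diagonal ones of $J_s$ encode. (Your worry about ``leftover diagonal terms'' is also backwards: $J_s$ has unit diagonal, so the target diagram must itself carry self-loops.)

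What the lemma actually asserts --- consistent with its use alongside Lemma~\ref{lem:complete-graph-outer-product} and Lemma~\ref{lem:matrix-addition} in the proof of Theorem~\ref{thm:gbs-rewrite-normal-form} --- is that a \emph{single} W node carrying \emph{one} self-loop, fanned out by plain wires to $s$ W nodes (exactly the shape a squeezed vacuum takes after Lemma~\ref{lem:squeezedVacuum} and W-\textruleref{Fusion} into the interferometer), equals the complete graph with loops labelled by $J_s$. The all-pairs structure has a genuinely nonlocal origin: the two endpoints of that one loop are each distributed over all $s$ legs by W-unfusion and \textruleref{Bialgebra}, producing a pair term for every ordered pair of vertices --- schematically $\bigl(\sum_j \hat a_j^\dagger\bigr)^2 = \sum_{j,k} \hat a_j^\dagger \hat a_k^\dagger$ --- after which parallel edges between distinct vertices merge by Fock-spider addition (Proposition~\ref{prop:FWPlus}, Corollary~\ref{cor:f-plus-gen}) and the $j=k$ terms become the self-loops of the right-hand side. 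The paper's proof is this direct diagrammatic computation; the weighted statement you are implicitly reconstructing is Lemma~\ref{lem:complete-graph-outer-product}, and the merging of several such graphs into a sum of matrices is Lemma~\ref{lem:matrix-addition}. An induction that peels off one vertex together with its own loop, as you propose, never confronts the only nontrivial step, namely pushing a single loop through a W-split.
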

\begin{restatable}{lemma}{Knouter}\label{lem:complete-graph-outer-product}
    For some $a$ and $B = [b_1, \dots, b_s]^T$, we have
    \begin{equation}
        \tikzfigscale{1}{complete-graph-outer-product}
    \end{equation}
    where $J_s$ is an $s\times s$ matrix of all ones.
\end{restatable}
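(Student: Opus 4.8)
The plan is to exploit the rank-one structure of the outer product $BB^T$, whose entries are $a_{ij}=a\,b_i b_j$, so that every edge of the complete graph on the left-hand side carries a Fock spider labelled $(a\,b_i b_j)^{\hat{n}}$. The key observation is that these labels factor: since $(a\,b_i b_j)^n = a^n\,b_i^n\,b_j^n$, the \textruleref{Fusion} rule for Fock spiders lets me split the single spider on the edge $\{i,j\}$ into three Fock spiders in series, $b_i^{\hat{n}}$, $a^{\hat{n}}$, and $b_j^{\hat{n}}$, with $b_i^{\hat{n}}$ sitting adjacent to the W node at vertex $i$ and $b_j^{\hat{n}}$ adjacent to the W node at vertex $j$.

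Next I would collect, at each vertex $i$, the $b_i^{\hat{n}}$ factors produced on all of its incident edges. Because vertex $i$ is connected to every other vertex, each of its internal legs now carries one copy of $b_i^{\hat{n}}$; a self-loop at $i$ (coming from the diagonal entry $a\,b_i^2$) contributes one such factor to each of the two legs it occupies. Since a Fock spider of the form $k^{\hat{n}}$ multiplies an input by $k^{n_\ell}$ while the distinguished leg of a W node carries $\sum_\ell n_\ell$, placing $b_i^{\hat{n}}$ on every leg but one contributes the common factor $b_i^{\sum_\ell n_\ell}$, which is exactly the effect of a single $b_i^{\hat{n}}$ on the remaining (external) leg. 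This is the content of the push rule, Lemma~\ref{lem:push}, together with the flexsymmetry of the W node; applying it lets me pull all the $b_i^{\hat{n}}$ factors at vertex $i$ onto its external leg, leaving precisely one $b_i^{\hat{n}}$ there.

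After this pushing, every edge $\{i,j\}$ retains only its central $a^{\hat{n}}$ factor, so the internal diagram is the all-to-all graph whose adjacency matrix has every entry equal to $a$, i.e.\@ the complete graph associated to $a J_s$; if a loop convention needs reconciling I would invoke Lemma~\ref{lem:complete-graph-with-loops}. Composing this with the $b_i^{\hat{n}}$ spiders now decorating the external legs yields the claimed right-hand side.

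I expect the main obstacle to be the bookkeeping around self-loops: I must check that the push rule applies uniformly when a single vertex supplies two of the split $b_i^{\hat{n}}$ factors (one on each leg of its loop), and that the W node combinatorial weight $\sqrt{(\sum_\ell n_\ell)!/\prod_\ell n_\ell!}$ is genuinely untouched by the $k^{\hat{n}}$ spiders --- which holds because $k^{\hat{n}}$ only ever contributes the partition-independent scalar $k^{\sum_\ell n_\ell}$. Verifying this compatibility, rather than any individual rewrite, is the delicate part.
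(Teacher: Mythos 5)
Your proposal is correct and follows essentially the same route as the paper's proof: factor each edge label via $(a\,b_ib_j)^{\hat n} = b_i^{\hat n}\,a^{\hat n}\,b_j^{\hat n}$ using \textruleref{Fusion}, then move the $b_i^{\hat n}$ factors between the internal legs and the external leg of each W node using the push rule (Lemma~\ref{lem:push}, the Fock--W \textruleref{Bialgebra} special case), with self-loops correctly absorbing two factors to produce the diagonal entries $a\,b_i^2$. The only difference is that you rewrite from the $a\,BB^T$ side outward while the paper pushes the external $b_i^{\hat n}$ spiders inward, which is immaterial since every step is an equality.
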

\begin{restatable}{lemma}{matrixadd}\label{lem:matrix-addition}
    \begin{equation}
        \tikzfigscale{1}{matrix-addition}
    \end{equation}
\end{restatable}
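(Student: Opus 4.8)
The plan is to unfold the complete-graph shorthand on both sides of the claimed equality and reduce it to a collection of independent, per-edge statements, each of which is an instance of the Fock-spider addition rule already available in Corollary~\ref{cor:f-plus-gen}. Writing out the left-hand side, the two complete graphs with adjacency matrices $A$ and $B$ share the same $s$ vertices; at every vertex $i$ the W-nodes coming from the $A$-copy and the $B$-copy are connected, and I first merge them into a single W-node using \textruleref{Fusion}. After this merge, each vertex $i$ carries one W-node whose legs run, for every neighbour $j$, to $j$ along two parallel Fock-labelled edges, one carrying $a_{ij}^{\hat n}$ and one carrying $b_{ij}^{\hat n}$. The goal is then to collapse each such parallel pair into a single edge labelled $(a_{ij}+b_{ij})^{\hat n}$, which is exactly the complete-graph notation for $A+B$.

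To collapse one edge $\{i,j\}$, I use the associativity and flexsymmetry of the W-node (the \textruleref{W-Node} generators together with the flexsymmetry of the W-node) to pull out, at vertex $i$, a local two-legged W-split whose two legs are precisely the edges heading to $j$, and likewise to pull out at vertex $j$ a local two-legged W-merge receiving those two edges. The resulting local fragment is a W-split followed by the Fock spiders $a_{ij}^{\hat n}$ and $b_{ij}^{\hat n}$ followed by a W-merge, which is exactly the configuration of Proposition~\ref{prop:FWPlus} and its generalisation Corollary~\ref{cor:f-plus-gen}; applying it replaces the pair by the single spider $(a_{ij}+b_{ij})^{\hat n}$. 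At the level of interpretations this step is nothing but the binomial identity $\sum_{n+m=N}\binom{N}{n}a_{ij}^n b_{ij}^m=(a_{ij}+b_{ij})^N$, which is encoded by the multinomial coefficients $\sqrt{(\sum_i n_i)!/\prod_i n_i!}$ of the W-node. Performing this reduction for every edge turns the doubled graph into a single all-to-all graph with edge labels $(a_{ij}+b_{ij})^{\hat n}$, completing the proof.

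The main obstacle is the bookkeeping of the W-node regrouping: each vertex participates in $s-1$ edges in each of the two graphs, so after fusing it carries a W-node of high arity, and I must use W-associativity and flexsymmetry to reorganise its legs into $s-1$ neighbour-indexed pairs before the per-edge rule applies. The delicate point is to perform this regrouping consistently at both endpoints of every edge, so that the $A$-leg towards $j$ is always paired with the $B$-leg towards $j$ and never with a leg towards a different neighbour; once the legs are correctly paired at both ends, the rest is a uniform application of Corollary~\ref{cor:f-plus-gen}, supplemented by Corollary~\ref{cor:f-plus-loop} for any self-loops. This is the same regrouping strategy already used to establish Lemma~\ref{lem:complete-graph-outer-product}, so no genuinely new technique is required beyond careful indexing.
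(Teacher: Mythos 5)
Your proof is correct and follows essentially the same route as the paper's graphical derivation: fuse the coincident W-nodes of the two complete graphs at each vertex via \textruleref{Fusion}, regroup legs by flexsymmetry so that the $A$-edge and $B$-edge towards each neighbour form a local W-split/W-merge pair, and collapse each such pair $a_{ij}^{\hat{n}}, b_{ij}^{\hat{n}}$ into a single Fock spider $(a_{ij}+b_{ij})^{\hat{n}}$ by Proposition~\ref{prop:FWPlus} and Corollary~\ref{cor:f-plus-gen}, using Corollary~\ref{cor:f-plus-loop} for the diagonal entries. No substantive difference from the paper's argument.
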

This enables further rewriting of the Gaussian boson sampling circuit of Equation~\eqref{eq:gbs-circuit}.
\begin{theorem}\label{thm:gbs-rewrite-normal-form}
    The circuit of Gaussian boson sampling can be reduced to the following normal form:
    \begin{equation}
        \tikzfigscale{1}{gbs-rewrite-normal-form}
    \end{equation}
    where $U$ is the matrix of the interferometer, $r_i$ represents the amount of squeezing on the $i$-th mode, and $T_i = \frac{-\textsf{tanh}\,r_i}{2} \sum\limits_{j = 1}^s  u_{ij}^2$.
\end{theorem}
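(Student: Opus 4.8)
The plan is to peel the Gaussian boson sampling circuit of Equation~\eqref{eq:gbs-circuit} apart one input mode at a time, using the fact that a real squeezed vacuum is, up to the scalar $1/\sqrt{\cosh r_i}$, the pair-creation exponential $e^{-\frac{\tanh r_i}{2}(a^\dagger)^2}$ applied to the vacuum. Diagrammatically this exponential is a W node carrying a self-loop, so the first step is to invoke Lemma~\ref{lem:squeezedVacuum} to replace each squeezed vacuum feeding mode $i$ by a W node with a single self-loop whose weight is governed by $-\tanh r_i$, keeping one physical leg pointed into the interferometer.

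Next I would absorb each such gadget into the interferometer normal form. Since the squeezed-vacuum W node sits flush against the $i$-th left-pointing W node of the bipartite normal form, one application of \textruleref{Fusion} merges them into a single W node carrying the self-loop and having $s$ legs, the $j$-th passing through the Fock spider $u_{ij}^{\hat n}$. I would then slide these $u_{ij}^{\hat n}$ labels along the legs using the copy and push rules (Lemmas~\ref{lem:multiplierCopy} and~\ref{lem:push}), so that the two ends of the self-loop pick up the $i$-th row $(u_{i1},\dots,u_{is})$ of $U$. This is precisely the situation handled by Lemma~\ref{lem:complete-graph-outer-product}: setting $a=-\tanh r_i$ and $B = (u_{i1},\dots,u_{is})^\top$, the pairing internal to mode $i$ resolves into the rank-one outer-product pattern $BB^\top$, producing a self-loop of weight $T_i = -\frac{\tanh r_i}{2}\sum_j u_{ij}^2$ together with the clean bipartite $U$-connectivity toward the measured outputs. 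Lemma~\ref{lem:complete-graph-with-loops} supplies the all-ones ($J_s$) baseline against which this decoration is carried out.

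Finally I would sum over the $s$ input modes. Because every mode contributes a weighted complete graph on the same node set, Lemma~\ref{lem:matrix-addition} lets me add their adjacency data in a single step, assembling the per-mode self-loops $T_i$ and the interferometer matrix $U$ into the claimed normal form; the leftover scalars $\prod_i 1/\sqrt{\cosh r_i}$ are then collected via \textruleref{Scalar}. The main obstacle I anticipate is combinatorial bookkeeping rather than any conceptual difficulty: keeping the factor of $\tfrac12$ attached to the self-loop, correctly separating the diagonal ($j=k$) pair-creation terms that become the $T_i$ loops from the off-diagonal ($j\neq k$) terms that become the inter-mode edges, and ensuring the Fock-spider weights compose as $u_{ij}u_{ik}$ rather than their conjugates. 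Checking that Lemmas~\ref{lem:complete-graph-outer-product} and~\ref{lem:complete-graph-with-loops} apply with exactly these weights, so that the outer product $BB^\top$ reproduces the Gaussian output state $e^{\frac12 b^\dagger (U^\top \Lambda U) b^\dagger}\ket{0}$ with $\Lambda = \mathrm{diag}(-\tanh r_i)$, is where the care is needed.
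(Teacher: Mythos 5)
Your proposal takes essentially the same route as the paper's proof: rewrite each squeezed vacuum as a W node with a self-loop (Lemma~\ref{lem:squeezedVacuum}), fuse it into the interferometer normal form and collapse each mode's gadget via the outer-product Lemma~\ref{lem:complete-graph-outer-product} (with Lemma~\ref{lem:complete-graph-with-loops} fixing the notation), then sum the per-mode contributions with Lemma~\ref{lem:matrix-addition} while collecting the $1/\sqrt{\cosh r_i}$ scalars --- precisely the lemmas the paper derives for this purpose. The one quibble is your citation of Lemma~\ref{lem:multiplierCopy}, which concerns real multipliers on Z/X spiders; sliding the $u_{ij}^{\hat n}$ Fock spiders is handled by Lemma~\ref{lem:push} and \textruleref{Fusion} alone, but this does not affect the argument.
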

\begin{proof}
    We start with the form of the Gaussian boson sampling circuit in Equation~\eqref{eq:gbs-circuit}.
    In the diagrams below, all occurrences of ``$...$'' denote exactly $s$ wires, for $s$ the number of modes of the Gaussian boson sampling circuit.
    \begin{equation*}
        \tikzfigscale{1}{gbs-rewrite-normal-form-proof-0}
    \end{equation*}
    In the following steps, we will omit the scalar factor for brevity.
    \begin{align*}
        \tikzfigscale{1}{gbs-rewrite-normal-form-proof}\\[2em]
        \tikzfigscale{1}{gbs-rewrite-normal-form-proof-2}
    \end{align*}
    where $T_i = \frac{-\textsf{tanh}\,r_i}{2} \sum\limits_{j = 1}^s u_{ij}^2$ for $1 \leq t \leq s$.
\end{proof}

\subsection{Measuring 0-1 photons per mode}
In Gaussian boson sampling, after the interferometer, the photon counts are observed.
First, we consider the case where these photon counts are all either 0 or 1.
Whenever the measurement outcome is the number state $\ket{0}$, the self-loop can be ignored due to Lemma~\ref{lem:vacuumCopy}.
Whenever the measurement outcome is $\ket{1}$, the self-loop can also be ignored:
\begin{restatable}{lemma}{loopPop}\label{lem:loop-pop}
    \begin{equation}
        \tikzfigscale{1}{loop-pop}
    \end{equation}
    for any nonzero $k \in \mathbb{C}$.
\end{restatable}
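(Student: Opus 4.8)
The self-loop attached to the W node is a Fock-spider cap labelled $k^{\hat{n}}$, which can be read as the effect $\sum_{m\geq 0} k^{m}\,\bra{m}\bra{m}$ joining two of the node's legs: it forces the two looped legs to carry equal photon number and weights the diagonal by $k^{m}$. My plan is to feed the single-photon state $\ket{1}$ into the looped W node and argue that, with only one quantum of excitation available, no photon can ever enter the loop, so that the loop collapses to the vacuum situation already dispatched by Lemma~\ref{lem:vacuumCopy}.

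Concretely, I would first apply the W-split identity of Equation~\eqref{eq:w-split} to the $\ket{1}$ sitting on the looped W node. This rewrites the state as the finite sum over the legs of the node, each summand carrying the lone photon on exactly one leg and the vacuum on all the others. I would then partition this sum into two families according to whether the excited leg is one of the two looped legs or one of the free legs. In the first family the loop sees $\ket{1}$ on one of its legs and $\ket{0}$ on the other; since the $k^{\hat{n}}$ cap pairs legs only at equal photon number, the contraction of $\bra{m}\bra{m}$ against $\ket{1}\ket{0}$ vanishes by Fock-basis orthogonality, which is exactly the content of the Fock \textruleref{Copy} rule, so every summand here is the zero diagram. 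In the second family both looped legs carry the vacuum, the $k^{\hat{n}}$ cap contracts $\ket{0}\ket{0}$ to the scalar $k^{0}=1$, and the resulting vacuum self-loop detaches via Lemma~\ref{lem:vacuumCopy}. Re-summing the surviving family and running Equation~\eqref{eq:w-split} backwards reassembles precisely the W node fed $\ket{1}$ with the two looped legs (and hence the loop) deleted, which is the right-hand side with no residual scalar.

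The main obstacle is to carry out the vanishing of the first family entirely within the calculus rather than by appealing to the semantics: one must show diagrammatically that a $k^{\hat{n}}$ loop closed on a leg forced to photon number $1$ against a vacuum leg is the zero diagram. I expect to handle this by pushing the $k^{\hat{n}}$ Fock spider onto one looped leg using Lemma~\ref{lem:push} and then invoking the orthogonality built into the Fock \textruleref{Copy} rule. The hypothesis $k\neq 0$ plays only a bookkeeping role: it guarantees that the loop edge is a genuine $k^{\hat{n}}$ Fock spider (for $k=0$ one has $0^{\hat{n}}=\delta_{0}$ and the loop projects onto the vacuum outright), and it is precisely the surviving scalar $k^{0}=1$ in the second family that ensures the popped loop leaves behind neither a factor of $k$ nor any other residue.
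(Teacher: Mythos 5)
Your argument is correct and is essentially the paper's own route: distribute the single photon over the legs of the looped W node via Equation~\eqref{eq:w-split}, kill the terms that place the photon on a looped leg by Fock-basis orthogonality (spider \textruleref{Fusion} of $\delta_1$ against $\delta_0$ yields the zero scalar), contract the all-vacuum terms' loop to the scalar $k^{0}=1$, and run the split backwards to reassemble the loop-free W node fed $\ket{1}$. Two cosmetic remarks: after the full split there is no residual ``vacuum self-loop'' left for Lemma~\ref{lem:vacuumCopy} to detach --- that lemma is what you would invoke if you instead unfused the two looped legs into a sub-W node and copied the vacuum through it --- and your observation that the hypothesis $k\neq 0$ is never actually used is accurate, since the argument only needs $k^{0}=1$.
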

\noindent Because of this, when we observe the measurement outcomes $n_1, \dots, n_s$ with $n_i \in \{0,1\}$, we can remove the self-loops:
\begin{equation}
    \tikzfigscale{1}{gbs-0-1-loop-removal}
\end{equation}
Now, let us consider the simplest case of observing all $\ket{0}$'s:
\begin{equation}
    \tikzfigscale{1}{gbs-all-0s}
\end{equation}
Here, all the 0s copy through the W nodes, and entire matrix is deleted.
We are simply left with the scalar factor.
Next, when all $\ket{1}$'s are measured, we get the following result:
\begin{equation}
    \tikzfigscale{1}{gbs-all-1s}
\end{equation}
where we used Proposition~\ref{prop:perfect-matching} to obtain the hafnian.
We can now combine the insights from the above examples to the case of observing arbitrary 0-1 photon counts:
\begin{proposition}
    The amplitude of observing $n_1, \dots, n_s \in \{0,1\}$ photons in the Gaussian boson sampling circuit is
    \begin{equation}
        \tikzfigscale{1}{gbs-0-1-photon-count}
    \end{equation}
    where we take the hafnian of the submatrix.
    This submatrix is obtained by removing the $i$-th row and column for each $i$ when $n_i = 0$.
\end{proposition}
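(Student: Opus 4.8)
The plan is to combine the two extreme cases already computed---all zeros and all ones---into the general mixed $0$-$1$ pattern by observing that each mode's measurement acts \emph{locally} on the normal form of Theorem~\ref{thm:gbs-rewrite-normal-form}, so that the diagram factorizes mode-by-mode. First I would start from the normal form established in Theorem~\ref{thm:gbs-rewrite-normal-form} and plug in the measurement effect $\ket{n_i}$ on each output wire, where $n_i \in \{0,1\}$. Since the self-loops have already been removed in the displayed equation preceding the proposition (using Lemma~\ref{lem:vacuumCopy} for the $\ket{0}$ case and Lemma~\ref{lem:loop-pop} for the $\ket{1}$ case), I may assume we are working with the loop-free version of the circuit, so the only remaining structure is the complete graph of W nodes with Fock-spider-labelled edges together with the $\ket{n_i}$ effects.

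Next I would analyze the effect of each measurement outcome on its corresponding W node. The key observation, already used in Example~\ref{ex:perfect-matching} and Proposition~\ref{prop:perfect-matching}, is that inputting $\ket{0}$ to a W node forces all incident edges to carry the vacuum (the $0$ copies through the W node via the \textruleref{Copy} rule for the Fock basis), thereby \emph{deleting} that vertex together with all its incident edges from the graph. Conversely, inputting $\ket{1}$ to a W node, via Equation~\eqref{eq:w-split}, splits into the sum over which single incident edge carries the photon, exactly the structure that Proposition~\ref{prop:perfect-matching} identifies with the hafnian. Therefore, measuring $n_i = 0$ on mode $i$ removes the $i$-th vertex, while measuring $n_i = 1$ retains it as a hafnian-contributing vertex. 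Collecting these, the surviving subgraph is precisely the induced subgraph on the vertices $\{i : n_i = 1\}$, whose adjacency matrix is the submatrix of the effective adjacency matrix obtained by deleting every row and column $i$ with $n_i = 0$.

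I would then invoke Proposition~\ref{prop:perfect-matching} directly on this surviving complete subgraph to conclude that the resulting scalar is the hafnian of that submatrix, which is exactly the claimed amplitude. The main obstacle I anticipate is bookkeeping the correct effective adjacency matrix: the entries of the matrix appearing in the hafnian are not the raw interferometer entries $u_{ij}$ but the composite quantities arising from the normal form of Theorem~\ref{thm:gbs-rewrite-normal-form}, including the squeezing-dependent self-loop weights $T_i$ and the contracted products of the $u_{ij}^{\hat{n}}$ Fock spiders. Care is needed to verify that, after the self-loops are removed in the $0$-$1$ case, the off-diagonal entries of the relevant symmetric matrix are exactly those whose submatrix hafnian is claimed, and that the diagonal $T_i$ terms drop out consistently when restricting to $n_i \in \{0,1\}$. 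Once this matrix identification is confirmed, the diagrammatic deletion-and-hafnian argument above closes the proof, with the general $0$-$1$ case reducing cleanly to the all-ones case of Proposition~\ref{prop:perfect-matching} applied to the induced submatrix.
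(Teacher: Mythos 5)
Your proposal is correct and follows essentially the same route as the paper: propagate each $\ket{0}$ outcome through its W node so that the vacuum copies onto all incident edges, which deletes the $i$-th row and column of the effective adjacency matrix, and then reduce the surviving loop-free subgraph to the all-ones case handled by Proposition~\ref{prop:perfect-matching}. Your worry about the diagonal $T_i$ terms is already resolved by the loop-removal step (Lemmas~\ref{lem:vacuumCopy} and~\ref{lem:loop-pop}) performed before the proposition, exactly as you assumed.
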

\begin{proof}
    We consider the case where $n_i = 0$ for some $i$.
    \begin{equation}
        \tikzfigscale{1}{gbs-0-1-photon-count-proof}
    \end{equation}
    Hence, we have that every wire associated with the $i$-th row and column of the adjacency matrix is now connected to a vacuum state $\delta_0$.
    Then we use the fact that
    \begin{equation}
        \tikzfigscale{1}{gbs-0-1-photon-count-proof2}
    \end{equation}
    to obtain the result.
\end{proof}

\subsection{Measuring arbitrary photon counts}
The above results can be generalized to the case where we observe multiple photons in each mode.
\textcite{Kruse2019gbs2} reduced the problem of computing the probability of multiple photon counts to the case of 0-1 photon counts by ``artificially moving each photon to another pseudo-mode''.
This insight corresponds to the following lemma in the ZX calculus.
\begin{restatable}{lemma}{fdeltam}\label{lem:f-delta-m}
    \begin{equation}
        \tikzfigscale{1}{f-delta-m}
    \end{equation}
    for any $m \in \mathbb{N}$, where $J_m$ is an $m \times m$ matrix of all ones.
\end{restatable}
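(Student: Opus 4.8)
~\textbf{Proof proposal for Lemma~\ref{lem:f-delta-m}.}

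The statement asserts that a single Fock spider carrying the Kronecker delta $\delta_m$ (i.e.\ projecting/injecting the number state $\ket{m}$) can be re-expressed as a complete graph $K_m$ of W-nodes, each fed by the number state $\ket{1}$, with all-to-all Fock-spider edges labelled by entries of the all-ones matrix $J_m$. This is exactly the diagrammatic incarnation of Kruse et al.'s ``move each photon into its own pseudo-mode'' trick: a mode holding $m$ indistinguishable photons is replaced by $m$ single-photon pseudo-modes that are fully connected so that the combinatorics of distributing the photons is recovered by the matchings of $K_m$.

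The plan is to prove this by induction on $m$, using the W-node splitting identity~\eqref{eq:w-split} together with the \textruleref{Fusion} and \textruleref{Bialgebra} rules, and the W-node semantics given in the preamble. For the semantic anchor, I would first verify the claim at the level of interpretations: the left side injects $\ket{m}$, while the right side, by the stated W-node interpretation $\sum \sqrt{(\sum_i n_i)!/\prod_i n_i!}\,\ket{\sum_i n_i}\bra{n_1,\dots,n_k}$, collects all ways the $m$ input photons fan out through the W-nodes of $K_m$ and recombine. First I would feed each of the $m$ number-state inputs $\ket{1}$ into its W-node and apply~\eqref{eq:w-split} to expand $\ket{1}$ on a W-node as the sum of the ``single photon on exactly one output leg'' terms; this converts the complete-graph diagram into a sum over which edge each photon travels along, i.e.\ a sum over perfect-matching-like assignments, precisely mirroring the reasoning already used in Example~\ref{ex:perfect-matching} and Proposition~\ref{prop:perfect-matching}. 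Then I would push the resulting Fock spiders through using \textruleref{Fusion} and \textruleref{Bialgebra} to collapse the all-to-all $J_m$-labelled edges, showing the surviving terms reassemble into the single number state $\ket{m}$ on the output wire, with the normalisation factor $\sqrt{m!}$ (from $J_m = \mathbf{1}\mathbf{1}^T$ having all unit entries) matching the combinatorial weight $\sqrt{m!/\prod 1!}=\sqrt{m!}$ in the W-node interpretation.

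For the inductive step I would peel off one pseudo-mode: exhibit $K_{m+1}$ as $K_m$ with one extra W-node attached to every existing vertex, use~\eqref{eq:w-split} on the new $\ket{1}$ input to split into the case where the new photon joins an existing node versus where it is absorbed, and invoke the inductive hypothesis on the $K_m$ core. The hard part will be bookkeeping the normalisation scalars: I expect the main obstacle to be tracking the $\sqrt{n!}$-type factors that the W-nodes and the Fock-spider labels contribute as edges are fused, and confirming they aggregate exactly to the multinomial weight $\sqrt{(\sum_i n_i)!/\prod_i n_i!}$ demanded by the W-node semantics — rather than the diagrammatic rewriting itself, which is a routine application of~\eqref{eq:w-split}, \textruleref{Fusion}, and \textruleref{Bialgebra}. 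Since this lemma is relegated to Appendix~\ref{sec:gbs-appendix}, I would carry out the explicit scalar verification there, very likely cross-checking against the number-basis interpretation to fix the constants once and for all.
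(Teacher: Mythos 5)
You should first know that there is nothing in the paper to compare against: despite the sentence ``Lemmas in this section are proved in Appendix~\ref{sec:gbs-appendix}'', that appendix restates and proves only Lemmas~\ref{lem:complete-graph-with-loops}, \ref{lem:complete-graph-outer-product}, \ref{lem:matrix-addition}, \ref{lem:loop-pop} and~\ref{lem:loop-pop-gen}; Lemma~\ref{lem:f-delta-m} is never restated, and no proof of it appears anywhere in the paper. Your proposal therefore has to stand on its own, and as written it does not: it is a plan that defers exactly the computation in which the content of the lemma lives, and the one semantic computation it does commit to is the wrong one.

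Concretely, you anchor the argument on the claim that the right-hand side reassembles into $\ket{m}$ with weight $\sqrt{m!/\prod_i 1!}=\sqrt{m!}$. That is the weight of a \emph{bare} W-node merging $m$ single photons, i.e.\ of the diagram with the $J_m$ edges deleted; it is just the identity $\ket{m}=\tfrac{1}{\sqrt{m!}}\,W\ket{1}^{\otimes m}$ that follows from Equation~\eqref{eq:w-split}. But the all-to-all $J_m$-labelled edges are the entire point of the lemma: in the pseudo-mode reduction of \textcite{Kruse2019gbs2} they realise the diagonal block of the row/column-repeated matrix, namely the possibility that two of the $m$ photons of the \emph{same} physical mode pair with each other. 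Diagrammatically, each such edge can absorb a photon pair, so the right-hand side carries a sum over matchings of $K_m$ exactly as in Example~\ref{ex:perfect-matching} and Proposition~\ref{prop:perfect-matching}; its evaluation produces $\textsf{haf}(J_m)=(m-1)!!$-type factors, not $\sqrt{m!}$, and it is graded by parity --- the pair-absorption terms shift photon number in steps of two, and an odd complete graph has no perfect matching, which is the only reason the statement can be uniform ``for any $m\in\naturals$''. A proof that never confronts these pair terms has proved a different, simpler statement. The same oversight undermines your induction: stepping from $m$ to $m+1$ crosses the parity grading, so the inductive hypothesis at an odd stage carries essentially no information for the following even stage; the recursion that matches the combinatorics is in steps of two, mirroring the fact that $K_{m+2}$ has $(m+1)$ times as many perfect matchings as $K_m$. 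The workable routes are that two-step induction, or a direct coefficient count: expand the $\delta_m$ input through the stated W-node semantics, classify terms by how many photon pairs are absorbed by $J_m$ edges versus photons reaching the $\delta_1$'s, and match scalars on both sides --- \textruleref{Bialgebra}, \textruleref{Fusion} and Equation~\eqref{eq:w-split} alone will not discharge this.
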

The amplitude is then proportional to the hafnian of a new matrix, where we repeat the $i$-th row and column $n_i$ times.
We use the above to show how the rows and columns of the adjacency matrix are copied.
We first deal with the diagonal elements, i.e.\@ the elements corresponding to the self-loops of the W nodes.
For this, we have the following lemma:
\begin{restatable}{lemma}{loopPopGen}\label{lem:loop-pop-gen}
    \begin{equation}
        \tikzfigscale{1}{loop-pop-gen}
    \end{equation}
    for any $m \in \mathbb{N}$.
\end{restatable}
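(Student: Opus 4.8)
The plan is to reduce the general $m$-photon statement to the single-photon case already established in Lemma~\ref{lem:loop-pop}, using the photon-splitting trick of Lemma~\ref{lem:f-delta-m}. Conceptually, a self-loop carrying the Fock spider $k^{\hat n}$ records the diagonal entry $k$ of the (weighted) adjacency matrix, and measuring $m$ photons at that mode should replicate this diagonal entry across the $m$ pseudo-modes into which the photons are distributed. The target right-hand side is therefore the complete-graph-with-loops on $m$ vertices in which every edge and self-loop is weighted by $k$, i.e.\@ the $kJ_m$ block, with each vertex carrying a single-photon leg.

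First I would apply Lemma~\ref{lem:f-delta-m} to the external $m$-photon leg of the W node, rewriting the number state $\delta_m$ as $m$ single-photon pseudo-modes connected all-to-all by the all-ones pattern $J_m$. Next I would absorb the original W node into this structure using the W-node \textruleref{Fusion} rule and the structural symmetry of W (so that legs may be freely reorganised), and push the $k^{\hat n}$ self-loop through using \textruleref{Bialgebra} together with the Fock spider \textruleref{Fusion} and Corollary~\ref{cor:f-plus-gen}. The key point is that the single self-loop, once the sum leg has been fanned out into $m$ components by the W structure, interacts with each pseudo-mode: bialgebra distributes the loop over all pairs of component legs, generating the $\binom{m}{2}$ cross-edges and the $m$ residual self-loops, each inheriting the weight $k$. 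Finally I would apply Lemma~\ref{lem:loop-pop} mode-by-mode to pop the residual single-photon self-loops, collecting the scalar factors of $k$, and reassemble the $kJ_m$ complete-graph-with-loops claimed on the right.

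The main obstacle is the combinatorial bookkeeping: I must check that the normalisation $\sqrt{m!}$ introduced by Lemma~\ref{lem:f-delta-m}, the multiplicities produced when bialgebra fans the self-loop across the $m$ pseudo-modes, and the factors of $k$ harvested by each application of Lemma~\ref{lem:loop-pop} combine to exactly the scalar appearing in the statement, with no spurious over- or under-counting of the indistinguishable pseudo-modes. A cleaner alternative, should the direct distribution prove unwieldy, is to argue by induction on $m$ with base case $m=1$ given by Lemma~\ref{lem:loop-pop}: peel off one photon at a time via \textruleref{Fusion} and the $m\mapsto m-1$ instance of Lemma~\ref{lem:f-delta-m}, reducing each inductive step to a single loop-pop. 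Either way, the nontrivial content is verifying that popping self-loops on all-to-all connected pseudo-modes reproduces precisely the $kJ_m$ diagonal block needed for the repeated-row-and-column matrix in the eventual hafnian formula, rather than some other connectivity.
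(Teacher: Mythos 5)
Your high-level plan---split the $\delta_m$ measurement into $m$ single-photon pseudo-modes, distribute the self-loop over them, and remove the residual loops---is the natural one and surely in the same spirit as the paper's diagrammatic rewrite. However, the combinatorial outcome you assert for the distribution step is wrong, and the error is structural, not a global scalar. When the $k^{\hat n}$ self-loop is fanned out over the $m$ pseudo-modes by the W--W \textruleref{Bialgebra}, each \emph{unordered} pair $\{p,q\}$ of pseudo-modes inherits the loop in \emph{two} ways (the two ends of the $k$-weighted cup can attach as $(p,q)$ or as $(q,p)$, and both routings give the same photon configuration), so every cross-edge ends up weighted $2k$, not $k$. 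A minimal check: a W node whose only legs are a $k^{\hat n}$ self-loop and an output measured by $\bra{2}$ evaluates to $\sqrt{2}\,k$, whereas a single $k^{\hat n}$ edge between two $\bra{1}$-measured pseudo-modes, including the $1/\sqrt{2!}$ photon-splitting normalisation, evaluates to $k/\sqrt{2}$---off by exactly $2$. With one spectator leg the true identity reads
\begin{equation*}
  \sum_{2a+n=m} \frac{k^{a}\sqrt{m!}}{a!\,\sqrt{n!}}\,\bra{n}
  \;=\;
  \frac{1}{\sqrt{m!}}\sum_{a=0}^{\lfloor m/2\rfloor}\binom{m}{2a}(2a-1)!!\,(2k)^{a}\sqrt{(m-2a)!}\;\bra{m-2a},
\end{equation*}
i.e.\@ a self-loop labelled $k^{\hat n}$ pops into the complete graph on $m$ vertices with every edge weighted $2k$ (equivalently, a loop labelled $(k/2)^{\hat n}$ pops into the $kJ_m$ block). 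Because the multiplicity enters as $2^{a}$ with $a$ varying term by term, it cannot be absorbed into the overall normalisation you propose to tune at the end. This factor of two is precisely why Theorem~\ref{thm:gbs-rewrite-normal-form} keeps self-loops labelled $T_i=\frac{-\textsf{tanh}\,r_i}{2}\sum_j u_{ij}^2$, with the explicit $\frac{1}{2}$: only then do the popped edges reproduce the entries of $U\bigoplus_i \textsf{tanh}\,r_i\,U^T$ whose hafnian the final theorem extracts.

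Two further problems with the details. First, popping the residual single-photon self-loops via Lemma~\ref{lem:loop-pop} harvests no ``scalar factors of $k$'': under a $\delta_1$ measurement a self-loop contributes only its zero-photon term $k^0=1$, so those pops are free and there is nothing to collect. Second, your reading of Lemma~\ref{lem:f-delta-m}---that a \emph{bare} $\delta_m$ rewrites into $m$ pseudo-modes ``connected all-to-all by the all-ones pattern $J_m$''---cannot be correct if that lemma is sound: for $m\geq 2$, a cluster of $\bra{1}$-measured pseudo-modes with nontrivial internal edges is never proportional to $\bra{m}$, since the edge terms generate lower photon-number components; internal edges can only be created by the self-loop, not by the measurement-splitting step. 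Run literally, your plan would superpose the weight-$1$ edges from step 1 with the loop-derived edges from step 3, which by Proposition~\ref{prop:FWPlus} and Corollary~\ref{cor:f-plus-gen} combine \emph{additively} to the wrong weight. Your induction-on-$m$ fallback is fine in outline, but its inductive step requires the same multiplicity-two bookkeeping, so it does not let you sidestep the central issue.
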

\noindent Next, for the adjacency matrix without self-loops, i.e.\@ diagonal is zero, the rows and columns are copied as follows:
\begin{restatable}{lemma}{matrixCopy}\label{lem:matrix-copy}
    For an adjacency matrix $A$ with zero diagonal, if we measure $\delta_m$ on the $i$-th wire, we obtain
    \begin{equation}
        \tikzfigscale{1}{gbs-matrix-copy}
    \end{equation}
    where $A'$ is the matrix obtained by copying the $i$-th row and column $m$ times.
\end{restatable}
\begin{proof}
    We first apply \textruleref{Bialgebra} on the $i$-th wire.
    \begin{equation*}
        \tikzfigscale{1}{gbs-matrix-copy-proof}
    \end{equation*}
    Then we have Fock spiders labelled $a_{ij}^{\hat{n}}$ for each $j$ connected to the W-nodes of the bialgebra.
    This corresponds to the $i$-th row of the matrix.
    We can use the Lemma~\ref{lem:push} to push the Fock spiders, which copies each element of the $i$-th row $m$ times.
    \begin{equation*}
        \tikzfigscale{1}{gbs-matrix-copy-proof2}
    \end{equation*}
    Finally since we do not have any self-loops, all the internal W-nodes that arise from the bialgebra can be fused, and we obtain the result.
\end{proof}
\noindent With this, we have all the ingredients to derive the amplitude of Gaussian boson sampling for arbitrary photon counts:
\begin{theorem}
    The amplitude of observing $n_1, \dots, n_s \in \naturals$ photons in the Gaussian boson sampling circuit is
    \begin{equation*}
        \tikzfigscale{1}{gbs-arbitrary-photon-count}
    \end{equation*}
    where the submatrix is obtained by copying the $i$-th row and column $n_i$ times for each $i$.
\end{theorem}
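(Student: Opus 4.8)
The plan is to assemble the final theorem from the machinery already developed in the preceding subsections, treating the arbitrary-photon-count case as a reduction to the $0$-$1$ photon-count case that was already handled. The starting point is the normal form for the Gaussian boson sampling circuit established in Theorem~\ref{thm:gbs-rewrite-normal-form}: the circuit equals an all-to-all connected graph of $W$-nodes with the adjacency matrix carrying off-diagonal entries $u_{ik}u_{jk}\tanh r_k$-type terms and diagonal entries $T_i$ coming from the squeezing, with the measurement effects $\delta_{n_i}$ applied on each mode. The key conceptual move, due to \textcite{Kruse2019gbs2}, is that observing $n_i$ photons in mode $i$ can be simulated by splitting mode $i$ into $n_i$ pseudo-modes each observing a single photon; diagrammatically this is exactly Lemma~\ref{lem:f-delta-m}, which lets me replace the $\delta_{n_i}$ effect by an $n_i$-fold fan-out whose $J_{n_i}$ block records the all-ones connectivity among the copies.

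The main body of the proof is to apply this row/column copying uniformly across the matrix. First I would handle the diagonal (self-loop) entries using Lemma~\ref{lem:loop-pop-gen}, which tells me how a self-loop behaves under the $\delta_m$ measurement; then I would handle the genuine off-diagonal adjacency entries using Lemma~\ref{lem:matrix-copy}, which states precisely that measuring $\delta_m$ on the $i$-th wire of a zero-diagonal adjacency matrix copies the $i$-th row and column $m$ times. Performing these copies for every mode $i$ in turn produces an enlarged adjacency matrix $A'$ in which the $i$-th row and column of the original matrix have been repeated $n_i$ times. After this enlargement every mode carries exactly a single photon, so the circuit is now literally an instance of the $0$-$1$ photon-count case already solved: by the proposition of the previous subsection, the amplitude is the hafnian of the relevant submatrix, and here that submatrix is precisely the copied matrix $A'$. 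Invoking Proposition~\ref{prop:perfect-matching}, which identifies the all-to-all $W$-node diagram on a weighted adjacency matrix with its hafnian, yields $\textsf{haf}(A')$ as claimed.

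The step I expect to be the main obstacle is bookkeeping the interaction between the diagonal and off-diagonal contributions during the copying, since Lemma~\ref{lem:matrix-copy} is stated for matrices with \emph{zero} diagonal whereas the normal form of Theorem~\ref{thm:gbs-rewrite-normal-form} carries nontrivial diagonal terms $T_i$. I would therefore separate the diagonal from the off-diagonal part of the adjacency matrix before copying—treating the self-loops via Lemma~\ref{lem:loop-pop-gen} and the off-diagonal block via Lemma~\ref{lem:matrix-copy}—and then recombine them using the matrix-addition Lemma~\ref{lem:matrix-addition} so that the copied diagonal entries land in exactly the right positions of $A'$. The remaining care is purely combinatorial: verifying that the $J_{n_i}$ all-ones blocks introduced by Lemma~\ref{lem:f-delta-m} correctly populate the diagonal blocks of the enlarged matrix, so that $A'$ genuinely has the $i$-th row and column repeated $n_i$ times, as the statement asserts. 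Once this is confirmed, the reduction to the already-proved $0$-$1$ case closes the argument.
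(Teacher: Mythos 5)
Your proposal matches the paper's proof essentially step for step: the paper likewise begins from the normal form of Theorem~\ref{thm:gbs-rewrite-normal-form}, explicitly sets $A = U \bigoplus_{i=1}^s \textsf{tanh}\,r_i\, U^T$ and works with its zero-diagonal part $A'$ separately from the diagonal (handled via Lemma~\ref{lem:loop-pop-gen}), then uses Lemma~\ref{lem:f-delta-m} and Lemma~\ref{lem:matrix-copy} to copy the rows and columns before concluding with the hafnian. The only cosmetic difference is that the paper performs these steps in one diagrammatic chain rather than phrasing the end as an explicit reduction to the already-proved $0$--$1$ photon-count proposition.
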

\begin{proof}
    We will omit the scalar factor for conciseness.
    For brevity, let $A = U \bigoplus\limits_{i=1}^s \textsf{tanh}\,r_i U^T$ and $A'$ be the same matrix as $A$ but with the diagonal elements zeroed.
    We begin with the normal form of the Gaussian boson sampling circuit given in Theorem~\ref{thm:gbs-rewrite-normal-form}.
    \begin{equation*}
        \tikzfigscale{1}{gbs-arbitrary-photon-count-proof}
    \end{equation*}
\end{proof}


\section*{Acknowledgements}
We thank Cole Comfort, Giovanni de Felice, Richard East and Boldizs\'ar Po\'or for insightful discussions.
RS is supported by the Clarendon Fund Scholarship.
LY is funded by the Google PhD Fellowship.

\printbibliography

@inproceedings{aaronson2011bosonsampling,
  author = {Aaronson, Scott and Arkhipov, Alex},
  title = {The computational complexity of linear optics},
  year = {2011},
  isbn = {9781450306911},
  publisher = {Association for Computing Machinery},
  doi = {10.1145/1993636.1993682},
  booktitle = {Proceedings of the Forty-Third Annual ACM Symposium on Theory of Computing},
  pages = {333–342},
  numpages = {10},
  series = {STOC '11}
}

@article{arrazola2018gbsdensesubgraphs,
  title = {Using Gaussian Boson Sampling to Find Dense Subgraphs},
  author = {Arrazola, Juan Miguel and Bromley, Thomas R.},
  journal = {Phys. Rev. Lett.},
  volume = {121},
  issue = {3},
  pages = {030503},
  numpages = {6},
  year = {2018},
  month = {07},
  publisher = {American Physical Society},
  doi = {10.1103/PhysRevLett.121.030503},
}

@article{banchi2020gbsmoleculardocking,
	title = {Molecular docking with Gaussian Boson Sampling},
	abstract = {Photonic quantum devices called Gaussian Boson Samplers can be programmed to predict molecular docking configurations. Gaussian Boson Samplers are photonic quantum devices with the potential to perform intractable tasks for classical systems. As with other near-term quantum technologies, an outstanding challenge is to identify specific problems of practical interest where these devices can prove useful. Here, we show that Gaussian Boson Samplers can be used to predict molecular docking configurations, a central problem for pharmaceutical drug design. We develop an approach where the problem is reduced to finding the maximum weighted clique in a graph, and show that Gaussian Boson Samplers can be programmed to sample large-weight cliques, i.e., stable docking configurations, with high probability, even with photon losses. We also describe how outputs from the device can be used to enhance the performance of classical algorithms. To benchmark our approach, we predict the binding mode of a ligand to the tumor necrosis factor-α converting enzyme, a target linked to immune system diseases and cancer.},
	author = {Leonardo Banchi and Mark Fingerhuth and Tomas Babej and Christopher Ing and Juan Miguel Arrazola},
	doi = {10.1126/sciadv.aax1950},
	journal = {Science Advances},
	number = {23},
	volume = {6},
	year = {2020},
}

@article{bartlett_efficient_2002,
  title    = {Efficient {Classical} {Simulation} of {Continuous} {Variable} {Quantum} {Information} {Processes}},
  volume   = {88},
  issn     = {0031-9007, 1079-7114},
  doi      = {10.1103/PhysRevLett.88.097904},
  abstract = {We obtain sufficient conditions for the efficient simulation of a continuous variable quantum algorithm or process on a classical computer. The resulting theorem is an extension of the Gottesman-Knill theorem to continuous variable quantum information. For a collection of harmonic oscillators, any quantum process that begins with unentangled Gaussian states, performs only transformations generated by Hamiltonians that are quadratic in the canonical operators, and involves only measurements of canonical operators (including finite losses) and suitable operations conditioned on these measurements can be simulated efficiently on a classical computer.},
  number   = {9},
  journal  = {Physical Review Letters},
  author   = {Bartlett, Stephen D. and Sanders, Barry C. and Braunstein, Samuel L. and Nemoto, Kae},
  month    = feb,
  year     = {2002},
  keywords = {Quantum Physics},
  pages    = {097904}
}

@inproceedings{bonchi2014cattheosignalflowgraphs,
	address = {Berlin, Heidelberg},
	title = {A {Categorical} {Semantics} of {Signal} {Flow} {Graphs}},
	isbn = {978-3-662-44584-6},
	abstract = {We introduce \vphantom{\{}\}{\textbackslash}backslashmathbb\{IH\vphantom{\{}\}\}, a sound and complete graphical theory of vector subspaces over the field of polynomial fractions, with relational composition. The theory is constructed in modular fashion, using Lack's approach to composing PROPs with distributive laws.},
	booktitle = {{CONCUR} 2014 – {Concurrency} {Theory}},
	publisher = {Springer Berlin Heidelberg},
	author = {Bonchi, Filippo and Sobociński, Paweł and Zanasi, Fabio},
	editor = {Baldan, Paolo and Gorla, Daniele},
	year = {2014},
	pages = {435--450},
}

@misc{boothCompleteGaussian2024,
  title         = {Complete equational theories for classical and quantum Gaussian relations},
  author        = {Robert I. Booth and Titouan Carette and Cole Comfort},
  year          = {2024},
  eprint        = {2403.10479},
  archiveprefix = {arXiv}
}

@misc{boothGraphicalSymplecticAlgebra2024,
  title         = {Graphical {{Symplectic Algebra}}},
  author        = {Booth, Robert I. and Carette, Titouan and Comfort, Cole},
  year          = {2024},
  month         = jan,
  eprint        = {2401.07914},
  abstract      = {We give complete presentations for the dagger-compact props of affine Lagrangian and coisotropic relations over an arbitrary field. This provides a unified family of graphical languages for both affinely constrained classical mechanical systems, as well as odd-prime-dimensional stabiliser quantum circuits. To this end, we present affine Lagrangian relations by a particular class of undirected coloured graphs. In order to reason about composite systems, we introduce a powerful scalable notation where the vertices of these graphs are themselves coloured by graphs. In the setting of stabiliser quantum mechanics, this scalable notation gives an extremely concise description of graph states, which can be composed via ``phased spider fusion.'' Likewise, in the classical mechanical setting of electrical circuits, we show that impedance matrices for reciprocal networks are presented in essentially the same way.},
  archiveprefix = {arxiv},
  keywords      = {Computer Science - Logic in Computer Science,Mathematics - Category Theory,Mathematics - Symplectic Geometry,Quantum Physics}
}

@inproceedings{borgnaEncodingHighlevelQuantum2023,
  title     = {Encoding High-Level Quantum Programs as {{SZX-diagrams}}},
  booktitle = {Proceedings 19th International Conference on Quantum Physics and Logic, Wolfson College, Oxford, {{UK}}, 27 June - 1 July 2022},
  author    = {Borgna, Augustin and Romero, Rafael},
  year      = {2023},
  series    = {Electronic Proceedings in Theoretical Computer Science},
  volume    = {394},
  pages     = {141--169},
  publisher = {Open Publishing Association},
  doi       = {10.4204/EPTCS.394.9},
  abstract  = {The Scalable ZX-calculus is a compact graphical language used to reason about linear maps between quantum states. These diagrams have multiple applications, but they frequently have to be constructed in a case-by-case basis. In this work we present a method to encode quantum programs implemented in a fragment of the linear dependently typed Proto-Quipper-D language as families of SZX-diagrams. We define a subset of translatable Proto-Quipper-D programs and show that our procedure is able to encode non-trivial algorithms as diagrams that grow linearly on the size of the program.}
}

@article{braunsteinContinuousVariablesReview2005,
  title     = {Quantum information with continuous variables},
  volume    = {77},
  issn      = {1539-0756},
  doi       = {10.1103/revmodphys.77.513},
  number    = {2},
  journal   = {Reviews of Modern Physics},
  publisher = {American Physical Society (APS)},
  author    = {Braunstein, Samuel L. and van Loock, Peter},
  year      = {2005},
  month     = jun,
  pages     = {513–577}
}

@inproceedings{clementLovCalculus2022,
  author    = {Cl\'{e}ment, Alexandre and Heurtel, Nicolas and Mansfield, Shane and Perdrix, Simon and Valiron, Beno\^{i}t},
  title     = {{LOv-Calculus: A Graphical Language for Linear Optical Quantum Circuits}},
  booktitle = {47th International Symposium on Mathematical Foundations of Computer Science (MFCS 2022)},
  pages     = {35:1--35:16},
  year      = {2022},
  volume    = {241},
  doi       = {10.4230/LIPIcs.MFCS.2022.35}
}

@misc{clements2017interferometer,
      title={An Optimal Design for Universal Multiport Interferometers}, 
      author={William R. Clements and Peter C. Humphreys and Benjamin J. Metcalf and W. Steven Kolthammer and Ian A. Walmsley},
      year={2017},
      eprint={1603.08788},
      archivePrefix={arXiv},
      primaryClass={physics.optics}
}

@article{CochraneCat1999,
  title     = {Macroscopically distinct quantum-superposition states as a bosonic code for amplitude damping},
  volume    = {59},
  issn      = {1094-1622},
  doi       = {10.1103/physreva.59.2631},
  number    = {4},
  journal   = {Physical Review A},
  publisher = {American Physical Society (APS)},
  author    = {Cochrane, P. T. and Milburn, G. J. and Munro, W. J.},
  year      = {1999},
  month     = apr,
  pages     = {2631–2634}
}

@misc{codsiClassicallySimulatingQuantum2023,
  title         = {Classically Simulating Quantum Supremacy IQP Circuits through a Random Graph Approach},
  author        = {Codsi, Julien and van de Wetering, John},
  year          = {2023},
  month         = jan,
  eprint        = {2212.08609},
  abstract      = {Quantum Supremacy is a demonstration of a computation by a quantum computer that can not be performed by the best classical computer in a reasonable time. A well-studied approach to demonstrating this on near-term quantum computers is to use random circuit sampling. It has been suggested that a good candidate for demonstrating quantum supremacy with random circuit sampling is to use \emph{IQP circuits}. These are quantum circuits where the unitary it implements is diagonal. In this paper we introduce improved techniques for classically simulating random IQP circuits. We find a simple algorithm to calculate an amplitude of an $n$-qubit IQP circuit with dense random two-qubit interactions in time $O(\frac{\log^2 n}{n} 2^n )$, which for sparse circuits (where each qubit interacts with $O(\log n)$ other qubits) runs in $o(2^n/\text{poly}(n))$ for any given polynomial. Using a more complicated stabiliser decomposition approach we improve the algorithm for dense circuits to $O\left(\frac{(\log n)^{4-\beta}}{n^{2-\beta}} 2^n \right)$ where $\beta \approx 0.396$. We benchmarked our algorithm and found that we can simulate up to 50-qubit circuits in a couple of minutes on a laptop. We estimate that 70-qubit circuits are within reach for a large computing cluster.},
  archiveprefix = {arxiv},
  keywords      = {Quantum Physics}
}

@inproceedings{coeckeInteractingQuantumObservables2008,
  title     = {Interacting {{Quantum Observables}}},
  booktitle = {Automata, {{Languages}} and {{Programming}}},
  author    = {Coecke, Bob and Duncan, Ross},
  year      = {2008},
  pages     = {298--310},
  publisher = {Springer},
  doi       = {10.1007/978-3-540-70583-3_25},
  abstract  = {We formalise the constructive content of an essential feature of quantum mechanics: the interaction of complementary quantum observables, and information flow mediated by them. Using a general categorical formulation, we show that pairs of mutually unbiased quantum observables form bialgebra-like structures. We also provide an abstract account on the quantum data encoded in complex phases, and prove a normal form theorem for it. Together these enable us to describe all observables of finite dimensional Hilbert space quantum mechanics. The resulting equations suffice to perform computations with elementary quantum gates, translate between distinct quantum computational models, establish the equivalence of entangled quantum states, and simulate quantum algorithms such as the quantum Fourier transform. All these computations moreover happen within an intuitive diagrammatic calculus.},
  langid    = {english},
  keywords  = {Bloch Sphere,Classical Point,Classical Structure,Monoidal Structure,Quantum Observable}
}

@article{coeckeThreeQubitEntanglement2011,
  title    = {Three Qubit Entanglement within Graphical {{Z}}/{{X-calculus}}},
  author   = {Coecke, Bob and Edwards, Bill},
  year     = {2011},
  month    = mar,
  journal  = {Electronic Proceedings in Theoretical Computer Science},
  volume   = {52},
  pages    = {22--33},
  issn     = {2075-2180},
  doi      = {10.4204/EPTCS.52.3},
  abstract = {The compositional techniques of categorical quantum mechanics are applied to analyse 3-qubit quantum entanglement. In particular the graphical calculus of complementary observables and corresponding phases due to Duncan and one of the authors is used to construct representative members of the two genuinely tripartite SLOCC classes of 3-qubit entangled states, GHZ and W. This nicely illustrates the respectively pairwise and global tripartite entanglement found in the W- and GHZ-class states. A new concept of supplementarity allows us to characterise inhabitants of the W class within the abstract diagrammatic calculus; these method extends to more general multipartite qubit states.},
  keywords = {Computer Science - Logic in Computer Science,Quantum Physics}
}

@article{comfortGraphicalCalculusLagrangian2022,
  title = {A {{Graphical Calculus}} for {{Lagrangian Relations}}},
  author = {Comfort, Cole and Kissinger, Aleks},
  year = {2022},
  month = nov,
  journal = {Electronic Proceedings in Theoretical Computer Science},
  volume = {372},
  pages = {338--351},
  doi = {10.4204/EPTCS.372.24},
  abstract = {Symplectic vector spaces are the phase spaces of linear mechanical systems. The symplectic form describes, for example, the relation between position and momentum as well as current and voltage. The category of linear Lagrangian relations between symplectic vector spaces is a symmetric monoidal subcategory of relations which gives a semantics for the evolution -- and more generally linear constraints on the evolution -- of various physical systems. We give a new presentation of the category of Lagrangian relations over an arbitrary field as a `doubled' category of linear relations. More precisely, we show that it arises as a variation of Selinger's CPM construction applied to linear relations, where the covariant orthogonal complement functor plays the role of conjugation. Furthermore, for linear relations over prime fields, this corresponds exactly to the CPM construction for a suitable choice of dagger. We can furthermore extend this construction by a single affine shift operator to obtain a category of affine Lagrangian relations. Using this new presentation, we prove the equivalence of the prop of affine Lagrangian relations with the prop of qudit stabilizer theory in odd prime dimensions. We hence obtain a unified graphical language for several disparate process theories, including electrical circuits, Spekkens' toy theory, and odd-prime-dimensional stabilizer quantum circuits.},
  keywords = {Computer Science - Logic in Computer Science,Mathematical Physics,Mathematics - Category Theory,Quantum Physics},
}

@misc{debeaudrapSimpleZXZH2023,
  title         = {Simple {{ZX}} and {{ZH}} Calculi for Arbitrary Finite Dimensions, via Discrete Integrals},
  author        = {{de Beaudrap}, Niel and East, Richard D. P.},
  year          = {2023},
  month         = apr,
  eprint        = {2304.03310},
  abstract      = {The ZX calculus and the ZH calculus use diagrams to denote and to compute properties of quantum operations, and other multi-linear operators described by tensor networks. These calculi involve 'rewrite rules', which are algebraic manipulations of the tensor networks through transformations of diagrams. The way in which diagrams denote tensor networks is through a semantic map, which assigns a meaning to each diagram in a compositional way. Slightly different semantic maps, which may prove more convenient for one purpose or another (e.g., analysing unitary circuits versus analysing counting complexity), give rise to slightly different rewrite systems. Through a simple application of measure theory on discrete sets, we describe a semantic map for ZX and ZH diagrams for qudits of any dimension D{$>$}1, well-suited to represent unitary circuits, and admitting simple rewrite rules. In doing so, we reproduce the 'well-tempered' semantics of [arXiv:2006.02557] for ZX and ZH diagrams in the case D=2. We demonstrate rewrite rules for the 'stabiliser fragment' of the ZX calculus and a 'multicharacter fragment' of the ZH calculus; and demonstrate relationships which would allow the two calculi to be used interoperably as a single 'ZXH calculus'.},
  archiveprefix = {arxiv},
  keywords      = {Quantum Physics}
}

@inproceedings{defeliceLightMatterInteractionZXW2023,
  title         = {Light-{{Matter Interaction}} in the {{ZXW Calculus}}},
  booktitle     = {Proceedings of the Twentieth International Conference on Quantum Physics and Logic, Paris, France, 17-21st July 2023},
  author        = {{de Felice}, Giovanni and Shaikh, Razin A. and Po{\'o}r, Boldizs{\'a}r and Yeh, Lia and Wang, Quanlong and Coecke, Bob},
  year          = {2023},
  series        = {Electronic Proceedings in Theoretical Computer Science},
  volume        = {384},
  pages         = {20--46},
  publisher     = {Open Publishing Association},
  doi           = {10.4204/EPTCS.384.2},
  eprint        = {2306.02114},
  archiveprefix = {arXiv},
  abstract      = {In this paper, we develop a graphical calculus to rewrite photonic circuits involving light-matter interactions and non-linear optical effects. We introduce the infinite ZW calculus, a graphical language for linear operators on the bosonic Fock space which captures both linear and non-linear photonic circuits. This calculus is obtained by combining the QPath calculus, a diagrammatic language for linear optics, and the recently developed qudit ZXW calculus, a complete axiomatisation of linear maps between qudits. It comes with a 'lifting' theorem allowing to prove equalities between infinite operators by rewriting in the ZXW calculus. We give a method for representing bosonic and fermionic Hamiltonians in the infinite ZW calculus. This allows us to derive their exponentials by diagrammatic reasoning. Examples include phase shifts and beam splitters, as well as non-linear Kerr media and Jaynes-Cummings light-matter interaction.},
  keywords      = {Quantum Physics}
}

@misc{defeliceQuantumLinearOptics2022,
  title         = {Quantum {{Linear Optics}} via {{String Diagrams}}},
  author        = {{de Felice}, Giovanni and Coecke, Bob},
  year          = {2022},
  month         = oct,
  eprint        = {2204.12985},
  abstract      = {We establish a formal bridge between qubit-based and photonic quantum computing. We do this by defining a functor from the ZX calculus to linear optical circuits. In the process we provide a compositional theory of quantum linear optics which allows to reason about events involving multiple photons such as those required to perform linear-optical and fusion-based quantum computing.},
  archiveprefix = {arxiv},
  keywords      = {Computer Science - Logic in Computer Science,Quantum Physics}
}

@misc{devismeMinimalityFiniteDimensionalZWCalculi2024,
  title         = {Minimality in {{Finite-Dimensional ZW-Calculi}}},
  author        = {{de Visme}, Marc and Vilmart, Renaud},
  year          = {2024},
  month         = jan,
  eprint        = {2401.16225},
  abstract      = {The ZW-calculus is a graphical language capable of representing 2-dimensional quantum systems (qubit) through its diagrams, and manipulating them through its equational theory. We extend the formalism to accommodate finite dimensional Hilbert spaces beyond qubit systems. First we define a qudit version of the language, where all systems have the same arbitrary finite dimension d, and show that the provided equational theory is both complete -- i.e. semantical equivalence is entirely captured by the equations -- and minimal -- i.e. none of the equations are consequences of the others. We then extend the graphical language further to accommodate all finite dimensional Hilbert spaces at the same time. We again show the completeness of the provided equational theory.},
  archiveprefix = {arxiv},
  keywords      = {Quantum Physics}
}

@article{duncanGraphtheoreticSimplificationQuantum2020,
  title     = {Graph-Theoretic {{Simplification}} of {{Quantum Circuits}} with the {{ZX-calculus}}},
  author    = {Duncan, Ross and Kissinger, Aleks and Perdrix, Simon and van de Wetering, John},
  year      = {2020},
  month     = jun,
  journal   = {Quantum},
  volume    = {4},
  pages     = {279},
  publisher = {Verein zur F{\"o}rderung des Open Access Publizierens in den Quantenwissenschaften},
  doi       = {10.22331/q-2020-06-04-279},
  abstract  = {Ross Duncan, Aleks Kissinger, Simon Perdrix, and John van de Wetering, Quantum 4, 279 (2020). We present a completely new approach to quantum circuit optimisation, based on the ZX-calculus. We first interpret quantum circuits as ZX-diagrams, which provide a flexible, lower-level lang{\dots}},
  langid    = {british}
}

@inproceedings{duncanRewritingMeasurementBasedQuantum2010,
  title     = {Rewriting {{Measurement-Based Quantum Computations}} with {{Generalised Flow}}},
  booktitle = {Automata, {{Languages}} and {{Programming}}},
  author    = {Duncan, Ross and Perdrix, Simon},
  year      = {2010},
  series    = {Lecture {{Notes}} in {{Computer Science}}},
  pages     = {285--296},
  publisher = {Springer},
  address   = {Berlin, Heidelberg},
  doi       = {10.1007/978-3-642-14162-1_24},
  abstract  = {We present a method for verifying measurement-based quantum computations, by producing a quantum circuit equivalent to a given deterministic measurement pattern. We define a diagrammatic presentation of the pattern, and produce a circuit via a rewriting strategy based on the generalised flow of the pattern. Unlike other methods for translating measurement patterns with generalised flow to circuits, this method uses neither ancilla qubits nor acausal loops.},
  isbn      = {978-3-642-14162-1},
  langid    = {english},
  keywords  = {Admissible Rule,Monoidal Functor,Open Graph,Quantum Circuit,Symmetric Monoidal Category}
}

@inproceedings{dundar-coeckeQuantumPicturalismLearning2023,
  title         = {Quantum {{Picturalism}}: {{Learning Quantum Theory}} in {{High School}}},
  shorttitle    = {Quantum {{Picturalism}}},
  booktitle     = {2023 {{IEEE International Conference}} on {{Quantum Computing}} and {{Engineering}} ({{QCE}})},
  author        = {{D{\"u}ndar-Coecke}, Selma and Yeh, Lia and Puca, Caterina and Pfaendler, Sieglinde M.-L. and Waseem, Muhammad Hamza and Cervoni, Thomas and Kissinger, Aleks and Gogioso, Stefano and Coecke, Bob},
  year          = {2023},
  month         = sep,
  volume        = {03},
  eprint        = {2312.03653},
  pages         = {21--32},
  doi           = {10.1109/QCE57702.2023.20321},
  abstract      = {Quantum theory is often regarded as challenging to learn and teach, with advanced mathematical prerequisites ranging from complex numbers and probability theory to matrix multiplication, vector space algebra and symbolic manipulation within the Hilbert space formalism. It is traditionally considered an advanced undergraduate or graduate-level subject. In this work, we challenge the conventional view by proposing ``Quantum Picturalism'' as a new approach to teaching the fundamental concepts of quantum theory and computation. We establish the foundations and methodology for an ongoing educational experiment to investigate the question ``From what age can students learn quantum theory if taught using a dia-grammatic approach?''. We anticipate that the primary benefit of leveraging such a diagrammatic approach, which is conceptually intuitive yet mathematically rigorous, will be eliminating some of the most daunting barriers to teaching and learning this subject while enabling young learners to reason proficiently about high-level problems. We posit that transitioning from symbolic presentations to pictorial ones will increase the appeal of STEM education, attracting more diverse audience.},
  archiveprefix = {arxiv},
  keywords      = {Algebra,Distance measurement,Education,high-school education,Hilbert space,learning,quantum computing,Quantum computing,quantum education,Quantum mechanics,quantum picturalism,teaching complex concepts}
}

@article{eastAKLTStatesZXDiagramsDiagrammatic2022,
  title      = {{{AKLT-States}} as {{ZX-Diagrams}}: {{Diagrammatic Reasoning}} for {{Quantum States}}},
  shorttitle = {{{AKLT-States}} as {{ZX-Diagrams}}},
  author     = {East, Richard D.P. and {van de Wetering}, John and Chancellor, Nicholas and Grushin, Adolfo G.},
  year       = {2022},
  month      = jan,
  journal    = {PRX Quantum},
  volume     = {3},
  number     = {1},
  publisher  = {American Physical Society},
  doi        = {10.1103/PRXQuantum.3.010302},
  abstract   = {From Feynman diagrams to tensor networks, diagrammatic representations of computations in quantum mechanics have catalyzed progress in physics. These diagrams represent the underlying mathematical operations and aid physical interpretation, but cannot generally be computed with directly. In this paper we introduce the ZXH-calculus, a graphical language based on the ZX-calculus, that we use to represent and reason about many-body states entirely graphically. As a demonstration, we express the one-dimensional (1D) AKLT-state, a symmetry protected topological state, in the ZXH-calculus by developing a representation of spins higher than 1/2 within the calculus. By exploiting the simplifying power of the ZXH-calculus rules we show how this representation straightforwardly recovers the AKLT matrix-product state representation, the existence of topologically protected edge states, and the nonvanishing of a string-order parameter. Extending beyond these known properties, our diagrammatic approach also allows us to analytically derive that the Berry phase of any finite-length 1D AKLT chain is {$\pi$}. In addition, we provide an alternative proof that the two-dimensional (2D) AKLT-state on a hexagonal lattice can be reduced to a graph state, demonstrating that it is a universal quantum-computing resource. Lastly, we build 2D higher-order topological phases diagrammatically, which we use to illustrate a symmetry-breaking phase transition. Our results show that the ZXH-calculus is a powerful language for representing and computing with physical states entirely graphically, paving the way to develop more efficient many-body algorithms and giving a novel diagrammatic perspective on quantum phase transitions.}
}

@article{genovese2017infinitesimal,
  author        = {Fabrizio Romano Genovese},
  title         = {The Way of the Infinitesimal},
  year          = {2017},
  url           = {https://arxiv.org/abs/1707.00459},
  archiveprefix = {arxiv},
  eprint        = {1707.00459}
}

@article{Gogioso2017InfiniteCQM,
  title     = {Infinite-dimensional Categorical Quantum Mechanics},
  volume    = {236},
  issn      = {2075-2180},
  doi       = {10.4204/eptcs.236.4},
  journal   = {Electronic Proceedings in Theoretical Computer Science},
  publisher = {Open Publishing Association},
  author    = {Gogioso, Stefano and Genovese, Fabrizio},
  year      = {2017},
  month     = {01},
  pages     = {51–69}
}

@phdthesis{Gogioso2017Thesis,
  title         = {Categorical Quantum Dynamics},
  author        = {Gogioso, Stefano},
  year          = {2017},
  eprint        = {1709.09772},
  archiveprefix = {arXiv},
  school        = {University of Oxford}
}

@article{Gogioso2018TowardsQFT,
  title     = {Towards Quantum Field Theory in Categorical Quantum Mechanics},
  volume    = {266},
  issn      = {2075-2180},
  doi       = {10.4204/eptcs.266.22},
  journal   = {Electronic Proceedings in Theoretical Computer Science},
  publisher = {Open Publishing Association},
  author    = {Gogioso, Stefano and Genovese, Fabrizio},
  year      = {2018},
  month     = {02},
  pages     = {349–366}
}

@article{Gogioso2019QFTinCQM,
  title     = {Quantum Field Theory in Categorical Quantum Mechanics},
  volume    = {287},
  issn      = {2075-2180},
  doi       = {10.4204/eptcs.287.9},
  journal   = {Electronic Proceedings in Theoretical Computer Science},
  publisher = {Open Publishing Association},
  author    = {Gogioso, Stefano and Genovese, Fabrizio},
  year      = {2019},
  month     = {01},
  pages     = {163–177}
}

@inproceedings{Gogioso2019Dynamics,
  title        = {A Diagrammatic Approach to Quantum Dynamics},
  author       = {Gogioso, Stefano},
  booktitle    = {8th Conference on Algebra and Coalgebra in Computer Science (CALCO 2019)},
  year         = {2019},
  organization = {Schloss Dagstuhl-Leibniz-Zentrum fuer Informatik},
  doi          = {10.4230/LIPIcs.CALCO.2019.19}
}

@book{goldblatt1998hyperreals,
  author    = {Goldblatt, Robert},
  title     = {Lectures on the Hyperreals: An Introduction to Nonstandard Analysis},
  year      = {1998},
  publisher = {Springer}
}

@article{GottesmanKitaevPreskill2001,
  title     = {Encoding a qubit in an oscillator},
  volume    = {64},
  issn      = {1094-1622},
  doi       = {10.1103/physreva.64.012310},
  number    = {1},
  journal   = {Physical Review A},
  publisher = {American Physical Society (APS)},
  author    = {Gottesman, Daniel and Kitaev, Alexei and Preskill, John},
  year      = {2001},
  month     = jun
}

@article{grossHudsonTheoremFinitedimensional2006,
  title = {Hudson's Theorem for Finite-Dimensional Quantum Systems},
  author = {Gross, D.},
  date = {2006-12-01},
  journaltitle = {Journal of Mathematical Physics},
  volume = {47},
  number = {12},
  pages = {122107},
  issn = {0022-2488, 1089-7658},
  doi = {10.1063/1.2393152},
  keywords = {Quantum Physics}
}

@phdthesis{hadzihasanovicAlgebraEntanglementGeometry2017,
  title         = {The Algebra of Entanglement and the Geometry of Composition},
  author        = {Hadzihasanovic, Amar},
  year          = {2017},
  eprint        = {1709.08086},
  abstract      = {String diagrams turn algebraic equations into topological moves that have recurring shapes, involving the sliding of one diagram past another. We individuate, at the root of this fact, the dual nature of polygraphs as presentations of higher algebraic theories, and as combinatorial descriptions of "directed spaces". Operations of polygraphs modelled on operations of topological spaces are used as the foundation of a compositional universal algebra, where sliding moves arise from tensor products of polygraphs. We reconstruct several higher algebraic theories in this framework. In this regard, the standard formalism of polygraphs has some technical problems. We propose a notion of regular polygraph, barring cell boundaries that are not homeomorphic to a disk of the appropriate dimension. We define a category of non-degenerate shapes, and show how to calculate their tensor products. Then, we introduce a notion of weak unit to recover weakly degenerate boundaries in low dimensions, and prove that the existence of weak units is equivalent to a representability property. We then turn to applications of diagrammatic algebra to quantum theory. We re-evaluate the category of Hilbert spaces from the perspective of categorical universal algebra, which leads to a bicategorical refinement. Then, we focus on the axiomatics of fragments of quantum theory, and present the ZW calculus, the first complete diagrammatic axiomatisation of the theory of qubits. The ZW calculus has several advantages over ZX calculi, including a computationally meaningful normal form, and a fragment whose diagrams can be read as setups of fermionic oscillators. Moreover, its generators reflect an operational classification of entangled states of 3 qubits. We conclude with generalisations of the ZW calculus to higher-dimensional systems, including the definition of a universal set of generators in each dimension.},
  archiveprefix = {arxiv},
  langid        = {english},
  school        = {University of Oxford},
  keywords      = {Completeness,PhD Thesis,Rings,ZW-Calculus}
}

@article{Hamilton2017GBS,
  title     = {Gaussian Boson Sampling},
  author    = {Hamilton, Craig S. and Kruse, Regina and Sansoni, Linda and Barkhofen, Sonja and Silberhorn, Christine and Jex, Igor},
  journal   = {Phys. Rev. Lett.},
  volume    = {119},
  issue     = {17},
  pages     = {170501},
  numpages  = {5},
  year      = {2017},
  month     = {10},
  publisher = {American Physical Society},
  doi       = {10.1103/PhysRevLett.119.170501},
}

@misc{heurtel2024complete,
  title         = {A Complete Graphical Language for Linear Optical Circuits with Finite-Photon-Number Sources and Detectors},
  author        = {Nicolas Heurtel},
  year          = {2024},
  eprint        = {2402.17693},
  archiveprefix = {arXiv}
}

@article{huh2017gbsvibmolspec,
	title = {Vibronic {Boson} {Sampling}: {Generalized} {Gaussian} {Boson} {Sampling} for {Molecular} {Vibronic} {Spectra} at {Finite} {Temperature}},
	volume = {7},
	copyright = {2017 The Author(s)},
	issn = {2045-2322},
	shorttitle = {Vibronic {Boson} {Sampling}},
	doi = {10.1038/s41598-017-07770-z},
	abstract = {Molecular vibroic spectroscopy, where the transitions involve non-trivial Bosonic correlation due to the Duschinsky Rotation, is strongly believed to be in a similar complexity class as Boson Sampling. At finite temperature, the problem is represented as a Boson Sampling experiment with correlated Gaussian input states. This molecular problem with temperature effect is intimately related to the various versions of Boson Sampling sharing the similar computational complexity. Here we provide a full description to this relation in the context of Gaussian Boson Sampling. We find a hierarchical structure, which illustrates the relationship among various Boson Sampling schemes. Specifically, we show that every instance of Gaussian Boson Sampling with an initial correlation can be simulated by an instance of Gaussian Boson Sampling without initial correlation, with only a polynomial overhead. Since every Gaussian state is associated with a thermal state, our result implies that every sampling problem in molecular vibronic transitions, at any temperature, can be simulated by Gaussian Boson Sampling associated with a product of vacuum modes. We refer such a generalized Gaussian Boson Sampling motivated by the molecular sampling problem as Vibronic Boson Sampling.},
	language = {en},
	number = {1},
	journal = {Scientific Reports},
	author = {Huh, Joonsuk and Yung, Man-Hong},
	month = aug,
	year = {2017},
	note = {Publisher: Nature Publishing Group},
	keywords = {Atomic and molecular interactions with photons, Chemical physics, Quantum optics, Quantum simulation},
	pages = {7462},
	file = {Full Text PDF:/Users/razinshaikh/Zotero/storage/NLWLDD8C/Huh and Yung - 2017 - Vibronic Boson Sampling Generalized Gaussian Boso.pdf:application/pdf},
}

@inproceedings{huangGraphicalCSSCode2023,
  title     = {Graphical {{CSS}} Code Transformation Using {{ZX}} Calculus},
  booktitle = {Proceedings of the Twentieth International Conference on Quantum Physics and Logic},
  author    = {Huang, Jiaxin and Li, Sarah Meng and Yeh, Lia and Kissinger, Aleks and Mosca, Michele and Vasmer, Michael},
  year      = {2023},
  series    = {Electronic Proceedings in Theoretical Computer Science},
  volume    = {384},
  pages     = {1--19},
  publisher = {Open Publishing Association},
  doi       = {10.4204/EPTCS.384.1}
}

@misc{jeandel2018zxrationalangle,
      title={A Generic Normal Form for ZX-Diagrams and Application to the Rational Angle Completeness}, 
      author={Emmanuel Jeandel and Simon Perdrix and Renaud Vilmart},
      year={2018},
      eprint={1805.05296},
      archivePrefix={arXiv},
      primaryClass={quant-ph}
}

@inproceedings{jeandelCompleteAxiomatisationZXCalculus2018,
  title         = {A {{Complete Axiomatisation}} of the {{ZX-Calculus}} for {{Clifford}}+{{T Quantum Mechanics}}},
  booktitle     = {Proceedings of the 33rd {{Annual ACM}}/{{IEEE Symposium}} on {{Logic}} in {{Computer Science}}},
  author        = {Jeandel, Emmanuel and Perdrix, Simon and Vilmart, Renaud},
  year          = {2018},
  month         = jul,
  series        = {{{LICS}} '18},
  eprint        = {1705.11151},
  pages         = {559--568},
  publisher     = {Association for Computing Machinery},
  address       = {New York, NY, USA},
  doi           = {10.1145/3209108.3209131},
  abstract      = {We introduce the first complete and approximately universal diagrammatic language for quantum mechanics. We make the ZX-Calculus, a diagrammatic language introduced by Coecke and Duncan, complete for the so-called Clifford+T quantum mechanics by adding two new axioms to the language. The completeness of the ZX-Calculus for Clifford+T quantum mechanics -- also called the {$\pi$}/4-fragment of the ZX-Calculus -- was one of the main open questions in categorical quantum mechanics. We prove the completeness of this fragment using the recently studied ZW-Calculus, a calculus dealing with integer matrices. We also prove that the {$\pi$}/4-fragment of the ZX-Calculus represents exactly all the matrices over some finite dimensional extension of the ring of dyadic rationals.},
  archiveprefix = {arxiv},
  isbn          = {978-1-4503-5583-4},
  keywords      = {Categorical Quantum Mechanics,Clifford+T,Completeness,ZX-Calculus}
}

@article{killoran_strawberry_2019,
  title      = {Strawberry {Fields}: {A} {Software} {Platform} for {Photonic} {Quantum} {Computing}},
  volume     = {3},
  issn       = {2521-327X},
  shorttitle = {Strawberry {Fields}},
  doi        = {10.22331/q-2019-03-11-129},
  abstract   = {We introduce Strawberry Fields, an open-source quantum programming architecture for light-based quantum computers, and detail its key features. Built in Python, Strawberry Fields is a full-stack library for design, simulation, optimization, and quantum machine learning of continuous-variable circuits. The platform consists of three main components: (i) an API for quantum programming based on an easy-to-use language named Blackbird; (ii) a suite of three virtual quantum computer backends, built in NumPy and TensorFlow, each targeting specialized uses; and (iii) an engine which can compile Blackbird programs on various backends, including the three built-in simulators, and -- in the near future -- photonic quantum information processors. The library also contains examples of several paradigmatic algorithms, including teleportation, (Gaussian) boson sampling, instantaneous quantum polynomial, Hamiltonian simulation, and variational quantum circuit optimization.},
  language   = {en},
  journal    = {Quantum},
  author     = {Killoran, Nathan and Izaac, Josh and Quesada, Nicolás and Bergholm, Ville and Amy, Matthew and Weedbrook, Christian},
  month      = mar,
  year       = {2019},
  keywords   = {Quantum Physics, Physics - Computational Physics},
  pages      = {129},
  annote     = {Comment: Try the Strawberry Fields Interactive website, located at http://strawberryfields.ai . Source code available at https://github.com/XanaduAI/strawberryfields. Accepted in Quantum},
  file       = {Killoran et al. - 2019 - Strawberry Fields A Software Platform for Photoni.pdf:/Users/razinshaikh/Zotero/storage/YSG89MKB/Killoran et al. - 2019 - Strawberry Fields A Software Platform for Photoni.pdf:application/pdf}
}

@misc{kolarovszki_piquasso_2024,
  title      = {Piquasso: {A} {Photonic} {Quantum} {Computer} {Simulation} {Software} {Platform}},
  shorttitle = {Piquasso},
  doi        = {10.48550/arXiv.2403.04006},
  abstract   = {We introduce the Piquasso quantum programming framework, a full-stack open-source software platform for the simulation and programming of photonic quantum computers. Piquasso can be programmed via a high-level Python programming interface enabling users to perform efficient quantum computing with discrete and continuous variables. Via optional high-performance C++ backends, Piquasso provides state-of-the-art performance in the simulation of photonic quantum computers. The Piquasso framework is supported by an intuitive web-based graphical user interface where the users can design quantum circuits, run computations, and visualize the results.},
  author     = {Kolarovszki, Zoltán and Rybotycki, Tomasz and Rakyta, Péter and Kaposi, Ágoston and Poór, Boldizsár and Jóczik, Szabolcs and Nagy, Dániel T. R. and Varga, Henrik and El-Safty, Kareem H. and Morse, Gregory and Oszmaniec, Michał and Kozsik, Tamás and Zimborás, Zoltán},
  month      = mar,
  year       = {2024},
  keywords   = {Quantum Physics},
}

@article{Kruse2019gbs2,
   title={Detailed study of Gaussian boson sampling},
   volume={100},
   ISSN={2469-9934},
   DOI={10.1103/physreva.100.032326},
   number={3},
   journal={Physical Review A},
   publisher={American Physical Society (APS)},
   author={Kruse, Regina and Hamilton, Craig S. and Sansoni, Linda and Barkhofen, Sonja and Silberhorn, Christine and Jex, Igor},
   year={2019},
   month={09}
}

@article{loydContinuousVariables1999,
  title     = {Quantum Computation over Continuous Variables},
  volume    = {82},
  issn      = {1079-7114},
  doi       = {10.1103/physrevlett.82.1784},
  number    = {8},
  journal   = {Physical Review Letters},
  publisher = {American Physical Society (APS)},
  author    = {Lloyd, Seth and Braunstein, Samuel L.},
  year      = {1999},
  month     = feb,
  pages     = {1784--1787}
}

@article{madsen2022qadvgbs,
	title = {Quantum computational advantage with a programmable photonic processor},
	volume = {606},
	copyright = {2022 The Author(s)},
	doi = {10.1038/s41586-022-04725-x},
	abstract = {A quantum computer attains computational advantage when outperforming the best classical computers running the best-known algorithms on well-defined tasks. No photonic machine offering programmability over all its quantum gates has demonstrated quantum computational advantage: previous machines1,2 were largely restricted to static gate sequences. Earlier photonic demonstrations were also vulnerable to spoofing3, in which classical heuristics produce samples, without direct simulation, lying closer to the ideal distribution than do samples from the quantum hardware. Here we report quantum computational advantage using Borealis, a photonic processor offering dynamic programmability on all gates implemented. We carry out Gaussian boson sampling4 (GBS) on 216 squeezed modes entangled with three-dimensional connectivity5, using a time-multiplexed and photon-number-resolving architecture. On average, it would take more than 9,000 years for the best available algorithms and supercomputers to produce, using exact methods, a single sample from the programmed distribution, whereas Borealis requires only 36 μs. This runtime advantage is over 50 million times as extreme as that reported from earlier photonic machines. Ours constitutes a very large GBS experiment, registering events with up to 219 photons and a mean photon number of 125. This work is a critical milestone on the path to a practical quantum computer, validating key technological features of photonics as a platform for this goal.},
	language = {en},
	number = {7912},
	journal = {Nature},
	author = {Madsen, Lars S. and Laudenbach, Fabian and Askarani, Mohsen Falamarzi and Rortais, Fabien and Vincent, Trevor and Bulmer, Jacob F. F. and Miatto, Filippo M. and Neuhaus, Leonhard and Helt, Lukas G. and Collins, Matthew J. and Lita, Adriana E. and Gerrits, Thomas and Nam, Sae Woo and Vaidya, Varun D. and Menotti, Matteo and Dhand, Ish and Vernon, Zachary and Quesada, Nicolás and Lavoie, Jonathan},
	month = jun,
	year = {2022},
	keywords = {Information theory and computation, Quantum information, Quantum optics, Quantum simulation, Single photons and quantum effects},
	pages = {75--81}
}

@inproceedings{mcelvanneyFlowpreservingZXcalculusRewrite2023,
  title     = {Flow-Preserving {{ZX-calculus}} Rewrite Rules for Optimisation and Obfuscation},
  booktitle = {Proceedings of the Twentieth International Conference on Quantum Physics and Logic, Paris, France, 17-21st July 2023},
  author    = {McElvanney, Tommy and Backens, Miriam},
  year      = {2023},
  series    = {Electronic Proceedings in Theoretical Computer Science},
  volume    = {384},
  pages     = {203--219},
  publisher = {Open Publishing Association},
  doi       = {10.4204/EPTCS.384.12}
}

@article{Menicucci2014ftmbqccv,
  title     = {Fault-Tolerant Measurement-Based Quantum Computing with Continuous-Variable Cluster States},
  author    = {Menicucci, Nicolas C.},
  journal   = {Phys. Rev. Lett.},
  volume    = {112},
  issue     = {12},
  pages     = {120504},
  numpages  = {5},
  year      = {2014},
  month     = {03},
  publisher = {American Physical Society},
  doi       = {10.1103/PhysRevLett.112.120504},
}

@article{MichaelBinomial2016,
  title     = {New Class of Quantum Error-Correcting Codes for a Bosonic Mode},
  volume    = {6},
  issn      = {2160-3308},
  doi       = {10.1103/physrevx.6.031006},
  number    = {3},
  journal   = {Physical Review X},
  publisher = {American Physical Society (APS)},
  author    = {Michael, Marios H. and Silveri, Matti and Brierley, R. T. and Albert, Victor V. and Salmilehto, Juha and Jiang, Liang and Girvin, S. M.},
  year      = {2016},
  month     = jul
}

@online{nagayoshiZXGraphicalCalculus2024,
  title = {{{ZX Graphical Calculus}} for {{Continuous-Variable Quantum Processes}}},
  author = {Nagayoshi, Hironari and Asavanant, Warit and Ide, Ryuhoh and Fukui, Kosuke and Sakaguchi, Atsushi and Yoshikawa, Jun-ichi and Menicucci, Nicolas C. and Furusawa, Akira},
  date = {2024-05-16},
  eprint = {2405.07246},
  eprinttype = {arxiv},
  pubstate = {preprint},
  keywords = {Quantum Physics}
}

@misc{ngUniversalCompletionZXcalculus2017,
  title         = {A Universal Completion of the {{ZX-calculus}}},
  author        = {Ng, Kang Feng and Wang, Quanlong},
  year          = {2017},
  month         = jun,
  eprint        = {1706.09877},
  abstract      = {In this paper, we give a universal completion of the ZX-calculus for the whole of pure qubit quantum mechanics. This proof is based on the completeness of another graphical language: the ZW-calculus, with direct translations between these two graphical systems.},
  archiveprefix = {arxiv},
  keywords      = {Computer Science - Logic in Computer Science,Mathematics - Category Theory,Quantum Physics}
}

@book{nielsen1918hermite,
	series = {Mathematisk-fysiske meddelelser},
	title = {Recherches sur les polynomes d'{Hermite}},
	url = {https://books.google.co.uk/books?id=vkcozwEACAAJ},
	publisher = {Kongelige Danske videnskabernes selskab},
	author = {Nielsen, N. and selskab, Kongelige Danske videnskabernes},
	year = {1918},
}

@inproceedings{poorCompletenessArbitraryFinite2023,
  title         = {Completeness for Arbitrary Finite Dimensions of {{ZXW-calculus}}, a Unifying Calculus},
  booktitle     = {2023 38th {{Annual ACM}}/{{IEEE Symposium}} on {{Logic}} in {{Computer Science}} ({{LICS}})},
  author        = {Po{\'o}r, Boldizs{\'a}r and Wang, Quanlong and Shaikh, Razin A. and Yeh, Lia and Yeung, Richie and Coecke, Bob},
  year          = {2023},
  month         = jun,
  eprint        = {2302.12135},
  pages         = {1--14},
  address       = {Boston, MA, USA},
  doi           = {10.1109/LICS56636.2023.10175672},
  abstract      = {The ZX-calculus is a universal graphical language for qubit quantum computation, meaning that every linear map between qubits can be expressed in the ZX-calculus. Furthermore, it is a complete graphical rewrite system: any equation involving linear maps that is derivable in the Hilbert space formalism for quantum theory can also be derived in the calculus by rewriting. It has widespread usage within quantum industry and academia for a variety of tasks such as quantum circuit optimisation, error-correction, and education.The ZW-calculus is an alternative universal graphical language that is also complete for qubit quantum computing. In fact, its completeness was used to prove that the ZX-calculus is universally complete. This calculus has advanced how quantum circuits are compiled into photonic hardware architectures in the industry.Recently, by combining these two calculi, a new calculus has emerged for qubit quantum computation, the ZXW-calculus. Using this calculus, graphical-differentiation, -integration, and -exponentiation were made possible, thus enabling the development of novel techniques in the domains of quantum machine learning and quantum chemistry.Here, we generalise the ZXW-calculus to arbitrary finite dimensions, that is, to qudits. Moreover, we prove that this graphical rewrite system is complete for any finite dimension. This is the first completeness result for any universal graphical language beyond qubits.},
  archiveprefix = {arxiv},
  keywords      = {Calculus,completeness,higher-dimension,Hilbert space,Industries,Machine learning,Quantum circuit,Quantum mechanics,Quantum Physics,Qubit,qudit,universal completeness,ZW-calculus,ZX-calculus,ZXW-calculus}
}

@inproceedings{poorQupitStabiliserZXtravaganza2023,
  title      = {The {{Qupit Stabiliser ZX-travaganza}}: {{Simplified Axioms}}, {{Normal Forms}} and {{Graph-Theoretic Simplification}}},
  shorttitle = {The {{Qupit Stabiliser ZX-travaganza}}},
  booktitle  = {Proceedings of the Twentieth International Conference on Quantum Physics and Logic, Paris, France, 17-21st July 2023},
  author     = {Po{\'o}r, Boldizs{\'a}r and Booth, Robert I. and Carette, Titouan and {van de Wetering}, John and Yeh, Lia},
  year       = {2023},
  series     = {Electronic Proceedings in Theoretical Computer Science},
  volume     = {384},
  pages      = {220--264},
  publisher  = {Open Publishing Association},
  doi        = {10.4204/EPTCS.384.13},
  abstract   = {We present a smorgasbord of results on the stabiliser ZX-calculus for odd prime-dimensional qudits (i.e. qupits). We derive a simplified rule set that closely resembles the original rules of qubit ZX-calculus. Using these rules, we demonstrate analogues of the spider-removing local complementation and pivoting rules. This allows for efficient reduction of diagrams to the affine with phases normal form. We also demonstrate a reduction to a unique form, providing an alternative and simpler proof of completeness. Furthermore, we introduce a different reduction to the graph state with local Cliffords normal form, which leads to a novel layered decomposition for qupit Clifford unitaries. Additionally, we propose a new approach to handle scalars formally, closely reflecting their practical usage. Finally, we have implemented many of these findings in DiZX, a new open-source Python library for qudit ZX-diagrammatic reasoning.},
  keywords   = {Quantum Physics}
}

@misc{poorZXcalculusCompleteFiniteDimensional2024,
  title         = {{{ZX-calculus}} Is {{Complete}} for {{Finite-Dimensional Hilbert Spaces}}},
  author        = {Po{\'o}r, Boldizs{\'a}r and Shaikh, Razin A. and Wang, Quanlong},
  year          = {2024},
  month         = may,
  number        = {arXiv:2405.10896},
  eprint        = {2405.10896},
  abstract      = {The ZX-calculus is a graphical language for reasoning about quantum computing and quantum information theory. As a complete graphical language, it incorporates a set of axioms rich enough to derive any equation of the underlying formalism. While completeness of the ZX-calculus has been established for qubits and the Clifford fragment of prime-dimensional qudits, universal completeness beyond two-level systems has remained unproven until now. In this paper, we present a proof establishing the completeness of finite-dimensional ZX-calculus, incorporating only the mixed-dimensional Z-spider and the qudit X-spider as generators. Our approach builds on the completeness of another graphical language, the finite-dimensional ZW-calculus, with direct translations between these two calculi. By proving its completeness, we lay a solid foundation for the ZX-calculus as a versatile tool not only for quantum computation but also for various fields within finite-dimensional quantum theory.},
  archiveprefix = {arxiv},
  keywords      = {Quantum Physics}
}

@article{RalphCat2003,
  title     = {Quantum computation with optical coherent states},
  volume    = {68},
  issn      = {1094-1622},
  doi       = {10.1103/physreva.68.042319},
  number    = {4},
  journal   = {Physical Review A},
  publisher = {American Physical Society (APS)},
  author    = {Ralph, T. C. and Gilchrist, A. and Milburn, G. J. and Munro, W. J. and Glancy, S.},
  year      = {2003},
  month     = oct
}

@article{reck1994LOunitary,
  title={Experimental realization of any discrete unitary operator},
  author={Reck, Michael and Zeilinger, Anton and Bernstein, Herbert J and Bertani, Philip},
  journal={Physical review letters},
  volume={73},
  number={1},
  pages={58},
  year={1994},
  publisher={APS}
}

@book{robinson1974nonstandard,
  title     = {Non-standard analysis},
  author    = {Robinson, Abraham},
  year      = {1974},
  publisher = {Princeton University Press}
}

@inproceedings{royQuditZHCalculusGeneralised2023,
  title      = {The {{Qudit ZH-Calculus}}: {{Generalised Toffoli}}+{{Hadamard}} and {{Universality}}},
  shorttitle = {The {{Qudit ZH-Calculus}}},
  booktitle  = {Proceedings of the Twentieth International Conference on Quantum Physics and Logic, Paris, France, 17-21st July 2023},
  author     = {Roy, Patrick and {van de Wetering}, John and Yeh, Lia},
  year       = {2023},
  series     = {Electronic Proceedings in Theoretical Computer Science},
  volume     = {384},
  pages      = {142--170},
  publisher  = {Open Publishing Association},
  doi        = {10.4204/EPTCS.384.9},
  abstract   = {We introduce the qudit ZH-calculus and show how to generalise all the phase-free qubit rules to qudits. We prove that for prime dimensions d, the phase-free qudit ZH-calculus is universal for matrices over the ring Z[e\^{}2(pi)i/d]. For qubits, there is a strong connection between phase-free ZH-diagrams and Toffoli+Hadamard circuits, a computationally universal fragment of quantum circuits. We generalise this connection to qudits, by finding that the two-qudit {\textbar}0{$>$}-controlled X gate can be used to construct all classical reversible qudit logic circuits in any odd qudit dimension, which for qubits requires the three-qubit Toffoli gate. We prove that our construction is asymptotically optimal up to a logarithmic term. Twenty years after the celebrated result by Shi proving universality of Toffoli+Hadamard for qubits, we prove that circuits of {\textbar}0{$>$}-controlled X and Hadamard gates are approximately universal for qudit quantum computing for any odd prime d, and moreover that phase-free ZH-diagrams correspond precisely to such circuits allowing post-selections.},
  keywords   = {Completeness,Computer Science - Logic in Computer Science,Quantum Physics,Qudits,Toffoli,ZH-calculus}
}

@misc{shaikhCategoricalSemanticsFeynman2022,
  title         = {Categorical {{Semantics}} for {{Feynman Diagrams}}},
  author        = {Shaikh, Razin A. and Gogioso, Stefano},
  year          = {2022},
  month         = may,
  eprint        = {2205.00466},
  abstract      = {We introduce a novel compositional description of Feynman diagrams, with well-defined categorical semantics as morphisms in a dagger-compact category. Our chosen setting is suitable for infinite-dimensional diagrammatic reasoning, generalising the ZX calculus and other algebraic gadgets familiar to the categorical quantum theory community. The Feynman diagrams we define look very similar to their traditional counterparts, but are more general: instead of depicting scattering amplitude, they embody the linear maps from which the amplitudes themselves are computed, for any given initial and final particle states. This shift in perspective reflects into a formal transition from the syntactic, graph-theoretic compositionality of traditional Feynman diagrams to a semantic, categorical-diagrammatic compositionality. Because we work in a concrete categorical setting -- powered by non-standard analysis -- we are able to take direct advantage of complex additive structure in our description. This makes it possible to derive a particularly compelling characterisation for the sequential composition of categorical Feynman diagrams, which automatically results in the superposition of all possible graph-theoretic combinations of the individual diagrams themselves.},
  archiveprefix = {arxiv},
  keywords      = {Computer Science - Logic in Computer Science,Mathematics - Category Theory,Quantum Physics}
}

@article{shaikhHowSumExponentiate2022,
  title         = {How to Sum and Exponentiate {{Hamiltonians}} in {{ZXW}} Calculus},
  author        = {Shaikh, Razin A. and Wang, Quanlong and Yeung, Richie},
  archiveprefix = {arxiv},
  eprint        = {2212.04462},
  volume        = {394},
  issn          = {2075-2180},
  doi           = {10.4204/eptcs.394.14},
  journal       = {Electronic Proceedings in Theoretical Computer Science},
  year          = {2023},
  month         = nov,
  pages         = {236–261},
  abstract      = {This paper develops practical summation techniques in ZXW calculus to reason about quantum dynamics, such as unitary time evolution. First we give a direct representation of a wide class of sums of linear operators, including arbitrary qubits Hamiltonians, in ZXW calculus. As an application, we demonstrate the linearity of the Schr{\textbackslash}"odinger equation and give a diagrammatic representation of the Hamiltonian in Greene-Diniz et al (Gabriel, 2022), which is the first paper that models carbon capture using quantum computing. We then use the Cayley-Hamilton theorem to show in principle how to exponentiate arbitrary qubits Hamiltonians in ZXW calculus. Finally, we develop practical techniques and show how to do Taylor expansion and Trotterization diagrammatically for Hamiltonian simulation. This sets up the framework for using ZXW calculus to the problems in quantum chemistry and condensed matter physics.},
  keywords      = {Quantum Physics}
}

@article{sivak2023gkpbreakeven,
  title   = {Real-time quantum error correction beyond break-even},
  volume  = {616},
  issn    = {1476-4687},
  doi     = {10.1038/s41586-023-05782-6},
  number  = {7955},
  journal = {Nature},
  author  = {Sivak, V. V. and Eickbusch, A. and Royer, B. and Singh, S. and Tsioutsios, I. and Ganjam, S. and Miano, A. and Brock, B. L. and Ding, A. Z. and Frunzio, L. and Girvin, S. M. and Schoelkopf, R. J. and Devoret, M. H.},
  month   = apr,
  year    = {2023},
  pages   = {50--55}
}

@misc{sutcliffeProcedurallyOptimisedZXDiagram2024,
  title = {Procedurally {{Optimised ZX-Diagram Cutting}} for {{Efficient T-Decomposition}} in {{Classical Simulation}}},
  author = {Sutcliffe, Matthew and Kissinger, Aleks},
  year = {2024},
  month = mar,
  eprint = {2403.10964},
  publisher = {arXiv},
  abstract = {A quantum circuit may be strongly classically simulated with the aid of ZX-calculus by decomposing its \$t\$ T-gates into a sum of \$2{\textasciicircum}\{{\textbackslash}alpha t\}\$ classically computable stabiliser terms. In this paper, we introduce a general procedure to find an optimal pattern of vertex cuts in a ZX-diagram to maximise its T-count reduction at the cost of the fewest cuts. Rather than reducing a Clifford+T diagram based on a fixed routine of decomposing its T-gates directly (as is the conventional approach), we focus instead on taking advantage of certain patterns and structures common to such circuits to, in effect, design by automatic procedure an arrangement of spider decompositions that is optimised for the particular circuit. In short, this works by assigning weights to vertices based on how many T-like gates they are blocking from fusing/cancelling and then appropriately propagating these weights through any neighbours which are then blocking weighted vertices from fusing, and so on. Ultimately, this then provides a set of weightings on relevant nodes, which can then each be cut, starting from the highest weighted down. While this is a heuristic approach, we show that, for circuits small enough to verify, this method achieves the most optimal set of cuts possible \$71{\textbackslash}\%\$ of the time. Furthermore, there is no upper bound for the efficiency achieved by this method, allowing, in principle, an effective decomposition efficiency \${\textbackslash}alpha{\textbackslash}rightarrow0\$ for highly structured circuits. Even applied to random pseudo-structured circuits (produced from CNOTs, phase gates, and Toffolis), we record the number of stabiliser terms required to reduce all T-gates, via our method as compared to that of the more conventional T-decomposition approaches (with \${\textbackslash}alpha{\textbackslash}approx0.47\$), and show consistent improvements of orders of magnitude, with an effective efficiency \$0.1{\textbackslash}lesssim{\textbackslash}alpha{\textbackslash}lesssim0.2\$.},
  archiveprefix = {arxiv},
  keywords = {Quantum Physics},
}

@article{Tasca2011cvqcspatialdof,
   title={Continuous-variable quantum computation with spatial degrees of freedom of photons},
   volume={83},
   ISSN={1094-1622},
   DOI={10.1103/physreva.83.052325},
   number={5},
   journal={Physical Review A},
   publisher={American Physical Society (APS)},
   author={Tasca, D. S. and Gomes, R. M. and Toscano, F. and Souto Ribeiro, P. H. and Walborn, S. P.},
   year={2011},
   month=may
}

@article{terhal2015qecqmems,
  author  = {Terhal, Barbara M.},
  doi     = {10.1103/RevModPhys.87.307},
  journal = {Reviews of Modern Physics},
  month   = apr,
  note    = {Publisher: American Physical Society},
  number  = {2},
  pages   = {307--346},
  title   = {Quantum error correction for quantum memories},
  volume  = {87},
  year    = {2015}
}

@inproceedings{townsend-teagueFloquetifyingColourCode2023,
  title     = {Floquetifying the Colour Code},
  booktitle = {Proceedings of the Twentieth International Conference on Quantum Physics and Logic},
  author    = {{Townsend-Teague}, Alex and {Magdalena de la Fuente}, Julio and Kesselring, Markus},
  editor    = {Mansfield, Shane and Val{\^i}ron, Benoit and Zamdzhiev, Vladimir},
  year      = {2023},
  series    = {Electronic Proceedings in Theoretical Computer Science},
  volume    = {384},
  pages     = {265--303},
  publisher = {Open Publishing Association},
  doi       = {10.4204/EPTCS.384.14}
}

@misc{vandeweteringOptimalCompilationParametrised2024,
  title         = {Optimal Compilation of Parametrised Quantum Circuits},
  author        = {{van de Wetering}, John and Yeung, Richie and Laakkonen, Tuomas and Kissinger, Aleks},
  year          = {2024},
  month         = jan,
  eprint        = {2401.12877},
  abstract      = {Parametrised quantum circuits contain phase gates whose phase is determined by a classical algorithm prior to running the circuit on a quantum device. Such circuits are used in variational algorithms like QAOA and VQE. In order for these algorithms to be as efficient as possible it is important that we use the fewest number of parameters. We show that, while the general problem of minimising the number of parameters is NP-hard, when we restrict to circuits that are Clifford apart from parametrised phase gates and where each parameter is used just once, we can efficiently find the optimal parameter count. We show that when parameter transformations are required to be sufficiently well-behaved that the only rewrites that reduce parameters correspond to simple 'fusions'. Using this we find that a previous circuit optimisation strategy by some of the authors [Kissinger, van de Wetering. PRA (2019)] finds the optimal number of parameters. Our proof uses the ZX-calculus. We also prove that the standard rewrite rules of the ZX-calculus suffice to prove any equality between parametrised Clifford circuits.},
  archiveprefix = {arxiv},
  keywords      = {Quantum Physics}
}

@misc{wangCompletenessQufiniteZXW2024,
  title         = {Completeness of Qufinite {{ZXW}} Calculus, a Graphical Language for Finite-Dimensional Quantum Theory},
  author        = {Wang, Quanlong and Po{\'o}r, Boldizs{\'a}r and Shaikh, Razin A.},
  year          = {2024},
  month         = jan,
  eprint        = {2309.13014},
  abstract      = {Finite-dimensional quantum theory serves as the theoretical foundation for quantum information and computation. Mathematically, it is formalized in the category FHilb, comprising all finite-dimensional Hilbert spaces and linear maps between them. However, there has not been a graphical language for FHilb which is both universal and complete and thus incorporates a set of rules rich enough to derive any equality of the underlying formalism solely by rewriting. In this paper, we introduce the qufinite ZXW calculus - a graphical language for reasoning about finite-dimensional quantum theory. We set up a unique normal form to represent an arbitrary tensor and prove the completeness of this calculus by demonstrating that any qufinite ZXW diagram can be rewritten into its normal form. This result implies the equivalence of the qufinite ZXW calculus and the category FHilb, leading to a purely diagrammatic framework for finite-dimensional quantum theory with the same reasoning power. In addition, we identify several domains where the application of the qufinite ZXW calculus holds promise. These domains include spin networks, interacting mixed-dimensional systems in quantum chemistry, quantum programming, high-level description of quantum algorithms, and mixed-dimensional quantum computing. Our work paves the way for a comprehensive diagrammatic description of quantum physics, opening the doors of this area to the wider public.},
  archiveprefix = {arxiv},
  copyright     = {All rights reserved},
  keywords      = {Quantum Physics}
}

@misc{wangDifferentiatingIntegratingZX2022,
  title         = {Differentiating and {{Integrating ZX Diagrams}} with {{Applications}} to {{Quantum Machine Learning}}},
  author        = {Wang, Quanlong and Yeung, Richie and Koch, Mark},
  year          = {2022},
  month         = nov,
  eprint        = {2201.13250},
  abstract      = {ZX-calculus has proved to be a useful tool for quantum technology with a wide range of successful applications. Most of these applications are of an algebraic nature. However, other tasks that involve differentiation and integration remain unreachable with current ZX techniques. Here we elevate ZX to an analytical perspective by realising differentiation and integration entirely within the framework of ZX-calculus. We explicitly illustrate the new analytic framework of ZX-calculus by applying it in context of quantum machine learning for the analysis of barren plateaus.},
  archiveprefix = {arxiv},
  keywords      = {Computer Science - Machine Learning,Quantum Physics}
}

@article{zhong2020gbsxpmt,
	title = {Quantum computational advantage using photons},
	volume = {370},
	doi = {10.1126/science.abe8770},
	abstract = {Quantum computers promise to perform certain tasks that are believed to be intractable to classical computers. Boson sampling is such a task and is considered a strong candidate to demonstrate the quantum computational advantage. We performed Gaussian boson sampling by sending 50 indistinguishable single-mode squeezed states into a 100-mode ultralow-loss interferometer with full connectivity and random matrix—the whole optical setup is phase-locked—and sampling the output using 100 high-efficiency single-photon detectors. The obtained samples were validated against plausible hypotheses exploiting thermal states, distinguishable photons, and uniform distribution. The photonic quantum computer, Jiuzhang, generates up to 76 output photon clicks, which yields an output state-space dimension of 1030 and a sampling rate that is faster than using the state-of-the-art simulation strategy and supercomputers by a factor of {\textasciitilde}1014.},
	number = {6523},
	journal = {Science},
	author = {Zhong, Han-Sen and Wang, Hui and Deng, Yu-Hao and Chen, Ming-Cheng and Peng, Li-Chao and Luo, Yi-Han and Qin, Jian and Wu, Dian and Ding, Xing and Hu, Yi and Hu, Peng and Yang, Xiao-Yan and Zhang, Wei-Jun and Li, Hao and Li, Yuxuan and Jiang, Xiao and Gan, Lin and Yang, Guangwen and You, Lixing and Wang, Zhen and Li, Li and Liu, Nai-Le and Lu, Chao-Yang and Pan, Jian-Wei},
	month = dec,
	year = {2020},
	pages = {1460--1463},
}

\appendix

\section{Non-standard models of the CV ZX-calculus}
\label{sec:non-standard}

In this Section, we show how non-standard analysis can be used to construct a model for a fragment of our calculus, based on previous work by two of the authors \cite{Gogioso2017InfiniteCQM,Gogioso2018TowardsQFT,Gogioso2019QFTinCQM,Gogioso2019Dynamics,shaikhCategoricalSemanticsFeynman2022}.
For the full technical details of non-standard analysis and its applications to categorical modelling of quantum mechanics, we refer the reader to the seminal book by Abraham Robinson \cite{robinson1974nonstandard}, to the more modern notes by Robert Goldblatt \cite{goldblatt1998hyperreals}, to Section 3.5 in the DPhil thesis by one of the authors \cite{Gogioso2017Thesis}, or to the introductory notes on this topic by Fabrizio Genovese \cite{genovese2017infinitesimal}.
Work on providing an exact interpretation for Fock spiders and associated rules is still ongoing, but we sketch some of our initial results in Subsection \ref{appendix:subsec-fock-basis}.

\subsection{The Dagger-compact Category \texorpdfstring{$\starHilbCategory$}{*fHilb}}

We work in the category $\starHilbCategory$ of hyperfinite-dimensional non-standard complex Hilbert spaces and non-standard linear maps between them, where by hyperfinite-dimensional we mean a space isomorphic to $\starComplexs^n$ for a non-standard natural number $n \in \starNaturals$.
More precisely:
\begin{itemize}
    \item The objects of $\fHilbCategory$ take the form $\starComplexs^X$ for some hyperfinite set $X$, i.e. one with cardinality $n \in \starNaturals$. Here, $\starComplexs^X$ is the $\starComplexs$-vector space formed by non-standard functions $X \rightarrow \starComplexs$ under pointwise operations.
    \item The morphisms $\starComplexs^X \rightarrow \starComplexs^Y$ in $\fHilbCategory$ are the non-standard functions $\starComplexs^X \rightarrow \starComplexs^Y$ which are $\starComplexs$-linear.
    \item Sequential composition is function composition, and the identity morphisms are the identity functions.
    \item The category is enriched, with hyperfinite-dimensional $\starComplexs$-linear structure on all sets of morphisms $\starComplexs^X \rightarrow \starComplexs^Y$ between fixed object pairs. Allowed sums include all finite sums, as well as all infinite sums which can be lifted from standard finite-dimensional analogues by the Transfer Theorem.
\end{itemize}
The category $\starHilbCategory$ is the non-standard counterpart of the category $\fHilbCategory$ of finite dimensional complex Hilbert spaces and linear maps, and many properties of $\fHilbCategory$ extend to $\starHilbCategory$ by simple invocations of the Transfer Theorem.

For starters, we can use braket notation for states and morphisms in $\starHilbCategory$.
We can use the Kronecker delta functions in $\starComplexs^X$ to define ``kets'':
\footnote{
    Physically, states are identified with (normalised) vectors in $\starComplexs^X$ (up to global phase), and scalars are complex numbers in $\starComplexs$.
    Categorically, on the other hand, states are identified with the corresponding ``rays'', the linear functions $\starComplexs \rightarrow \starComplexs^X$, and scalars are linear functions $\starComplexs \rightarrow \starComplexs$.
    We will freely confuse between the two pictures, as is commonplace in the categorical quantum mechanics literature, and use the ket notation $|x\rangle$ for both the vector defined above and the associated ray.
}
\begin{equation}
    |x\rangle := y \mapsto \delta_{xy}
\end{equation}
Using kets and Kronecker delta functions, we can define ``bras'', as linear functionals $\starComplexs^X \rightarrow \starComplexs$:
\begin{equation}
    \langle y| := |x\rangle \mapsto \delta_{xy}
\end{equation}
We can then adopt the usual short-hands for function composition:
\begin{align}
    \langle y | x\rangle   & := \langle y | \circ | x\rangle = \delta_{xy} \in \starComplexs              \\
    |x \rangle \langle y | & := |x \rangle \circ \langle y |: \starComplexs^Y \rightarrow \starComplexs^X
\end{align}
Using the $\starComplexs$-linear structure, can express arbitrary morphisms $f: \starComplexs^X \rightarrow \starComplexs^Y$ in matrix form, for arbitrary coefficients $(f_{yx})_{x \in X, y \in Y}$:
\begin{equation}
    f = \sum_{y \in Y} \sum_{x \in X} f_{yx} |y \rangle \langle x |
\end{equation}
It is then straightforward to show that $\starHilbCategory$ is a dagger category, defining adjunction to be conjugate transposition:
\footnote{
    The complex conjugation operation is denoted by an asterisk $^*$ following the symbol to be conjugated, as in $f_{yx}^*$.
    This is not to be confused with the star $^\star$ denoting non-standard counterparts of various mathematical objects, which instead precedes the symbol to which it applies, as in $\starNaturals$ and $\starComplexs$.
}
\begin{equation}
    f^\dagger := \sum_{x \in X} \sum_{y \in Y} f_{yx}^* |x \rangle \langle y |
\end{equation}
In particular, we have $\left(|x\rangle\right)^\dagger = \langle x |$ and $\left(\langle x|\right)^\dagger = | x \rangle$ for all $x \in X$.

Analogously to its standard counterpart $\fHilbCategory$, the category $\starHilbCategory$ is dagger compact, and hence provides sound semantics for diagrammatic reasoning.
The tensor product on spaces is defined by Cartesian product of the indexing sets:
\begin{equation}
    \starComplexs^X \otimes \starComplexs^Y := \starComplexs^{X \times Y}
\end{equation}
Tensor product $f \otimes g:  \starComplexs^{X \times Y} \rightarrow \starComplexs^{Z \times W}$ of morphisms $f:  \starComplexs^{X} \rightarrow \starComplexs^{Z}$ and $g:  \starComplexs^{Y} \rightarrow \starComplexs^{W}$ is defined by Kronecker product of the corresponding matrix representations:
\begin{equation}
    f \otimes g =
    \sum_{z \in Z} \sum_{w \in W} \sum_{x \in X} \sum_{y \in Y}
    f_{zx}g_{wy}|z,w\rangle \langle x,y|
\end{equation}
where we introduced the shorthands $|z,w\rangle := |z\rangle \otimes |w\rangle$ and $\langle x,y| := \langle x| \otimes \langle y|$ for tensor products of bras and kets.
Finally, dagger compact structure with self-dual objects---the cups $\eta_{X}$ and caps $\varepsilon_{X}$ used by diagrammatic reasoning to bend wires freely---can be defined on the canonical basis using bras and kets:
\begin{align}
    \eta_{X}        & := \sum_{x \in X} |x,x\rangle: \starComplexs \rightarrow \starComplexs^{X} \otimes \starComplexs^{X}  \\
    \varepsilon_{X} & := \sum_{x \in X} \langle x,x|: \starComplexs^{X} \otimes \starComplexs^{X} \rightarrow \starComplexs
\end{align}
The snake equations follow as in the standard case (technically, by Transfer Theorem).
In particular, the category $\starHilbCategory$ is traced.
Given a morphism $f:  \starComplexs^{X \times Z} \rightarrow \starComplexs^{Y \times Z}$, the partial trace $\operatorname{Tr}_{Z}(f):  \starComplexs^{X} \rightarrow \starComplexs^{Y}$ can be defined as:
\begin{align}
    \operatorname{Tr}_{Z}(f) & :=
    \left(\operatorname{id}_Y \otimes \varepsilon_{Z} \right)
    \circ
    \left(f \otimes \operatorname{id}_{Z} \right)
    \circ
    \left(\operatorname{id}_X \otimes \eta_{Z} \right) \\
                             & =
    \sum_{y \in Y} \sum_{x \in X} \left(\sum_{z \in Z} f_{yzxz}\right) |y \rangle \langle x |
\end{align}
Its dagger compact structure makes the category $\starHilbCategory$ a sound environment for the definition of models of diagrammatic theories: diagrams can be composed in sequence and in parallel, with bends and loops.
In order for calculations to be meaningful in the standard world, however, we have to relate $\starHilbCategory$ back to the category $\HilbCategory$ of standard Hilbert spaces and bounded linear maps (or reasonable extensions thereof ordinarily used in quantum physics).
Different constructions are available for different kinds of standard Hilbert spaces, and the construction relevant to this work is discussed over the coming Subsections.

\subsection{Spaces of Square-integrable Functions}\label{sec:non-standard-lattice}

In quantum mechanics and continuous-variable quantum computing, we are interested in spaces of square-integrable functions for some measurable space.
Here, we will focus on the most common case of wavefunctions on the real line $L^2\left[\reals\right]$.
The same construction straightforwardly extends \cite{Gogioso2019Dynamics} to wavefunctions in real vector spaces $L^2\left[\reals^n\right]$, wavefunctions on discrete lattices $L^2\left[\integers^n\right]$ and wavefunctions with periodic boundary conditions $L^2\left[\prod_{j=1}^n \reals/L_j\integers\right]$, but those examples are not directly relevant to this work.

We move to the non-standard version $\starReals$ of the Abelian Lie group $\reals$ and consider some hyperfinite subset $\LatticeR \subseteq \starReals$ which approximates the standard space $\reals$ up to infinitesimals, in a sense which we now make precise.
Thanks to the group structure, we can define an equivalence relation $\simeq$ of ``infinitesimal closeness'' on the points of $\starReals$:
\begin{equation}
    x \simeq y
    \stackrel{def}{\Leftrightarrow}
    \stdpart{x-y} = 0
\end{equation}
where $\stdpartSym: \starReals \rightharpoonup \reals$ is the standard part map.
If we inject the standard space $\reals$ as a subset $\reals \subseteq \starReals$ of its non-standard counterpart,
\footnote{
    One has to be careful with the statement that $\reals \subseteq \starReals$: it is true ``externally'', in terms of the underlying set theory, but not ``internally'', in terms of statements which are provable in the non-standard model.
    The set $\reals$ itself cannot in general be expressed in the non-standard model: the mathematical object obtained by applying its usual defining axioms is its non-standard counterpart $\starReals$ instead.
    A similar statement holds for the standard part function $\stdpartSym$: it is an external function, defined in the underlying set theory, but not one which can be expressed internally to the non-standard model.
    Regardless of these model-theoretic subtleties, it is sometimes useful to be able to work ``externally'': this is, in partcular, necessary if one wishes to relate non-standard models to standard ones. However, one should take additional care not to rely on external objects and statements when deriving results about non-standard objects.
}
our requirement of infinitesimally close approximation can be formalised as:
\begin{equation}
    \forall x \in \reals.\;
    \exists x' \in \LatticeR.\;
    x \simeq x'
\end{equation}
For the three classes of Abelian Lie groups mentioned above, it is straightforward to construct infinitesimal approximations, in the form of lattices carrying a hyperfinite Abelian group structure \cite{Gogioso2019Dynamics}.
To construct our desired lattice, we fix an odd infinite non-standard natural number $\omega \in \starNaturals$, and define $\tau := \frac{\omega^2-1}{2} \in \starNaturals$, so that $2 \tau + 1 = \omega^2$.
By Transfer Theorem, we can define the hyperfinite cyclic Abelian group $\starIntegersMod{\omega^2}$ of integers modulo $\omega^2$, whose values we embed as a subset of $\starIntegers$ as follows:
\begin{equation}
    \starIntegersMod{\omega^2}
    := \left\{
    -\tau, -\tau+1, ..., -1, 0, 1, ..., \tau-1, \tau
    \right\}
\end{equation}
We can use $\starIntegersMod{\omega^2}$ to construct our desired approximating lattice $\LatticeR$ within $\starReals$:
\begin{equation}
    \LatticeR := \frac{1}{\omega} \starIntegersMod{\omega^2}
    = \suchthat{
        x \in \starReals
    }{
        x = \frac{j}{\omega},\;
        j \in \starIntegersMod{\omega^2}
    }
\end{equation}
Our hyperfinite non-standard counterpart to $\LtwoR$ is the space $\starComplexs^\LatticeR \in \operatorname{obj}\left(\starHilbCategory\right)$, for the hyperfinite lattice $\LatticeR$ defined above.
The lattice $\LatticeR$ approximates every standard real number up to at most an infinitesimal distance of $\frac{1}{\omega}$, and spans the interval from $-\tau/\omega$ to $+\tau/\omega$.
It inherits the Abelian group structure of $\starIntegersMod{\omega^2}$: this coincides with that of the surrounding space $\starReals$ for all near-standard values, i.e. at all points approximating points of $\reals$, but it wraps around at infinity, going from $\tau/\omega$ back to $-\tau/\omega$ in one step of $\frac{1}{\omega}$.
The space $\starComplexs^\LatticeR$ is therefore a group algebra, and we write it as $\LtwoRNonstd$.

\subsection{Lifting states from \texorpdfstring{$\LtwoR$}{L2[R]} to \texorpdfstring{$\LtwoRNonstd$}{*C[R]}}

Functions $\varphi \in \LtwoR$ inject into $\LtwoRNonstd$ by non-standard extension to $\starReals$ followed by restriction to the lattice:
\begin{equation}
    \varphi \mapsto \nonstd{\varphi}|_{\LatticeR}
\end{equation}
The mapping is evidently $\complexs$-linear.
To check that it is an injection, it suffices to show that it is an isometry up to infinitesimals, i.e. that $||\varphi||^2 \simeq ||\nonstd{\varphi}|_{\LatticeR}||^2$, because then:
\begin{equation}
    ||\nonstd{\varphi}|_{\LatticeR}-\nonstd{\phi}|_{\LatticeR}||^2
    \simeq 0
    \Rightarrow
    ||\varphi-\phi||^2
    = \stdpart{||\nonstd{(\varphi-\phi)}|_{\LatticeR}||^2}
    = \stdpart{||\nonstd{\varphi}|_{\LatticeR}-\nonstd{\phi}|_{\LatticeR}||^2}
    = 0
\end{equation}
From this moment onwards, we will use $\varphi$ to denote both the standard function $\varphi \in \LtwoR$ and its non-standard counterpart $\nonstd{\varphi}|_{\LatticeR} \in \LtwoRNonstd$, leaving disambiguation to context; that is, we will silently apply the injection $\LtwoR \hookrightarrow \LtwoRNonstd$ whenever we need to use a standard function in a non-standard context.

To prove that the mapping $\LtwoR \hookrightarrow \LtwoRNonstd$ is truly an injection, we observe that inner products in $\LtwoR$ are approximated, to within infinitesimal precision, by inner products in $\LtwoRNonstd$:
\begin{equation}
    \label{eq:LtwoRnInnerProdApprox}
    \int \phi(x)^* \varphi(x) d x
    \simeq
    \frac{1}{\omega}\sum_{x \in \LatticeR} \phi(x)^* \varphi(x)
\end{equation}
This is a consequence of Transfer Theorem, and it corresponds to the following limit in the standard model:
\begin{align}
    \int \phi(x)^* \varphi(x) d x
     & =
    \lim_{k \text{ odd } \rightarrow \infty}
    \frac{1}{k}\sum_{j=-\sfrac{(k^2-1)}{2}}^{\sfrac{(k^2-1)}{2}}
    \phi\left(\sfrac{j}{k}\right)^*
    \varphi\left(\sfrac{j}{k}\right)
\end{align}
As a special case of Equation \ref{eq:LtwoRnInnerProdApprox}, we conclude that our mapping from $\LtwoR$ to $\LtwoRNonstd$ is indeed an isometry up to infinitesimals, and hence an injective linear map:
\begin{equation}
    \int |\varphi(x)|^2 d x
    =
    \int \varphi(x)^* \varphi(x) d x
    \simeq
    \frac{1}{\omega}\sum_{x \in \LatticeR} \varphi(x)^* \varphi(x)
    =
    \frac{1}{\omega}\sum_{x \in \LatticeR} |\varphi(x)|^2
\end{equation}

\subsection{Lifting Tensors from \texorpdfstring{$\LtwoR$}{L2[R]} to \texorpdfstring{$\LtwoRNonstd$}{*C[R]}}

In order to do calculations, states are not enough: we also need to inject bounded linear maps $\LtwoR^{\otimes m} \rightarrow \LtwoR^{\otimes n}$ into linear maps $\LtwoRNonstd^{\otimes m} \rightarrow \LtwoRNonstd^{\otimes m}$.
We describe the construction in detail for bounded linear maps $\LtwoR \rightarrow \LtwoR$, and then generalise it to arbitrary tensors.
We start by defining approximations of the Dirac delta functions, for all positive $k \in \naturals$:
\begin{equation}
    \delta_x^{(k)}(z) := \begin{cases}
        k & \text{ if } z \in [\frac{x-\frac{1}{2}}{k}, \frac{x+\frac{1}{2}}{k}] \\
        0 & \text{ otherwise}
    \end{cases}
\end{equation}
We also define corresponding approximations of arbitrary $\phi \in \LtwoR$, for all positive $k \in \naturals$:
\begin{equation}
    \phi^{(k)}(z) :=
    \frac{1}{k}
    \sum_{j = -\sfrac{(k^2-1)}{2}}^{+\sfrac{(k^2-1)}{2}}
    \phi\left(\sfrac{j}{k}\right) \delta_{\sfrac{j}{k}}^{(k)}(z)
\end{equation}
It is easy to prove the following limit in $\LtwoR$:
\begin{equation}
    \lim_{k \rightarrow \infty} \ket{\phi^{(k)}} = \ket{\phi}
\end{equation}
By Transfer Theorem, setting $k = \omega$ yields the following in $\LtwoRNonstd$:
\begin{equation}
    \ket{\phi} \simeq \ket{\phi^{(\omega)}}
\end{equation}
On the other hand, consider a bounded linear operator $\Phi: \LtwoR \rightarrow \LtwoR$.
By continuity, we have:
\begin{equation}
    \lim_{k \rightarrow \infty} \Phi\ket{\phi^{(k)}}
    = \Phi\ket{\phi}
\end{equation}
Inspired by this, we define the following approximate coefficients:
\begin{equation}
    \Phi_{y, x}^{(k)} :=
    \bra{\delta_y^{(k)}} \Phi \ket{\delta_x^{(k)}} \in \complexs
\end{equation}
By Transfer Theorem, we conclude that, for all $\phi, \varphi \in \LtwoR$:
\begin{align}
    \bra{\varphi} \Phi \ket{\phi}
     & \simeq
    \sum_{y \in \LatticeR}
    \sum_{x \in \LatticeR}
    \varphi(y)^*\
    \Phi_{y, x}^{(\omega)}\
    \phi(x)
    \\
     & =
    \sum_{y \in \LatticeR}
    \sum_{x \in \LatticeR}
    \braket{\varphi}{y}
    \Phi_{y, x}^{(\omega)}
    \braket{x}{\phi}
    \\
     & =
    \bra{\varphi} \left(
    \sum_{y \in \LatticeR}
    \sum_{x \in \LatticeR}
    \Phi_{y, x}^{(\omega)}
    \ket{y}\bra{x}
    \right) \ket{\phi}
\end{align}
The construction extends straightforwardly to bounded linear maps $\Phi: \LtwoR^{\otimes m} \rightarrow \LtwoR^{\otimes n}$, exploiting the isomorphism $\LtwoR^{\otimes n} \cong \LtwoRn$:
\begin{align}
    \delta_{\vec{x}}^{(k)}(\vec{z})
     & := \begin{cases}
        k & \text{ if } z_j \in [\frac{x_j-\frac{1}{2}}{k}, \frac{x_j+\frac{1}{2}}{k}] \text{ for all } j=1,...,n \\
        0 & \text{ otherwise}
    \end{cases}
    \\
    \Phi_{\vec{y}, \vec{x}}^{(k)}
     & :=
    \bra{\delta_{\vec{y}}^{(k)}} \Phi \ket{\delta_{\vec{x}}^{(k)}}
\end{align}
Hence, we can inject bounded linear maps $\LtwoR^{\otimes m} \rightarrow \LtwoR^{\otimes n}$ into linear maps $\LtwoRNonstd^{\otimes m} \rightarrow \LtwoRNonstd^{\otimes n}$ as follows:
\begin{equation}
    \Phi \mapsto
    \sum_{\vec{y} \in \LatticeR^{n}}
    \sum_{\vec{x} \in \LatticeR^{m}}
    \Phi_{\vec{y}, \vec{x}}^{(\omega)}
    \ket{\vec{y}}\bra{\vec{x}}
\end{equation}
By a straightforward generalisation of previous discussion, this injection respects inner products (up to infinitesimals):
\begin{equation}
    \bra{\varphi} \Phi \ket{\phi}
    \simeq
    \bra{\varphi} \left(
    \sum_{\vec{y} \in \LatticeR^{n}}
    \sum_{\vec{x} \in \LatticeR^{m}}
    \Phi_{\vec{y}, \vec{x}}^{(\omega)}
    \ket{\vec{y}}\bra{\vec{x}}
    \right) \ket{\phi}
\end{equation}

\subsection{Bringing Results Back from \texorpdfstring{$\LtwoRNonstd$}{*C[R]} to \texorpdfstring{$\LtwoR$}{L2[R]}}

At this stage, we can lift tensor networks of bras, kets and bounded linear maps from $\LtwoR$ to $\LtwoRNonstd$, and we can reason diagrammatically about them using the full expressive power of the dagger compact structure from $\starHilbCategory$.
After we are done with our calculations, however, we need a way to map our results back down from $\LtwoRNonstd$ to $\LtwoR$.
We only have a hope of doing this when the results make sense in the standard world, i.e. when they are near-standard: in this case, it suffices to take the standard part.
This works for bras, kets, and bounded linear functions:
\begin{align}
    \ket{\varphi} & \mapsto \stdpart{\ket{\varphi}} \\
    \bra{\varphi} & \mapsto \stdpart{\bra{\varphi}} \\
    \Phi          & \mapsto \stdpart{\Phi}
\end{align}
Furthermore, because diagrammatic reasoning progresses by the application of equations, if we start from a near-standard value, one lifted from $\LtwoR$, then we always necessarily end up with a near-standard value, one which can be brought back down to $\LtwoR$.
This is regardless of the building blocks used to represent the inital value, the final value, or anything in between: any tensors $\LtwoRNonstd^{\otimes m} \rightarrow \LtwoRNonstd^{\otimes n}$ can be used as part our reasoning, regardless of whether they are individually near-standard.

\subsection{The Position and Momentum bases}

The ability to work with non-near-standard tensors gives us the freedom to soundly reason using mathematical entities---such as position and momentum eigenstates---which are commonplace in physical calculations but don't have a direct representation as states in in $\LtwoR$.
We have already encountered the (lattice approximation to the) position eigenstates, as the canonical basis for $\LtwoRNonstd$:
\begin{equation}
    \forall x \in \LatticeR.\;
    \ket{x}_X \in \LtwoRNonstd
\end{equation}
We have made the position basis explicit by introducing an $X$ subscript, to distinguish them from the following momentum eigenstates, which are also parametrised by points on the lattice $\LatticeR$:
\begin{equation}
    \label{eq:MomentumBasisDef}
    \forall p \in \LatticeR .\;
    \ket{p}_P := \frac{1}{\omega}\sum_{x \in \LatticeR} e^{i2\pi x p} \ket{x}_X
\end{equation}
The position eigenstates $\ket{x}_X$ form an orthonormal basis by definition.
It is also easy to show that the momentum eigenstates $\ket{p}_P$ form an orthonormal basis:
\begin{align}
    \braket{p}{q}_P
     & =
    \frac{1}{\omega^2}
    \sum_{x \in \LatticeR}
    \sum_{y \in \LatticeR}
    e^{i2\pi (xq-yp)}\;\braket{y}{x}_X
    \\
     & =
    \frac{1}{\omega^2}
    \sum_{x \in \LatticeR}
    e^{i2\pi x(q-p)}
    \\
     & = \begin{cases}
        \frac{1}{\omega^2} \sum\limits_{x \in \LatticeR} 1 & = 1 \text{ if } p = q \\
        \frac{1}{\omega^2} 0                               & = 0 \text{ otherwise}
    \end{cases}
\end{align}
The coefficients of a generic state $\ket{\varphi} = \sum_{x \in \LatticeR} f(x) \ket{x}_X$ in the position and momentum bases are related by the Fourier transform $\mathcal{F}$, which turns a function $f: \LatticeR \rightarrow \starComplexs$ into another function $\hat{f}:= \mathcal{F}(f): \LatticeR \rightarrow \starComplexs$:
\begin{equation}
    \label{eq:FourierTransformBases}
    \sum_{x \in \LatticeR} f(x) \ket{x}_X
    = \sum_{p \in \LatticeR} \hat{f}(p) \ket{p}_P
\end{equation}
The Fourier transform and its inverse are defined as follows:
\begin{align}
    \mathcal{F}(f)
     & := p \mapsto \frac{1}{\omega}\sum_{x \in \LatticeR} e^{-i2\pi xp} f(x)
    \\
    \mathcal{F}^{-1}(\hat{f})
     & := x \mapsto \frac{1}{\omega}\sum_{p \in \LatticeR} e^{i2\pi xp} \hat{f}(p) \label{eq:non-std-inverse-fourier}
\end{align}
As a special case, we obtain the expression of position eigenstates in the momentum basis:
\begin{equation}
    \label{eq:PositionBasisDefMomentum}
    \ket{x}_X = \frac{1}{\omega}\sum_{p \in \LatticeR} e^{-i2\pi xp} \ket{p}_P
\end{equation}
Recalling that $\LatticeR \rightarrow \starComplexs$ and $\LtwoRNonstd$ are identified using the position basis, the Fourier transform defines the following linear map $\mathcal{F}: \LtwoRNonstd \rightarrow \LtwoRNonstd$:
\begin{align}
    \mathcal{F}
     & =
    \sum_{p \in \LatticeR}
    \sum_{x \in \LatticeR}
    \mathcal{F}\left(y \mapsto \delta_{x,y}\right)(p)\ket{p}_X
    \bra{x}_X
    \\
     & =
    \sum_{x \in \LatticeR}
    \sum_{p \in \LatticeR}
    \frac{1}{\omega}e^{-i2\pi xp}\ket{p}_X
    \bra{x}_X
    \\
     & =
    \sum_{x \in \LatticeR}
    \ket{x}_P
    \bra{x}_X
\end{align}
It is immediately seen to be unitary, with inverse $\mathcal{F}^\dagger = \sum_{p \in \LatticeR} \ket{p}_X \bra{p}_P$.

\section{Soundness of Rules}
\label{sec:soundness}
This section is work in progress.
\subsection{Interpretation for the Z and X Spiders}

We can use the two bases above---in the context of the $\starComplexs$-linear, dagger compact structure of $\starHilbCategory$---to define the tensors diagonal in the position and momentum bases, for arbitrary labelling functions $f: \LatticeR \rightarrow \starComplexs$:
\begin{align}
    \interp{\tikzfigscale{1}{z-spider}}
     & := \frac{1}{\omega}
    \sum_{x \in            \LatticeR} f(x) \, \ket{x}^{\otimes n} \bra{x}^{\otimes m}
    \\
    \interp{\tikzfigscale{1}{x-spider-only}}
     & := \frac{1}{\omega}
    \sum_{p \in \LatticeR} f(p) \, \ket{p}^{\otimes n} \bra{p}^{\otimes m}
\end{align}
Because they are defined by orthonormal bases, the spiders form special commutative $\dagger$-Frobenius algebras ($\dagger$-SCFA), with arbitrary phases (not just unitary ones); we refer to them as Z spiders, for the position basis, and X spiders, for the momentum basis.
As a consequence, the \textruleref{Fusion} rules for Z and X spiders are sound in this interpretation:
\begin{equation}
    \tikzfigscale{1}{z-fusion} \hspace{2cm}
    \tikzfigscale{1}{x-fusion}
\end{equation}
Soundness of the \textruleref{Identity} rules for Z and X spiders is also an immediate consequence of the definition:
\begin{equation}
    \tikzfigscale{1}{id-spider-zx}
\end{equation}
Soundness of the \textruleref{Fourier} rules for Z and X spiders follows by definition of the Fourier transform $\mathcal{F}$ (more specifically from Equation \ref{eq:FourierTransformBases}):
\begin{equation}
    \tikzfigscale{1}{fourier-state} \hspace{2cm}
    \tikzfigscale{1}{inverse-fourier-state}
\end{equation}
Soundness of the \textruleref{Copy} rules for Z and X spiders follows again from the definition of the Fourier transform (more specifically, from Equations \ref{eq:MomentumBasisDef} and \ref{eq:PositionBasisDefMomentum}), together with the observation that spiders copy states in their own basis:
\begin{equation}
    \tikzfigscale{1}{z-character-copy} \hspace{2cm}
    \tikzfigscale{1}{x-character-copy}
\end{equation}
Soundness of the \textruleref{Bialgebra} rule for Z and X spiders is related to the Weyl Canonical Commutation Relations and is proven in \cite{Gogioso2018TowardsQFT}:
\begin{equation}
    \tikzfigscale{1}{bialgebra}
\end{equation}
Soundness of the \textruleref{Scalar} rule for Z and X spiders follows from linearity of their definition:
\begin{equation}
    \tikzfigscale{1}{z-scalar} \hspace{2cm}
    \tikzfigscale{1}{x-scalar-fixed}
\end{equation}

For sufficiently well-behaved near-standard functions $f$, the Z and X spiders are themselves near-standard, in which case they can be used to provide non-standard semantics for the following integral expressions:
\begin{align}
    \interp{\tikzfigscale{1}{z-spider}}
     & \simeq \int f(x) \, \ket{x}^{\otimes n} \bra{x}^{\otimes m} dx \\
    \interp{\tikzfigscale{1}{x-spider-only}}
     & \simeq \int f(p) \, \ket{p}^{\otimes n} \bra{p}^{\otimes m} dp
\end{align}
We can check that this is the case on generic product states $\ket{\phi} := \ket{\phi_1} \otimes ... \otimes \ket{\phi_m}$ and $\ket{\varphi} := \ket{\varphi_1} \otimes ... \otimes \ket{\varphi_n}$.
We do so for Z spiders only, as the X spider is analogous:
\begin{align}
    \bra{\varphi}\left(\int f(x) \, \ket{x}^{\otimes n} \bra{x}^{\otimes m} dx \right)\ket{\phi}
     & :=
    \int f(x) \,\varphi_1(x)\cdot ... \cdot \varphi_n(x)\, \phi_1(x)\cdot ... \cdot \phi_m(x) dx
    \\
     & \simeq
    \frac{1}{\omega}\sum_{x \in \LatticeR}
    f(x) \,\varphi_1(x)\cdot ... \cdot \varphi_n(x) \, \phi_1(x)\cdot ... \cdot \phi_m(x)
    \\
     & =
    \bra{\varphi}\left(\frac{1}{\omega}\sum f(x) \, \ket{x}^{\otimes n} \bra{x}^{\otimes m} dx \right)\ket{\phi}
\end{align}

\subsection{The Fock Basis}
\label{appendix:subsec-fock-basis}

The Fock basis is related to the position basis by the Hermite functions:
\begin{equation}
    \ket{n} = \int \psi_n(x) \ket{x} \, dx
\end{equation}
where the Hermite functions $\psi_n(x)$ are themselves defined in terms of the Hermite polynomials $H_n(x)$:
\begin{equation}
    \psi_n(x) := {(2^n n! \sqrt{\pi})}^{-\frac{1}{2}} e^{-x^2/2} H_n(x)
\end{equation}
The Hermite polynomials $H_n(x)$ satisfy the following recursion, starting from $H_0(x) = 1$ and $H_1(x) = 2x$:
\begin{equation}
    H_n(x) = 2xH_{n-1}(x) - 2(n-1)H_{n-2}(x)
\end{equation}
They also have the following closed form expressions:
\begin{equation}
    H_n(x) =
    n! \sum\limits_{m=0}^{\lfloor\frac{n}{2}\rfloor} \frac{
        (-1)^{m}
    }{
        m! (n-2m)!
    } (2x)^{n-2m}
\end{equation}
As the above expression is defined for all $n \in \naturals$, by Transfer Theorem it is also defined for all $n \in \starNaturals$, as are the Hermite functions $\psi_n(x)$.
We can define states $\ket{n} \in \LtwoRNonstd$ for all $n \in \starNaturals$:
\begin{equation}
    \ket{n} := \frac{1}{\omega}\sum_{x \in \LatticeR} \psi_n(x) \ket{x}
\end{equation}
Unfortunately, the non-standard states $\ket{n}$ defined above cannot possibly form an orthonormal basis of $\LtwoRNonstd$: the latter is $\omega^2$-dimensional, so a basis cannot have more than $\omega^2$ states in it.
Indeed, the usual proof of orthonormality for the Fock basis relies on integration and limits, and it does not directly lift to the non-standard model.
Orthonormality, at least for the $n$ in some suitable subset $N_\omega\subset \starNaturals$, would corresponds to the following statement:
\begin{equation}
    \braket{n}{m} = \frac{1}{\omega}\sum_{x \in \LatticeR} \psi_n(x) \psi_m(x)
    = \delta_{n,m}
\end{equation}
For $n, m \in \naturals$, we know that this holds at least up to infinitesimals, but we cannot take $N_\omega := \naturals$ as the set $\naturals$ itself does not exist from the perspective of the non-standard model:
\footnote{
    This was previously mentioned in the context of the subset inclusion $\reals \subset \starReals$, and it is a fundamental concept in non-standard analysis: in order for interesting non-standard models to exist, one must relax the requirement that they include subsets corresponding to all logical statements which can be formulated in the theory.
    Fewer sets, functions and properties available make it possible for non-standard models to be (significantly) larger than their standard counterparts, while satisfying the same axioms.
}
\begin{equation}
    \frac{1}{\omega}\sum_{x \in \LatticeR} \psi_n(x) \psi_m(x)
    \simeq \delta_{n,m} \text{ for } n, m \in \naturals
\end{equation}
However, soundness of the rules involving Fock spiders requires exact orthonormality for all $n$ in a suitable subset $N_\omega$ with exactly $\omega^2$ elements, which can furthermore be expressed algebrically as a function of $\omega$.
In order to obtain the orthonormality statement by Transfer Theorem, we would have to prove, purely with algebraic methods, that the following statement holds in the standard model, for all odd $k \in \naturals$ large enough and all $n \in N_k$, where $N_k$ is a subset of size $k^2$ which can be expressed as a suitable function of $k$:
\begin{equation}
    \delta_{n,m} = \frac{1}{k}\sum_{j = -\sfrac{(k^2-1)}{2}}^{\sfrac{(k^2-1)}{2}}
    \psi_n\left(\sfrac{j}{k}\right)^* \psi_m\left(\sfrac{j}{k}\right)
\end{equation}
Observing that the set $\{-k', ..., k'\}$ is closed under negation for any $k' \in \naturals$, we conclude that the above expression holds for all pairs of distinct $n, m$ where exactly one of $n$ and $m$ is odd, since the function being summed over is odd in that case.
Unfortunately, the discrete Gaussian sums in the case where $n+m$ is even have proven tricky to compute analytically.

We don't expect that the above expression will hold exactly for arbitrary $n, m$, but know that it is possible to express the required corrections explicitly as a function of $k$, defining some $\psi_{n,k}(x) := \alpha_{n,k} \psi_{n}(x) + \gamma_{n, k}(x)$ for $n \in N_k$ in such a way that the following holds exactly:
\begin{equation}
    \delta_{n,m} = \frac{1}{k}\sum_{j = -\sfrac{(k^2-1)}{2}}^{\sfrac{(k^2-1)}{2}}
    \psi_{n,k}\left(\sfrac{j}{k}\right)^* \psi_{m,k}\left(\sfrac{j}{k}\right)
\end{equation}
We also know that it is possible to formulate the required corrections in such a way as to obtain $\psi_{n,\omega} \simeq \psi_{n}$ for all finite $n \in N_\omega$, providing the following non-standard interpretation for the Fock spider:
\begin{align}
    \interp{\tikzfigscale{1}{fock-spider}}
     & := \frac{1}{\omega}
    \sum_{\nu \in N_\omega} g(\nu) \, \ket{\nu}^{\otimes n} \bra{\nu}^{\otimes m}
\end{align}
This will satisfy \textruleref{Fusion} exactly, but soundness of other rules might require some additional infinitesimal corrections. The specific details are left to future work.

\subsection{Relations between Z, X, and Fock Bases}
In the previous two subsections, we discussed soundness of rules involving just Z and X spiders, and of rules involving just the Fock basis.
Here, we discuss soundness of rules involving both of these, which holds in the non-standard setting up to infinitesimal error.

\begin{restatable}{proposition}{eulerSoundness}\label{prop:eulerSoundness}
    \textruleref{Euler} is sound.
\end{restatable}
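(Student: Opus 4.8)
The plan is to prove soundness by comparing the position-basis integral kernels $\langle x|\cdot|y\rangle$ of the two sides of \textruleref{Euler}, which relates the rotation $R(\theta)=e^{-i\theta\hat n}$ (a single-mode Fock spider labelled by $g(n)=e^{-i\theta n}$) to a symmetric composite of Gaussian chirp phase gates: an outer pair of Z spiders labelled by $e^{i\pi r x^2}$ sandwiching an inner X spider labelled by $e^{i\pi s p^2}$. Since the soundness of the Z and X spiders has already been established in this appendix, and the Fock spider is interpreted as $\frac{1}{\omega}\sum_{\nu\in N_\omega} g(\nu)\ket{\nu}\bra{\nu}$, it suffices to check that the two operators agree on matrix elements against near-standard position eigenstates.

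First I would compute the kernel of the left-hand side. By the Fock-spider interpretation and the definition $\ket{n}=\frac1\omega\sum_x\psi_n(x)\ket{x}$, the matrix element satisfies $\langle x|R(\theta)|y\rangle\propto\sum_n \psi_n(x)\psi_n(y)e^{-i\theta n}$, a generating function of Hermite functions. This is exactly Mehler's formula: writing $t=e^{-i\theta}$,
\[ \sum_{n=0}^{\infty}\psi_n(x)\psi_n(y)\,t^n = \frac{1}{\sqrt{\pi(1-t^2)}}\exp\!\left(\frac{2xyt-(x^2+y^2)(1+t^2)/2}{1-t^2}\right), \]
an identity that holds in the standard model for all $|t|\le 1$ with $t\neq\pm1$, and therefore lifts to $\starComplexs$ by Transfer. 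For $t=e^{-i\theta}$ with $\theta\in(-\pi,\pi)\setminus\{0\}$ this produces an oscillatory Gaussian kernel $K_\theta(x,y)$.

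Next I would compute the kernel of the right-hand side directly. The outer Z spiders contribute pointwise chirps $e^{i\pi r x^2}$ and $e^{i\pi r y^2}$, while the inner X spider contributes, in the position representation, the Fresnel kernel $\mathcal F^{-1}\circ\mathrm{diag}(e^{i\pi s p^2})\circ\mathcal F$, i.e. a Gaussian sum over $p$ which evaluates (by the standard Fresnel integral, transferred to the lattice sum $\frac1\omega\sum_{p\in\LatticeR}$) to something proportional to $e^{-i\pi(x-y)^2/s}$. Multiplying the three factors and completing the square yields a Gaussian kernel of exactly the form $K_\theta$, provided $r$ and $s$ are the prescribed functions of $\theta$. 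These are fixed by the symplectic shear decomposition $\left(\begin{smallmatrix}\cos\theta&\sin\theta\\-\sin\theta&\cos\theta\end{smallmatrix}\right)=\left(\begin{smallmatrix}1&0\\a&1\end{smallmatrix}\right)\left(\begin{smallmatrix}1&b\\0&1\end{smallmatrix}\right)\left(\begin{smallmatrix}1&0\\a&1\end{smallmatrix}\right)$ with $a=-\tan(\theta/2)$ and $b=\sin\theta$, so that matching the quadratic and cross terms in the two exponents forces $r\propto\tan(\theta/2)$ and $s\propto\sin\theta$ (up to the $2\pi$ factors of the Fourier convention and a global scalar, which we discard). The restriction $\theta\in(-\pi,\pi)$ guarantees $\sin\theta\neq0$, so $s$ is finite and nonzero and the Fresnel integral is nondegenerate.

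The main obstacle will be the non-standard subtleties of the Fock side rather than the Gaussian algebra. As noted in the preceding subsection, the lattice states $\ket{\nu}$ for $\nu\in N_\omega$ are only orthonormal up to the infinitesimal corrections $\psi_{\nu,\omega}\simeq\psi_\nu$, so Mehler's identity holds on $\LatticeR$ only up to an infinitesimal error in each near-standard matrix element; likewise the Fresnel lattice sum approximates the continuum Gaussian integral only up to infinitesimals. One therefore establishes $\langle x|\text{LHS}|y\rangle\simeq K_\theta(x,y)\simeq\langle x|\text{RHS}|y\rangle$ for all near-standard $x,y\in\LatticeR$, which is the appropriate notion of soundness in this appendix. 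Controlling the tail of the Hermite sum and the conditionally convergent Fresnel sum tightly enough to guarantee that both errors are genuinely infinitesimal is the delicate point; I expect it to follow from the near-standardness of $R(\theta)$ together with the already-established approximation bounds for $\ket{\phi^{(\omega)}}$ and for inner products on the lattice.
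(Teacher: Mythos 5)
Your proposal follows essentially the same skeleton as the paper's proof: compare position-space kernels of the two sides, compute the right-hand side as chirp--Fresnel--chirp (expand the inner X spider through the Fourier transform, complete the square, evaluate the resulting Fresnel integral), and match quadratic and cross terms in the exponents. The genuine difference is on the left-hand side: the paper simply cites the known matrix elements
\begin{equation*}
\bra{k} R(\theta) \ket{m} \;=\; \frac{(e^{- i\frac{\pi}{4}})^{\sin\theta/\abs{\sin\theta}}}{\sqrt{\abs{\sin\theta}}}\, e^{i \pi \cot\theta\,(k^2 + m^2)}\, e^{- i 2 \pi k m/\sin\theta}
\end{equation*}
from the literature, whereas you derive them from Mehler's formula. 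Your route is more self-contained (Mehler's formula is exactly what underlies the cited result), and your symplectic-shear explanation of why $\tan(\theta/2)$ and $\sin\theta$ must appear is a nice conceptual addition the paper lacks --- the paper instead just uses $\tan(\theta/2)=\csc\theta-\cot\theta$ and grinds through the algebra. Two caveats on your Mehler step, though: on the unit circle $\abs{t}=1$ the series does not converge pointwise but only distributionally (or by Abel summation $t=\rho e^{-i\theta}$, $\rho\to 1^-$), so the identity you want to ``lift by Transfer'' is not literally a pointwise identity of standard functions and needs a regularisation argument or a direct estimate of the hyperfinite sum; and your kernel is written in the standard Hermite convention, so matching it against the rule's labels $e^{i\pi\tan(\theta/2)x^2}$, $e^{i\pi\sin\theta\, p^2}$ (which live in the paper's $e^{-i2\pi xp}$ Fourier convention) requires the appropriate rescaling of the position variable.

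There is also one concrete error: you claim that the restriction $\theta\in(-\pi,\pi)$ guarantees $\sin\theta\neq 0$. It does not --- $\theta=0$ lies in the allowed range, and there both your Mehler kernel (at $t=1$) and your Fresnel kernel (at $s=0$) degenerate, so your argument as written does not cover it. The paper handles this case separately in one line: if $\sin\theta=0$ then $\theta=0$ and $\tan(\theta/2)=0$, so all three spider labels are trivial and both sides reduce to $R(0)=\mathrm{id}$. You should add that (trivial) case split, after which your restriction $t\neq\pm 1$ for Mehler's formula is legitimately satisfied on the remaining range.
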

\begin{proof}
    The phase space rotation operator in CVQC where $\theta \in (\minu \pi, \pi)$
    \begin{equation}
        R(\theta) = e^{\minu i \theta \hat{n}}
    \end{equation}
    has matrix elements~\cite{Tasca2011cvqcspatialdof}
    \begin{equation}
        \bra{k} R(\theta) \ket{m} = \frac{(e^{\minu i\frac{\pi}{4}})^{\frac{\sin\theta}{\abs{\sin\theta}}}}{\sqrt{\abs{\sin\theta}}} e^{i \pi \cot\theta (k^2 \plus{+} m^2)} e^{\minu i 2 \pi \frac{k m}{\sin\theta}}
    \end{equation}
    If $\sin\theta = 0$, then $\theta = 0$ and $\tan \frac{\theta}{2} = 0$, so the right hand side of \textruleref{Euler} does realise $R(0)$.
    If $\sin\theta \neq 0$, we indeed find $R(\minu \theta)$ equal to the right hand side of \textruleref{Euler}:
    \begin{align*}
         & \quad \int e^{i\pi \tan\frac{\theta}{2} k^2} \ket{k}_X \bra{k}_X dk \int e^{i \pi \sin \theta \ell^2} \ket{\ell}_P \bra{\ell}_P d\ell \int e^{i \pi \tan\frac{\theta}{2} m^2} \ket{m}_X \bra{m}_X dm                                                                             \\
         & = \int e^{i\pi (\csc\theta \minu \cot\theta) k^2} \ket{k}_X \bra{k}_X dk \int e^{i \pi \sin \theta \ell^2} \ket{\ell}_P \bra{\ell}_P d\ell \int e^{i \pi (\csc\theta \minu \cot\theta) m^2} \ket{m}_X \bra{m}_X dm                                                               \\
         & = \int e^{i\pi (\csc\theta \minu \cot\theta) k^2} \ket{k}_X \bra{k}_X dk \int e^{i \pi \sin \theta \ell^2} \left(\int e^{i 2 \pi k' \ell} \ket{k'}_X dk'\right)                                                                                                                  \\
         & \quad \left(\int e^{\minu i 2\pi m' \ell} \bra{m'}_X dm'\right) d\ell \int e^{i \pi (\csc\theta \minu \cot\theta) m^2} \ket{m}_X \bra{m}_X dm                                                                                                                                    \\
         & = \iint e^{\minu i \pi \cot\theta (k^2 \plus m^2)} \int e^{i \pi \left(\frac{k^2 \plus m^2}{\sin\theta} \plus \sin\theta \ell^2\right)} e^{i 2 \pi (k\minu m)\ell} d\ell \ket{k}_X \bra{m}_X dk\;dm                                                                              \\
         & = \iint e^{\minu i \pi \cot\theta (k^2 \plus m^2)} e^{i \pi \frac{k^2 \plus m^2}{\sin\theta}} \int e^{i \pi \left( \left(\sqrt{\sin\theta}\ell \plus \frac{k\minu m}{\sqrt{\sin\theta}}\right)^2 \minu \frac{(k\minu m)^2}{\sin\theta} \right)} d\ell \ket{k}_X \ket{m}_X dk\;dm \\
         & = \iint e^{\minu i \pi \cot\theta(k^2 \plus m^2)} e^{i \pi \frac{(k^2 \plus m^2) \minu (k \minu m)^2}{\sin\theta}} \int e^{i \pi \sin\theta \left( \ell \plus \frac{k\minu m}{\sin\theta} \right)^2} d\ell \ket{k}_X \bra{m}_X dk\;dm = (*)
    \end{align*}
    Now we evaluate that
    \begin{equation*}
        \int e^{i \pi \sin\theta \left( \ell \plus \frac{k\minu m}{\sin\theta} \right)^2} d\ell = \int e^{i \alpha \left( \ell \plus \frac{k\minu m}{\sin\theta} \right)^2} d\ell = e^{i \frac{\pi}{4} \frac{\alpha}{\abs{\alpha}}} \sqrt{\frac{\pi}{\abs{\alpha}}}
    \end{equation*}
    where $\alpha = \pi \sin\theta$.  Continuing,
    \begin{align*}
        (*) & = \frac{(e^{i\frac{\pi}{4}})^{\frac{\sin\theta}{\abs{\sin\theta}}}}{\sqrt{\abs{\sin\theta}}} \iint e^{\minu i \pi \cot\theta (k^2 \plus m^2)} e^{i \pi \frac{\left(k^2 \plus m^2\right) \minu (k \minu m)^2}{\sin\theta}} \ket{k}_X \bra{m}_X dk\;dm \\
            & = \frac{(e^{i\frac{\pi}{4}})^{\frac{\sin\theta}{\abs{\sin\theta}}}}{\sqrt{\abs{\sin\theta}}} \iint e^{\minu i \pi \cot\theta (k^2 \plus m^2)} e^{i 2\pi \frac{km}{\sin\theta}} \ket{k}_X \bra{m}_X dk\;dm
    \end{align*}
\end{proof}

\section{Proofs of Lemmas}
\label{sec:rule-proofs-appendix}

\subsection{Derived Rules in Figure~\ref{fig:Derived-Rules}}
\xFusion*
\begin{proof}
    \begin{equation*}
        \tikzfigscale{1}{x-fusion-proof}
    \end{equation*}
\end{proof}

\nul*
\begin{proof}
    \begin{equation*}
        \tikzfigscale{1}{null-proof}
    \end{equation*}
\end{proof}

\WUnit*
\begin{proof}
    \begin{equation*}
        \tikzfigscale{1}{w-unit-proof}
    \end{equation*}
\end{proof}

\vacuumCopy*
\begin{proof}
    \begin{equation*}
        \tikzfigscale{1}{w-vacuum-copy-proof}
    \end{equation*}
\end{proof}

\push*
\begin{proof}
    This is simply a special case of the \textruleref{Bialgebra} rule between the Fock spider and the W node.
\end{proof}

\minusOne*
\begin{proof}
    \begin{equation*}
        \tikzfigscale{1}{minusOne-proof}
    \end{equation*}
\end{proof}

\plusOne*
\begin{proof}
    \begin{equation*}
        \tikzfigscale{1}{plusOne-proof}
    \end{equation*}
\end{proof}

\multiplierIso*
\begin{proof}
    \begin{equation*}
        \tikzfigscale{1}{multiplier-iso-proof}
    \end{equation*}
\end{proof}

\multiplierRev*
\begin{proof}
    \begin{equation*}
        \tikzfigscale{1}{multiplier-rev-proof}
    \end{equation*}
\end{proof}

\multiplierIsoRev*
\begin{proof}
    \begin{equation*}
        \tikzfigscale{1}{multiplier-iso-rev-proof}
    \end{equation*}
\end{proof}

\multiplierAltDecomposition*
\begin{proof}
    \begin{equation*}
        \tikzfigscale{1}{multiplier-alt-decomposition-proof}
    \end{equation*}
\end{proof}

\multiplierCopy*
\begin{proof}
    \begin{equation*}
        \tikzfigscale{1}{multiplier-copy-proof}
    \end{equation*}
\end{proof}
\xmultiplierCopy*
\begin{proof}
    Same as for Lemma~\ref{lem:multiplierCopy}.
\end{proof}

\xTriforce*
\begin{proof}
    \begin{equation*}
        \tikzfigscale{1}{x-triforce-proof}
    \end{equation*}
\end{proof}

\wLoopGreen*
\begin{proof}
    \begin{equation*}
        \tikzfigscale{1}{w-loop-green-proof}
    \end{equation*}
\end{proof}

\KtoNFockZ*
\begin{proof}
    \begin{equation*}
        \tikzfigscale{1}{k-to-n-fock-z-proof}
    \end{equation*}
    where in the last step, we use the exponential generating function of the Hermite polynomials.
\end{proof}

\FWPlusState*
\begin{proof}
    \begin{equation*}
        \tikzfigscale{1}{f-plus-state-proof}
    \end{equation*}
\end{proof}

\FWPlus*
\begin{proof}
    \begin{equation*}
        \tikzfigscale{1}{f-plus-proof}
    \end{equation*}
\end{proof}

\squeezedVacuum*
\begin{proof}
    \begin{equation*}
        \tikzfigscale{1}{squeezed-vacuum-proof}
    \end{equation*}
\end{proof}

\subsection{Proofs for Section~\ref{sec:gates}}
\phat*
\begin{proof}
    We use the fact that $\hat{p} = i\frac{a^\dag \minu a}{\sqrt{2}}$.
    \begin{equation*}
        \tikzfigscale{1}{p-hat-proof}
    \end{equation*}
    Likewise, opening up the 1 number state yields the controlled $\hat{p}$ operator because
    \begin{equation*}
        \tikzfigscale{1}{p-hat-proof-0}
    \end{equation*}
\end{proof}

\fonefact*
\begin{proof}
    \begin{equation*}
        \tikzfigscale{1}{f-1-fact-proof}
    \end{equation*}
    In the third to last step, we used the relation of Hermite functions
    \begin{equation}
        \sqrt{2(n\plus 1)} \psi_{n\plus 1}(x) = \left(x - \frac{d}{dx}\right) \psi_n(x)
    \end{equation}
    to derive
    \begin{align}
        (2x - 1) e^{\minu \frac{x^2}{2} \plus x \minu \frac{1}{4}} & = (x - (\minu x + 1)) e^{\minu \frac{x^2}{2} \plus x \minu \frac{1}{4}}                                                                       \\
                                                                   & = \left(x - \frac{d}{dx}\right) e^{\minu \frac{x^2}{2} \plus x \minu \frac{1}{4}}                                                             \\
                                                                   & = \pi^{\frac{1}{4}} \left(x - \frac{\partial}{\partial x}\right) \sum_{n=0}^\infty 2^{\minu \frac{n}{2}} (n!)^{\minu \frac{1}{2}} \psi_n(x)   \\
                                                                   & = \pi^{\frac{1}{4}} \sum_{n=0}^\infty 2^{\minu \frac{n}{2}} (n!)^{\minu \frac{1}{2}} \sqrt{2(n\plus 1)} \psi_{n \plus 1}(x)                   \\
                                                                   & = \sqrt{2} \pi^{\frac{1}{4}} \sum_{m=1}^\infty 2^{\minu \frac{m\minu 1}{2}} \left((m\minu 1)!\right)^{\minu \frac{1}{2}} \sqrt{m} \psi_{m}(x) \\
                                                                   & = 2 \pi^{\frac{1}{4}} \sum_{m=0}^\infty 2^{\minu \frac{m}{2}} (m!)^{\minu \frac{1}{2}} m \psi_{m}(x)
    \end{align}
\end{proof}

\subsection{Proofs for Gaussian Completeness}
\label{sec:gaussian-completeness-appendix}

\propSemanticsEquivalence*
\begin{proof}
    We first restrict our attention to $\AffLagRel_{\reals}$ because $\AffLagRel_{\complexs}^+$ can be generated from it by adding a vacuum generator which co-discards symplectic rotations and scalars~\cite{boothCompleteGaussian2024}.
    Recall that for a qudit with a power of prime dimension $p$, there is a symmetric-monoidal equivalence between $\AffLagRel_{\mathbb{F}_p}$ and stabilizer circuits for the $p$-dimensional qudit~\cite{grossHudsonTheoremFinitedimensional2006,comfortGraphicalCalculusLagrangian2022}.
    By the transfer theorem, this equivalence extends to the hyperfinite dimensional qudits.
    Hence, we have the equivalence between $\AffLagRel_{\LatticeR}$ and Gaussian operators with position and momentum eigenstates.
    Now, as explained in Section~\ref{sec:non-standard-lattice}, there is an injection from $\reals$ to $\LatticeR$, which can be applied pointwise to go from $\AffLagRel_{\reals}$ to $\AffLagRel_{\LatticeR}$.
    This gives us a functor from $\AffLagRel_{\reals}$ to $\starHilbCategory_G$.

    Next, we extend this functor by sending the vacuum state in $\AffLagRel_{\complexs}^+$ to the following state in $\starHilbCategory_G$.
    \begin{equation}
        \left\{ \left(\bullet,
        \begin{bmatrix}
                i x \\  x
            \end{bmatrix}
        \right) \,\middle|\, x \in \reals \right\}
        \quad \longmapsto \quad
        \frac{1}{\omega} \sum_{x \in \LatticeR} e^{-\frac{x^2}{2}} \ket{x}
    \end{equation}
    It is a straightforward calculation to verify that this state satisfies the equalities required to generate $\AffLagRel_{\complexs}^+$, which correspond to the following graphically.
    \begin{equation*}
        \includegraphics{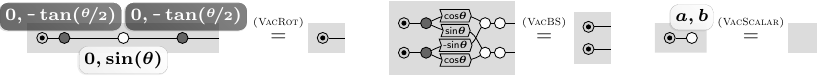}
    \end{equation*}
    This defines a functor from $\AffLagRel_{\complexs}^+$ to $\starHilbCategory_G$.
    This functor is faithful and surjective on objects by construction.
    However, it is not full as there are morphisms in $\starHilbCategory_G$ that correspond to affine Lagrangian relations defined on elements of $\LatticeR$ that are not near-standard.
    By restricting to the subcategory $\starHilbCategory_G^{\text{fin}}$ of Gaussian operators with finite parameters, we can eliminate such morphisms.
    Thus, we obtain a functor $\AffLagRel_{\complexs}^+ \to \starHilbCategory_G^{\text{fin}}$ that is full, faithful and essentially surjective on objects.
    Hence, the category $\starHilbCategory_G^{\text{fin}}$ is equivalent to $\AffLagRel_{\complexs}^+$.
\end{proof}

\subsubsection{Deriving axioms of the graphical symplectic algebra}

\begin{lemma}\label{lem:gsa-lem-zero}
    \begin{equation}
        \tikzfigscale{1}{gsa-lem-zero}
    \end{equation}
\end{lemma}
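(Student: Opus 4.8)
The plan is to establish the stated equality entirely inside $\ZXGaussCategory$ by a diagrammatic rewrite, using only the rules of Figure~\ref{fig:ZX-Rules} together with the already-derived Lemma~\ref{lem:nul}. First I would unfold every composite node on both sides into its defining Z-, X-, and multiplier generators, so that the claimed identity becomes a comparison between two spider networks carrying multiplier and character labels. Because this subsection works only up to nonzero global scalars (via the temporary axiom~\eqref{eq:ignore-scalars}), I can freely discard scalar factors throughout, which simplifies the bookkeeping.

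The core of the derivation is to align the two bases and then expose the zero constraint. Concretely, I would use \textruleref{X-Spider} together with \textruleref{Mult} to rewrite any momentum-basis (X) spider as a position-basis (Z) spider decorated by the appropriate multiplier data, so that the entire diagram is expressed in a single basis and the \textruleref{Copy} and \textruleref{Bialgebra} rules apply cleanly. Applying \textruleref{Bialgebra} then produces the characteristic crossing of Z- and X-spiders; the point of the \textsf{(Zero)} case is that the degenerate parameter forces one branch to carry the constant-zero value. At exactly this step I would invoke Lemma~\ref{lem:nul} to collapse that incompatible branch to the null diagram, which is the zero morphism named by the GSA \textsf{(Zero)} axiom. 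Finishing with \textruleref{Fusion} and \textruleref{Identity} to absorb the surviving spiders yields the right-hand side.

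The main obstacle I expect is the degenerate behaviour of the $m=0$ multiplier. Most of the multiplier machinery --- for instance Lemma~\ref{lem:multiplierIso} and Lemma~\ref{lem:multiplierRev} --- is stated only for nonzero $m$, so the usual ``slide the multiplier through a spider'' moves are simply unavailable here and cannot be used to route around the zero. The careful part is therefore to handle the zero case \emph{directly} through the $m=0$ multiplier definition and Lemma~\ref{lem:nul}, letting the null diagram carry the incompatible constraint rather than attempting to invert a zero multiplier. Once this step is isolated, the remainder is routine spider fusion, and the equality follows; combined with Lemma~\ref{lem:nul} it then discharges the \textsf{(Zero)} axiom in the proof of Proposition~\ref{prop:gsa-zx-rules}.
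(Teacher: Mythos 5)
There is a genuine gap: your proposal never actually produces the zero that this lemma is about, and the mechanism you assign to that job cannot do it. The diagram in Lemma~\ref{lem:gsa-lem-zero} lives in the image of $T^{-1}$, so it is built from Z and X spiders carrying affine phase (character) labels and vacua --- no multiplier, and in particular not the $m=0$ multiplier, occurs in it. The inconsistency of the underlying affine constraints turns into a zero in essentially one of two ways in this calculus: either you fuse the character-labelled spiders into a closed spider and apply \textruleref{Scalar}, evaluating $\int e^{-i2\pi p(a-b)}\,dp = 0$ for $a \neq b$ (orthogonality of distinct position/momentum eigenstates), or you first use \textruleref{Fourier} to convert phase labels into delta labels, whereupon \textruleref{Fusion} multiplies two deltas with disjoint support into the constant-zero function. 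Your plan contains neither step. Rewriting X spiders ``as Z spiders decorated by multiplier data'' is not what \textruleref{X-Spider} provides (its definition introduces Fock-spider Fourier boxes, not multipliers, and leaves the label untouched); \textruleref{Bialgebra} only rearranges spiders and cannot make any label become zero; and Lemma~\ref{lem:nul} --- which the paper pairs \emph{with} this lemma in Proposition~\ref{prop:gsa-zx-rules} precisely because it supplies the complementary absorption-type step --- acts on a zero that has already been exhibited, whether as a scalar or as a label; it cannot manufacture one. So the ``collapse the incompatible branch'' move in your argument presupposes exactly the content this lemma is supposed to establish, and that content is missing.

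Two smaller points. First, the obstacle you single out as the careful part, the degenerate $m=0$ multiplier, is a red herring for this statement: the degeneracy sits in the spider labels, and no multiplier inversion is ever called for, so the unavailability of Lemma~\ref{lem:multiplierIso} and Lemma~\ref{lem:multiplierRev} is not the issue. Second, your blanket appeal to the temporary axiom~\eqref{eq:ignore-scalars} needs care: it only licenses deleting \emph{non-zero} global scalars, and in a lemma whose entire point is that a certain scalar vanishes, ``freely discarding scalar factors throughout'' is precisely the move that must not be made at the decisive step.
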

\begin{proof}
    \begin{equation*}
        \tikzfigscale{1}{gsa-lem-zero-proof}
    \end{equation*}
\end{proof}

\begin{restatable}{lemma}{hboxFusion}\label{lem:hboxFusion}
    \begin{equation}
        \tikzfigscale{1}{hbox-fusion}
    \end{equation}
\end{restatable}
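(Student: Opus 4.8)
The plan is to prove the hbox-fusion identity entirely by diagrammatic rewriting, reducing it to the primitive spider and multiplier rules of Figure~\ref{fig:ZX-Rules}. The hbox is an auxiliary gadget, introduced only for the Gaussian-completeness translation, encoding a real gain acting diagonally in position space; its defining diagram is a Z-spider carrying a multiplier labelled by its parameter. Accordingly, the first step is to replace each of the two hboxes on the left-hand side by this defining diagram, so that the composite becomes a single connected diagram with two spiders sitting back-to-back and their multipliers attached along the shared wire.

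Next I would fuse the back-to-back spiders using \textruleref{Fusion}, which brings the two multipliers into direct series on a common leg. The crux of the argument is then to merge these into one: by \textruleref{Times}, the series composition of multipliers labelled $m$ and $m'$ is the single multiplier labelled $m m'$, since at the level of interpretations $\int \ket{m x}_X\bra{x}_X\,dx$ followed by $\int \ket{m' x}_X\bra{x}_X\,dx$ collapses to $\int \ket{m m' x}_X\bra{x}_X\,dx$. Re-folding the result back into hbox notation yields a single hbox whose label is the product of the original two, which is exactly the GSA \textsf{(Times)} axiom transported across the translation functor $T$.

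I expect the principal obstacle to be the bookkeeping of the surrounding legs and unit wires when the central spiders fuse: one must verify that no spurious scalar factors are generated and that any duplicated legs are handled with \textruleref{Copy}, so that the fused diagram matches the target hbox on the nose rather than merely up to a scalar. A second, smaller point is the degenerate case where one of the labels is zero, for which the generic \textruleref{Times} composition does not apply; there I would argue separately using the zero-multiplier definition $\int \ket{0}_X\bra{x}_X\,dx$ together with \textruleref{Copy}, confirming that the fused box collapses to the zero gain as required.
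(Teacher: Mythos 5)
Your overall skeleton --- unfold the hboxes, fuse, invoke \textruleref{Times}, refold --- is a reasonable shape for this argument, and reducing hbox composition to the multiplier rule \textruleref{Times} is indeed the natural route. But your proof rests on an unjustified assumption about what the hbox \emph{is}: you declare its ``defining diagram'' to be a Z-spider carrying a multiplier labelled by its parameter, and once you grant yourself that, the whole lemma collapses to a two-step triviality. That begs the question. The hbox cannot be given a convenient definition here: it is forced on us as the $T^{-1}$-image of the scalar (``amplifier'') gadget appearing in the GSA \textsf{(Times)} axiom, and since the generators of $\GSACategory$ are only the labelled Z/X spiders and the vacuum, that gadget is itself a derived composite of quadratic-phase spiders (the shear/Euler-style decomposition of a symplectic scaling). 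Its translation under $T^{-1}$ is therefore a composite of quadratic-phase Z and X spiders of $\ZXGaussCategory$ --- nothing in it is syntactically a multiplier, and no single application of \textruleref{Fusion} puts two multipliers ``in series.''

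Consequently, the real mathematical content of the lemma is exactly the step you skip: proving that this quadratic-phase spider composite equals a multiplier, so that \textruleref{Times} can be applied and the result folded back into hbox form. This bridging is nontrivial rewriting with phase arithmetic (cf.\ Lemma~\ref{lem:multiplierAltDecomposition}, which relates the multiplier to an alternative decomposition into phase spiders); without it, your proof never gets off the ground. Two smaller points. First, the scalar bookkeeping you worry about is moot in this section, since non-zero global scalars are quotiented away by Equation~\eqref{eq:ignore-scalars}. Second, the $m=0$ case you propose to treat separately is not part of the lemma: multipliers (and the GSA amplifiers whose translation the hbox is) carry non-zero real labels, and the zero relation is handled by the separate GSA \textsf{(Zero)} axiom, which the paper derives from Lemmas~\ref{lem:gsa-lem-zero} and~\ref{lem:nul}.
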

\begin{proof}
    \begin{equation*}
        \tikzfigscale{1}{hbox-fusion-proof}
    \end{equation*}
\end{proof}
\begin{lemma}\label{lem:gsa-one-1}
    \begin{equation}
        \tikzfigscale{1}{gsa-one-1}
    \end{equation}
\end{lemma}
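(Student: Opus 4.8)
The plan is to recognise Lemma~\ref{lem:gsa-one-1} as a purely ``phase-arithmetic'' identity: both sides are assembled from Z- and X-spiders carrying affine labels together with multipliers, and the claimed equality should reduce to an instance of the core \textruleref{One} axiom once the two diagrams are brought into a common normal form. Because the whole section works in the Gaussian fragment, I am free to discard non-zero global scalars via the temporary axiom \eqref{eq:ignore-scalars}, which spares me from tracking normalisation factors throughout the derivation.

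First I would unfold every multiplier on the left-hand side using its definition \textruleref{Multiplier}, so that the diagram consists only of spiders and bare wires carrying real labels. Next I would apply \textruleref{Mult} to commute the multipliers past the spiders they meet, rescaling each spider's labelling character accordingly; this is the step that pushes the real coefficient into the spider phases. I would then fuse the resulting spiders with \textruleref{Fusion} and collect the affine constants, so that the accumulated phase becomes exactly the argument demanded by \textruleref{One}.

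The key step is the final invocation of the \textruleref{One} rule from Figure~\ref{fig:ZX-Rules}: after fusion the diagram presents a single spider whose label is the composite of the two affine parameters, and \textruleref{One} collapses this to the target generator on the right-hand side. To confirm that the labels genuinely match, I would use \textruleref{Times} and \textruleref{Plus} to rewrite the composite parameter into the canonical affine form dictated by the GSA interpretation of the Z and X generators, reading off the conditions $\sum z_j - \sum z_k' + bx = a$ that those generators encode.

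The main obstacle I expect is the bookkeeping of the affine shift against the linear coefficient: because the GSA generators encode an \emph{affine} rather than merely linear Lagrangian relation, one must be careful that the constant term survives multiplier-commutation and fusion in the correct slot, rather than being absorbed into a scalar and silently dropped by \eqref{eq:ignore-scalars}. Checking that every discarded scalar is in fact non-zero, so that \eqref{eq:ignore-scalars} legitimately applies, is the delicate point; this is presumably the gap that Lemma~\ref{lem:gsa-one-2} complements in order to assemble the full \textsf{(One)} axiom of the graphical symplectic algebra.
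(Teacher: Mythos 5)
Your proposal is a strategy outline rather than a proof, and the gap is that it never engages with the content of the equation it is supposed to establish. Lemma~\ref{lem:gsa-one-1} asserts a specific equality between two concrete diagrams in the Gaussian fragment, and the paper proves it by exhibiting an explicit chain of rewrites; your text instead describes, in general terms, what one ``would'' do (unfold \textruleref{Multiplier}, commute with \textruleref{Mult}, fuse, invoke \textruleref{One}, tidy the labels with \textruleref{Times} and \textruleref{Plus}) without ever identifying the left- and right-hand sides, the intermediate diagrams, or the order in which the rules are applied. For a diagrammatic lemma the proof just \emph{is} that rewrite sequence, so as written there is nothing here that can be checked, and nothing that could be spliced in place of the paper's derivation.

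Beyond the missing content, your framing rests on a doubtful premise: that the lemma is ``an instance of the core \textruleref{One} axiom'' in disguise, reached after a routine normal-form computation. The structure of Proposition~\ref{prop:gsa-zx-rules} tells against this. Where a GSA axiom really is a direct image of a ZX rule, the paper cites that rule outright --- \textsf{(Plus)} from \textruleref{Plus}, \textsf{(Copy)} from \textruleref{Copy}, \textsf{(Bialgebra)} from \textruleref{Bialgebra} --- whereas \textsf{(One)} is the one multiplier axiom that requires two dedicated lemmas (this one and Lemma~\ref{lem:gsa-one-2}), each with its own nontrivial diagrammatic proof, before $T^{-1}$ functoriality can be invoked. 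That is strong evidence that Lemma~\ref{lem:gsa-one-1} involves genuinely transforming a composite of Gaussian-phase spiders arising from the translated GSA gadget, not merely relabelling a multiplier as the identity wire. Your closing remark that Lemma~\ref{lem:gsa-one-2} ``presumably'' complements whatever this lemma leaves open concedes the point: the division of labour between the two lemmas --- and hence what this lemma actually has to prove --- was never pinned down, and that is precisely the missing idea.
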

\begin{proof}
    \begin{equation*}
        \tikzfigscale{1}{gsa-one-1-proof}
    \end{equation*}
\end{proof}
\begin{lemma}\label{lem:gsa-one-2}
    \begin{equation}
        \tikzfigscale{1}{gsa-one-2}
    \end{equation}
\end{lemma}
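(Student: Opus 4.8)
The plan is to prove Lemma~\ref{lem:gsa-one-2} by a purely diagrammatic rewriting, in exactly the style of the preceding Lemmas~\ref{lem:gsa-lem-zero} and \ref{lem:gsa-one-1}, using only the rules of Figure~\ref{fig:ZX-Rules} together with the derived rules already established. Both sides of the claimed equality are built from multipliers, Z/X spiders and the H-box-like \textruleref{Times} structure, so the first step is to expand every composite generator into its primitive form: rewrite each multiplier via its definition (and via Lemma~\ref{lem:multiplierAltDecomposition} where a Z--X--Z decomposition is more convenient), and expose the scalar-carrying spiders, so that the underlying diagram becomes a bare network of Z and X spiders whose connectivity can be manipulated freely by flexsymmetry.

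Since Lemma~\ref{lem:gsa-one-1} presumably establishes the analogous unit law for the complementary colour, a shortcut worth attempting first is to derive Lemma~\ref{lem:gsa-one-2} from it by colour duality: conjugate by Fourier transforms using the \textruleref{Fourier} rule and relabel Z as X. If that goes through, the work reduces to checking that the phase labels transform correctly under $\mathcal{F}$. Failing that, I would proceed directly by pushing phases together: the genuine content of the GSA \textsf{(One)} axiom is the multiplicative unit law, so the core of the derivation is an application of the \textruleref{Times} rule (equivalently Lemma~\ref{lem:hboxFusion}) to collapse two parallel phase contributions into one, followed by the \textruleref{One} rule to discharge the resulting trivial phase. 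Along the way I expect to use \textruleref{Mult} to commute multipliers past spiders of the matching colour, \textruleref{Fusion} to merge adjacent same-colour spiders, and \textruleref{Copy} to eliminate any spurious branches created when a character or basis state meets a spider of the complementary colour.

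The main obstacle will be the bookkeeping of global scalars. Because the Gaussian completeness argument is carried out only up to non-zero global scalars via Equation~\eqref{eq:ignore-scalars}, I can afford to drop any non-zero scalar prefactor as it appears; the real content of the lemma is the equality of the two diagrams' \emph{connectivity and phase data}, not their normalisation. The delicate point is ensuring that the phase accumulated from combining the multiplier and \textruleref{Euler} decompositions lands on exactly the unit value demanded by \textruleref{One}, rather than an off-by-a-sign or off-by-a-scale variant; verifying this reduces to a short real-arithmetic check on the spider labels, which the \textruleref{Times} and \textruleref{Plus} rules render routine. Once the phases cancel to the trivial label, the remaining diagram simplifies to the right-hand side by \textruleref{Fusion} and \textruleref{Identity}, completing the derivation.
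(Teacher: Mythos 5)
Your proposal is a strategy memo rather than a proof: it never fixes what the two diagrams in Lemma~\ref{lem:gsa-one-2} actually are, and every step is contingent (``presumably'', ``a shortcut worth attempting first'', ``failing that, I would proceed''). The paper's proof is a single concrete chain of diagram rewrites, and nothing in your text commits to such a chain. Worse, the one point you yourself flag as delicate --- that the accumulated phase labels land exactly on the values demanded by the right-hand side, with no off-by-a-sign or off-by-a-scale error --- is precisely the entire content of the lemma, and you defer it to an unperformed ``short real-arithmetic check''. For a statement whose content \emph{is} a specific diagrammatic identity, listing which rules might participate does not establish the equality; you must exhibit the rewrite sequence or the corresponding label arithmetic.

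More substantively, you appear to have misidentified what the GSA \textsf{(One)} axiom says. It is not the multiplicative unit law ``the $1$-labelled multiplier is the identity wire'' --- that is already the \textruleref{One} rule of Figure~\ref{fig:ZX-Rules}, and if that were all, the translation would not require two dedicated lemmas (Lemmas~\ref{lem:gsa-one-1} and~\ref{lem:gsa-one-2}) with nontrivial derivations; the table in Proposition~\ref{prop:gsa-zx-rules} would simply cite \textruleref{One}. In graphical symplectic algebra, \textsf{(One)} is the Euler-decomposition-type axiom: unit-labelled quadratic phases in alternating colours compose to the Fourier rotation (three unit shears make a quarter turn, symplectically). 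Accordingly, the engine of the paper's derivation is the \textruleref{Euler} rule specialised to $\theta = \pm\frac{\pi}{2}$, where $\tan\frac{\theta}{2}$ and $\sin\theta$ both have unit magnitude, combined with the \textruleref{Fourier} and multiplier lemmas --- not \textruleref{Times} (Lemma~\ref{lem:hboxFusion}) followed by \textruleref{One}, which your proposal presents as the core and which the paper's table instead assigns to the separate GSA axiom \textsf{(Times)}. Your fallback of deriving Lemma~\ref{lem:gsa-one-2} from Lemma~\ref{lem:gsa-one-1} by Fourier conjugation is a reasonable instinct, but as stated it is only an instinct: checking how the unit labels transform under that conjugation is, again, the whole lemma, and you do not do it.
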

\begin{proof}
    \begin{equation*}
        \tikzfigscale{1}{gsa-one-2-proof}
    \end{equation*}
\end{proof}
\begin{lemma}\label{lem:gsa-vacuum-rot}
    \begin{equation}
        \tikzfigscale{1}{gsa-vacuum-rot}
    \end{equation}
\end{lemma}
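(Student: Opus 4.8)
The statement is the \textsf{(VacRot)} axiom of the graphical symplectic algebra transported into the ZX-calculus: a phase-space rotation applied to the vacuum returns the vacuum. Semantically this is immediate, since the rotation is $R(\theta)=e^{-i\theta\hat n}$ and the vacuum generator is interpreted as the photonic ground state $\ket{0}$, so $R(\theta)\ket{0}=e^{-i\theta\cdot 0}\ket{0}=\ket{0}$. The plan is to make this one-line computation graphical by pushing both the rotation and the vacuum into the Fock picture, where the statement collapses to a single application of Fock \textruleref{Fusion}.

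First I would rewrite the rotation. As already recorded in Section~\ref{sec:gates}, the Z--X--Z Gaussian string on the right-hand side of the \textruleref{Euler} rule is precisely $R(\theta)=e^{-i\theta\hat n}$; reading \textruleref{Euler} from right to left therefore replaces the Gaussian rotation produced by $T^{-1}$ with the single Fock spider carrying the label $n\mapsto e^{-i\theta n}$. Second I would rewrite the vacuum: its position wavefunction is $\psi_0(x)\propto e^{-x^2/2}$, the zeroth Hermite function, so the \textruleref{Hermite} rule identifies the vacuum generator with the Fock state $\ket{0}$, i.e.\ the Fock spider labelled by the Kronecker delta $\delta_0$. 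This Gaussian-to-Fock passage is exactly the kind of rewrite performed in Lemma~\ref{lem:KtoNFockZ} via the exponential generating function of the Hermite polynomials, which I would invoke rather than repeat. With both pieces in the Fock basis, Fock \textruleref{Fusion} merges the two spiders and multiplies their labels pointwise, giving $e^{-i\theta n}\,\delta_0(n)=\delta_0(n)$; the result is again $\ket{0}$, i.e.\ the vacuum. Note that no scalar is generated, so the convention of \eqref{eq:ignore-scalars} is not even needed here.

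The step I expect to be the real obstacle is the second one: cleanly bridging the continuous (Z/X) and discrete (Fock) descriptions of the vacuum inside the calculus. This is the only place where a genuine interaction rule between the two pictures is used, and getting the normalisation of the Hermite identification to line up is where the generating-function bookkeeping of Lemma~\ref{lem:KtoNFockZ} does the work.

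If one instead wishes to stay strictly inside the Gaussian fragment and avoid Fock spiders, the alternative is a direct computation: expand the rotation by \textruleref{Euler} into Z$[e^{i\pi\tan(\theta/2)x^2}]$--X$[e^{i\pi\sin\theta\,p^2}]$--Z$[e^{i\pi\tan(\theta/2)x^2}]$, absorb the leading Z-phase into the vacuum Gaussian by Z-\textruleref{Fusion}, convert to momentum with \textruleref{Fourier} to absorb the X-phase, and convert back for the trailing Z-phase. The quadratic phases and the (now complex) Gaussian width recombine---using the same identity $\csc\theta-\cot\theta=\tan(\theta/2)$ that drives the \textruleref{Euler} soundness proof---to return the original $e^{-x^2/2}$ vacuum up to a nonzero scalar discarded by \eqref{eq:ignore-scalars}. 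Here the obstacle shifts to tracking the complex Gaussian width through the two Fourier transforms, which is precisely where that trigonometric simplification is indispensable.
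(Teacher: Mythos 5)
Your primary argument is essentially the paper's own proof: the paper's diagram chase likewise reads \textruleref{Euler} right-to-left to collapse the Z--X--Z rotation into a single Fock spider, identifies the vacuum with the Fock state $\delta_0$ via \textruleref{Hermite}, and concludes by Fock \textruleref{Fusion}, since $e^{-i\theta n}\,\delta_0(n)=\delta_0(n)$, before converting back. The only cosmetic difference is your supplementary appeal to Lemma~\ref{lem:KtoNFockZ}, which is not needed here --- the $n=0$ instance of \textruleref{Hermite} (together with \textruleref{Scalar} for the $\pi^{1/4}$ normalisation) already performs the Gaussian-to-Fock identification.
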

\begin{proof}
    \begin{equation*}
        \tikzfigscale{1}{gsa-vacuum-rot-proof}
    \end{equation*}
\end{proof}
\begin{lemma}\label{lem:gsa-vacuum-bs}
    \begin{equation}
        \tikzfigscale{1}{gsa-vacuum-bs}
    \end{equation}
\end{lemma}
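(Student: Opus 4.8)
The plan is to prove \textsf{(VacBS)} by recognizing that it is the statement that a beam-splitter, being a passive (photon-number-preserving) Gaussian map, fixes the two-mode vacuum. Recall from the proof of Proposition~\ref{prop:semantics-equivalence} that the GSA vacuum is sent to the Gaussian state $\frac{1}{\omega}\sum_{x\in\LatticeR}e^{-x^2/2}\ket{x}$, i.e. a Z-spider carrying the label $e^{-x^2/2}$. By the Fock-basis identification $\ket{0}=\int\psi_0(x)\ket{x}\,dx$ with $\psi_0(x)\propto e^{-x^2/2}$, this Gaussian Z-spider state is, up to a non-zero scalar, exactly the Fock ground state $\ket{0}$. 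So the content of the lemma is simply that the beam-splitter leaves $\ket{0}\otimes\ket{0}$ invariant, and the global scalar discrepancy is harmless here by the axiom~\eqref{eq:ignore-scalars}.

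The route I would pursue works in the Fock picture, where the bookkeeping is cleanest. First I would rewrite each Gaussian vacuum as a Fock vacuum state (the Fock spider labelled $\delta_0$) using the \textruleref{Hermite} rule together with the identification above. Next I would put the beam-splitter in its Fock form, a network of W nodes and Fock spiders as in the \textruleref{Beam-Splitter} rule, and feed the two vacuum Fock states into it. Pushing the zero-photon states through then collapses the W-node network: Lemma~\ref{lem:vacuumCopy} lets the vacuum copy through each W node and Lemma~\ref{lem:WUnit} removes the resulting units, so that on the zero-photon sector the beam-splitter acts trivially. What remains is two vacuum Fock states, which convert back to two Gaussian vacua via \textruleref{Hermite}, closing the equation up to a non-zero scalar.

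The alternative is to stay in the Z--X picture and push the Gaussian labels directly through the beam-splitter's multipliers and Fourier boxes using \textruleref{Mult}, \textruleref{Fourier}, \textruleref{Fusion}, and \textruleref{Copy}; there one could also discharge the single-mode pieces using the \textsf{(VacRot)} case, Lemma~\ref{lem:gsa-vacuum-rot}. I expect the main obstacle of that second route to be the genuinely two-mode step: verifying that the product Gaussian $e^{-(x_1^2+x_2^2)/2}$ is invariant under the position-space rotation $(x_1,x_2)\mapsto(x_1\cos\theta - x_2\sin\theta,\;x_1\sin\theta + x_2\cos\theta)$ induced by the beam-splitter, which hides a two-dimensional Gaussian integral. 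The Fock route sidesteps this entirely by working in the photon-number sector where the vacuum is manifestly fixed, so that is the derivation I would carry out.
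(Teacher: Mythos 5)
Your proposal is correct and follows essentially the same route as the paper: the paper's graphical proof also passes to the Fock picture via the \textruleref{Beam-Splitter} rule and collapses the resulting W-node network by copying the vacuum through it (Lemma~\ref{lem:vacuumCopy}, with Lemma~\ref{lem:WUnit} and Fock-spider fusion absorbing the units and the $k^{\hat{n}}$ labels at $n=0$), exactly as you describe. Your observation that the direct Z--X route would bury a two-mode Gaussian rotation invariance is also the right reason the paper avoids it.
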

\begin{proof}
    \begin{equation*}
        \tikzfigscale{1}{gsa-vacuum-bs-proof}
    \end{equation*}
\end{proof}
\begin{lemma}\label{lem:gsa-vacuum-scalar}
    \begin{equation}
        \tikzfigscale{1}{gsa-vacuum-scalar}
    \end{equation}
\end{lemma}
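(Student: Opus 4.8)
The plan is to derive \textsf{(VacScalar)} entirely inside $\ZXGaussCategory$ by unfolding the GSA vacuum and scalar generators through the inverse translation $T^{-1}$ and then collapsing the result with the scalar rules. Recall from the construction in Proposition~\ref{prop:semantics-equivalence} that the GSA vacuum is sent to the Gaussian state $\frac{1}{\omega}\sum_{x \in \LatticeR} e^{-x^2/2}\ket{x}$, i.e. a Z spider carrying the genuine (non-phase) Gaussian label $e^{-x^2/2}$; this is the one generator of the fragment that is not a pure character/phase. First I would rewrite both sides of the claimed equation as ZX diagrams built from this Gaussian-labelled Z spider together with whatever phase/scalar element \textsf{(VacScalar)} attaches to it, so that the statement becomes a purely diagrammatic identity between two scalar-decorated copies of the vacuum.

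Next I would push the scalar past the vacuum and collect it. Because the vacuum is a fixed Gaussian state, composing it with the attached scalar generator produces a closed sub-diagram whose interpretation is a single complex number. I would extract this number using \textruleref{Scalar}, converting between the position and momentum pictures with \textruleref{Fourier} (and \textruleref{Euler}, should a quadratic phase appear) whenever that makes the sub-diagram closable, and using the multiplier identities (e.g.\ Lemma~\ref{lem:multiplierCopy}) to move any squeezing factors out of the way. Once the scalar has been gathered into a single global-scalar box, the temporary axiom~\eqref{eq:ignore-scalars} discards it, leaving precisely the bare vacuum on both sides and closing the derivation.

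The hard part will be the scalar bookkeeping rather than the diagrammatic manipulation: the closed sub-diagram created by the vacuum--scalar interaction is a hyperfinite Gaussian sum (the non-standard analogue of a Gaussian integral), and to invoke~\eqref{eq:ignore-scalars} one must be sure its value is non-zero. Since the completeness argument of this section only requires equality \emph{up to non-zero global scalars}, I would not evaluate this sum exactly; it suffices to observe that the sum is a manifestly non-vanishing Gaussian quantity, which is all that~\eqref{eq:ignore-scalars} needs. This mirrors the treatment of the companion vacuum rules in Lemmas~\ref{lem:gsa-vacuum-rot} and~\ref{lem:gsa-vacuum-bs}, so the same scalar-discarding strategy should apply with only routine changes.
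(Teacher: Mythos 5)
Your high-level strategy---rewrite the vacuum as a Gaussian-labelled Z spider, fuse everything into a closed diagram, convert that closed diagram into a global scalar via \textruleref{Scalar}, and discard it with the temporary axiom~\eqref{eq:ignore-scalars}---is compatible in spirit with the paper's purely graphical derivation. But there are two genuine gaps. The first is that you never pin down the equation being proven: \textsf{(VacScalar)} is a specific GSA axiom, and Lemma~\ref{lem:gsa-vacuum-scalar} is its specific image under $T^{-1}$, i.e.\ a concrete equation between concrete diagrams of $\ZXGaussCategory$. Writing ``whatever phase/scalar element \textsf{(VacScalar)} attaches to it'' leaves the attempt as a template rather than a proof; one cannot check that your proposed moves (Fourier/Euler conversions, multiplier lemmas) actually close the particular diagram that arises, nor that the two sides really differ only by a global scalar.

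The second gap is more structural: you justify unfolding the vacuum into the Gaussian-labelled Z spider by appealing to Proposition~\ref{prop:semantics-equivalence}, i.e.\ to the image of the vacuum in $\starHilbCategory_G^{\text{fin}}$. The lemma, however, is a derivability claim ($\ZXGaussCategory \vdash \cdots$), and it feeds into Proposition~\ref{prop:gsa-zx-rules}, which is precisely the step that turns semantic completeness of $\GSACategory$ into syntactic derivability in $\ZXGaussCategory$. Justifying a rewrite by equality of interpretations presupposes the completeness this chain of lemmas is meant to establish. The repair is to perform the unfolding inside the calculus: the vacuum is the Fock state $\delta_0$, and the \textruleref{Hermite} rule (or Lemma~\ref{lem:squeezedVacuum}) rewrites it syntactically as a Z spider labelled by $\psi_0(x)=\pi^{-1/4}e^{-x^2/2}$, after which \textruleref{Fusion} and \textruleref{Scalar} produce the closed scalar within the calculus. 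Relatedly, your worry about a ``hyperfinite Gaussian sum'' is misplaced: once \textruleref{Scalar} has been applied, the global scalar is an ordinary number such as $\int e^{-x^2}\,dx=\sqrt{\pi}$, whose non-vanishing is immediate, so axiom~\eqref{eq:ignore-scalars} applies with no appeal to the non-standard model at all.
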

\begin{proof}
    \begin{equation*}
        \tikzfigscale{1}{gsa-vacuum-scalar-proof}
    \end{equation*}
\end{proof}

\subsection{Proofs for Gaussian Boson Sampling}
\label{sec:gbs-appendix}
\Knloops*
\begin{proof}
    \begin{equation*}
        \tikzfigscale{1}{complete-graph-with-loops-proof}
    \end{equation*}
\end{proof}

\Knouter*
\begin{proof}
    \begin{equation*}
        \tikzfigscale{1}{complete-graph-outer-product-proof}
    \end{equation*}
\end{proof}

\matrixadd*
\begin{proof}
    \begin{equation*}
        \tikzfigscale{1}{matrix-addition-proof}
    \end{equation*}
\end{proof}

\loopPop*
\begin{proof}
    \begin{equation*}
        \tikzfigscale{1}{loop-pop-proof}
    \end{equation*}
\end{proof}


\loopPopGen*
\begin{proof}
    \begin{equation*}
        \tikzfigscale{1}{loop-pop-gen-proof}
    \end{equation*}
\end{proof}

\end{document}